\providecommand{\U}[1]{\protect\rule{.1in}{.1in}}
\providecommand{\U}[1]{\protect\rule{.1in}{.1in}}
\def \pt{/\!/}
\def\enddoc{
\newtheorem{theorem}{Theorem}[section]
\newtheorem{metatheorem}[theorem]{Meta-Theorem}
\newtheorem{metacorollary}[theorem]{Meta-Corollary}
\newtheorem{metapropositions}[theorem]{Meta-Proposition}
\newtheorem{metalemma}[theorem]{Meta-Lemma}

\newtheorem{corollary}[theorem]{Corollary}

\newtheorem{definition}[theorem]{Definition}
\newtheorem{example}[theorem]{Example}

\newtheorem{lemma}[theorem]{Lemma}
\newtheorem{notation}[theorem]{Notation}

\newtheorem{proposition}[theorem]{Proposition}
\newtheorem{remark}[theorem]{Remark}

\newtheorem{assumption}{Assumption}

\iffalse
\newenvironment{proof}[1][Proof]{\textbf{#1.} }{\ \rule{0.5em}{0.5em}}
\fi
\newcommand{\GraphicsDirectory}{./graphics/}
\graphicspath{{\GraphicsDirectory}}

\newcommand{%
\executeiffilenewer{\GraphicsDirectory.svg}{\GraphicsDirectory.pdf}%
{inkscape -z -D --file=\GraphicsDirectory.svg --export-pdf=\GraphicsDirectory.pdf --export-latex}%
\input{\GraphicsDirectory.pdf_tex}%
}[1]{%
\input{\GraphicsDirectory#1.pdf_tex}%
}
\def\svgwidth{2in}
\newcommand{\psize}[1]{\def\svgwidth{#1}}

\newcommand*{\ced}
{\hfill {\bf End aside.}}

\numberwithin{equation}{section}
\excludecomment{Aside}
\begin{document}
\title{A Functional Integral Approaches to the Makeenko-Migdal Equations}
\author{Bruce K. Driver}
\address{Department of Mathematics \\
University of California, San Diego \\
La Jolla, CA 92093-0112 \\
\texttt{bdriver@math.ucsd.edu}}
\subjclass[2010]{81T13 (primary), 60H30, 81T40}
\maketitle

\begin{abstract}
Makeenko and Migdal (1979) gave heuristic identities involving the expectation
of the product of two Wilson loop functionals associated to splitting a single
loop at a self-intersection point. Kazakov and I. K. Kostov (1980)
reformulated the Makeenko--Migdal equations in the plane case into a form
which made rigorous sense. Nevertheless, the first rigorous proof of these
equations (and their generalizations) was not given until the fundamental
paper of T. L\'{e}vy (2011). Subsequently Driver, Kemp, and Hall (2017) gave a
simplified proof of L\'{e}vy's result and then with F. Gabriel (2017) we
showed that these simplified proofs extend to the Yang-Mills measure over
arbitrary compact surfaces. All of the proofs to date are elementary but
tricky exercises in finite dimensional integration by parts. The goal of this
article is to give a rigorous functional integral proof of the
Makeenko--Migdal equations guided by the original heuristic machinery invented
by Makeenko and Migdal. Although this stochastic proof is technically more
difficult, it is conceptually clearer and explains \textquotedblleft
why\textquotedblright\ the Makeenko--Migdal equations are true. It is hoped
that this paper will also serve as an introduction to some of the problems
involved in making sense of quantizing Yang-Mill's fields.

\end{abstract}
\tableofcontents

\today\ \emph{File:\jobname{.tex} }

\section{Introduction\label{sec.1}}

Let $K$ be any compact Lie group with $\mathfrak{k}=\operatorname*{Lie}\left(
K\right)  $ being its Lie algebra which is assumed to be equipped with an
$\mathrm{Ad}_{K}$ -- invariant inner product denoted by $\left\langle
\cdot,\cdot\right\rangle _{\mathfrak{k}}$ or often by $\left\langle
\cdot,\cdot\right\rangle .$ [For notational simplicity (and without loss of
generality) we will assume that $K$ is a closed matrix Lie-subgroup of
$\mathbb{C}^{D\times D}$ -- the space of $D\times D$ complex matrices for some
$D\in\mathbb{N}.$] Further suppose that $\left(  M,g,o\right)  $ is a pointed
$d$ -- dimensional Riemannian manifold and $\operatorname*{Vol}_{g}$ is the
Riemannian volume measure on $M.$ [We will soon specialize to the case where
$d=2,$ $M=\mathbb{R}^{2},$ $o=0,$ and $g$ is the usual Euclidean metric in
which case $\operatorname*{Vol}_{g}$ is Lebesgue measure $\left(  m\right)  $
on $\mathbb{R}^{2}.$] Throughout the paper we write $\dot{k}\left(  t\right)
$ for $dk\left(  t\right)  /dt$ and $h^{\prime}\left(  s\right)  $ for
$dh\left(  s\right)  /ds,$ i.e. upper-dot and prime stand for $t$ and
$s$-derivatives respectively.

\begin{notation}
\label{not.1.1}Let $\mathcal{A}:=\Omega^{1}\left(  M,\mathfrak{k}\right)  $ be
the space of $\mathfrak{k}$ -- valued \textbf{connection one-forms} on $M,$
$\mathcal{G}$ be the\textbf{ gauge group} consisting of functions
$g:M\rightarrow K$ and $\mathcal{G}_{o}$ be the \textbf{restricted gauge group
}defined by $\mathcal{G}_{o}=\left\{  g\in\mathcal{G}:g\left(  o\right)
=I\right\}  .$ The smooth gauge group, $\mathcal{G},$ acts (as a right action)
on $\mathcal{A}$ via,%
\begin{equation}
g\rightarrow A^{g}:=g^{-1}Ag+g^{-1}dg\text{ for all }g\in\mathcal{G}.
\label{e.1.1}%
\end{equation}
where (locally) $A=\sum_{i=1}^{d}A_{i}dx_{i}\ $with $A_{i}$ being locally
defined $\mathfrak{k}$-valued functions on $M.$
\end{notation}

\begin{definition}
[Covariant differentiation and parallel translation]\label{def.1.2}Let
$A\in\mathcal{A}$ be a $\mathfrak{k}$-valued connection one form on manifold
$M$ and for a curve $\ell:\left[  a,b\right]  \rightarrow M$ which is
absolutely continuous and a differentiable function $k:\left[  a,b\right]
\rightarrow K,$ let $\nabla k\left(  t\right)  $ denote the \textbf{covariant
differential of }$k$ defined by,%
\[
\nabla_{t}^{A}k\left(  t\right)  :=\frac{d}{dt}k\left(  t\right)  +A\left(
\dot{\ell}\left(  t\right)  \right)  k\left(  t\right)  .
\]
Also let $\pt_{t}^{A}\left(  \ell\right)  \in K$ be \textbf{parallel
translation} along $\ell$ defined as the solution to the ODE (or more
precisely its related integral equation),%
\[
\nabla_{t}^{A}\pt_{t}^{A}\left(  \ell\right)  =\frac{d}{dt}\pt_{t}^{A}\left(
\ell\right)  +A\left(  \dot{\ell}\left(  t\right)  \right)  \pt_{t}^{A}\left(
\ell\right)  =0\text{ with }\pt_{a}^{A}\left(  \ell\right)  =I\in K.
\]
We typically write $\pt^{A}\left(  \ell\right)  $ for $\pt_{b}^{A}\left(
\ell\right)  .$
\end{definition}

\begin{definition}
[Curvature]\label{def.1.3}The \textbf{curvature two form }of $A\in\mathcal{A}$
is $F^{A}=dA+A\wedge A\in\Omega^{2}\left(  M,\mathfrak{k}\right)  ,$ i.e. for
all $p\in M$ and $v,w\in T_{p}M,$%
\[
F^{A}\left(  v,w\right)  =dA\left(  v,w\right)  +\left[  A\left(  v\right)
,A\left(  w\right)  \right]  _{\mathfrak{k}}.
\]

\end{definition}

See Theorem \ref{thm.A.1} of Appendix \ref{sec.A} to see how the gauge group
acts on $\pt_{t}^{A}\left(  \ell\right)  $ and $F^{A}.$

\begin{definition}
[Yang-Mills Energy]\label{def.1.4}The \textbf{Yang-Mills energy }associated to
$A\in\mathcal{A}$ is defined by,%
\[
\left\Vert F^{A}\right\Vert ^{2}=\int_{M}\left\vert F^{A}\right\vert
^{2}\left(  x\right)  d\operatorname*{Vol}\nolimits_{g}\left(  x\right)
\]
where, for $p\in M$ and any orthonormal basis, $\left\{  e_{i}\right\}
_{i=1}^{d},$ of $T_{p}M,$%
\[
\left\vert F^{A}\right\vert ^{2}\left(  p\right)  =\sum_{1\leq i<j\leq
d}\left\vert F^{A}\left\langle e_{i},e_{j}\right\rangle \right\vert
_{\mathfrak{k}}^{2}.
\]

\end{definition}

Now suppose that $M=\mathbb{R}^{d}$ for some $d\in\mathbb{N}$ and let $m$
denote Lebesgue measure on $\mathbb{R}^{d}.$ For $u,v\in L^{2}\left(
\mathbb{R}^{d},m;\mathfrak{k}\right)  ,$ let
\begin{align}
\left\langle u,v\right\rangle  &  :=\int_{\mathbb{R}^{d}}\left\langle u\left(
\cdot\right)  ,v\left(  \cdot\right)  \right\rangle _{\mathfrak{k}}dm\text{
and }\label{e.1.2}\\
\left\Vert u\right\Vert ^{2}  &  :=\left\langle u,u\right\rangle
=\int_{\mathbb{R}^{d}}\left\vert u\right\vert _{\mathfrak{k}}^{2}dm.
\label{e.1.3}%
\end{align}
The informal Euclidean Yang-Mills' \textquotedblleft measure\textquotedblright%
\ on $\mathcal{A}$ is the expression
\begin{equation}
d\mu_{\text{YM}_{d}}\left(  A\right)  =\frac{1}{Z}\exp\left(  -\frac{1}%
{2}\left\Vert F^{A}\right\Vert ^{2}\right)  \mathcal{D}A, \label{e.1.4}%
\end{equation}
where $\mathcal{D}A$ is a (non-existent) Lebesgue measure on $\mathcal{A}$ and
$Z$ is a \textquotedblleft normalizing constant.\textquotedblright\ Since
$F^{A^{g}}=\mathrm{Ad}_{g^{-1}}F^{A}$ (see Theorem \ref{thm.A.1} below) and
the $\left\langle \cdot,\cdot\right\rangle _{\mathfrak{k}}$ is an
$\mathrm{Ad}_{K}$-invariant inner product, the functional, $A\rightarrow
\left\Vert F^{A}\right\Vert ^{2}$, is invariant under the right action of
$\mathcal{G}$ on $\mathcal{A}$ given in Eq. (\ref{e.1.1}). As $\mathcal{G}$ is
an infinite dimensional \textquotedblleft infinite volume\textquotedblright%
\ group, it is not possible, even at this informal level, to interpret the
expression on the right side of Eq. (\ref{e.1.4}) as a probability measure on
$\mathcal{A}.$ As is customary, one should try to interpret $\mu_{\text{YM}}$
as a measure on the quotient space $\mathcal{A}/\mathcal{G}_{0}.$ To be more
concrete one usually tries to \textquotedblleft define\textquotedblright%
\ $d\mu_{\text{YM}}$ by appropriately restricting $A$ in Eq. (\ref{e.1.4}) to
be in a slice, $\mathcal{A}_{0}\subset\mathcal{A},$ of the gauge group action.
We will discuss \textquotedblleft gauge fixing\textquotedblright\ in more
detail in Appendix \ref{sec.B} below.

For the purposes of this paper we now specialize to $d=2$ so that
$M=\mathbb{R}^{2}$ equipped with its standard coordinates, $\left(
x,y\right)  .$ We will further take $\mathcal{A}_{0}$ to be the connection one
forms in the so called \textquotedblleft complete axial\textquotedblright\ gauge.

\begin{notation}
[Complete axial gauge]\label{n.1.5}Let $\mathcal{A}_{0}$ denote the subspace
functions $A:\mathbb{R}^{2}\rightarrow\mathfrak{k}$ such that $A\left(
x,0\right)  =0$ for all $x\in\mathbb{R}.$ We will identify $A\in
\mathcal{A}_{0}$ with the connection one form $A\,dx$ and refer to
$A\in\mathcal{A}_{0}$ as a connection one form in the \textbf{complete axial
gauge.}
\end{notation}

If $A:\mathbb{R}^{2}\rightarrow\mathfrak{k}$ ($A\left(  x,0\right)  $ need not
be zero yet), then the curvature $2$-form $A\,dx$ is given simply by
$F^{A\,dx}=f^{A}dx\wedge dy$ where $f^{A}:=-\partial_{y}A$ is the
\textbf{curvature function} associated to $A.$ Conversely given a (continuous
say) function, $f:\mathbb{R}^{2}\rightarrow\mathfrak{k},$ we may define $A\in
A_{0}$ by
\begin{equation}
A\left(  x,y\right)  =-\left[  \int_{0}^{y}f\left(  x,y^{\prime}\right)
dy^{\prime}\right]  . \label{e.1.5}%
\end{equation}
Thus Eq. (\ref{e.1.5}) sets up a linear isomorphism between $\mathfrak{k}%
$-valued curvature functions $\left(  f\right)  $ on $\mathbb{R}^{2}$ and
elements of $\mathcal{A}_{0}$ with the inverse operation given by
$A\rightarrow f=-\partial_{y}A.$

Restricting the expression in Eq. (\ref{e.1.4}) to $\mathcal{A}_{0}$ leads us
to consider the informally defined \textquotedblleft
measure,\textquotedblright%
\begin{equation}
d\mu\left(  A\right)  =\text{\textquotedblleft}\frac{1}{Z}\exp\left(
-\frac{1}{2}\left\Vert -\partial_{y}A\right\Vert ^{2}\right)  \mathcal{D}%
A,\text{\textquotedblright} \label{e.1.6}%
\end{equation}
where again $\mathcal{D}A$ represents an ill defined Lebesgue measure on
$\mathcal{A}_{0}$ and $Z$ is chosen to make $\mu$ a probability measure.
[Justification for using Eq. (\ref{e.1.6}) is provided in Appendix \ref{sec.B}
below.] With this notation and formally using standard finite dimensional
Gaussian integral formulas in this infinite dimensional setting, we should
have
\begin{equation}
\mathbb{E}\left[  e^{i\left\langle f,u\right\rangle }\right]  =\frac{1}{Z}%
\int_{\mathcal{A}_{0}}e^{i\left\langle -\partial_{y}A,u\right\rangle }%
\exp\left(  -\frac{1}{2}\left\Vert \partial_{y}A\right\Vert ^{2}\right)
\mathcal{D}A=e^{-\frac{1}{2}\left\Vert u\right\Vert ^{2}}, \label{e.1.7}%
\end{equation}
where, as usual, we use \textquotedblleft$\mathbb{E}$\textquotedblright\ to
denote integration relative to a probability measure. In other words, for all
$u\in L^{2}\left(  \mathbb{R}^{2},m;\mathfrak{k}\right)  ,$ we should
interpret $\left\langle f,u\right\rangle $ to be a mean zero Gaussian random
variable with variance equal to $\left\Vert u\right\Vert ^{2}.$

Starting with either Eqs. (\ref{e.1.6}) or Eq. (\ref{e.1.7}) along with Eq.
(\ref{e.1.5}), we ultimately want to understand expectations of gauge
invariant functions, $\Psi:\mathcal{A}\rightarrow\mathbb{C}.$ The typical
example of such functions are the so called \textbf{Wilson functionals, }i.e.
functions on $\mathcal{A}$ of the form $\Psi\left(  A\right)  :=U\left(
\left\{  \pt^{A}\left(  \sigma_{i}\right)  \right\}  _{i=1}^{N}\right)  $
where $U:K^{N}\rightarrow\mathbb{C}$ and $\left\{  \sigma_{i}\right\}
_{i=1}^{N}$ is a collection of paths in $\mathbb{R}^{2}$ all chosen so that
$\Psi$ is gauge invariant. The prototypical examples of Wilson functionals is
to take $\Psi$ to be a function of \textbf{Wilson loop variables, }i.e.
$\Psi\left(  A\right)  :=U\left(  \left\{  \operatorname{tr}\left[
\pt^{A}\left(  \sigma_{i}\right)  \right]  \right\}  _{i=1}^{N}\right)
$\textbf{ }where \textquotedblleft$\operatorname{tr}$\textquotedblright\ is
the matrix trace and each $\sigma_{i}$ is now assumed be a loop in the plane.
The main goal of this paper is to give a stochastic proof
\textit{Makeenko--Migdal identities}, see Theorems \ref{thm.2.23} and Theorems
\ref{thm.2.21} below. At  the heart of these equations is an infinite
dimensional integration by parts argument with gauge-field theory
complications. [Without the complications of gauge invariance, the
Makeenko-Migdal identities would likely be referred to as Schwinger-Dyson
equations in the quantum filed theory litterature.] Although most interacting
field theories are still not on firm mathematical grounds, formal integration
in heuristic path integral expressions gives significant insight and
constraints on the underlying quantum field theory.

For the $M=\mathbb{R}^{2}$ -- setting with $K=U\left(  N\right)  ,$ in the
fundamental paper \cite{Levy2017} (which appeared on the archive in 2011), T.
L\'{e}vy was able to show in the $N\rightarrow\infty$ limit that that the
Wilson loop functionals can, with a little added information, be completely
recovered from the Makeenko--Migdal identities. Recent significant progress in
this direction when $M$ is a sphere or even general compact surface may be
found in \cite{Dahlqvist2017} and \cite{Hall2017} respectively. The reader is
also referred to these papers and to \cite{Levy2017} and \cite{Cebron2017a}
for more background on (generalized) gauge fields over two dimensional
manifolds.

The original Makeenko and Migdal (heuristic) identities, in any dimension,
were the subject of  \cite{MM}. V. A. Kazakov and I. K. Kostov (1980) in
\cite[Section 4]{KK} showed in the plane case that one side of the MM identity
may be interpreted as the alternating sum of derivatives of the Wilson loop
functional with respect to the areas of the faces surrounding a simple
crossing, see also \cite[Equation 9]{KazakovU(N)} and Gopakumar and Gross
(1995) \cite[Equation 6.4]{GG}. T. L\'{e}vy \cite{Levy2017} (appeared in 2011)
was the first to provide a rigorous proof of the planar Makeenko--Migdal
equations and also introduced a more general form of the equations. L\'{e}vy's
proof of the generalized Makeenko--Migdal equation (gMM for short) are finite
dimensional in nature and are based on Wilson functional expectation formulas
developed in \cite{Driver89b} and also \cite{GrossKingSengupta}.

A different proof of L\'{e}vy's gMM equations was subsequently given by A.
Dahlqvist (2014) in \cite{Dahl}. Recently, in \cite{DHK2016}, three new proofs
of the gMM identity were given and then later in \cite{DHKsurf} it was shown
two of these proofs also work in the context of the Yang--Mills measure over
an arbitrary compact surface. The latter extension replaces the planar formula
from \cite{Driver89b} by their extensions to compact surfaces developed by A.
Sengupta \cite{Sen1,Sen2,Sen3,Sen4}.

Up to now, all of the rigorous proofs of the gMM equations have been
elementary but tricky exercises in finite dimensional integration by parts.
The main aims of this paper are; 1) to explain the original heuristic infinite
dimensional integration by parts arguments of the \textit{Makeenko--Migdal}
equations in more detail and precision, and 2) to then make (using stochastic
calculus) these heuristic arguments rigorous.

One would certainly like to develop the theory here in the physically
interesting case of $d=4.$ However, it is still unknown how to make sense out
of Eq. (\ref{e.1.4}) when $d\geq3.$ (The $d=4$ case is one aspect of the
Clay-Mathematics Millennium problem dealing with quantized Yang-Mills fields.)
There is however some very promising recent progress in this direction.
Regarding the lattice approximations to $\mu_{\text{YM}_{d}},$ see Chatterjee
\cite{Chatterjee2015}\footnote{In Theorem 7.1 of \cite{Chatterjee2015},
Chatterjee introduces a \textquotedblleft generalized Schwinger--Dyson
equation for $SO\left(  N\right)  $ which in fact is a non-intrinsic writing
of the standard Green's identity for $SO\left(  N\right)  ,$ i.e. integration
by parts for the Laplacian on $SO\left(  N\right)  .$ He is able to make good
use of this non-intrinsic formula to find interesting convergent series expansions
for loop expectations in the lattice model. However, the integration by parts
used by Chatterjee seems to not be very closely related to what Makeenko and
Migdal did in \cite{MM}.}, \cite{Chatterjee2016a}, and especially
\cite{Chatterjee2016} where the lattice normalization constant is shown to be
under \textquotedblleft control\textquotedblright\ as the lattice spacing
tends to zero. On another front, the reader is directed to the work of
Charalambous and L. Gross
\cite{Charalambous2013,Charalambous2015,Charalambous2017} and L. Gross
\cite{Gross2016} where the general theme is to develop the Yang-Mills heat
equation as a method of regularizing the Wilson loop variables when $d\geq3.$
An outline of this strategy (which coupled with \cite{Chatterjee2016} may
finally lead to a construction of quantized Yang-Mills fields when $d=4)$ is
given in the introduction in \cite{Charalambous2015}.

We finish this introduction with a road map of this paper. The author would
also like to empahsize that the this paper is a rigorous interpretation of the
extremely illuminating heuristic discussion of the MM equations given in the
introduction of \cite{Levy2017}.

\subsection{Guide to the reader}

In section \ref{sec.2} we will formally introduce the YM$_{2}$-expectations
and give the statements of the main theorems listed here.

\begin{enumerate}
\item Theorem \ref{thm.2.14} reviews the structure of the YM$_{2}%
$-expectations as worked out in \cite{Driver89b}. [For a perturbative theory
perspective of these expectations the reader is directed to T. Nguyen's very
interesting papers \cite{Nguyen2015,Nguyen2016,Nguyen2016a,Nguyen2016b}.]

\item Theorem \ref{thm.2.21} is a statement of T. L\'{e}vy's generalized form
of the Makeenko--Migdal equations, see \cite{Levy2017}.

\item Theorem \ref{thm.2.23} is a white noise integration by parts formula
relating one side of the gMM equation to an expression directly involving the
curvature white noise function. This is the first main theorem of the paper.

\item Theorem \ref{thm.2.27} is a Wilson-loop expansion formula which relates
curvature white noise expression in Theorem \ref{thm.2.23} to the other side
of the gMM\textit{ }equations.

\item The proof of Theorem \ref{thm.2.21} is a simple consequence of Theorems
\ref{thm.2.23} and \ref{thm.2.27} as explained at the end of Section
\ref{sec.2}.
\end{enumerate}

In Section \ref{sec.3} we will give heuristic \textquotedblleft
proofs\textquotedblright\ of Theorems \ref{thm.2.23} and \ref{thm.2.27}. The
rigorous proofs of Theorems \ref{thm.2.23} and \ref{thm.2.27} will be given in
Sections \ref{sec.4} and Section \ref{sec.5} respectively. Lastly there are
two appendices to this paper. Appendix \ref{sec.A} reviews a few basic facts
involving curvature and parallel translation. Appendix \ref{sec.B} reviews
\textquotedblleft homotopy\textquotedblright\ gauges and their relevance for
interpreting the informal Yang-Mills' expectations which are suggestively
described by Eq. (\ref{e.1.4}).

\subsection{Acknowledgements}

It is a pleasure to acknowledge useful discussions with Brian Hall, Len Gross,
and Todd Kemp pertaining to this work.

\section{Stochastic formulation and statement of the theorems\label{sec.2}}

\subsection{The stochastic framework\label{sec.2.1}}

The heuristics Eq. (\ref{e.1.7}), suggests that the random curvature function,
\textquotedblleft$f=-\partial_{y}A$\textquotedblright, should be taken to be a
$\mathfrak{k}$-valued white noise as in the next definition.

\begin{definition}
[White noise]\label{def.2.1}A $\mathfrak{k}$-valued \textbf{white noise} is a
probability space, $\left(  \Omega,\mathcal{B},\mathbb{P}\right)  ,$ along
with a linear map, $f:L^{2}\left(  \mathbb{R}^{2},m;\mathfrak{k}\right)
\rightarrow L^{2}\left(  \Omega,\mathbb{P};\mathbb{R}\right)  ,$ such that
$f\left(  u\right)  $ is a mean zero Gaussian random variable with covariance
$\mathbb{E}\left[  \left[  f\left(  u\right)  \right]  ^{2}\right]
=\left\Vert u\right\Vert _{2}^{2}$ for each $u\in L^{2}\left(  \mathbb{R}%
^{2},m;\mathfrak{k}\right)  .$
\end{definition}

We will typically write $f\left(  u\right)  $ as $\left\langle
f,u\right\rangle $ and informally think that $\left\langle f,u\right\rangle $
is given as in Eq. (\ref{e.1.2}) even though $f\left(  x,y\right)  $ has no
meaning as a random variable for any $\left(  x,y\right)  \in\mathbb{R}^{2}$.

\begin{notation}
\label{not.2.2}If $\left(  \Omega,\mathcal{B},\mathbb{P},f\right)  $ is a
$\mathfrak{k}$-valued white noise and $B$ is a finite area Borel subset of
$\mathbb{R}^{2},$ we abuse notation and let $f\left(  B\right)  $ (informally
thought of as $\int_{B}f\left(  x,y\right)  dxdy))$ be the random variable
(well defined a $\mathbb{P}$-a.e.) which satisfies,
\[
\left\langle f\left(  B\right)  ,\xi\right\rangle _{\mathfrak{k}%
}:=\left\langle f,1_{B}\xi\right\rangle \text{ for all }\xi\in\mathfrak{k.}%
\]

\end{notation}

A version of $f\left(  B\right)  \in L^{2}\left(  \Omega,\mathbb{P}%
;\mathfrak{k}\right)  $ is given by $f\left(  B\right)  =\sum_{\xi\in\beta
}\left\langle f,1_{B}\xi\right\rangle \xi$ where $\beta\subset\mathfrak{k}$ is
any orthonormal basis for $\mathfrak{k}.$ From the white noise, $f,$ we may
formally use Eq. (\ref{e.1.5}) to recover the random (distribution valued)
connection one form, $A.$ Let us recall L. Gross's method (as explained in
\cite{Driver89b,GrossKingSengupta}) on how to rigorously do this and then how
to use stochastic calculus to interpret the resulting random parallel
translation operators associated to \textquotedblleft tame\textquotedblright%
\ curves in the plane.

\begin{definition}
[Horizontal curves]\label{def.2.3}A \textbf{horizontal curve} is a path,
$\ell:\left[  a,b\right]  \rightarrow\mathbb{R}^{2},$ of the form $\ell\left(
t\right)  =\left(  t,y\left(  t\right)  \right)  $ where $y:\left[
a,b\right]  \rightarrow\mathbb{R}$ is a continuous function on a compact
interval, $\left[  a,b\right]  .$ Further let $R^{\ell}\left(  t\right)  $ be
the region between the curve $y$ and the $x$-axis over interval $\left[
a,t\right]  ,$ see Figure \ref{fig.1}.
\end{definition}

\begin{figure}[ptbh]
\centering
\par
\psize{2.5in} %
\executeiffilenewer{\GraphicsDirectoryRL_regions.svg}{\GraphicsDirectoryRL_regions.pdf}%
{inkscape -z -D --file=\GraphicsDirectoryRL_regions.svg --export-pdf=\GraphicsDirectoryRL_regions.pdf --export-latex}%
\input{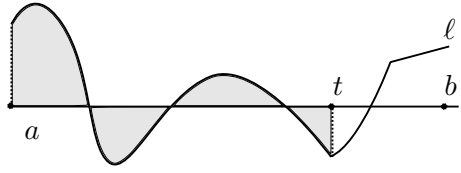}%
 \caption{A typical horizontal curve with
$R^{\ell}\left(  t\right)  $ being the shaded region.}%
\label{fig.1}%
\end{figure}\iffalse%
\begin{figure}
[ptbh]
\begin{center}
\includegraphics[
natheight=0.560800in,
natwidth=1.645200in,
height=0.6049in,
width=1.7189in
]%
{graphics/RL_regions.png}%
\end{center}
\end{figure}
\fi

Again working formally, for $A~dx\in\mathcal{A}_{0}$ we integrate
$f=-\partial_{y}A$ to find
\[
A\left(  x,y\right)  =-\int_{0}^{y}f\left(  x,y^{\prime}\right)  dy^{\prime}%
\]
and so if $\ell\left(  t\right)  =\left(  t,y\left(  t\right)  \right)  $ is a
horizontal\ curve as in Definition \ref{def.2.3}, then
\[
A\,dx\left\langle \dot{\ell}\left(  \tau\right)  \right\rangle =A\left(
\tau,y\left(  \tau\right)  \right)  =-\int_{0}^{y\left(  \tau\right)
}f\left(  \tau,y^{\prime}\right)  dy^{\prime}.
\]
We now integrate this expression to define,%
\begin{align}
M_{t}^{f}\left(  \ell\right)   &  :=\int_{a}^{t}A\,dx\left\langle \dot{\ell
}\left(  \tau\right)  \right\rangle d\tau=-\int_{a}^{t}\left[  \int
_{0}^{y\left(  \tau\right)  }f\left(  \tau,y^{\prime}\right)  dy^{\prime
}\right]  d\tau\label{e.2.1}\\
&  =-\int_{R^{\ell}\left(  t\right)  }\hat{f}\left(  x,y\right)  dxdy,
\label{e.2.2}%
\end{align}
where, for a function $u:\mathbb{R}^{2}\rightarrow\mathfrak{k,}$ we let
\begin{equation}
\hat{u}\left(  x,y\right)  :=\mathrm{sgn}(y)u\left(  x,y\right)  =\left\{
\begin{array}
[c]{ccc}%
u\left(  x,y\right)  & \text{if} & y\geq0\\
-u\left(  x,y\right)  & \text{if} & y\leq0
\end{array}
\right.  . \label{e.2.3}%
\end{equation}
By the construction of $M_{t}^{f}\left(  \ell\right)  $ in Eq. (\ref{e.2.1}),
$\dot{M}_{t}^{f}\left(  \ell\right)  =A\,dx\left\langle \dot{\ell}\left(
t\right)  \right\rangle $ and so (writing $\pt_{t}^{A}\left(  \ell\right)  $
for $\pt_{t}^{A\,dx}\left(  \ell\right)  ),$ we should have%
\begin{equation}
\frac{d}{dt}\pt_{t}^{A}\left(  \ell\right)  +\dot{M}_{t}^{f}\left(
\ell\right)  \pt_{t}^{A}\left(  \ell\right)  =0\text{ with }\pt_{a}^{A}\left(
\ell\right)  =I\in K. \label{e.2.4}%
\end{equation}
In the stochastic setting $f$ and $A$ are distribution valued random variables
and hence Eqs. (\ref{e.2.2}) and (\ref{e.2.4}) will require proper
interpretation which is provided in Definitions \ref{def.2.6} and
\ref{def.2.7} below. Before these key definition we need a little more preparation.

\begin{definition}
\label{def.2.4}If $f$ is a white noise define $\hat{f}$ to be the white noise
determined by%
\[
\left\langle \hat{f},u\right\rangle =\left\langle f,\hat{u}\right\rangle
\text{ for all }u\in L^{2}\left(  \mathbb{R}^{2},m;\mathfrak{k}\right)  .
\]

\end{definition}

Let $\mathcal{B}_{\mathbb{R}^{2}}^{0}$ denote the \textbf{finite area} Borel
subsets of $\mathbb{R}^{2}.$ Notice that if $B\in\mathcal{B}_{\mathbb{R}^{2}%
}^{0},$ then
\[
\hat{f}\left(  B\right)  =\left\langle f,\widehat{1_{B}}\right\rangle
=f\left(  B\cap H_{+}\right)  -f\left(  B\cap H_{-}\right)
\]
where $H_{+}$ and $H_{-}$ denote the upper and lower half planes in
$\mathbb{R}^{2}.$

\begin{definition}
[Filtration]\label{def.2.5}For $s\in\mathbb{R},$ let
\[
\mathcal{F}_{s}^{0}=\sigma\left(  f\left(  B\right)  :B\in\mathcal{B}%
_{\mathbb{R}^{2}}^{0}\text{ s.t. }B\subset\left(  -\infty,s\right)
\times\mathbb{R}\right)
\]
and then let $\left\{  \mathcal{F}_{s}\right\}  _{s\in\mathbb{R}}$ be the
filtration on $\left(  \Omega,\mathcal{F},\mathbb{P}\right)  $ which is the
right continuous version of $\left\{  \mathcal{F}_{s}^{0}\right\}
_{s\in\mathbb{R}}$ augmented by the zero sets of $\mathcal{F}.$
\end{definition}

Defining $M_{t}^{f}\left(  \ell\right)  :=-\hat{f}\left(  R^{\ell}\left(
t\right)  \right)  $ as indicated in Eq. (\ref{e.2.2}) implies that $\left\{
M_{t}^{f}\left(  \ell\right)  :a\leq t\leq b\right\}  $ is a mean zero
$\mathfrak{k}$-valued Gaussian process with
\[
\mathbb{E}\left[  \left\langle M_{t}^{f}\left(  \ell\right)  ,\xi
_{1}\right\rangle \left\langle M_{\tau}^{f}\left(  \ell\right)  ,\xi
_{2}\right\rangle \right]  =\left\langle \xi_{1},\xi_{2}\right\rangle
_{\mathfrak{k}}m^{2}\left(  R^{\ell}\left(  t\wedge\tau\right)  \right)
\text{ }\forall~\xi_{1,}\xi_{2}\in\mathfrak{k}\text{ and }a\leq t,\tau\leq b.
\]
This process has independent increments and is a time change of $\mathfrak{k}%
$-valued Brownian motion and hence has a continuous version. The existence of
a continuous version also follows using Kolmogorov's continuity criteria along
with the observation that (for $t>\tau)$
\[
\mathbb{E}\left[  \left\vert M_{t}^{f}\left(  \ell\right)  -M_{\tau}%
^{f}\left(  \ell\right)  \right\vert _{\mathfrak{k}}^{2}\right]  =\dim K\cdot
m\left(  R^{\ell}\left(  t\right)  \setminus R^{\ell}\left(  \tau\right)
\right)  =\dim K\cdot\int_{\tau}^{t}\left\vert y\left(  x\right)  \right\vert
dx\leq C\left\vert t-\tau\right\vert
\]
which, because $\left\{  M_{t}^{f}\left(  \ell\right)  \right\}  _{a\leq t\leq
b}$ is a Gaussian process, implies for all $p\geq2,$ there exists
$C_{p}<\infty$ such that%
\[
\mathbb{E}\left[  \left\vert M_{t}^{f}\left(  \ell\right)  -M_{\tau}%
^{f}\left(  \ell\right)  \right\vert _{\mathfrak{k}}^{p}\right]  \leq
C_{p}\left\vert t-\tau\right\vert ^{p/2}.
\]

\begin{definition}
[Martingales]\label{def.2.6}From now on we assume that $M_{t}^{f}\left(
\ell\right)  $ refers to a continuous version of $\left[  a,b\right]  \ni
t\rightarrow-\hat{f}\left(  R^{\ell}\left(  t\right)  \right)  .$ This version
becomes a continuous $\mathfrak{k}$-valued martingale adapted to the
filtration $\left\{  \mathcal{F}_{s}\right\}  _{s\in\mathbb{R}}.$
\end{definition}

Motivated by Eq. (\ref{e.2.4}) we now defined the stochastic parallel
translation as follows.

\begin{definition}
[Stochastic Parallel Translation]\label{def.2.7}If $\left[  a,b\right]  \ni
t\rightarrow\ell\left(  t\right)  =\left(  t,y\left(  t\right)  \right)
\in\mathbb{R}^{2}$ is a horizontal\ curve as in Definition \ref{def.2.3},
let\footnote{To emphasize that the white noise is the fundamental input we are
now writing $//_{t}^{f}\left(  \ell\right)  $ in place of $//_{t}^{A}\left(
\ell\right)  .$ Although, in the heuristic proof section \ref{sec.3} below, we
will briefly revert back to the old notation.} $\pt_{t}^{f}\left(
\ell\right)  $ be the $K$-valued random process which is defined as the
solution to the stochastic differential equation;%
\begin{equation}
d\pt_{t}^{f}\left(  \ell\right)  +\delta M_{t}^{f}\left(  \ell\right)
\pt_{t}^{f}\left(  \ell\right)  =0\text{ with }\pt_{a}\left(  \ell\right)
=I\in K, \label{e.2.5}%
\end{equation}
where $\delta M_{t}^{f}\left(  \ell\right)  \ $denotes the Fisk-Stratonovich
differential of $M^{\ell}.$ Parallel translation along any purely vertical
path in $\mathbb{R}^{2}$ is defined to be the constant function $I\in K.$ We
further simply write $\pt^{f}\left(  \ell\right)  $ for $\pt_{b}^{f}\left(
\ell\right)  .$
\end{definition}

A directed path $\sigma$ in $\mathbb{R}^{2}$ is \textbf{tame }if it is the
concatenation of finitely many vertical paths and forwards and backwards
horizontal paths. For a tame path $\sigma$ we define $\pt^{f}\left(
\sigma\right)  $ in the natural way as the products of forward parallel
translations for the forward paths and inverses of parallel translations for
the backward horizontal paths. For example if $\sigma$ is the tame path shown
in Figure \ref{fig.2} we let $\pt^{f}\left(  \sigma\right)  =\pt^{f}\left(
\ell_{3}\right)  \pt^{f}\left(  \ell_{2}\right)  ^{-1}\pt^{f}\left(  \ell
_{1}\right)  $ where $\ell_{i}\left(  t\right)  =\left(  t,y_{i}\left(
t\right)  \right)  $ and $y_{1}:\left[  a,c\right]  \rightarrow\mathbb{R},$
$y_{2}:\left[  b,c\right]  \rightarrow\mathbb{R},$ and $y_{3}:\left[
b,d\right]  \rightarrow\mathbb{R}$ are the functions whose graphs are
indicated in Figure \ref{fig.2}.\begin{figure}[ptbh]
\centering
\par
\psize{3.5in} %
\executeiffilenewer{\GraphicsDirectorytame.svg}{\GraphicsDirectorytame.pdf}%
{inkscape -z -D --file=\GraphicsDirectorytame.svg --export-pdf=\GraphicsDirectorytame.pdf --export-latex}%
\input{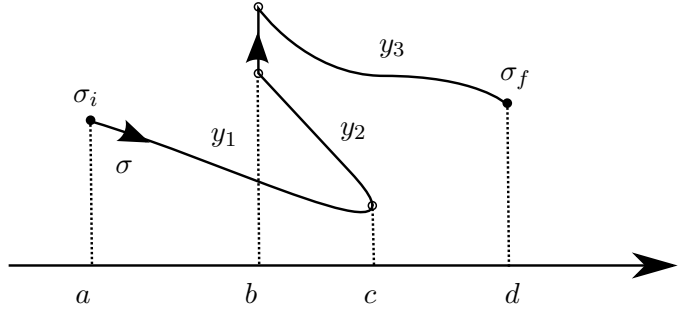}%
 \caption{A tame path, $\sigma,$ with one
vertical segment, two forward horizontal segments, and one backwards
horizontal segment. }%
\label{fig.2}%
\end{figure}\iffalse%
\begin{figure}
[ptbh]
\begin{center}
\includegraphics[
natheight=2.252600in,
natwidth=4.741600in,
height=1.3826in,
width=2.8931in
]%
{tame1.wmf}%
\end{center}
\end{figure}
\fi

\begin{notation}
\label{not.2.8}Given a directed tame path $\sigma$ in $\mathbb{R}^{2},$ let
$\tilde{\sigma}\subset\mathbb{R}^{2}$ denote the image of $\sigma$ in
$\mathbb{R}^{2},$ $\sigma_{f}$ denote the final point of $\sigma$ and
$\sigma_{i}$ denote the initial point of $\sigma,$ see Figure \ref{fig.2}. A
\textbf{tame graph}, $\mathbb{G},$ in $\mathbb{R}^{2}$ is a finite collection
of directed tame paths such that; if $\sigma,\tau$ are any two distinct
elements in $\mathbb{G}$ then $\tilde{\sigma}\cap\tilde{\tau}\subset\left\{
\sigma_{i},\tau_{i},\sigma_{f},\tau_{f}\right\}  ,$ see for example Figure
\ref{fig.3}. Let $V\left(  \mathbb{G}\right)  =\cup_{\sigma\in\mathbb{G}%
}\left\{  \sigma_{i},\sigma_{f}\right\}  $ denote the \textbf{vertices} of
$\mathbb{G}.$\begin{figure}[ptbh]
\centering
\par
\psize{3.0in} %
\executeiffilenewer{\GraphicsDirectoryloop4.svg}{\GraphicsDirectoryloop4.pdf}%
{inkscape -z -D --file=\GraphicsDirectoryloop4.svg --export-pdf=\GraphicsDirectoryloop4.pdf --export-latex}%
\input{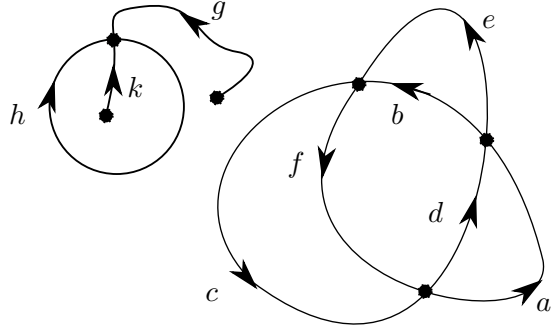}%
 \caption{An example of a tame graph,
$\mathbb{G}$. In this example $g_{f}=k_{f}=h_{f}=h_{i}.$ Note, as in this
graph, there is not assumption that tame graphs are connected.}%
\label{fig.3}%
\end{figure}\iffalse%
\begin{figure}
[ptbh]
\begin{center}
\includegraphics[
natheight=2.413700in,
natwidth=4.034300in,
height=1.471in,
width=2.4491in
]%
{loop4.bmp}%
\end{center}
\end{figure}
\fi

\end{notation}

\begin{notation}
\label{not.2.9}To each directed tame graph, $\mathbb{G},$ in $\mathbb{R}^{2},$
let $K^{\mathbb{G}}$ be the the \textbf{configuration space} of all functions,
$\omega:\mathbb{G}\rightarrow K.$ Further let
\[
\pt^{f}\left(  \mathbb{G}\right)  :=\left\{  \pt^{f}\left(  \sigma\right)
:\sigma\in\mathbb{G}\right\}  \in K^{\mathbb{G}}%
\]
be stochastic parallel translation along all the paths in $\mathbb{G}.$ More
accurately we should denote $\pt^{f}\left(  \mathbb{G}\right)  $ by
$\pt^{f}|_{\mathbb{G}}$ but this notation is a bit cumbersome.
\end{notation}

\begin{definition}
[Discrete gauge tranformations]\label{def.2.10}A function $u:V\left(
\mathbb{G}\right)  \rightarrow K$ is called a \textbf{discrete gauge
transformation. }To each such function $u:V\left(  \mathbb{G}\right)
\rightarrow K$ and $\omega\in K^{\mathbb{G}},$ we let $\omega^{u}\in
K^{\mathbb{G}}$ be defined by
\[
\omega^{u}\left(  \sigma\right)  =u\left(  \sigma_{f}\right)  ^{-1}%
\omega\left(  \sigma\right)  u\left(  \sigma_{i}\right)  \text{ for all
}\sigma\in\mathbb{G}.
\]

\end{definition}

\begin{example}
If $x_{0}\in V\left(  \mathbb{G}\right)  \subset\mathbb{R}^{2}$ and $k\in K$
are given, then $u_{x_{0},k}:V\left(  \mathbb{G}\right)  \rightarrow K$
defined by
\[
u_{x_{0},k}\left(  x\right)  =\left\{
\begin{array}
[c]{ccc}%
k & \text{if} & x=x_{0}\\
I & \text{if} & x\neq x_{0}%
\end{array}
\right.
\]
is a discrete gauge transformation. Moreover any discrete gauge transformation
may be written as a finite product of the $u_{x_{0},k}$ via%
\begin{equation}
u=\prod_{x_{0}\in V\left(  \mathbb{G}\right)  }u_{x_{0},u\left(  x_{0}\right)
} \label{e.2.6}%
\end{equation}
where the above pointwise product is independent of any choice of ordering of
the terms.
\end{example}

\begin{definition}
[Discrete gauge invariant functions]\label{def.2.12}A function,
$U:K^{\mathbb{G}}\rightarrow\mathbb{C},$ is said to be \textbf{discrete gauge
invariant at }$x_{0}\in V\left(  \mathbb{G}\right)  $ if $U\left(
\omega^{u_{x_{0},k}}\right)  =U\left(  \omega\right)  $ for all $k\in K$ and
$\omega\in K^{\mathbb{G}}.$ The function, $U:K^{\mathbb{G}}\rightarrow
\mathbb{C},$ is said to be \textbf{discrete gauge invariant }if $U\left(
\omega^{u}\right)  =U\left(  \omega\right)  $ for all $\omega\in
K^{\mathbb{G}}$ and all discrete gauge transformations, $u:V\left(
\mathbb{G}\right)  \rightarrow K.$ [Because of Eq. (\ref{e.2.6}) it is easily
seen that $U$ is discrete gauge invariant iff it is discrete gauge invariant
at each $x_{0}\in V\left(  \mathbb{G}\right)  .]$
\end{definition}

\begin{remark}
\label{rem.2.13}For a tame graph, $\mathbb{G},$ in $\mathbb{R}^{2}$ and
$A\in\mathcal{A},$ let (analogous to Notation \ref{not.2.9})%
\[
\pt^{A}\left(  \mathbb{G}\right)  :=\left\{  \pt^{A^{g}}\left(  \sigma\right)
:\sigma\in\mathbb{G}\right\}  \in K^{\mathbb{G}}.
\]
If $U:K^{\mathbb{G}}\rightarrow\mathbb{C}$ is a discrete gauge invariant
function, then $\Psi:\mathcal{A}\rightarrow\mathbb{C}$ defined by $\Psi\left(
A\right)  =U\left(  \pt^{A}\left(  \mathbb{G}\right)  \right)  $ is a gauge
invariant function on $\mathcal{A}.$ Indeed if $g\in C^{1}\left(
\mathbb{R}^{2}\rightarrow K\right)  ,$ then (see Eq. (\ref{e.A.2}) of Appendix
\ref{sec.A})
\begin{align*}
\Psi\left(  A^{g}\right)   &  =U\left(  \pt^{A}\left(  \mathbb{G}\right)
\right)  =U\left(  \left\{  g\left(  \sigma_{f}\right)  ^{-1}\pt^{A}\left(
\sigma\right)  g\left(  \sigma_{i}\right)  \right\}  _{\sigma\in\mathbb{G}%
}\right) \\
&  =U\left(  \pt^{A}\left(  \mathbb{G}\right)  \right)  =\Psi\left(  A\right)
.
\end{align*}

\end{remark}

When $U:K^{\mathbb{G}}\rightarrow\mathbb{C}$ is a discrete gauge invariant
function, it is shown in \cite{Driver89b} that $\mathbb{E}\left[  U\left(
\pt^{f}\left(  \mathbb{G}\right)  \right)  \right]  $ may be computed as a
finite dimensional integrals relative to a density constructed via certain
products of the convolution heat kernel on $K.$ The next theorem summarizes
the information we need for the purposes of this paper.

\begin{theorem}
[\cite{Driver89b}]\label{thm.2.14}If $\mathbb{G}$ is a directed tame graph and
$U:K^{\mathbb{G}}\rightarrow\mathbb{C}$ is a bounded measurable discrete gauge
invariant function, then
\[
\mathbb{E}\left[  U\left(  \pt^{f}\left(  \mathbb{G}\right)  \right)  \right]
=\int_{K^{\mathbb{G}}}U\left(  \omega\right)  \rho_{\mathbb{G}}\left(
\omega\right)  d\lambda_{\mathbb{G}}\left(  \omega\right)
\]
where $\lambda_{\mathbb{G}}$ is normalized Haar measure on $K^{\mathbb{G}}$
and $\rho_{\mathbb{G}}$ is a smooth density function which is depends only on
the topology of $\mathbb{G}$ and the areas $\left\{  t_{i}\right\}  _{i=1}%
^{N}$ of the bounded connected regions in $\mathbb{R}^{2}\setminus\mathbb{G}.$
Moreover, relative to topological preserving perturbations of $\mathbb{G},$
$\mathbb{E}\left[  U\left(  \pt^{f}\left(  \mathbb{G}\right)  \right)
\right]  $ is a smooth function of $\left(  t_{1},\dots,t_{N}\right)  .$
\end{theorem}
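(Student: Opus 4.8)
The plan is to reduce the expectation $\mathbb{E}[U(\pt^f(\mathbb{G}))]$ to a finite-dimensional integral by exploiting the structure of stochastic parallel translation along horizontal curves, and the fact that the noise $f$ is white. The key observation is that parallel translation along a horizontal curve $\ell(t)=(t,y(t))$ is driven by the Gaussian martingale $M_t^f(\ell)=-\hat f(R^\ell(t))$, whose increments are measurable with respect to the noise restricted to vertical strips. Because distinct paths of a tame graph meet only at their endpoints (Notation \ref{not.2.8}), the plane $\mathbb{R}^2\setminus\mathbb{G}$ decomposes into bounded faces $F_1,\dots,F_N$ of areas $t_1,\dots,t_N$ together with unbounded faces, and the increments of the various $M^f(\ell)$ over disjoint regions are independent increments of a time-changed $\mathfrak{k}$-valued Brownian motion. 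So first I would set up this face decomposition and record that the covariance structure established just before Definition \ref{def.2.6} shows that the joint law of the holonomies depends on the white noise only through its total (signed) mass $\hat f(F_j)$ on each face.

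Next I would slice the graph by vertical lines through the $x$-coordinates of all vertices in $V(\mathbb{G})$, producing finitely many vertical strips. Within each strip every horizontal path is a genuine graph over an $x$-interval and the paths are linearly ordered by height, so the parallel translations there are solutions of Stratonovich SDEs driven by independent Brownian increments whose quadratic variations are exactly the areas swept out. The plan is then to condition on $\mathcal{F}_s$ and push the computation from left to right across the strips, integrating out the increments of $M^f$ in each strip against the heat kernel on $K$: the Stratonovich SDE $d\pt_t + \delta M_t\,\pt_t = 0$ has transition law given by the convolution (Brownian) semigroup $e^{t\Delta_K/2}$ on $K$, where the time parameter is the accumulated area. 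This is precisely L. Gross's method referenced after Notation \ref{not.2.2}, and it is what turns the iterated conditional expectations into a product of heat kernels indexed by the faces. Gauge invariance of $U$ (Definition \ref{def.2.12}) is exactly what is needed to collapse the holonomy of small connecting vertical/auxiliary segments and to make the answer depend only on the genuine faces rather than on the arbitrary slicing.

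Assembling these pieces gives $\mathbb{E}[U(\pt^f(\mathbb{G}))]=\int_{K^{\mathbb{G}}}U(\omega)\,\rho_{\mathbb{G}}(\omega)\,d\lambda_{\mathbb{G}}(\omega)$, where $\rho_{\mathbb{G}}$ is built as a product of the heat kernels $p_{t_j}$ on $K$ associated to the bounded faces; since each $p_{t_j}$ is a smooth, strictly positive function of $(t_j,\omega)$ and depends on $t_j$ only through the heat semigroup, $\rho_{\mathbb{G}}$ depends only on the combinatorial type of $\mathbb{G}$ and on the face areas, and is smooth in $(t_1,\dots,t_N)$. The final smoothness claim then follows because small area-preserving, topology-preserving perturbations of $\mathbb{G}$ change neither the number of faces nor the adjacency data, so $\mathbb{E}[U(\pt^f(\mathbb{G}))]$ is a fixed smooth expression in the $t_j$; smoothness in the $t_j$ is inherited from the real-analyticity in time of the heat kernel $p_t=e^{t\Delta_K/2}\delta_I$ for $t>0$.

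I expect the main obstacle to be the rigorous passage from the strip-by-strip Stratonovich SDEs to the heat-kernel product, i.e. justifying that conditioning on $\mathcal{F}_s$ and integrating out the white noise in one strip really produces a clean convolution by $p_{t_j}$ without hidden dependence between strips. The delicacy is that a single face may span several vertical strips and be bounded above and below by different paths, so one must carefully track which increments of $M^f(\ell)$ for which $\ell$ contribute to the mass $\hat f(F_j)$, check that the Stratonovich-to-Itô correction is consistent with the Laplacian $\Delta_K$ (rather than a non-invariant operator), and verify that the independence of increments over disjoint faces is not spoiled by the overlapping supports of the different martingales. The heavy lifting for this is precisely the content of \cite{Driver89b}, which Theorem \ref{thm.2.14} is quoting, so in the present paper I would cite that construction and restrict myself to verifying that its hypotheses (tameness of $\mathbb{G}$, boundedness and gauge invariance of $U$) are met here.
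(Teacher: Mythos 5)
The paper offers no proof of Theorem \ref{thm.2.14}; it is imported verbatim from \cite{Driver89b} (see also \cite{GrossKingSengupta}), and your outline --- vertical-strip slicing, conditioning on $\mathcal{F}_s$, Markov propagation from left to right, and convolution by the heat kernel $e^{t\Delta_K/2}$ with time equal to swept area --- is a faithful sketch of exactly that cited construction, with the genuinely delicate points (faces spanning several strips, correlations between the overlapping martingales $M^f(\ell)$, the Stratonovich--It\^{o} correction producing $\Delta_K$) correctly identified and correctly deferred to the reference. The only caution is your phrase that the joint law of the holonomies depends on the noise ``only through its total mass $\hat f(F_j)$ on each face'': the holonomies are nonlinear functionals of the full noise, and the dependence on the face areas alone is the conclusion of the theorem rather than an a priori observation, so it cannot be used as an input to the argument.
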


The following corollary follows directly from Theorem \ref{thm.2.14}. This
corollary is also \textquotedblleft explained\textquotedblright\ heuristically
in Meta-Theorems \ref{mthm.B.55} and \ref{mthm.B.56} of Appendix \ref{sec.B}.

\begin{corollary}
[Area preserving diffeomorphism invariance]\label{cor.2.15}If $\varphi$ is any
area preserving diffeomorphism of $\mathbb{R}^{2}$ such that $\varphi
\circ\sigma$ is a tame path for all $\sigma\in\mathbb{G},$ then%
\[
\mathbb{E}\left[  U\left(  \pt^{f}\left(  \mathbb{G}\right)  \right)  \right]
=\mathbb{E}\left[  U\left(  \left\{  \pt^{f}\left(  \varphi\circ\sigma\right)
\right\}  _{\sigma\in\mathbb{G}}\right)  \right]  .
\]

\end{corollary}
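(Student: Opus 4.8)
The plan is to deduce this directly from the formula in Theorem~\ref{thm.2.14}, using that an area preserving diffeomorphism changes neither the combinatorial type of a tame graph nor the areas of the bounded faces of its complement. Write $\mathbb{G}^{\prime}:=\{\varphi\circ\sigma:\sigma\in\mathbb{G}\}$ and observe that the right-hand side of the asserted identity is exactly $\mathbb{E}[U(\pt^{f}(\mathbb{G}^{\prime}))]$ once we index the configuration space $K^{\mathbb{G}^{\prime}}$ by $\mathbb{G}$ through the bijection $\sigma\mapsto\varphi\circ\sigma$. So it suffices to show that $\mathbb{G}^{\prime}$ is a tame graph and that the two densities produced by Theorem~\ref{thm.2.14} agree, i.e. $\rho_{\mathbb{G}^{\prime}}=\rho_{\mathbb{G}}$ under this identification.

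First I would check that $\mathbb{G}^{\prime}$ is an admissible input. By hypothesis each $\varphi\circ\sigma$ is tame. Since $\varphi$ is injective, for distinct $\sigma,\tau\in\mathbb{G}$ we have $\widetilde{\varphi\circ\sigma}\cap\widetilde{\varphi\circ\tau}=\varphi(\tilde{\sigma}\cap\tilde{\tau})\subset\varphi(\{\sigma_{i},\tau_{i},\sigma_{f},\tau_{f}\})=\{(\varphi\circ\sigma)_{i},(\varphi\circ\tau)_{i},(\varphi\circ\sigma)_{f},(\varphi\circ\tau)_{f}\}$, so $\mathbb{G}^{\prime}$ satisfies the intersection condition of Notation~\ref{not.2.8} and is a tame graph with $V(\mathbb{G}^{\prime})=\varphi(V(\mathbb{G}))$. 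Because $\varphi$ matches initial and final points, $(\varphi\circ\sigma)_{i}=\varphi(\sigma_{i})$ and likewise for $\sigma_{f}$, the path-relabeling $\sigma\mapsto\varphi\circ\sigma$ intertwines the discrete gauge actions on $K^{\mathbb{G}}$ and $K^{\mathbb{G}^{\prime}}$; hence $U$, viewed on $K^{\mathbb{G}^{\prime}}$, is still discrete gauge invariant and Theorem~\ref{thm.2.14} applies to both graphs. The same relabeling carries normalized Haar measure $\lambda_{\mathbb{G}}$ to $\lambda_{\mathbb{G}^{\prime}}$, since each is the product of normalized Haar over the (identified) index set.

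It then remains to identify the densities. The map $\varphi$ is a homeomorphism of $\mathbb{R}^{2}$, so it carries the compact trace $\widetilde{\mathbb{G}}=\cup_{\sigma}\tilde{\sigma}$ homeomorphically onto $\widetilde{\mathbb{G}^{\prime}}$ and induces a bijection of the connected components of the complements that preserves the vertex--edge--face incidence data; being proper (its inverse is continuous, so preimages of compacts are compact) it sends the unique unbounded component to the unbounded component, hence matches bounded faces with bounded faces. This is precisely the statement that $\mathbb{G}$ and $\mathbb{G}^{\prime}$ have the same topology in the sense of Theorem~\ref{thm.2.14}. Moreover, if $F$ is a bounded face of $\mathbb{R}^{2}\setminus\widetilde{\mathbb{G}}$ with area $t_{i}=m(F)$, then the corresponding face $\varphi(F)$ has area $m(\varphi(F))=m(F)=t_{i}$ by the area-preserving hypothesis $\varphi_{*}m=m$. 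Since Theorem~\ref{thm.2.14} asserts that $\rho$ depends only on the topology and on the tuple $(t_{1},\dots,t_{N})$, we conclude $\rho_{\mathbb{G}^{\prime}}=\rho_{\mathbb{G}}$, and the two integral representations coincide term by term.

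The only genuinely delicate point is the third one: turning the phrase \textquotedblleft depends only on the topology of $\mathbb{G}$\textquotedblright\ into a precise statement and verifying that a plane homeomorphism preserves exactly the combinatorial data entering the construction of $\rho_{\mathbb{G}}$ in \cite{Driver89b} (the face structure and the way heat kernels on $K$ are assembled over the faces). Given the explicit heat-kernel form of $\rho_{\mathbb{G}}$, this reduces to checking that the face--area data and the incidence combinatorics are all that the construction sees, which is exactly what Theorem~\ref{thm.2.14} records; everything else is routine bookkeeping.
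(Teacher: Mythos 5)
Your proof is correct and follows the same route as the paper, which simply observes that the corollary ``follows directly from Theorem \ref{thm.2.14}'': an area preserving diffeomorphism preserves both the combinatorial structure of the tame graph and the areas of the bounded faces, so the density $\rho_{\mathbb{G}}$ is unchanged. Your additional bookkeeping (tameness of $\varphi\circ\mathbb{G}$, compatibility with the discrete gauge action, and the matching of bounded faces) is exactly the detail the paper leaves implicit.
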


\begin{definition}
[$\sigma$-directional derivatives]\label{def.2.16}For $\xi\in\mathfrak{k},$
$\sigma\in\mathbb{G},$ $\omega\in K^{\mathbb{G}},$ and $U:K^{\mathbb{G}%
}\rightarrow\mathbb{C}$ differentiable, let%
\[
\left(  \nabla_{\xi}^{\sigma}U\right)  \left(  \omega\right)  =\frac{d}%
{dt}|_{0}U\left(  \left\{  \omega\left(  b\right)  e^{t\delta_{\sigma,b}\xi
}\right\}  _{b\in\mathbb{G}}\right)  :=\frac{d}{dt}|_{0}U\left(  \omega
_{t}\right)
\]
where, for $t\in\mathbb{R}$ and $b\in\mathbb{G},$%
\[
\omega_{t}\left(  b\right)  =\left\{
\begin{array}
[c]{ccc}%
\omega\left(  b\right)  & \text{if} & b\neq\sigma\\
\omega\left(  \sigma\right)  e^{t\xi} & \text{if} & b=\sigma
\end{array}
.\right.
\]

\end{definition}

\begin{lemma}
\label{lem.2.17}If $\sigma\in\mathbb{G},$ $\xi\in\mathfrak{k},$ and
$U:K^{\mathbb{G}}\rightarrow\mathbb{C}$ is discrete gauge invariant, then
\[
\left(  \nabla_{\xi}^{\sigma}U\right)  \left(  \omega^{u}\right)  =\left(
\nabla_{\mathrm{Ad}_{u\left(  \sigma_{i}\right)  }\xi}^{\sigma}U\right)
\left(  \omega\right)  \text{ }\forall~\omega\in K^{\mathbb{G}}\text{ and
}u:V\left(  \mathbb{G}\right)  \rightarrow K.
\]
In particular if $u\left(  \sigma_{i}\right)  =I,$ then $\left(  \nabla_{\xi
}^{\sigma}U\right)  \left(  \omega^{u}\right)  =\left(  \nabla_{\xi}^{\sigma
}U\right)  \left(  \omega\right)  $ $\forall~\omega\in K^{\mathbb{G}}.$
\end{lemma}

\begin{proof}
Since
\begin{align*}
\omega^{u}\left(  b\right)  e^{t\delta_{\sigma,b}\xi}  &  =u\left(
b_{f}\right)  ^{-1}\omega\left(  b\right)  u\left(  b_{i}\right)
e^{t\delta_{\sigma,b}\xi}\\
&  =u\left(  b_{f}\right)  ^{-1}\omega\left(  b\right)  u\left(  b_{i}\right)
e^{t\delta_{\sigma,b}\xi}u\left(  b_{i}\right)  ^{-1}u\left(  b_{i}\right) \\
&  =u\left(  b_{f}\right)  ^{-1}\omega\left(  b\right)  e^{t\delta_{\sigma
,b}\mathrm{Ad}_{u\left(  b_{i}\right)  }\xi}u\left(  b_{i}\right)  =u\left(
b_{f}\right)  ^{-1}\omega\left(  b\right)  e^{t\delta_{\sigma,b}%
\mathrm{Ad}_{u\left(  \sigma_{i}\right)  }\xi}u\left(  b_{i}\right)  ,
\end{align*}
it follows using the discrete gauge invariance of $U$ that%
\begin{align*}
\left(  \nabla_{\xi}^{\sigma}U\right)  \left(  \omega^{u}\right)   &
=\frac{d}{dt}|_{0}U\left(  \left\{  u\left(  b_{f}\right)  ^{-1}\omega\left(
b\right)  e^{t\delta_{\sigma,b}\mathrm{Ad}_{u\left(  b_{i}\right)  }\xi
}u\left(  b_{i}\right)  \right\}  _{b\in\mathbb{G}}\right) \\
&  =\frac{d}{dt}|_{0}U\left(  \left\{  \omega\left(  b\right)  e^{t\delta
_{\sigma,b}\mathrm{Ad}_{u\left(  \sigma_{i}\right)  }\xi}\right\}
_{b\in\mathbb{G}}\right)  =\left(  \nabla_{\mathrm{Ad}_{u\left(  \sigma
_{i}\right)  }\xi}^{\sigma}U\right)  \left(  \omega\right)  .
\end{align*}

\end{proof}

One consequence of the previous lemma is that $\nabla_{\xi}^{\sigma}U$ is
typically not discrete gauge invariant. In the next proposition, we will show
that certain second order differential operators do preserve discrete gauge
invariant functions.

\begin{notation}
[$\left(  \nabla^{\sigma_{1}}\cdot\nabla^{\sigma_{2}}U\right)  \left(
\Gamma\left(  f\right)  \right)  $]\label{not.2.18}If $\sigma_{1},\sigma
_{2}\in\mathbb{G}$ and $U:K^{\mathbb{G}}\rightarrow\mathbb{C}$ is twice
continuously differentiable, let
\begin{equation}
\nabla^{\sigma_{1}}\cdot\nabla^{\sigma_{2}}U=\sum_{\xi\in\beta}\nabla_{\xi
}^{\sigma_{1}}\nabla_{\xi}^{\sigma_{2}}U \label{e.2.7}%
\end{equation}
where $\beta$ is an orthonormal basis for $\mathfrak{k.}$
\end{notation}

Since $\mathfrak{k}\times\mathfrak{k}\ni\left(  \xi,\eta\right)
\rightarrow\left(  \nabla_{\eta}^{\sigma_{1}}\nabla_{\xi}^{\sigma_{2}%
}U\right)  \left(  \omega\right)  $ is a bilinear form on $\mathfrak{k}%
\times\mathfrak{k},$ it is easily verified that the sum in Eq. (\ref{e.2.7})
is independent of the choice of orthonormal basis, $\beta,$ for $\mathfrak{k.}%
$

\begin{proposition}
\label{pro.2.19}Let $b,c$ be any two bonds in $\mathbb{G}$ such that
$b_{i}=c_{i}=q$ as in Figure \ref{fig.4}. Then if $U:K^{\mathbb{G}}%
\rightarrow\mathbb{C}$ is twice continuously differentiable and discrete gauge
invariant then $\left(  \nabla^{b}\cdot\nabla^{c}U\right)  $ is still discrete
gauge invariant.\begin{figure}[ptbh]
\centering
\par
\psize{1.5in} %
\executeiffilenewer{\GraphicsDirectoryvbonds.svg}{\GraphicsDirectoryvbonds.pdf}%
{inkscape -z -D --file=\GraphicsDirectoryvbonds.svg --export-pdf=\GraphicsDirectoryvbonds.pdf --export-latex}%
\input{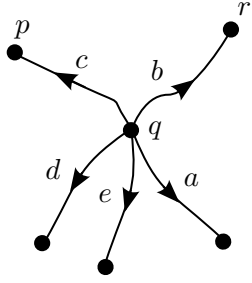}%
\caption{Two bonds among five bonds in
$\mathbb{G}$ sharing $q$ as their initial points.}%
\label{fig.4}%
\end{figure}\iffalse%
\begin{figure}
[ptbh]
\begin{center}
\includegraphics[
natheight=2.019700in,
natwidth=2.019700in,
height=2.1018in,
width=2.1018in
]%
{vbonds.bmp}%
\end{center}
\end{figure}
\fi

\end{proposition}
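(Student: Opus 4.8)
The plan is to prove the stronger pointwise transformation law
\[
\left(\nabla_\xi^b\nabla_\eta^c U\right)\left(\omega^u\right)=\left(\nabla_{\mathrm{Ad}_h\xi}^b\nabla_{\mathrm{Ad}_h\eta}^c U\right)\left(\omega\right),\qquad h:=u\left(q\right),
\]
valid for every discrete gauge transformation $u:V(\mathbb{G})\to K$, every $\xi,\eta\in\mathfrak{k}$, and every $\omega\in K^{\mathbb{G}}$, and then to recover the proposition by setting $\eta=\xi$ and summing over an orthonormal basis. It is worth stressing at the outset that one cannot simply iterate Lemma \ref{lem.2.17}, since the first-order operator $\nabla_\eta^c U$ is in general \emph{not} discrete gauge invariant; the second-order transformation law must be computed directly.

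To set up the computation I would write, for $\sigma\in\mathbb{G}$ and $\zeta\in\mathfrak{k}$, the right-translation operator $R_{\sigma,\zeta}:K^{\mathbb{G}}\to K^{\mathbb{G}}$ defined by $(R_{\sigma,\zeta}\omega)(b')=\omega(b')e^{\delta_{\sigma,b'}\zeta}$, so that Definition \ref{def.2.16} reads $(\nabla_\zeta^\sigma U)(\omega)=\frac{d}{dt}\big|_0 U(R_{\sigma,t\zeta}\omega)$ and
\[
\left(\nabla_\xi^b\nabla_\eta^c U\right)\left(\omega\right)=\frac{d}{dt}\Big|_0\frac{d}{ds}\Big|_0 U\left(R_{c,s\eta}R_{b,t\xi}\,\omega\right).
\]
The key algebraic step is to commute the pair of right-translations past the gauge transformation. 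Using $h e^{t\xi}h^{-1}=e^{t\mathrm{Ad}_h\xi}$ together with the hypothesis $b_i=c_i=q$ — so that the relevant base-point value is $u(b_i)=u(c_i)=h$ in \emph{both} slots — a direct check of each component shows
\[
R_{c,s\eta}R_{b,t\xi}\left(\omega^u\right)=\left(R_{c,s\,\mathrm{Ad}_h\eta}R_{b,t\,\mathrm{Ad}_h\xi}\,\omega\right)^u .
\]
Applying the discrete gauge invariance of $U$ to kill the outer $(\cdot)^u$ and then differentiating in $t$ and $s$ at $0$ yields exactly the transformation law displayed above. (The same computation covers the diagonal case $b=c$, where the two translations act on the same factor.)

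To finish I would set $\eta=\xi$, sum over an orthonormal basis $\beta$ of $\mathfrak{k}$, and invoke the transformation law to get
\[
\left(\nabla^b\cdot\nabla^c U\right)\left(\omega^u\right)=\sum_{\xi\in\beta}\left(\nabla_{\mathrm{Ad}_h\xi}^b\nabla_{\mathrm{Ad}_h\xi}^c U\right)\left(\omega\right).
\]
Because the inner product on $\mathfrak{k}$ is $\mathrm{Ad}_K$-invariant, $\mathrm{Ad}_h$ is orthogonal, so $\mathrm{Ad}_h\beta$ is again an orthonormal basis; by the basis-independence of the contracted sum (noted just after Notation \ref{not.2.18}) the right-hand side equals $\sum_{\eta\in\mathrm{Ad}_h\beta}(\nabla_\eta^b\nabla_\eta^c U)(\omega)=(\nabla^b\cdot\nabla^c U)(\omega)$. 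This proves $(\nabla^b\cdot\nabla^c U)(\omega^u)=(\nabla^b\cdot\nabla^c U)(\omega)$ for all $u$, i.e. discrete gauge invariance.

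The main obstacle is the bookkeeping in the key algebraic step: one must verify carefully that conjugating the two right-translations through $(\cdot)^u$ produces the \emph{same} twist $\mathrm{Ad}_h$ on both $\xi$ and $\eta$. This is precisely the point at which the shared initial vertex $b_i=c_i=q$ is indispensable — had the two bonds had distinct initial points $b_i,c_i$, the twists would be $\mathrm{Ad}_{u(b_i)}$ and $\mathrm{Ad}_{u(c_i)}$ respectively, and after setting $\eta=\xi$ the sum over $\beta$ would no longer reassemble into a single basis-independent contraction, so gauge invariance would fail.
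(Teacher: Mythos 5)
Your proof is correct and follows essentially the same route as the paper: compute the second-order transformation law by conjugating the right-translations through the gauge transformation, use $b_i=c_i=q$ to get the single twist $\mathrm{Ad}_{u(q)}$, apply discrete gauge invariance of $U$, and conclude by noting that $\mathrm{Ad}_{u(q)}$ maps an orthonormal basis of $\mathfrak{k}$ to an orthonormal basis. The only cosmetic difference is that you carry two independent directions $\xi,\eta$ and diagonalize at the end, whereas the paper sets $\eta=\xi$ from the start.
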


\begin{proof}
Let us first observe that%
\begin{align*}
\left(  \nabla_{\xi}^{b}\nabla_{\xi}^{c}U\right)  \left(  \omega\right)   &
=\frac{d}{ds}|_{0}\left(  \nabla_{\xi}^{c}U\right)  \left(  \left\{
\omega\left(  \sigma\right)  e^{s\delta_{\sigma,b}\xi}:\sigma\in
\mathbb{G}\right\}  \right) \\
&  =\frac{d}{dt}|_{0}\frac{d}{ds}|_{0}\left(  \nabla_{\xi}^{c}U\right)
\left(  \left\{  \omega\left(  \sigma\right)  e^{s\delta_{\sigma,b}\xi
}e^{t\delta_{\sigma,c}\xi}:\sigma\in\mathbb{G}\right\}  \right) \\
&  =\frac{d}{dt}|_{0}\frac{d}{ds}|_{0}U\left(  \left\{  \omega\left(
\sigma\right)  e^{\left(  s\delta_{\sigma,b}+t\delta_{\sigma,c}\right)  \xi
}:\sigma\in\mathbb{G}\right\}  \right)  .
\end{align*}

Thus if $u:V\left(  \mathbb{G}\right)  \rightarrow G$ is a discrete gauge
transformation and $U$ is a discrete gauge invariant function, then%
\begin{align*}
\left(  \nabla_{\xi}^{b}\nabla_{\xi}^{c}U\right)  \left(  \omega^{u}\right)
&  =\frac{d}{dt}|_{0}\frac{d}{ds}|_{0}U\left(  \left\{  u\left(  \sigma
_{f}\right)  ^{-1}\omega\left(  \sigma\right)  u\left(  \sigma_{i}\right)
e^{\left(  s\delta_{\sigma,b}+t\delta_{\sigma,c}\right)  \xi}:\sigma
\in\mathbb{G}\right\}  \right) \\
&  =\frac{d}{dt}|_{0}\frac{d}{ds}|_{0}U\left(  \left\{  u\left(  \sigma
_{f}\right)  ^{-1}\omega\left(  \sigma\right)  e^{\left(  s\delta_{\sigma
,b}+t\delta_{\sigma,c}\right)  \mathrm{Ad}_{u\left(  \sigma_{i}\right)  }\xi
}u\left(  \sigma_{i}\right)  :\sigma\in\mathbb{G}\right\}  \right) \\
&  =\frac{d}{dt}|_{0}\frac{d}{ds}|_{0}U\left(  \left\{  \omega\left(
\sigma\right)  e^{\left(  s\delta_{\sigma,b}+t\delta_{\sigma,c}\right)
\mathrm{Ad}_{u\left(  q\right)  }\xi}:\sigma\in\mathbb{G}\right\}  \right) \\
&  =\left(  \nabla_{\mathrm{Ad}_{u\left(  q\right)  }\xi}^{b}\nabla
_{\mathrm{Ad}_{u\left(  q\right)  }\xi}^{c}U\right)  \left(  \omega\right)  .
\end{align*}
Summing this equation on $\xi\in\beta$ using $\left\{  \mathrm{Ad}_{u\left(
q\right)  }\xi\right\}  _{\xi\in\beta}$ is still an orthonormal basis for
$\mathfrak{k}$ shows $\left(  \nabla^{b}\cdot\nabla^{c}U\right)  \left(
\omega^{u}\right)  =\left(  \nabla^{b}\cdot\nabla^{c}U\right)  \left(
\omega\right)  .$
\end{proof}

\subsection{Statement of the theorems\label{sec.2.2}}

We are now going to consider graphs, $\mathbb{G},$ in the plane that contain a
simple crossing at some vertex $v\in\mathbb{R}^{2}$ as in Figure \ref{fig.5}
below. \begin{figure}[ptbh]
\centering
\par
\psize{2.0in} %
\executeiffilenewer{\GraphicsDirectorygencross.svg}{\GraphicsDirectorygencross.pdf}%
{inkscape -z -D --file=\GraphicsDirectorygencross.svg --export-pdf=\GraphicsDirectorygencross.pdf --export-latex}%
\input{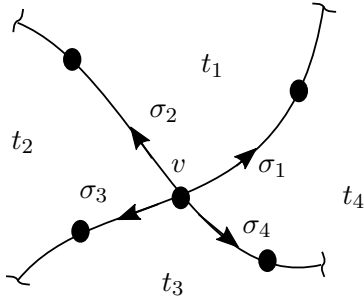}%
\caption{A simple crossing.}%
\label{fig.5}%
\end{figure}\iffalse%
\begin{figure}
[ptbh]
\begin{center}
\includegraphics[
natheight=2.551700in,
natwidth=2.757500in,
height=2.0044in,
width=2.1645in
]%
{gencross1.bmp}%
\end{center}
\end{figure}
\fi

\begin{definition}
\label{def.2.20}Let $\mathbb{G}$ be a directed graph in $\mathbb{R}^{2}$ with
a simple crossing at $v\in V\left(  \mathbb{G}\right)  $ as depicted in Figure
\ref{fig.5}. A function $U:K^{\mathbb{G}}\rightarrow\mathbb{C}$ is said to
have \textbf{extended gauge invariance at}\emph{ }$v\in V\left(
\mathbb{G}\right)  $ if the edges $\sigma_{1},\ldots,\sigma_{4}$ are distinct
and the dependence of $U\left(  \omega\right)  $ on $\left\{  \omega\left(
\sigma_{i}\right)  \right\}  _{i=1}^{4}$ is through $\omega\left(  \sigma
_{1}\right)  \omega\left(  \sigma_{3}\right)  ^{-1}$ and $\omega\left(
\sigma_{2}\right)  \omega\left(  \sigma_{4}\right)  ^{-1}$ only.
\end{definition}

Extended gauge invariance at $v$ should be interpreted to mean that two
otherwise independent paths happened to cross at $v\in\mathbb{R}^{2}$ and
therefore both paths had to artificially be split at $v$ in order to make the
given configuration of paths into a tame graph. An equivalent condition that
$U:K^{\mathbb{G}}\rightarrow\mathbb{C}$ has extended gauge invariance at $v$
is to verify that
\[
K^{2}\ni\left(  x,y\right)  \rightarrow U\left(  \omega\left(  \sigma
_{1}\right)  x,\omega\left(  \sigma_{4}\right)  x,\omega\left(  \sigma
_{2}\right)  y,\omega\left(  \sigma_{4}\right)  y,\left\{  \omega\left(
\sigma\right)  :\sigma\ni\mathbb{G}\setminus\left\{  \sigma_{1},\sigma
_{2},\sigma_{3},\sigma_{4}\right\}  \right\}  \right)
\]
is a constant function for each $\omega\in K^{\mathbb{G}}.$

\begin{theorem}
[Extended M.M. Equations, L\'{e}vy \cite{Levy2017}]\label{thm.2.21}Let
$\mathbb{G}$ be a directed graph in $\mathbb{R}^{2}$ with a simple crossing at
$v\in V\left(  \mathbb{G}\right)  $ as depicted in Figure \ref{fig.5} and
$U\in C^{2}\left(  K^{\mathbb{G}}\rightarrow\mathbb{C}\right)  $ be a gauge
invariant function. If $U$ has extended gauge invariance at $v,$ then%
\[
\mathbb{E}\left[  \left(  \nabla^{\sigma_{1}}\cdot\nabla^{\sigma_{2}}U\right)
\left(  \pt^{f}\left(  \mathbb{G}\right)  \right)  \right]  =-\left(
\frac{\partial}{\partial t_{1}}-\frac{\partial}{\partial t_{2}}+\frac
{\partial}{\partial t_{3}}-\frac{\partial}{\partial t_{4}}\right)
\mathbb{E}\left[  U\left(  \pt^{f}\left(  \mathbb{G}\right)  \right)  \right]
\]
where $\left\{  t_{i}\right\}  _{i=1}^{4}$ refers to the areas described in
Figure \ref{fig.5}. [If any one of these areas are infinite the corresponding
derivative should be omitted from the formula.]
\end{theorem}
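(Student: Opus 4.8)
The plan is to prove the identity by exhibiting both sides as two names for a single object: the expectation of an infinitesimal curvature of the white noise $f$ inserted exactly at the crossing $v$. Before anything else, note that the left-hand side is legitimate: since $\sigma_1$ and $\sigma_2$ share the initial point $v$, Proposition \ref{pro.2.19} guarantees that $\nabla^{\sigma_1}\cdot\nabla^{\sigma_2}U$ is again discrete gauge invariant, so Theorem \ref{thm.2.14} applies to it and $\mathbb{E}[(\nabla^{\sigma_1}\cdot\nabla^{\sigma_2}U)(\pt^f(\mathbb{G}))]$ is a well-defined finite-dimensional integral. I would also first use Corollary \ref{cor.2.15} to transport $\mathbb{G}$ by an area-preserving diffeomorphism into a standard local model of the crossing (with $v=0$ and the four faces arranged symmetrically), which changes neither side and makes the sign bookkeeping below transparent.

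\emph{First step (integration by parts, the area side).} By Theorem \ref{thm.2.14} the quantity $W(\mathbb{G}):=\mathbb{E}[U(\pt^f(\mathbb{G}))]$ is a smooth function of the face areas $(t_1,\dots,t_N)$, so each $\partial_{t_i}W$ makes sense. The key observation is that in the representation $M_t^f(\ell)=-\hat f(R^\ell(t))$ of Definition \ref{def.2.6} the area of a face enters only as the variance of the Gaussian martingale driving the Stratonovich equation (\ref{e.2.5}); enlarging a face therefore amounts to adding an independent increment of white noise over that region. A Gaussian (Cameron--Martin / Malliavin) integration by parts then converts $\partial_{t_i}W$ into the expectation of a curvature insertion supported on the $i$-th face. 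Taking the specific alternating combination $-(\partial_{t_1}-\partial_{t_2}+\partial_{t_3}-\partial_{t_4})$ is designed so that all contributions away from $v$ cancel in pairs (each edge bounds two of the four faces, which enter with opposite signs), leaving a single curvature insertion localized at $v$; the alternating signs and the overall sign are exactly those forced by the relative orientation of the two strands in the local model.

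\emph{Second step (Wilson-loop expansion, the operator side).} I would then identify this localized insertion with $\nabla^{\sigma_1}\cdot\nabla^{\sigma_2}$. An infinitesimal holonomy $e^{s\xi}$ created at $v$ propagates into the two outgoing edges, and by the interpretation of right-translation as a change of initial condition at $\sigma_i=v$ (compare Lemma \ref{lem.2.17}) it acts as $\omega(\sigma_1)\mapsto\omega(\sigma_1)e^{s\xi}$ on one strand and $\omega(\sigma_2)\mapsto\omega(\sigma_2)e^{t\xi}$ on the transverse strand; differentiating in $s,t$ and summing over an orthonormal basis $\xi\in\beta$ --- the basis being forced on us because $f$ is $\mathfrak{k}$-valued with independent components --- reproduces exactly $\sum_{\xi\in\beta}\nabla_\xi^{\sigma_1}\nabla_\xi^{\sigma_2}=\nabla^{\sigma_1}\cdot\nabla^{\sigma_2}$ of Notation \ref{not.2.18}. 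Here the extended gauge invariance hypothesis of Definition \ref{def.2.20} is essential: it is precisely what lets me attribute the whole insertion to the two transverse combinations $\omega(\sigma_1)\omega(\sigma_3)^{-1}$ and $\omega(\sigma_2)\omega(\sigma_4)^{-1}$, so that the diagonal self-insertions drop out and only the mixed $\sigma_1$--$\sigma_2$ operator survives. Chaining the two steps through the common curvature insertion yields the claimed identity.

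The hard part will be the first step, namely making the white-noise integration by parts rigorous. Because $f$ is only distribution-valued and $\pt^f$ solves the Stratonovich equation (\ref{e.2.5}) driven by the rough martingale $M^f$, ``differentiating in the noise over a face'' must be executed as a genuine Malliavin-calculus computation on the solution of the SDE, together with a justification that $\partial_{t_i}$ may be interchanged with the expectation. The most delicate point is proving that the cancellation away from $v$ is exact --- that the curvature insertion genuinely localizes at the crossing, with no surviving boundary or lower-order remainder --- since it is this localization that welds the two steps together. A purely finite-dimensional shortcut is available by differentiating the explicit heat-kernel form of the density $\rho_{\mathbb{G}}$ from Theorem \ref{thm.2.14}, using $\partial_{\mathrm{area}}=\tfrac{1}{2}\Delta$; but the goal here is to carry out the argument intrinsically on the white noise, which is where the real work lies.
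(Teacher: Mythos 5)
Your global architecture is the right one --- the paper also proves the theorem by passing through a curvature insertion at the crossing, via exactly the two halves you name (an integration by parts, Theorem \ref{thm.2.23}/Corollary \ref{cor.2.24}, and a Wilson-loop expansion, Theorem \ref{thm.2.27}) --- but both halves as you describe them contain genuine gaps. First, your integration by parts runs in the wrong direction and rests on a false premise. You propose to compute each $\partial_{t_i}\mathbb{E}[U(\pt^f(\mathbb{G}))]$ separately by ``adding an independent increment of white noise over the $i$-th face'' and then to rely on pairwise cancellation of the off-crossing contributions in the alternating sum. But the face areas are geometric parameters of the graph, not parameters of the noise: changing $t_i$ means moving an edge, which changes the regions $R^{\ell}(t)$ over which the (fixed) white noise is integrated, and the effect on $\pt^f$ is governed by a stochastic Taylor expansion, not by a Cameron--Martin shift. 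The paper's IBP instead starts from the operator side: one applies $\nabla_\xi^{e_2}$ to $U$, perturbs the connection by $\eta=h\xi$ with $\partial_y h=1_{RQ}-1_Q$ supported on small regions at the crossing, and --- crucially --- compensates for the fact that $\eta$ does not preserve the axial gauge by composing with the gauge transformation $g_\eta$ (Theorem \ref{thm.4.5}, Corollary \ref{cor.4.12}); the discrete gauge invariance of $\nabla_\xi^{e_2}U$ at every vertex other than $0$ is what makes the shifted functional well defined, and the computation $\zeta_\eta(e_1)=|Q|\xi$, $\zeta_\eta(\sigma)=0$ otherwise, is what produces the factor $\nabla^{e_1}\cdot\nabla^{e_2}$. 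None of this is visible in your first step, and the ``cancellation away from $v$'' you invoke never occurs in the paper: the two edge deformations $e_2\to e_2^{\varepsilon}$, $e_4\to e_4^{\varepsilon}$ each transfer area $|Q_\varepsilon|$ between two adjacent faces, so the combination $(-\partial_{t_1}+\partial_{t_2})+(-\partial_{t_3}+\partial_{t_4})$ appears already localized, with nothing left to cancel.

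Second, and more seriously, your proposal misses the actual analytic obstruction in the loop-expansion half. The naive expansion $\psi(\pt^f(\partial Q))=\psi(I)-\nabla_{f(Q)}\psi(I)+o(|Q|)$ is \emph{false} for the white-noise connection: since $f(Q)$ fluctuates at order $\sqrt{|Q|}$, the second-order terms contribute at order $|Q|$, including a deterministic It\^{o} correction $\tfrac12|Q|\,\Delta_K\psi(I)$ and mean-zero terms of the same order (see Remark \ref{rem.3.18} and Theorem \ref{thm.5.4}). Theorem \ref{thm.2.27} survives only because (i) the Laplacian terms from the two deformations cancel exactly, using the extended gauge invariance through Lemma \ref{lem.5.6} --- this, not the vague ``diagonal self-insertions drop out,'' is the precise role of that hypothesis in this half of the argument --- and (ii) the remaining mean-zero errors, when paired against the rest of the graph, are shown to contribute only $O(\sqrt{\varepsilon}\,|Q_\varepsilon|)$ by an independence/covariance estimate (Lemmas \ref{lem.3.19} and \ref{lem.5.10}), exploiting that the functional of the noise away from the strip $J_\varepsilon$ is nearly independent of the error terms. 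Your proposal identifies ``localization at $v$'' as the delicate point, but that is not where the difficulty lies; without the cancellation in (i) and the covariance argument in (ii), the $O(|Q_\varepsilon|)$ errors would be of the same order as the quantity you are computing and the limit $\varepsilon\downarrow 0$ of the difference quotient would not exist.
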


We begin with a simple reduction on the geometry of the crossing in Figure
\ref{fig.5}.

\begin{proposition}
\label{pro.2.22}We may find an area preserving diffeomorphism, $\varphi
:\mathbb{R}^{2}\rightarrow\mathbb{R}^{2},$ such that the simple crossing in
Figure \ref{fig.5} may be transformed into the straight line crossing pattern
(i.e. $e_{j}=\varphi\circ\sigma_{j}$ for $1\leq j\leq4)$ going through
$0\in\mathbb{R}^{2}$ as in Figure \ref{fig.6}.
\end{proposition}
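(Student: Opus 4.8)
The plan is to produce $\varphi$ explicitly, in a special form that is manifestly area preserving and, as a bonus, carries tame paths to tame paths (so that Corollary \ref{cor.2.15} can later be applied to the whole graph). First I would compose with the translation $(x,y)\mapsto(x-v_1,y-v_2)$ moving the crossing point $v=(v_1,v_2)$ to the origin; this map is area preserving and visibly sends tame paths to tame paths. After this reduction I may assume that, near $v=0$, the two crossing strands are non-vertical and hence are graphs $y=a(x)$ and $y=b(x)$ of smooth functions with $a(0)=b(0)=0$ and $a'(0)\neq b'(0)$ (transversality of the simple crossing). If a strand happened to have a vertical tangent at $v$ one first applies a preliminary area- and tame-preserving adjustment; in the situation of Figure \ref{fig.5} this is not needed.

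The key idea is to look for $\varphi$ among the fiber preserving maps
\[
\varphi(x,y)=\left(\xi(x),\ \frac{y}{\xi'(x)}+c(x)\right),
\]
where $\xi:\mathbb{R}\to\mathbb{R}$ is an orientation preserving diffeomorphism and $c:\mathbb{R}\to\mathbb{R}$ is smooth. Because $\varphi_1=\xi(x)$ is independent of $y$, any such map has Jacobian determinant $\xi'(x)\cdot\xi'(x)^{-1}=1$ and is therefore area preserving; moreover it sends vertical lines to vertical lines and graphs over the $x$-axis to graphs over the $x$-axis (reparametrized by the monotone $\xi$), so it preserves tameness, including the forward/backward orientation of horizontal pieces. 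It remains to choose $\xi$ and $c$ so that $y=a(x)$ is carried onto a line $y=m_1x$ and $y=b(x)$ onto a line $y=m_2x$ through the origin, with $m_1\neq m_2$ finite (so that both image edges are non-vertical, hence tame). Imposing this on the two strands gives the scalar equations $a/\xi'+c=m_1\xi$ and $b/\xi'+c=m_2\xi$; subtracting eliminates $c$ and yields $\tfrac12(m_2-m_1)(\xi^2)'=b-a$, i.e.
\[
\xi(x)^2=\frac{2}{m_2-m_1}\int_0^x\bigl(b(x')-a(x')\bigr)\,dx',
\]
after which $c(x)=m_1\xi(x)-a(x)/\xi'(x)$ is forced.

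The hard part will be checking that this prescription genuinely defines a diffeomorphism, and this is exactly where transversality enters. Since $b-a$ vanishes to first order at $0$ with $b'(0)-a'(0)\neq 0$, the integral above equals $x^2h(x)$ with $h$ smooth and $h(0)=\bigl(b'(0)-a'(0)\bigr)/(m_2-m_1)$; choosing the sign of $m_2-m_1$ equal to that of $b'(0)-a'(0)$ makes $h(0)>0$, whence $\xi(x)=x\sqrt{h(x)}$ is smooth near $0$ with $\xi(0)=0$ and $\xi'(0)=\sqrt{h(0)}>0$. Thus $\xi$ is a local diffeomorphism at the origin and $c$ is smooth there with $c(0)=0$; I would then extend $\xi$ monotonically to a global orientation preserving diffeomorphism of $\mathbb{R}$ and extend $c$ to any global smooth function.

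Finally I would record that, with these choices, $\varphi$ is an area preserving diffeomorphism of $\mathbb{R}^2$ fixing the origin which carries the two strands near $v$ onto the straight lines $y=m_1x$ and $y=m_2x$, and hence carries the four local edges $\sigma_1,\dots,\sigma_4$ onto the four straight rays $e_1,\dots,e_4$ of the standard crossing in Figure \ref{fig.6}. Since $\varphi$ is a bijection it preserves all incidences, and since it preserves tameness the image $\varphi(\mathbb{G})$ is again a tame graph, which is precisely what makes the reduction usable in Corollary \ref{cor.2.15}. An alternative, less explicit route would straighten the crossing by an arbitrary diffeomorphism and then restore area preservation via Moser's trick, with the correcting isotopy chosen tangent to the two straightened lines; the fiber preserving construction above avoids this and delivers tameness for free.
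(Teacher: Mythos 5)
Your fiber-preserving ansatz $\varphi(x,y)=\bigl(\xi(x),\,y/\xi'(x)+c(x)\bigr)$ is a nice idea and the computation is essentially sound: the Jacobian is identically $1$, the subtraction trick reduces the two constraints to $(\xi^2)'=\tfrac{2}{m_2-m_1}(b-a)$, and transversality ($a'(0)\neq b'(0)$) is exactly what makes $\xi(x)=x\sqrt{h(x)}$ a local diffeomorphism with $\xi'(0)>0$. But there is a genuine gap: the normal form you produce is \emph{not} the one the Proposition asks for. Figure \ref{fig.6} is the cross aligned with the coordinate axes, with $e_2$ and $e_4$ lying along the $y$-axis; this is not cosmetic, since Assumption \ref{ass.1} and everything downstream (the complete axial gauge, the vanishing of $\zeta^{A}_{h\xi}(\sigma)$ for vertical $\sigma\in\{e_2,e_4\}$, the reflection $R(x,y)=(x,-y)$ and the regions $Q$, $RQ$) depends on $e_1,e_3$ being horizontal and $e_2,e_4$ vertical. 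Your ansatz structurally cannot deliver this: any map of the form $(x,y)\mapsto(\xi(x),\cdot)$ sends non-vertical curves to non-vertical curves, which is why you are forced to insist that both image lines $y=m_1x$, $y=m_2x$ be non-vertical. (Your parenthetical ``so that both image edges are non-vertical, hence tame'' also misreads the definition of tame: tame paths are explicitly allowed to contain vertical segments.) The repair is short --- post-compose with the $SL(2,\mathbb{R})$ map sending $(1,m_1)\mapsto\lambda e_1$ and $(1,m_2)\mapsto\mu e_2$ with $\lambda\mu=m_2-m_1$, which carries the two lines onto the coordinate axes while preserving area and straightness --- but as written the proof stops one step short of the required configuration. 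A second, smaller issue is the standing assumption that both strands are non-vertical graphs $y=a(x)$, $y=b(x)$ near $v$; you wave at this, but for a general simple crossing one strand may well be vertical there, and handling it requires a preliminary rotation, i.e.\ exactly the move your ansatz was designed to avoid.

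For comparison, the paper proceeds differently: it uses cutoff shears $\varphi(x,y)=\bigl(x-t_0,\,y-\theta(x)h(x)\bigr)$, which flatten one strand onto the $x$-axis locally while visibly preserving area and sending the vertical line through the crossing to the vertical line through $0$; the full map is a composition of a rotation (to make $\sigma_2\ast\bar\sigma_4$ almost horizontal), such a shear, the inverse rotation (restoring verticality), and a final shear for $\sigma_1\ast\bar\sigma_3$. That route handles the vertical strand and lands exactly on the coordinate-axis cross, at the cost of passing through rotations that, like your final linear correction, require some care about tameness of the rest of $\mathbb{G}$. If you add the $SL(2,\mathbb{R})$ post-composition and address the possibly-vertical strand, your argument becomes a legitimate and arguably more explicit alternative.
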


\begin{figure}[ptbh]
\centering
\par
\psize{2.0in} %
\executeiffilenewer{\GraphicsDirectorycross.svg}{\GraphicsDirectorycross.pdf}%
{inkscape -z -D --file=\GraphicsDirectorycross.svg --export-pdf=\GraphicsDirectorycross.pdf --export-latex}%
\input{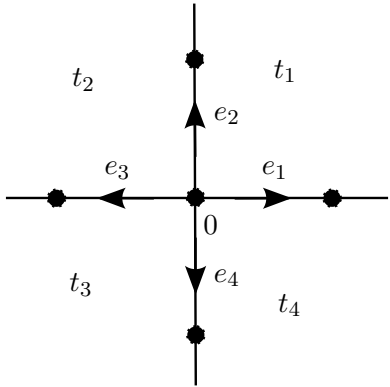}%
\caption{A very simple basic configuration for
$\mathbb{G}$ near $0\in\mathbb{R}^{2}.$}%
\label{fig.6}%
\end{figure}\iffalse%
\begin{figure}
[ptbh]
\begin{center}
\includegraphics[
natheight=5.401600in,
natwidth=5.747500in,
height=2.0634in,
width=2.194in
]%
{cross_ts.bmp}%
\end{center}
\end{figure}
\fi

\begin{proof}
First suppose that $h:\left[  a,b\right]  \rightarrow\mathbb{R}$ is a $C^{1}%
$-function, $\left[  a,b\right]  \ni t\rightarrow\sigma\left(  t\right)
:=\left(  t,h\left(  t\right)  \right)  $ is the associated horizontal curve,
and $t_{0}\in\left(  a,b\right)  $ is a given point. Let $\theta\in
C_{c}^{\infty}\left(  \left(  a,b\right)  ,\left[  0,1\right]  \right)  $ be
chosen so that $\theta=1$ near $t_{0}$ and then define%
\begin{align*}
\varphi\left(  x,y\right)   &  :=\left(  x-t_{0},y-\theta\left(  x\right)
h\left(  x\right)  \right)  \text{ and}\\
\tilde{h}\left(  s\right)   &  :=h\left(  t_{0}+s\right)  \left(
1-\theta\left(  t_{0}+s\right)  \right)  \text{ for }s\in\left[
a-t_{0},b-t_{0}\right]  ,
\end{align*}
where $\theta\left(  x\right)  h\left(  x\right)  \equiv0$ for $x\notin\left[
a,b\right]  .$ It is now a simple matter to verify;

\begin{enumerate}
\item $\varphi$ is an area preserving diffeomorphism.

\item If $t\in\left[  a,b\right]  $ and $s:=t-t_{0},$ then
\[
\varphi\left(  t,h\left(  t\right)  \right)  =\left(  t-t_{0},h\left(
t\right)  -\theta\left(  t\right)  h\left(  t\right)  \right)  =\left(
s,\hat{h}\left(  s\right)  \right)  .
\]
Thus $\varphi$ transforms the horizontal path $\sigma$ to the horizontal path
$\tilde{\sigma}\left(  s\right)  :=\left(  s,\tilde{h}\left(  s\right)
\right)  $ which lies on the $x$-axis for $s$ near $0.$

\item Moreover, $\varphi\left(  t_{0},y\right)  =\left(  0,y-h\left(
t_{0}\right)  \right)  $ so that the vertical path going through
$\sigma\left(  t_{0}\right)  =\left(  t_{0},h\left(  t_{0}\right)  \right)  $
is transformed into the vertical path going through $\tilde{\sigma}\left(
0\right)  =\left(  0,0\right)  .$
\end{enumerate}

We now construct $\varphi$ as in the theorem as a composition of a number or
area preserving diffeomorphisms as follows. Let $\sigma_{2}\ast\bar{\sigma
}_{4}$ denote the path in $\mathbb{R}^{2}$ which follows $\sigma_{4}$
backwards and then continues to follow the path $\sigma_{2}.$ Choose a
rotation which rotates $\sigma_{2}\ast\bar{\sigma}_{4}$ into a path which is
almost horizontal, then apply the above construction to find a $\varphi$ which
makes the resulting curve lie on the $x$-axis, and then rotate this curve back
to a vertical curve. Finally apply the above construction to the image of
$\sigma_{1}\ast\bar{\sigma}_{3}$ under the above area preserving
diffeomorphisms so that the resulting image curves are as in Figure
\ref{fig.6}.
\end{proof}

Because of Propositions \ref{pro.2.22} and \ref{pro.2.19} along with the
structure of the Yang-Mills expectations as described in Theorem
\ref{thm.2.14} (in particular the invariance under area preserving
diffeomorphisms) it suffices to prove Theorem \ref{thm.2.21} in the special
case where $\mathbb{G}$ is a graph in $\mathbb{R}^{2}$ such that $0\in
V\left(  \mathbb{G}\right)  $ and $\mathbb{G}$ contains a cross lined up with
the coordinate axes as in Figure \ref{fig.6}.

\begin{assumption}
\label{ass.1}For the body of this paper we will assume that $\mathbb{G}$ is a
tame graph as just described.
\end{assumption}

\begin{theorem}
\label{thm.2.23}Let $\mathbb{G}$ and $\pt^{f}\left(  \mathbb{G}\right)
:=\pt^{f}|_{\mathbb{G}}$ be as in Notation \ref{not.2.9} and further assume
$\mathbb{G}$ contains the bonds $\left\{  e_{1},\dots,e_{4}\right\}  $ as
above. If $U\left(  \pt^{f}\left(  \mathbb{G}\right)  \right)  $ is gauge
invariant at $0,$ then
\begin{equation}
\mathbb{E}\left[  \left(  \nabla^{e_{1}}\cdot\nabla^{e_{2}}U\right)  \left(
\pt^{f}\left(  \mathbb{G}\right)  \right)  \right]  =-\frac{1}{\left\vert
Q\right\vert }\mathbb{E}\left[  \left(  \nabla_{f\left(  Q\right)  }^{e_{2}%
}U-\nabla_{f\left(  RQ\right)  }^{e_{2}}U\right)  \left(  \pt^{f}\left(
\mathbb{G}\right)  \right)  \right]  . \label{e.2.8}%
\end{equation}
where $R\left(  x,y\right)  :=\left(  x,-y\right)  $ is reflection across the
$x$-axis and $Q$ and $RQ$ are the regions shown in Figure \ref{fig.7} and
$\left\vert Q\right\vert =m\left(  Q\right)  $ is the area of $Q.$%
\begin{figure}[ptbh]
\centering
\par
\psize{2.0in} %
\executeiffilenewer{\GraphicsDirectorycross_curved_regions.svg}{\GraphicsDirectorycross_curved_regions.pdf}%
{inkscape -z -D --file=\GraphicsDirectorycross_curved_regions.svg --export-pdf=\GraphicsDirectorycross_curved_regions.pdf --export-latex}%
\input{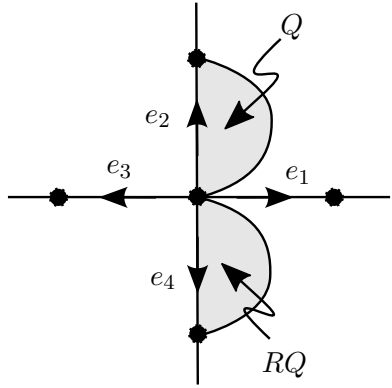}%
\caption{The regions $Q$ and
$RQ$ used in Eq. (\ref{e.2.8}). }%
\label{fig.7}%
\end{figure}\iffalse%
\begin{figure}
[ptbh]
\begin{center}
\includegraphics[
natheight=3.074400in,
natwidth=4.703500in,
height=2.2331in,
width=3.4065in
]%
{cross_curved_regions.wmf}%
\end{center}
\end{figure}
\fi

\end{theorem}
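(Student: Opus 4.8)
The plan is to derive Eq.~(\ref{e.2.8}) from a single Gaussian (Cameron--Martin) integration by parts in the white noise, using it to trade the extra factor $\nabla^{e_{1}}$ on the left-hand side for a multiplicative curvature factor on the right. The engine is the elementary white-noise identity: for $h\in L^{2}(\mathbb{R}^{2},m;\mathfrak{k})$ and a suitably regular functional $F$ of $f$,
\[
\mathbb{E}\left[\left\langle f,h\right\rangle F\right]=\mathbb{E}\left[D_{h}F\right],\qquad D_{h}F:=\frac{d}{ds}\Big|_{0}F(f+sh),
\]
with no divergence term because $\left\langle f,h\right\rangle$ is a first-chaos element for deterministic $h$. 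I would apply this, for each $\xi$ in an orthonormal basis $\beta$ of $\mathfrak{k}$, to the functional $F=(\nabla^{e_{2}}_{\xi}U)(\pt^{f}(\mathbb{G}))$ and to the shift direction
\[
h_{\xi}:=-\tfrac{1}{|Q|}\,\widehat{1_{Q\cup RQ}}\,\xi=-\tfrac{1}{|Q|}(1_{Q}-1_{RQ})\xi ,
\]
which is supported on $Q\cup RQ$ near the crossing; the second equality uses $Q\subset H_{+}$, $RQ\subset H_{-}$ and Definition \ref{def.2.4}. Then $\left\langle f,h_{\xi}\right\rangle=-\tfrac{1}{|Q|}(\left\langle f(Q),\xi\right\rangle_{\mathfrak{k}}-\left\langle f(RQ),\xi\right\rangle_{\mathfrak{k}})$, and summing over $\beta$ together with the linearity identity $\sum_{\xi\in\beta}\left\langle f(Q),\xi\right\rangle_{\mathfrak{k}}\nabla^{e_{2}}_{\xi}U=\nabla^{e_{2}}_{f(Q)}U$ reproduces exactly the right-hand side of Eq.~(\ref{e.2.8}).

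The analytic heart of the proof is to verify that this particular white-noise shift implements precisely the geometric derivative $\nabla^{e_{1}}_{\xi}$, i.e.
\[
D_{h_{\xi}}\!\left[(\nabla^{e_{2}}_{\xi}U)(\pt^{f}(\mathbb{G}))\right]=(\nabla^{e_{1}}_{\xi}\nabla^{e_{2}}_{\xi}U)(\pt^{f}(\mathbb{G})),
\]
so that, after summing over $\beta$, the left-hand side of Eq.~(\ref{e.2.8}) appears as $\mathbb{E}[(\nabla^{e_{1}}\cdot\nabla^{e_{2}}U)(\pt^{f}(\mathbb{G}))]$ and the integration-by-parts identity becomes Eq.~(\ref{e.2.8}) itself. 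To establish this I would compute the first variation of the Stratonovich parallel-transport equation (\ref{e.2.5}) under $f\mapsto f+sh_{\xi}$. Because $M^{f}_{t}(\ell)=-\hat{f}(R^{\ell}(t))$ depends linearly on $f$, the shift perturbs each edge-martingale by a deterministic amount supported where $R^{\ell}$ meets $Q\cup RQ$, and Duhamel's principle expresses $D_{h_{\xi}}\pt^{f}(\ell)$ as an insertion of $\xi$ conjugated by the sub-holonomies running from $Q\cup RQ$ to the vertex. In the straightened configuration of Figure \ref{fig.6} (available by Propositions \ref{pro.2.22} and \ref{pro.2.19} and the area-preserving invariance of Theorem \ref{thm.2.14}), the regions $Q$ and $RQ$ of Figure \ref{fig.7} are positioned so that this insertion is localized at the vertex $0$; the normalization $\tfrac{1}{|Q|}$ fixes the total inserted curvature to be $\xi$, and the combination $1_{Q}-1_{RQ}$ (equivalently, the appearance of $\hat{f}$) encodes both the sign flip of $\hat{f}$ across the $x$-axis and the contributions of the two strands meeting at the crossing.

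The step I expect to be the main obstacle is precisely this identification, because the first variation does not automatically reduce to a clean right-translation of $\pt^{f}(e_{1})$: Duhamel's formula produces conjugating sub-holonomies and, a priori, also perturbs the holonomies of the other edges incident to $0$. The plan is to dispose of these stray terms using the hypothesis that $U(\pt^{f}(\mathbb{G}))$ is gauge invariant at $0$ (Definition \ref{def.2.12}): by Lemma \ref{lem.2.17} and the computation in Proposition \ref{pro.2.19}, conjugating all holonomies at the vertex by a common group element leaves $U$ unchanged, which lets the conjugating sub-holonomies be absorbed, while the reflection symmetry relating $Q$ and $RQ$ makes the two strands' unwanted contributions cancel, leaving exactly $\nabla^{e_{1}}_{\xi}$. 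Two further technical points must be discharged to make the scheme rigorous: first, the Malliavin differentiability and sufficient integrability of $F=(\nabla^{e_{2}}_{\xi}U)(\pt^{f}(\mathbb{G}))$, needed both to justify the Gaussian integration by parts and to interchange $D_{h_{\xi}}$ with $\mathbb{E}$; and second, the verification that working with the Fisk--Stratonovich differential in (\ref{e.2.5}) produces no anomalous It\^{o} correction, so that $D_{h_{\xi}}$ obeys the ordinary chain rule exactly as in the heuristic computation of Section \ref{sec.3}.
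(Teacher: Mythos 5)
Your right-hand side bookkeeping is fine, and you have correctly located the crux in the identification of the Cameron--Martin derivative $D_{h_{\xi}}$ with the geometric insertion $\nabla_{\xi}^{e_{1}}$. But that identification is false as stated, and the mechanism you propose for rescuing it does not work, so there is a genuine gap.

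The problem is that your shift direction lives in the axial-gauge slice, and such shifts cannot see $e_{1}$. Concretely, shifting $f\mapsto f+sh_{\xi}$ perturbs the edge martingale $M_{t}^{f}\left(  \ell\right)  =-\hat{f}\left(  R^{\ell}\left(  t\right)  \right)  $ by $-s\int_{R^{\ell}\left(  t\right)  }\widehat{h_{\xi}}\,dm$, which is nonzero only when the region $R^{\ell}\left(  t\right)  $ between $\ell$ and the $x$-axis has positive-measure intersection with $Q\cup RQ$. For $\ell=e_{1}$ (a segment of the $x$-axis) the region $R^{e_{1}}\left(  t\right)  $ is a null set, so $D_{h_{\xi}}\pt^{f}\left(  e_{1}\right)  =0$ identically; the same holds for $e_{3}$, and $e_{2},e_{4}$ are vertical so their parallel transports are constant. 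Equivalently, in connection language your shift moves $A$ by $s\bar{\eta}\,dx$ with $\bar{\eta}\left(  x,0\right)  =0$, and a one-form vanishing on the $x$-axis contributes nothing to holonomy along $e_{1}$. What $D_{h_{\xi}}$ actually produces is a sum of insertions on the edges of $\mathbb{G}$ lying \emph{over} $Q$ and \emph{under} $RQ$ (note that $\bar{\eta}$ is constant, not zero, in $y$ above $Q$ and below $RQ$, so these are generically edges far from the crossing). Gauge invariance at $0$ says nothing about those edges, and there is no reflection symmetry of $\mathbb{G}$ pairing the edges above $Q$ with those below $RQ$, so neither of your proposed cancellation mechanisms applies. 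The identity you need is true \emph{in expectation} --- but that is precisely the theorem, so invoking it would be circular.

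The missing idea is the one the paper is built around: to localize the insertion at $e_{1}$ one must perturb by an $\eta\,dx$ with $\eta\left(  x,0\right)  \neq0$ (compactly supported on $\bar{Q}\cup R\bar{Q}$, Example \ref{ex.3.5}), which leaves the axial-gauge slice and therefore must be compensated by the gauge transformation $g_{\eta}$ of Definition \ref{def.3.6}. At the level of the white noise this is not a pure Cameron--Martin shift but the combined shift-plus-rotation $f_{\eta}=f^{g_{\eta}}-\mathrm{Ad}_{g_{\eta}^{-1}}\eta_{y}$ of Definition \ref{def.4.7}, whose quasi-invariance is Theorem \ref{thm.4.5} and whose effect on holonomies is Theorem \ref{thm.4.10}: $\pt^{f_{\eta}}\left(  \ell\right)  =g_{\eta}\left(  b\right)  ^{-1}\pt^{f}\left(  \ell\right)  \mathbf{k}^{\eta}\left(  \ell\right)  g_{\eta}\left(  a\right)  .$ Differentiating in $s$ then yields the insertion $\zeta_{\eta}\left(  e_{1}\right)  =\int_{0}^{\infty}\eta\left(  x,0\right)  dx\cdot\xi=\left\vert Q\right\vert \xi$ on $e_{1}$ and zero on every other edge (Corollary \ref{cor.4.12}); the role of gauge invariance of $\nabla_{\xi}^{e_{2}}U$ at the vertices (with $g_{\eta}\left(  0\right)  =I$) is to absorb the boundary factors $g_{\eta}\left(  \sigma_{i}\right)  ,g_{\eta}\left(  \sigma_{f}\right)  $ coming from the rotation --- not to cancel far-field Duhamel terms, which simply never arise in this formulation.
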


\begin{corollary}
\label{cor.2.24}Continuing the setup in Theorem \ref{thm.2.23}, if $U$ is
further assumed to have extended gauge invariance at $0$ (see Definition
\ref{def.2.20}), then
\begin{equation}
\mathbb{E}\left[  \left(  \nabla^{e_{1}}\cdot\nabla^{e_{2}}U\right)  \left(
\pt^{f}\left(  \mathbb{G}\right)  \right)  \right]  =-\frac{1}{\left\vert
Q\right\vert }\mathbb{E}\left[  \left(  \nabla_{f\left(  Q\right)  }^{e_{2}%
}U+\nabla_{f\left(  RQ\right)  }^{e_{4}}U\right)  \left(  \pt^{f}\left(
\mathbb{G}\right)  \right)  \right]  . \label{e.2.9}%
\end{equation}

\end{corollary}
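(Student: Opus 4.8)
The plan is to observe that the only discrepancy between the target Equation~(\ref{e.2.9}) and the already-established Equation~(\ref{e.2.8}) lies in the second summand inside the expectation, where $-\nabla_{f(RQ)}^{e_2}U$ must be shown equal to $+\nabla_{f(RQ)}^{e_4}U$. Thus, granting Theorem~\ref{thm.2.23}, the entire content of the corollary reduces to the pointwise identity
\[
\nabla_{\xi}^{e_2}U = -\nabla_{\xi}^{e_4}U \quad\text{for every }\xi\in\mathfrak{k},
\]
valid as an equality of functions on $K^{\mathbb{G}}$, which I will extract directly from the extended gauge invariance hypothesis and then substitute.

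To produce this identity, recall from Definition~\ref{def.2.20} that extended gauge invariance at $0$ forces $U(\omega)$ to depend on $\omega(e_2)$ and $\omega(e_4)$ only through the product $\omega(e_2)\omega(e_4)^{-1}$, the two edges $e_2,e_4$ forming the strand transverse to $e_1,e_3$ and sharing the initial vertex $0$. First I would note that right-multiplying both $\omega(e_2)$ and $\omega(e_4)$ by a common factor $e^{t\xi}$ leaves this product fixed, since $\omega(e_2)e^{t\xi}\big(\omega(e_4)e^{t\xi}\big)^{-1}=\omega(e_2)\omega(e_4)^{-1}$, while leaving the companion invariant $\omega(e_1)\omega(e_3)^{-1}$ untouched; hence $U$ is constant along this one-parameter family. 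Differentiating at $t=0$ and recognising the two resulting terms as the $\sigma$-directional derivatives of Definition~\ref{def.2.16} gives $(\nabla_{\xi}^{e_2}U)(\omega)+(\nabla_{\xi}^{e_4}U)(\omega)=0$ for all $\omega$ and all $\xi$, which is the claimed identity. The one point to watch here is the orientation bookkeeping: because $\nabla^{\sigma}_{\xi}$ acts by the right translation $\omega(\sigma)\mapsto\omega(\sigma)e^{t\xi}$ and both $e_2$ and $e_4$ emanate from $0$, the common right factor cancels against the inverse in $\omega(e_2)\omega(e_4)^{-1}$, producing precisely the relative sign that converts $-\nabla^{e_2}$ into $+\nabla^{e_4}$.

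Finally I would substitute. Since the identity $\nabla^{e_2}_{\xi}U=-\nabla^{e_4}_{\xi}U$ holds for each fixed $\xi\in\mathfrak{k}$ and $\xi\mapsto\nabla^{\sigma}_{\xi}U$ is linear, it persists when the deterministic $\xi$ is replaced by the random element $f(RQ)\in\mathfrak{k}$; expanding $f(RQ)=\sum_{\xi\in\beta}\langle f,1_{RQ}\xi\rangle\,\xi$ over an orthonormal basis $\beta$ and summing term by term yields the equality of random variables
\[
\nabla_{f(RQ)}^{e_2}U\big(\pt^{f}(\mathbb{G})\big)=-\nabla_{f(RQ)}^{e_4}U\big(\pt^{f}(\mathbb{G})\big).
\]
Inserting $-\nabla_{f(RQ)}^{e_2}U=+\nabla_{f(RQ)}^{e_4}U$ into the right-hand side of Equation~(\ref{e.2.8}) and taking expectations then produces Equation~(\ref{e.2.9}) at once. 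The argument is short precisely because all the analytic content is already absorbed into Theorem~\ref{thm.2.23}; the only genuine (if minor) obstacle is confirming the sign via the orientation convention just described, so that the extended invariance indeed trades the $e_2$-derivative for the $e_4$-derivative with the correct $+$.
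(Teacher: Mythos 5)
Your argument is correct and is precisely the intended (and in the paper, omitted) deduction: extended gauge invariance at $0$ makes $U$ constant under the simultaneous right multiplication $\omega(e_2)\mapsto\omega(e_2)e^{t\xi}$, $\omega(e_4)\mapsto\omega(e_4)e^{t\xi}$, whence $\nabla_\xi^{e_2}U=-\nabla_\xi^{e_4}U$ pointwise on $K^{\mathbb{G}}$, and linearity in $\xi$ lets you substitute the random element $f(RQ)$ into Eq.\ (\ref{e.2.8}). The sign bookkeeping via $\bigl(\omega(e_4)e^{t\xi}\bigr)^{-1}=e^{-t\xi}\omega(e_4)^{-1}$ is exactly right.
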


The proof of Theorem \ref{thm.2.21} will now amount to taking the limit (see
Theorem \ref{thm.2.27}) in Eq. (\ref{e.2.9}) where $Q$ shrinks down to the the
line segment, $e_{2}.$ To be more precise we introduce the following notation.

\begin{notation}
\label{not.2.25}For $0<\varepsilon$ small, let $Q_{\varepsilon}$ be a slender
non-empty region as in Figure \ref{fig.8} and $e_{2}^{\varepsilon}$ and
$e_{4}^{\varepsilon}$ be the deformations of $e_{2}$ and $e_{4}$ bounding the
right side of $Q_{\varepsilon}$ and $RQ_{\varepsilon}$ respectively as in the
same figure. We further suppose that $Q_{\varepsilon}$ is indexed in such a
way that
\[
\varepsilon=\max\left\{  x>0:\left[  \left\{  x\right\}  \times\mathbb{R}%
\right]  \cap Q_{\varepsilon}\right\}  .
\]

\end{notation}

\begin{notation}
\label{not.2.26}Let $\mathbb{G}_{\varepsilon,+}$ be the graph $\mathbb{G}$
with $e_{2}$ replaced the deformed path $e_{2}^{\varepsilon}$ and
$\mathbb{G}_{\varepsilon,-}$ be the graph $\mathbb{G}$ where $e_{4}$ is
replaced by the deformed path $e_{4}^{\varepsilon}$ as shown in Figure
\ref{fig.8}. \begin{figure}[ptbh]
\centering
\par
\psize{2.0in} %
\executeiffilenewer{\GraphicsDirectorycross_curved_deformation.svg}{\GraphicsDirectorycross_curved_deformation.pdf}%
{inkscape -z -D --file=\GraphicsDirectorycross_curved_deformation.svg --export-pdf=\GraphicsDirectorycross_curved_deformation.pdf --export-latex}%
\input{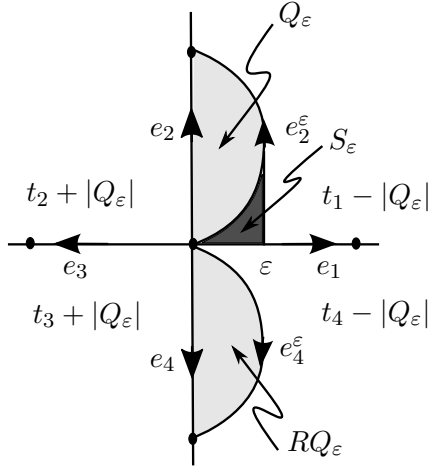}%
\caption{Deforming $e_{2}$
and $e_{4}$ in order to construct $\mathbb{G}_{+,\varepsilon}$ and
$\mathbb{G}_{-,\varepsilon}$ respectively. }%
\label{fig.8}%
\end{figure}\iffalse%
\begin{figure}
[ptbh]
\begin{center}
\includegraphics[
natheight=3.184200in,
natwidth=2.908400in,
height=1.8957in,
width=1.7322in
]%
{cross_curved_deformation1.bmp}%
\end{center}
\end{figure}
\fi

If $U$ is a function on $K^{\mathbb{G}}$ we may consider $U\mathbb{\ }$to also
be a function on both $K^{\mathbb{G}_{+,\varepsilon}}$ by replacing the
argument $\omega\left(  e_{2}\right)  $ in $U$ by $\omega\left(
e_{2}^{\varepsilon}\right)  .$ Similarly by replacing the argument
$\omega\left(  e_{4}\right)  $ in $U$ by $\omega\left(  e_{4}^{\varepsilon
}\right)  $ we may also view $U$ as a function on $K^{\mathbb{G}%
_{-,\varepsilon}}.$
\end{notation}

\begin{assumption}
\label{ass.2}For the rest of this paper we now assume that $\left\{
Q_{\varepsilon}\right\}  $ as in Notation \ref{not.2.25} is chosen in such as
way that there exists $c<\infty$ so that the \textquotedblleft
shadow\textquotedblright\ region $S_{\varepsilon}$ (as indicated in Figure
\ref{fig.8}) satisfies $\left\vert S_{\varepsilon}\right\vert \leq c\left\vert
Q_{\varepsilon}\right\vert $ as $\varepsilon\downarrow0.$
\end{assumption}

\begin{theorem}
[Loop expansion]\label{thm.2.27}Continuing the notation above while keeping
Assumptions \ref{ass.1} and \ref{ass.2} in force, if $U:K^{\mathbb{G}%
}\rightarrow\mathbb{C}$ is a $C^{3}$-function, then
\begin{equation}
-\mathbb{E}\left[  \left(  \nabla_{f\left(  Q_{\varepsilon}\right)  }^{e_{2}%
}U+\nabla_{f\left(  RQ_{\varepsilon}\right)  }^{e_{4}}U\right)  \left(
\pt^{f}\left(  \mathbb{G}\right)  \right)  \right]  =\mathbb{E}\left[
U\left(  \pt^{f}\left(  \mathbb{G}_{+,\varepsilon}\right)  \right)  -U\left(
\pt^{f}\left(  \mathbb{G}_{-,\varepsilon}\right)  \right)  \right]  +O\left(
\sqrt{\varepsilon}\left\vert Q_{\varepsilon}\right\vert \right)  .
\label{e.2.10}%
\end{equation}

\end{theorem}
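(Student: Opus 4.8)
The plan is to compare each deformed graph with $\mathbb{G}$ by inserting a stochastic holonomy correction and Taylor expanding, and then to show that the same-order error terms cancel in the combination. I write $U(\pt^f(\mathbb{G}_{+,\varepsilon}))-U(\pt^f(\mathbb{G}_{-,\varepsilon}))=[U(\pt^f(\mathbb{G}_{+,\varepsilon}))-U(\pt^f(\mathbb{G}))]-[U(\pt^f(\mathbb{G}_{-,\varepsilon}))-U(\pt^f(\mathbb{G}))]$ and treat the two brackets symmetrically under the reflection $R$, which interchanges $e_2\leftrightarrow e_4$, $Q_\varepsilon\leftrightarrow RQ_\varepsilon$ and the two half planes. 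Since $\mathbb{G}_{+,\varepsilon}$ differs from $\mathbb{G}$ only in the edge $e_2$, only the factor $\pt^f(e_2)$ changes, and I would write $\pt^f(e_2^\varepsilon)=\pt^f(e_2)\,C_\varepsilon^{+}$ with a correction factor $C_\varepsilon^{+}\in K$. From Definition \ref{def.2.7} and the identity $M_t^f(e_2^\varepsilon)-M_t^f(e_2)=\hat f(Q_\varepsilon\cap([a,t]\times\mathbb{R}))$, a variation-of-constants argument (a stochastic non-abelian Stokes theorem) gives the Stratonovich equation $\delta C_t^{+}=-\mathrm{Ad}_{\pt_t^f(e_2)^{-1}}\left(\delta\hat f(Q_\varepsilon\cap([a,t]\times\mathbb{R}))\right)C_t^{+}$ with $C_a^{+}=I$, exhibiting $C_\varepsilon^{+}$ as the (conjugated) curvature flux through $Q_\varepsilon$.

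The leading term is then identified as follows. Because $Q_\varepsilon$ lies in the vertical strip $0\le x\le\varepsilon$ at the crossing and inside the upper half plane, $\hat f=f$ there, and the bracket estimate $\mathbb{E}[|M_t^f(e_2)|_{\mathfrak{k}}^2]=\dim K\cdot\int|y|\,dx=O(\varepsilon)$ on the strip gives $\pt_t^f(e_2)=I+O(\sqrt\varepsilon)$ and hence $\mathrm{Ad}_{\pt_t^f(e_2)^{-1}}=\mathrm{Id}+O(\sqrt\varepsilon)$ uniformly on $Q_\varepsilon$. Thus the first Wiener chaos of $\log C_\varepsilon^{+}$ is exactly $-f(Q_\varepsilon)$, and Taylor expanding $U$ in the single argument $\pt^f(e_2)\mapsto\pt^f(e_2)\,C_\varepsilon^{+}$ produces, as its first-order term, $-(\nabla^{e_2}_{f(Q_\varepsilon)}U)(\pt^f(\mathbb{G}))$ — the first summand on the left of Eq. (\ref{e.2.10}) — together with a collection of higher corrections; the identical computation for the lower deformation yields the second summand $-(\nabla^{e_4}_{f(RQ_\varepsilon)}U)(\pt^f(\mathbb{G}))$.

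The crux of the proof is to show that, after taking expectations, all remaining corrections sum to $O(\sqrt\varepsilon|Q_\varepsilon|)$. This is genuinely delicate, because several of the corrections are individually $O(|Q_\varepsilon|)$, that is, of the same order as the terms being matched: the second-order Taylor remainder $\tfrac12\mathbb{E}[\nabla^{e_2}_{f(Q_\varepsilon)}\nabla^{e_2}_{f(Q_\varepsilon)}U]$, which does not vanish after Wick contraction; the second-Wiener-chaos ``area'' part of $\log C_\varepsilon^{+}$ coming from the non-commutativity of the surface ordering; and the contributions of those edges $b\ne e_2$ whose transport regions $R^{b}$ overlap $Q_\varepsilon$, namely the shadow region $S_\varepsilon$. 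The mechanism for the bound is Gaussian (Cameron--Martin/Malliavin) integration by parts with respect to the white noise supported on $Q_\varepsilon\cup S_\varepsilon$: this re-expresses $\mathbb{E}[\nabla^{e_2}_{f(Q_\varepsilon)}U]$ through the derivatives of the transports whose regions meet $Q_\varepsilon$, and is organized so as to cancel the $O(|Q_\varepsilon|)$ content of the second-order terms, leaving an error in which each surviving summand carries either an explicit thin-strip factor $\mathrm{Ad}_{\pt}-\mathrm{Id}=O(\sqrt\varepsilon)$, an antisymmetric structure constant from a commutator (which annihilates the leading symmetric contraction), or a factor $|S_\varepsilon|\le c|Q_\varepsilon|$ supplied by Assumption \ref{ass.2}.

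The single hardest estimate is the joint control of the second-chaos part of $\log C_\varepsilon^{+}$ and the second-order Taylor term: one must expand $C_\varepsilon^{+}$ to second order in the Wiener chaos and pair it against $\mathbb{E}[U(\pt^f(\mathbb{G}))]$ using the smooth, area-dependent density of Theorem \ref{thm.2.14}, verifying that the only uncancelled contribution is of size $\sqrt\varepsilon|Q_\varepsilon|$. Uniform $L^p$ estimates on $\pt^f$, on $C_\varepsilon^{+}$, and on the $C^3$-derivatives of $U$ then justify discarding the genuinely higher-order remainders of the Taylor and chaos expansions, which completes the proof of Eq. (\ref{e.2.10}).
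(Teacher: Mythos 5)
Your overall architecture matches the paper's: expand the holonomy around the small loop bounding $Q_{\varepsilon}$ as a stochastic Taylor/chaos series, identify the first chaos with $-f\left(  Q_{\varepsilon}\right)  ,$ and then argue that the remaining corrections contribute only $O\left(  \sqrt{\varepsilon}\left\vert Q_{\varepsilon}\right\vert \right)  $ in expectation. You also correctly locate the danger: several corrections are individually $O\left(  \left\vert Q_{\varepsilon}\right\vert \right)  .$ But at exactly that point your argument is asserted rather than proved, and the two mechanisms you invoke are not the ones that close the gap. The deterministic $O\left(  \left\vert Q_{\varepsilon}\right\vert \right)  $ contribution is the It\^{o} correction $\frac{1}{2}\kappa\left\vert Q_{\varepsilon}\right\vert $ in the loop holonomy combined with the Wick-contracted part of the second-order Taylor term; together these produce $\frac{1}{2}\left\vert Q_{\varepsilon}\right\vert \left(  \Delta_{K}^{e_{2}}U\right)  \left(  \pt^{f}\left(  \mathbb{G}\right)  \right)  $ from the upper deformation and $\frac{1}{2}\left\vert Q_{\varepsilon}\right\vert \left(  \Delta_{K}^{e_{4}}U\right)  \left(  \pt^{f}\left(  \mathbb{G}\right)  \right)  $ from the lower one. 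No antisymmetric structure constant kills these: they survive individually and cancel only in the difference of the two deformations, and only because $\Delta_{K}^{e_{2}}U=\Delta_{K}^{e_{4}}U,$ which is Lemma \ref{lem.5.6} applied to the extended gauge invariance of $U$ at $0$ (the dependence on $\omega\left(  e_{2}\right)  ,\omega\left(  e_{4}\right)  $ being through $\omega\left(  e_{2}\right)  \omega\left(  e_{4}\right)  ^{-1}).$ Reflection symmetry of the regions alone does not give this, since $U$ is not symmetric in $e_{2}$ and $e_{4};$ your write-up never isolates this cancellation, and without it the claimed bound fails already at order $\left\vert Q_{\varepsilon}\right\vert .$

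Second, the surviving stochastic corrections (the mean-zero second-chaos term $\int_{0}^{\varepsilon}b_{s}\vee db_{s}$ and the martingale remainders $\mathcal{R}_{\varepsilon}^{\pm})$ are each of $L^{2}$-size $O\left(  \left\vert Q_{\varepsilon}\right\vert \right)  ,$ so they cannot be discarded termwise. The device that works is not a Malliavin integration by parts but the elementary covariance estimate of Lemma \ref{lem.3.19}: each such term has the form $\mathcal{U}\left(  \pt^{f}\left(  \mathbb{G}\right)  \right)  W_{\varepsilon}$ with $\mathbb{E}W_{\varepsilon}=0,$ $W_{\varepsilon}$ measurable with respect to the noise on the strip $J_{\varepsilon}=\left[  0,\varepsilon\right]  \times\mathbb{R}$ and $\left\Vert W_{\varepsilon}\right\Vert _{2}=O\left(  \left\vert Q_{\varepsilon}\right\vert \right)  ;$ subtracting the independent quantity $\mathcal{U}\left(  \pt^{f_{\varepsilon}}\left(  \mathbb{G}\right)  \right)  $ (noise on the strip removed) costs nothing in expectation, and Lemma \ref{lem.5.10} gives $\left\Vert \mathcal{U}\left(  \pt^{f}\left(  \mathbb{G}\right)  \right)  -\mathcal{U}\left(  \pt^{f_{\varepsilon}}\left(  \mathbb{G}\right)  \right)  \right\Vert _{2}=O\left(  \sqrt{\varepsilon}\right)  ,$ whence the product is $O\left(  \sqrt{\varepsilon}\left\vert Q_{\varepsilon}\right\vert \right)  .$ Your proposed reorganization by integration by parts ``so as to cancel the $O\left(  \left\vert Q_{\varepsilon}\right\vert \right)  $ content'' is not carried out, and integrating $f\left(  Q_{\varepsilon}\right)  $ by parts would merely reproduce the left side of Theorem \ref{thm.2.23}, which is circular in this context. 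Finally, Assumption \ref{ass.2} enters only to upgrade $\mathcal{R}_{\varepsilon}=O_{p}\left(  \sqrt{a_{\varepsilon}^{\gamma}a_{\varepsilon}}\right)  $ to $O_{p}\left(  \left\vert Q_{\varepsilon}\right\vert \right)  ,$ not to bound a separate shadow-region contribution as you suggest.
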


\subsection{Proof of Theorem \ref{thm.2.21}}

It is now a simple matter to use the previous results to prove Theorem
\ref{thm.2.21}.

\begin{proof}
[Proof of Theorem \ref{thm.2.21}]By Proposition \ref{pro.2.22} it suffices to
assume the crossing configuration of $\mathbb{G}$ is at $v=0$ and lies on the
coordinated axes as in Figure \ref{fig.6}. It then follows from Corollary
\ref{cor.2.24} and Theorem \ref{thm.2.27} that
\begin{align}
\mathbb{E}\left[  \left(  \nabla^{e_{1}}\cdot\nabla^{e_{2}}U\right)  \left(
\pt^{f}\left(  \mathbb{G}\right)  \right)  \right]   &  =-\lim_{\varepsilon
\downarrow0}\frac{1}{\left\vert Q_{\varepsilon}\right\vert }\mathbb{E}\left[
\left(  \nabla_{f\left(  Q_{\varepsilon}\right)  }^{e_{2}}U+\nabla_{f\left(
RQ_{\varepsilon}\right)  }^{e_{4}}U\right)  \left(  \pt^{f}\left(
\mathbb{G}\right)  \right)  \right]  \nonumber\\
&  =\lim_{\varepsilon\downarrow0}\frac{1}{\left\vert Q_{\varepsilon
}\right\vert }\mathbb{E}\left[  U\left(  \pt^{f}\left(  \mathbb{G}%
_{+,\varepsilon}\right)  \right)  -U\left(  \pt^{f}\left(  \mathbb{G}%
_{-,\varepsilon}\right)  \right)  \right]  .\label{e.2.11}%
\end{align}
Fixing all of the bounded areas of $\mathbb{R}^{2}\setminus\mathbb{G}$ except
for those bordering the vertex $0\in V\left(  \mathbb{G}\right)  ,$ let $z$ be
the function such that $z\left(  t_{1},t_{2},t_{3},t_{4}\right)
:=\mathbb{E}\left[  U\left(  \pt^{f}\left(  \mathbb{G}\right)  \right)
\right]  ,$ where $\left\{  t_{i}\right\}  _{i=1}^{4}$ are the areas of the
faces adjoining $0$ as labeled in Figure \ref{fig.6}. [The fact that we can
defined $z$ as a function of these areas and not on the shapes of the regions
relies on Corollary \ref{cor.2.15}.] We then have, see Figure \ref{fig.8},
\begin{align*}
\mathbb{E}\left[  U\left(  \pt^{f}\left(  \mathbb{G}_{+,\varepsilon}\right)
\right)  -U\left(  \pt^{f}\left(  \mathbb{G}_{-,\varepsilon}\right)  \right)
\right]  = &  z\left(  t_{1}-\left\vert Q_{\varepsilon}\right\vert
,t_{2}+\left\vert Q_{\varepsilon}\right\vert ,t_{3},t_{4}\right)  -z\left(
t_{1},t_{2},t_{3}+\left\vert Q_{\varepsilon}\right\vert ,t_{4}-\left\vert
Q_{\varepsilon}\right\vert \right)  \\
= &  z\left(  t_{1}-\left\vert Q_{\varepsilon}\right\vert ,t_{2}+\left\vert
Q_{\varepsilon}\right\vert ,t_{3},t_{4}\right)  -z\left(  t_{1},t_{2}%
,t_{3},t_{4}\right)  \\
&  \qquad-\left[  z\left(  t_{1},t_{2},t_{3}+\left\vert Q_{\varepsilon
}\right\vert ,t_{4}-\left\vert Q_{\varepsilon}\right\vert \right)  -z\left(
t_{1},t_{2},t_{3},t_{4}\right)  \right]  .
\end{align*}
Dividing this identity by $\left\vert Q_{\varepsilon}\right\vert $ and letting
$\varepsilon\downarrow0$ shows
\begin{align}
\lim_{\varepsilon\downarrow0}\frac{1}{\left\vert Q_{\varepsilon}\right\vert
}\mathbb{E}\left[  U\left(  \pt^{f}\left(  \mathbb{G}_{+,\varepsilon}\right)
\right)  -U\left(  \pt^{f}\left(  \mathbb{G}_{-,\varepsilon}\right)  \right)
\right]   &  =\left(  -\frac{\partial}{\partial t_{1}}+\frac{\partial
}{\partial t_{2}}-\frac{\partial}{\partial t_{3}}+\frac{\partial}{\partial
t_{4}}\right)  z\left(  t_{1},t_{2},t_{3},t_{4}\right)  \nonumber\\
&  =-\left(  \frac{\partial}{\partial t_{1}}-\frac{\partial}{\partial t_{2}%
}+\frac{\partial}{\partial t_{3}}-\frac{\partial}{\partial t_{4}}\right)
\mathbb{E}\left[  U\left(  \pt^{f}\left(  \mathbb{G}\right)  \right)  \right]
.\label{e.2.12}%
\end{align}
Combining Eqs. (\ref{e.2.11}) and (\ref{e.2.12}) completes the proof of
Theorem \ref{thm.2.21}.
\end{proof}

\iftrue

\section{Heuristic arguments\label{sec.3}}

Throughout this section, let $\mathbb{G}$ be a graph in $\mathbb{R}^{2}$
satisfying Assumption \ref{ass.1}, i.e. we are assuming $\mathbb{G}$ contains
a cross of bonds $\left(  \left\{  e_{1},e_{2},e_{3},e_{4}\right\}  \right)  $
contained in the coordinate axes as in Figure \ref{fig.6}. In this section we
will use the informal expression for the measure $\mu$ given in Eq.
(\ref{e.1.6}) and therefore many of the \textquotedblleft
results\textquotedblright\ in this section are not rigorous. We indicate the
non-rigorous results by writing Meta-Theorem, Meta-Lemma, etc. Most of the
results in this section will have a corresponding rigorous version in either
Section \ref{sec.4} or Section \ref{sec.5} below.

\subsection{Heuristic integration by parts\label{sec.3.1}}

\begin{metatheorem}
[Gaussian IBP]\label{mthm.3.1}If $\Psi:\mathcal{A}_{0}\rightarrow\mathbb{C}$
and $\eta\in\mathcal{A}_{0},$ then%
\begin{equation}
\int_{\mathcal{A}_{0}}\Psi\left(  A+\eta\right)  d\mu\left(  A\right)
=e^{-\frac{1}{2}\left\Vert \partial_{y}\eta\right\Vert ^{2}}\int
_{\mathcal{A}_{0}}\Psi\left(  A\right)  \exp\left(  \left\langle \partial
_{y}A,\partial_{y}\eta\right\rangle \right)  d\mu\left(  A\right)
\label{e.3.1}%
\end{equation}
and%
\begin{equation}
\int_{\mathcal{A}_{0}}\partial_{\eta}\Psi\left(  A\right)  d\mu\left(
A\right)  =\int_{\mathcal{A}_{0}}\Psi\left(  A\right)  \left\langle
\partial_{y}A,\partial_{y}\eta\right\rangle d\mu\left(  A\right)  .
\label{e.3.2}%
\end{equation}

\end{metatheorem}

\begin{proof}
Using the formal translation invariance of $\mathcal{D}A$ we find
\begin{align*}
\int_{\mathcal{A}_{0}}\Psi\left(  A+\eta\right)  d\mu\left(  A\right)   &
=\frac{1}{Z}\int_{\mathcal{A}_{0}}\Psi\left(  A+\eta\right)  \exp\left(
-\frac{1}{2}\left\Vert \partial_{y}A\right\Vert ^{2}\right)  \mathcal{D}A\\
&  =\frac{1}{Z}\int_{\mathcal{A}_{0}}\Psi\left(  A\right)  \exp\left(
-\frac{1}{2}\left\Vert \partial_{y}A-\partial_{y}\eta\right\Vert ^{2}\right)
\mathcal{D}A\\
&  =e^{-\frac{1}{2}\left\Vert \partial_{y}\eta\right\Vert ^{2}}\int
_{\mathcal{A}_{0}}\Psi\left(  A\right)  \exp\left(  \left\langle \partial
_{y}A,\partial_{y}\eta\right\rangle \right)  d\mu\left(  A\right)  .
\end{align*}
Replacing $\eta$ by $s\eta$ and then differentiating in $s$ at $s=0$ then
gives the basic Gaussian integration by parts formula in Eq. (\ref{e.3.2}).
\end{proof}

Now we want to make perturbations by $\eta$ (i.e. $\eta dx)$ where in fact
$\eta\left(  x,0\right)  \neq0$ for all $x\in\mathbb{R}.$ Of course we can not
actually do this since this type of perturbation does not preserve the
axial-gauge subspace, $\mathcal{A}_{0}.$ Nevertheless we will see that Eq.
(\ref{e.3.2}) still is valid for such an $\eta$ provided $\Psi$ is $g_{s\eta}%
$-invariant for $s$ near $0,$ see Meta-Theorem \ref{mthm.3.10}. Before proving
this key meta-theorem we need to introduce the general class of perturbations
to be considered.

\begin{notation}
\label{not.3.2}Let $\eta_{y}:\mathbb{R}^{2}\rightarrow\mathfrak{k}$ be a
bounded measurable function with compact support and set
\begin{equation}
\eta\left(  x,y\right)  :=\int_{-\infty}^{y}\eta_{y}\left(  x,y^{\prime
}\right)  dy^{\prime}\text{ }\forall~\left(  x,y\right)  \in\mathbb{R}^{2}.
\label{e.3.3}%
\end{equation}

\end{notation}

Notice that for each $x\in\mathbb{R},$ $y\rightarrow\eta\left(  x,y\right)  $
is absolutely continuous and $\frac{\partial\eta}{\partial y}\left(
x,y\right)  =\eta_{y}\left(  x,y\right)  $ for a.e. $y.$ Because of this
observation, we will often informally describe $\eta$ as in Notation
\ref{not.3.2} by saying that $\eta$ is a function from $\mathbb{R}^{2}$ to
$\mathfrak{k}$ such that $\eta_{y}=\partial\eta/\partial y$ is bounded and
compactly supported with the understanding that $\eta$ is given as in Eq.
(\ref{e.3.3}).

\begin{notation}
\label{not.3.3}Given $A:\mathbb{R}^{2}\rightarrow\mathfrak{k}$ bounded and
measurable, let
\[
\bar{A}\left(  x,y\right)  :=A\left(  x,y\right)  -A\left(  x,0\right)  .
\]

\end{notation}

\begin{remark}
\label{rem.3.4}If $\eta_{y}$ and $\eta$ are as in Notation \ref{not.3.2}, then
for all $y\in\mathbb{R},$%
\begin{align*}
\bar{\eta}\left(  x,y\right)   &  =\eta\left(  x,y\right)  -\eta\left(
x,0\right) \\
&  =\int_{-\infty}^{y}\eta_{y}\left(  x,y^{\prime}\right)  dy^{\prime}%
-\int_{-\infty}^{0}\eta_{y}\left(  x,y^{\prime}\right)  dy^{\prime}=\int
_{0}^{y}\eta_{y}\left(  x,y^{\prime}\right)  dy^{\prime}.
\end{align*}

\end{remark}

The next example contains the only class of $\eta$'s that are actually needed
for the purposes of this paper.

\begin{example}
\label{ex.3.5}For $\xi\in\mathfrak{k}$ and $Q$ a compact region in the first
quadrant as shown in Figure \ref{fig.7}, let $\eta$ be as in Eq. (\ref{e.3.3})
with
\[
\eta_{y}\left(  x,y\right)  =\left[  1_{RQ}\left(  x,y\right)  -1_{Q}\left(
x,y\right)  \right]  \cdot\xi.
\]
In this case, $\eta$ is compactly supported with $\mathrm{supp}\left(
\eta\right)  =\bar{Q}\cup R\bar{Q}.$
\end{example}

\begin{definition}
\label{def.3.6}To each $\eta$ as in Notation \ref{not.3.2}, let $g_{\eta}\in
C\left(  \mathbb{R},K\right)  $ be the absolutely continuous function
satisfying the ODE%
\begin{equation}
\frac{d}{dx}g_{\eta}\left(  x\right)  +\eta\left(  x,0\right)  g_{\eta}\left(
x\right)  =0\text{ for a.e. }x\text{ with }g_{\eta}\left(  0\right)  =I.
\label{e.3.4}%
\end{equation}
[Equation (\ref{e.3.4}) should be interpreted in integral form as%
\[
g_{\eta}\left(  x\right)  =I-\int_{0}^{x}\eta\left(  x^{\prime},0\right)
g_{\eta}\left(  x^{\prime}\right)  dx^{\prime}\text{ }\forall~x\in\mathbb{R}.
\]
This integral equation has a unique absolutely continuous solution.]
\end{definition}

For our purposes, we will only deal with $g\in C\left(  \mathbb{R}%
^{2},K\right)  $ such that $g\left(  x,y\right)  $ is independent of $y$ and
in this setting we will identify $g$ with $g\left(  \cdot,0\right)  \in
C\left(  \mathbb{R},K\right)  .$ Thus we will abuse notation and write
\textquotedblleft$g\left(  x,y\right)  =g\left(  x\right)  $\textquotedblright%
\ for any $g\in C\left(  \mathbb{R},K\right)  $ and in particular apply this
identification to $g_{\eta}$ of Definition \ref{def.3.6}.

\begin{proposition}
\label{pro.3.7}If $A\,dx$ is a connection one form, $\bar{A}$ is as in
Notation \ref{not.3.3}, and $g_{A}$ is a in Definition \ref{def.3.6} with
$\eta$ replaced by $A,$ then%
\begin{align*}
\left(  A\,dx\right)  ^{g_{A}}\left(  x,y\right)   &  =\mathrm{Ad}%
_{g_{A}\left(  x\right)  ^{-1}}\bar{A}\left(  x,y\right)  dx\in\mathcal{A}%
_{0},\text{ and }\\
f^{\left(  A\,dx\right)  ^{g_{A}}}\left(  x,y\right)   &  =\mathrm{Ad}%
_{g_{A}\left(  x\right)  ^{-1}}f^{A\,dx}\left(  x,y\right)  .
\end{align*}

\end{proposition}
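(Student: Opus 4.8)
The plan is to compute the gauge-transformed connection $(A\,dx)^{g_A}$ directly from the definition of the gauge action in Eq.~(\ref{e.1.1}), exploiting the crucial fact that $g_A$ depends only on $x$ and solves the ODE~(\ref{e.3.4}). Since $K$ is a matrix group, I will write everything in terms of matrix products and the operator $\mathrm{Ad}_{g^{-1}}(\cdot)=g^{-1}(\cdot)g$, and then extract the curvature function via the formula $f^{B\,dx}=-\partial_y B$ recorded just before Eq.~(\ref{e.1.5}).

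First I would establish the first identity. Writing $g_A=g_A(x)$, which is independent of $y$, the exterior derivative reduces to $dg_A=g_A'(x)\,dx$, so the inhomogeneous term in Eq.~(\ref{e.1.1}) becomes $g_A^{-1}dg_A=g_A(x)^{-1}g_A'(x)\,dx$. The defining ODE~(\ref{e.3.4}), with $\eta$ replaced by $A$, gives $g_A'(x)=-A(x,0)g_A(x)$, hence $g_A^{-1}dg_A=-\mathrm{Ad}_{g_A(x)^{-1}}A(x,0)\,dx$. The homogeneous term is $g_A^{-1}(A\,dx)g_A=\mathrm{Ad}_{g_A(x)^{-1}}A(x,y)\,dx$. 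Adding the two contributions and invoking Notation~\ref{not.3.3} yields $(A\,dx)^{g_A}=\mathrm{Ad}_{g_A(x)^{-1}}[A(x,y)-A(x,0)]\,dx=\mathrm{Ad}_{g_A(x)^{-1}}\bar{A}(x,y)\,dx$. Since $\bar{A}(x,0)=0$ for every $x$, the result lies in $\mathcal{A}_{0}$, as claimed.

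For the second identity I would apply the curvature-function formula $f^{B\,dx}=-\partial_y B$ to $B(x,y):=\mathrm{Ad}_{g_A(x)^{-1}}\bar{A}(x,y)$. Because the operator $\mathrm{Ad}_{g_A(x)^{-1}}$ is independent of $y$, the $y$-derivative passes through it, giving $f^{(A\,dx)^{g_A}}=-\partial_y[\mathrm{Ad}_{g_A(x)^{-1}}\bar{A}(x,y)]=\mathrm{Ad}_{g_A(x)^{-1}}[-\partial_y\bar{A}(x,y)]$. Finally, since $\bar{A}(x,y)=A(x,y)-A(x,0)$ with the subtracted term independent of $y$, one has $\partial_y\bar{A}=\partial_y A$, so $f^{(A\,dx)^{g_A}}=\mathrm{Ad}_{g_A(x)^{-1}}(-\partial_y A)=\mathrm{Ad}_{g_A(x)^{-1}}f^{A\,dx}$. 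Alternatively, this second identity follows from the first together with the general transformation law $F^{A^g}=\mathrm{Ad}_{g^{-1}}F^A$ of Theorem~\ref{thm.A.1}, by comparing the coefficients of $dx\wedge dy$.

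The computation is essentially routine; the only point requiring genuine care is the bookkeeping of the $y$-independence of $g_A$. This is exactly what makes $dg_A$ a pure $dx$-form in the first identity and what lets $\partial_y$ commute with $\mathrm{Ad}_{g_A(x)^{-1}}$ in the second. A secondary technical nuance, if one wants full rigour at this level of regularity, is that $g_A$ is only absolutely continuous, so the steps involving $g_A'$ and $\partial_y$ should be read in the a.e.\ sense consistent with Definition~\ref{def.3.6}; this causes no real difficulty since every product occurring is against a bounded measurable factor.
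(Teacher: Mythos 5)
Your proof is correct and follows essentially the same route as the paper: compute $g_A^{-1}dg_A$ from the defining ODE for $g_A$ to obtain the $-\mathrm{Ad}_{g_A(x)^{-1}}A(x,0)\,dx$ term, combine with the conjugated connection to get $\mathrm{Ad}_{g_A(x)^{-1}}\bar{A}\,dx$, and then differentiate in $y$ (or cite Theorem \ref{thm.A.1}) for the curvature identity. Your version spells out the intermediate ODE step and the a.e.\ regularity caveat a bit more explicitly than the paper does, but the argument is the same.
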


\begin{proof}
From the definitions we have%
\begin{align*}
\left(  A\,dx\right)  ^{g_{A}}  &  =\mathrm{Ad}_{g_{A}^{-1}}A\,dx+g_{A}%
^{-1}dg_{A}=\left[  \mathrm{Ad}_{g_{A}\left(  x\right)  ^{-1}}A\left(
x,y\right)  -\mathrm{Ad}_{g_{A}\left(  x\right)  ^{-1}}A\left(  x,0\right)
\right]  dx\\
&  =\mathrm{Ad}_{g_{A}\left(  x\right)  ^{-1}}\bar{A}\left(  x,y\right)  dx
\end{align*}
which proves the first equality. The second equality follows directly from
Theorem \ref{thm.A.1} of the appendix or may be proved directly as follows,%
\begin{align*}
f^{\left(  A\,dx\right)  ^{g_{A}}}\left(  x,y\right)   &  =-\partial
_{y}\left[  \mathrm{Ad}_{g_{A}\left(  x\right)  ^{-1}}\bar{A}\left(
x,y\right)  \right]  =-\mathrm{Ad}_{g_{A}\left(  x\right)  ^{-1}}\partial
_{y}\bar{A}\left(  x,y\right) \\
&  =-\mathrm{Ad}_{g_{A}\left(  x\right)  ^{-1}}\partial_{y}A\left(
x,y\right)  =\mathrm{Ad}_{g_{A}\left(  x\right)  ^{-1}}f^{A\,dx}\left(
x,y\right)  .
\end{align*}

\end{proof}

Finally we introduce a class of $K$-valued functions which enable us to
describe how parallel translation transforms under the perturbations,
$A\,dx\rightarrow\left[  \left(  A+\eta\right)  dx\right]  ^{g_{\eta}}.$

\begin{definition}
\label{def.3.8}For $A\in\mathcal{A}_{0}$ (i.e. $A\,dx\in\mathcal{A}_{0}),$
$\eta:\mathbb{R}^{2}\rightarrow\mathfrak{k}$ as in Notation \ref{not.3.2}, and
a horizontal path, $\left[  a,b\right]  \ni x\rightarrow\ell\left(  x\right)
=\left(  x,y\left(  x\right)  \right)  ,$ let $\left[  a,b\right]  \ni
x\rightarrow k_{x}\left(  \ell\right)  $ denote absolutely continuous function
satisfying,%
\begin{equation}
\frac{d}{dx}k_{x}\left(  \ell\right)  +\left[  \mathrm{Ad}_{\pt_{x}^{A}\left(
\ell\right)  ^{-1}}\eta\left(  x,y\left(  x\right)  \right)  \right]
k_{x}\left(  \ell\right)  =0\text{ a.e. }x\text{ with }k_{0}\left(
\ell\right)  =I \label{e.3.5}%
\end{equation}
and let $\mathbf{k}^{\left(  A,\eta\right)  }\left(  \ell\right)
:=k_{b}\left(  \ell\right)  .$ [Again this equation is to be interpreted in
its integral form.]
\end{definition}

\begin{corollary}
\label{cor.3.9}If $A\,dx\in\mathcal{A}_{0}$ (so $A\left(  x,0\right)
\equiv0),$ $\eta$ is as in Notation \ref{not.3.2}, and $g_{\eta}$ is as in
Definition \ref{def.3.6}, then%
\begin{align}
\left[  \left(  A+\eta\right)  dx\right]  ^{g_{\eta}}  &  =\mathrm{Ad}%
_{g_{\eta}\left(  x\right)  ^{-1}}\left[  A\left(  x,y\right)  +\bar{\eta
}\left(  x,y\right)  \right]  dx\in\mathcal{A}_{0},\text{ and}\label{e.3.6}\\
f_{\eta}\left(  x,y\right)   &  :=f^{\left[  \left(  A+\eta\right)  dx\right]
^{g_{\eta}}}\left(  x,y\right)  =\mathrm{Ad}_{g_{\eta}\left(  x\right)  ^{-1}%
}\left[  f\left(  x,y\right)  -\eta_{y}\left(  x,y\right)  \right]  .
\label{e.3.7}%
\end{align}
Moreover, if $\left[  a,b\right]  \ni x\rightarrow\ell\left(  x\right)
=\left(  x,y\left(  x\right)  \right)  \in\mathbb{R}^{2}$ is a horizontal
curve, then
\begin{equation}
\pt^{\left[  \left(  A+\eta\right)  dx\right]  ^{g_{\eta}}}\left(
\ell\right)  =g_{\eta}\left(  b\right)  ^{-1}\pt_{b}^{A}\left(  \ell\right)
\mathbf{k}^{\left(  A,\eta\right)  }\left(  \ell\right)  g_{\eta}\left(
a\right)  . \label{e.3.8}%
\end{equation}

\end{corollary}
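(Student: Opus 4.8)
The plan is to read off the first two identities directly from Proposition~\ref{pro.3.7} and to obtain the parallel-translation formula~\eqref{e.3.8} by combining the gauge covariance of parallel translation with a variation-of-parameters computation. Note that although this corollary sits in the heuristic section, the three assertions are honest deterministic statements about smooth objects, so the argument should be fully rigorous.

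First I would use that $A\in\mathcal{A}_0$ means $A(x,0)\equiv 0$, so $(A+\eta)(x,0)=\eta(x,0)$ and hence the function $g_{A+\eta}$ produced by Definition~\ref{def.3.6} (with $\eta$ there replaced by $A+\eta$) solves the very same ODE as $g_\eta$; thus $g_{A+\eta}=g_\eta$. Applying Proposition~\ref{pro.3.7} to the connection $(A+\eta)\,dx$ then gives $[(A+\eta)\,dx]^{g_\eta}=\mathrm{Ad}_{g_\eta(x)^{-1}}\,\overline{(A+\eta)}(x,y)\,dx$, and since $\overline{(A+\eta)}(x,y)=A(x,y)+\bar\eta(x,y)$ by Remark~\ref{rem.3.4} together with $A(x,0)=0$, this is exactly~\eqref{e.3.6}; membership in $\mathcal{A}_0$ is immediate because both $A(x,0)$ and $\bar\eta(x,0)$ vanish. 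The curvature identity~\eqref{e.3.7} is then the second assertion of Proposition~\ref{pro.3.7}, using $-\partial_y(A+\eta)=f-\eta_y$ and the fact that $g_\eta(x)$ is independent of $y$, so that $\mathrm{Ad}_{g_\eta(x)^{-1}}$ commutes with $\partial_y$.

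For~\eqref{e.3.8} I would first invoke the gauge covariance of parallel translation (Theorem~\ref{thm.A.1}, as recorded in Remark~\ref{rem.2.13}). Since $g_\eta$ depends only on $x$, its values at the endpoints of the horizontal curve $\ell(x)=(x,y(x))$ are $g_\eta(a)$ and $g_\eta(b)$, so gauge covariance reduces the claim to showing $\pt^{(A+\eta)\,dx}(\ell)=\pt_b^A(\ell)\,\mathbf{k}^{(A,\eta)}(\ell)$. To establish this, write $P_x:=\pt_x^A(\ell)$ and $\tilde P_x:=\pt_x^{(A+\eta)\,dx}(\ell)$, which solve $\dot P_x+A(x,y(x))P_x=0$ and $\dot{\tilde P}_x+(A+\eta)(x,y(x))\tilde P_x=0$. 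Then I would set $L_x:=P_x k_x$, where $k_x$ solves the defining ODE~\eqref{e.3.5}; differentiating, the $A$-terms cancel and one finds $\dot L_x+(A+\eta)(x,y(x))L_x=0$, exactly the equation for $\tilde P_x$. By uniqueness for this linear ODE, $L_x=\tilde P_x$, and evaluating at $x=b$ gives the desired factorization, hence~\eqref{e.3.8} after reinserting the endpoint factors $g_\eta(b)^{-1}$ and $g_\eta(a)$.

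The bulk of this is mechanical (commuting $\mathrm{Ad}_{g_\eta(x)^{-1}}$ past $\partial_y$, the endpoint evaluations of $g_\eta$, and the cancellation of the $A$-terms). The one point I expect to need genuine care is the matching of initial conditions: the variation-of-parameters identity $\tilde P_x=P_x k_x$ requires $k$ to be normalized to the identity at the initial parameter $a$ of the curve so that $L_a=\tilde P_a=I$, whereas Definition~\ref{def.3.6}/\eqref{e.3.5} normalizes at $0$. Reconciling this bookkeeping (for the curves actually in play, which share the relevant base point) is the main obstacle; once the normalizations are aligned, the result follows from Proposition~\ref{pro.3.7} and ODE uniqueness.
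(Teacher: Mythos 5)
Your proposal is correct and follows essentially the same route as the paper: the first two identities come from Proposition \ref{pro.3.7} via the observation that $g_{A+\eta}=g_{\eta}$, and Eq. (\ref{e.3.8}) comes from an ODE-uniqueness comparison of $\pt^{\left(A+\eta\right)dx}(\ell)$ with $\pt^{A}(\ell)\,k_{\cdot}(\ell)$ followed by gauge covariance (Eq. (\ref{e.A.2})); the paper merely differentiates $\pt_{x}^{A}(\ell)^{-1}\pt_{x}^{\left(A+\eta\right)dx}(\ell)$ rather than the product $\pt_{x}^{A}(\ell)k_{x}(\ell)$, which is the same computation. The normalization issue you flag is real but harmless: the intended initial condition in Definition \ref{def.3.8} is the identity at the initial parameter $a$ of the curve, exactly as the paper's proof uses.
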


\begin{proof}
Most of this corollary is an easy consequence of Proposition \ref{pro.3.7}
upon observing that
\[
\overline{A+\eta}\left(  x,y\right)  =A\left(  x,y\right)  +\bar{\eta}\left(
x,y\right)  \text{ and }~\left(  A+\eta\right)  \left(  x,0\right)
=\eta\left(  x,0\right)
\]
and therefore; $g_{A+\eta}\left(  x,y\right)  =g_{\eta}\left(  x\right)  ,$%
\[
\left[  \left(  A+\eta\right)  dx\right]  ^{g_{A+\eta}}=\mathrm{Ad}_{g_{\eta
}\left(  x\right)  ^{-1}}\left[  A\left(  x,y\right)  +\bar{\eta}\left(
x,y\right)  \right]  dx,
\]
and%
\[
f_{\eta}\left(  x,y\right)  :=-\partial_{y}\left(  \mathrm{Ad}_{g_{\eta
}\left(  x\right)  ^{-1}}\left[  A\left(  x,y\right)  +\bar{\eta}\left(
x,y\right)  \right]  \right)  =\mathrm{Ad}_{g_{\eta}\left(  x\right)  ^{-1}%
}\left[  f\left(  x,y\right)  -\partial_{y}\eta\left(  x,y\right)  \right]  .
\]
So it only remains to prove Eq. (\ref{e.3.8}).

Differentiating the identity, $\pt_{x}^{A}\left(  \ell\right)  ^{-1}%
\pt_{x}^{A}\left(  \ell\right)  =I$ while making use of the Definition
\ref{def.1.2} shows%
\[
\frac{d}{dx}\pt_{x}^{A}\left(  \ell\right)  ^{-1}=\pt_{x}^{A}\left(
\ell\right)  ^{-1}A\left(  x,y\left(  x\right)  \right)
\]
and therefore,%
\begin{align*}
\frac{d}{dx}  &  \left(  \pt_{x}^{A}\left(  \ell\right)  ^{-1}\pt_{x}^{\left(
A+\eta\right)  dx}\left(  \ell\right)  \right) \\
&  =\pt_{x}^{A}\left(  \ell\right)  ^{-1}\left(  A\left(  x,y\left(  x\right)
\right)  -\left[  A\left(  x,y\left(  x\right)  \right)  +\eta\left(
x,y\left(  x\right)  \right)  \right]  \right)  \pt_{x}^{\left(
A+\eta\right)  dx}\left(  \ell\right) \\
&  =-\left(  \mathrm{Ad}_{\pt_{x}^{A}\left(  \ell\right)  ^{-1}}\eta\left(
x,y\left(  x\right)  \right)  \right)  \left(  \pt_{x}^{A}\left(  \ell\right)
^{-1}\pt_{x}^{\left(  A+\eta\right)  dx}\left(  \ell\right)  \right)
\end{align*}
with $\pt_{x}^{A}\left(  \ell\right)  ^{-1}\pt_{x}^{\left(  A+\eta\right)
dx}\left(  \ell\right)  |_{x=a}=I.$ By uniqueness of solutions to ODEs we
conclude that
\[
\pt_{x}^{A}\left(  \ell\right)  ^{-1}\pt_{x}^{\left(  A+\eta\right)
dx}\left(  \ell\right)  =k_{x}\left(  \ell\right)  \implies\pt_{b}^{\left(
A+\eta\right)  dx}\left(  \ell\right)  =\pt_{b}^{A}\left(  \ell\right)
\mathbf{k}^{\left(  A,\eta\right)  }\left(  \ell\right)  .
\]
This equation along with Eq. (\ref{e.A.2}) of Appendix \ref{sec.A} then gives
Eq. (\ref{e.3.8}).

\iffalse%
\[
\pt^{\left[  \left(  A+\eta\right)  dx\right]  ^{g_{\eta}}}\left(
\ell\right)  =g_{\eta}\left(  b\right)  ^{-1}\pt^{\left(  A+\eta\right)
dx}\left(  \ell\right)  g_{\eta}\left(  a\right)  =g_{\eta}\left(  b\right)
^{-1}\pt_{b}^{A}\left(  \ell\right)  \mathbf{k}^{\left(  A,\eta\right)
}\left(  \ell\right)  g_{\eta}\left(  a\right)
\]
\fi

\end{proof}

\begin{metatheorem}
\label{mthm.3.10}If $\Psi:\mathcal{A}\rightarrow\mathbb{C}$ is an
\textquotedblleft integrable function\textquotedblright\ and $\eta=\eta dx$ is
a connection one form (we do not assume $\eta\left(  x,0\right)  =0$ for all
$x\in\mathbb{R}),$ then
\begin{equation}
\int_{\mathcal{A}_{0}}\Psi\left(  \left(  A+\eta\right)  ^{g_{\eta}}\right)
d\mu\left(  A\right)  =e^{-\frac{1}{2}\left\Vert \partial_{y}\eta\right\Vert
^{2}}\int_{\mathcal{A}_{0}}\Psi\left(  A\right)  \exp\left(  \left\langle
\partial_{y}A,\mathrm{Ad}_{g_{\eta}^{-1}}\partial_{y}\eta\right\rangle
\right)  d\mu\left(  A\right)  . \label{e.3.9}%
\end{equation}

\end{metatheorem}

\begin{proof}
By Corollary \ref{cor.3.9},%
\[
\left(  A+\eta\right)  ^{g_{\eta}}=\mathrm{Ad}_{g_{\eta}^{-1}}\left[
A+\bar{\eta}\right]  \in\mathcal{A}_{0},
\]
where $\bar{\eta}\left(  x,y\right)  =\eta\left(  x,y\right)  -\eta\left(
x,0\right)  .$ Because $g_{\eta}$ depends only on $x$ and $\mathrm{Ad}%
_{g_{\eta}^{-1}}$ acts isometrically on $\mathfrak{k}$ it follows that
$d\mu\left(  A\right)  $ is invariant under $A\rightarrow\mathrm{Ad}_{g_{\eta
}^{-1}}A$ and we conclude, with the aid of Eq. (\ref{e.3.1}) and $\partial
_{y}\bar{\eta}=\partial_{y}\eta,$ that%
\begin{align*}
\int_{\mathcal{A}_{0}}\Psi\left(  \left(  A+\eta\right)  ^{g_{\eta}}\right)
d\mu\left(  A\right)   &  =\int_{\mathcal{A}_{0}}\Psi\left(  \mathrm{Ad}%
_{g_{\eta}^{-1}}\left[  A+\bar{\eta}\right]  \right)  d\mu\left(  A\right) \\
&  =e^{-\frac{1}{2}\left\Vert \partial_{y}\bar{\eta}\right\Vert ^{2}}%
\int_{\mathcal{A}_{0}}\Psi\left(  \mathrm{Ad}_{g_{\eta}^{-1}}A\right)
\exp\left(  \left\langle \partial_{y}A,\partial_{y}\bar{\eta}\right\rangle
\right)  d\mu\left(  A\right) \\
&  =e^{-\frac{1}{2}\left\Vert \partial_{y}\eta\right\Vert ^{2}}\int
_{\mathcal{A}_{0}}\Psi\left(  A\right)  \exp\left(  \left\langle \partial
_{y}\mathrm{Ad}_{g_{\eta}}A,\partial_{y}\eta\right\rangle \right)  d\mu\left(
A\right) \\
&  =e^{-\frac{1}{2}\left\Vert \partial_{y}\eta\right\Vert ^{2}}\int
_{\mathcal{A}_{0}}\Psi\left(  A\right)  \exp\left(  \left\langle \partial
_{y}A,\mathrm{Ad}_{g_{\eta}^{-1}}\partial_{y}\eta\right\rangle \right)
d\mu\left(  A\right)  .
\end{align*}
This proves Eq. (\ref{e.3.9}).
\end{proof}

\begin{metacorollary}
\label{mcor.3.11}If $\Psi:\mathcal{A}\rightarrow\mathbb{C}$ is
\textquotedblleft smooth\textquotedblright\ and $g_{s\eta}$-invariant in the
sense that $\Psi\left(  B^{g_{s\eta}}\right)  =\Psi\left(  B\right)  $ for all
$s$ near $0$ and $B\in\mathcal{A}$ with $B=Bdx$ ($B\left(  x,0\right)  $ not
assumed to be zero), then Eq. (\ref{e.3.2}) still holds, i.e.%
\begin{equation}
\int_{\mathcal{A}_{0}}\left(  \partial_{\eta}\Psi\right)  \left(  A\right)
d\mu\left(  A\right)  =\int_{\mathcal{A}_{0}}\Psi\left(  A\right)
\left\langle \partial_{y}A,\partial_{y}\eta\right\rangle d\mu\left(  A\right)
=-\int_{\mathcal{A}_{0}}\Psi\left(  A\right)  \left\langle f^{A},\partial
_{y}\eta\right\rangle d\mu\left(  A\right)  . \label{e.3.10}%
\end{equation}
[Note that we are substituting the assumption that $\eta\in\mathcal{A}_{0}$ by
the assumption that $\Psi$ is $g_{s\eta}$-invariant.]
\end{metacorollary}

\begin{proof}
[Meta-Proof]This result is a special case of Meta-Corollary \ref{mcor.B.32} of
Appendix \ref{sec.B}. Nevertheless we will give a second \textquotedblleft
proof\textquotedblright\ here which will be closer in line with the rigorous
proof of Corollary \ref{cor.4.12} below.

Replacing $\eta$ by $s\eta$ in Eq. (\ref{e.3.9}) and differentiating the
result leads to,%
\begin{align}
\frac{d}{ds}|_{0}\int_{\mathcal{A}_{0}}\Psi\left(  \left(  A+s\eta\right)
^{g_{s\eta}}\right)  d\mu\left(  A\right)   &  =\int_{\mathcal{A}_{0}}%
\Psi\left(  A\right)  \exp\left(  \left\langle \partial_{y}A,\partial_{y}%
\eta\right\rangle \right)  d\mu\left(  A\right) \nonumber\\
&  =\int_{\mathcal{A}_{0}}\Psi\left(  A\right)  \exp\left(  \left\langle
\partial_{y}A,\partial_{y}\eta\right\rangle \right)  d\mu\left(  A\right)  .
\label{e.3.11}%
\end{align}
If we now further assume that $\Psi$ is $g_{s\eta}$-invariant, then%
\[
\frac{d}{ds}|_{0}\Psi\left(  \left(  A+s\eta\right)  ^{g_{s\eta}}\right)
=\frac{d}{ds}|_{0}\Psi\left(  A+s\eta\right)  =\left(  \partial_{\eta}%
\Psi\right)  \left(  A\right)
\]
which combined with Eq. (\ref{e.3.11}) verifies Eq. (\ref{e.3.10}).

\end{proof}

\begin{notation}
\label{n.3.12}If $\eta$ is as in Notation \ref{not.3.2} and $\sigma=\left(
\sigma_{1},\sigma_{2}\right)  :\left[  a_{\sigma},b_{\sigma}\right]
\rightarrow\mathbb{R}^{2}$ is a piecewise $C^{1}$-path, let
\[
\zeta_{\eta}^{A}\left(  \sigma\right)  :=\int_{a_{\sigma}}^{b_{\sigma}%
}\mathrm{Ad}_{\pt_{t}^{A}\left(  \sigma\right)  }\eta\left(  \sigma\left(
t\right)  \right)  \dot{\sigma}_{1}\left(  t\right)  dt\in\mathfrak{k.}%
\]

\end{notation}

\begin{lemma}
\label{lem.3.13}If $\mathbb{G}$ is a tame graph, $U\in C^{1}\left(
K^{\mathbb{G}}\rightarrow\mathbb{C}\right)  ,$ and $\Psi\left(  A\right)
=U\left(  \pt^{A}\left(  \mathbb{G}\right)  \right)  ,$ then
\[
\left(  \partial_{\eta}\Psi\right)  \left(  A\right)  =-\left(  \tilde{\zeta
}_{\eta}^{A}U\right)  \left(  \pt^{A}\left(  \mathbb{G}\right)  \right)
=-\sum_{\sigma\in\mathbb{G}}\left(  \nabla_{\zeta_{\eta}^{A}\left(
\sigma\right)  }^{\sigma}U\right)  \left(  \pt^{A}\left(  \mathbb{G}\right)
\right)
\]
where%
\begin{equation}
\left(  \tilde{\zeta}_{\eta}^{A}U\right)  \left(  \pt^{A}\left(
\mathbb{G}\right)  \right)  :=\frac{d}{ds}|_{0}U\left(  \left\{
\pt^{A}\left(  \sigma\right)  e^{s\zeta_{\eta}^{A}\left(  \sigma\right)
}\right\}  _{\sigma\in\mathbb{G}}\right)  . \label{e.3.12}%
\end{equation}

\end{lemma}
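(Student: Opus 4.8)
The plan is to compute $(\partial_\eta\Psi)(A):=\frac{d}{ds}\big|_0\Psi(A+s\eta)=\frac{d}{ds}\big|_0 U\!\left(\pt^{A+s\eta}(\mathbb{G})\right)$ directly, by differentiating through the chain rule. Since $U\in C^{1}$ acts on the finite-dimensional configuration $\pt^{A+s\eta}(\mathbb{G})=\{\pt^{A+s\eta}(\sigma)\}_{\sigma\in\mathbb{G}}\in K^{\mathbb{G}}$, the entire problem reduces to understanding, for each $\sigma\in\mathbb{G}$, the first-order variation $\frac{d}{ds}\big|_0\pt^{A+s\eta}(\sigma)$ of the holonomy as the connection is perturbed in the direction $\eta$. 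Once this is in hand, the multivariate chain rule will assemble the answer as a sum of $\sigma$-directional derivatives in the sense of Definition \ref{def.2.16}.

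The heart of the argument is therefore the first variation of parallel translation, and here I would invoke Corollary \ref{cor.3.9} with $\eta$ replaced by $s\eta$. For a horizontal edge $\ell$ the intermediate identity in that proof reads $\pt^{(A+s\eta)dx}(\ell)=\pt_b^A(\ell)\,\mathbf{k}^{(A,s\eta)}(\ell)$, where $\mathbf{k}^{(A,s\eta)}(\ell)$ is the terminal value of the solution to the linear ODE of Definition \ref{def.3.8} with coefficient $s\,\mathrm{Ad}_{\pt_x^A(\ell)^{-1}}\eta$. At $s=0$ this coefficient vanishes, so $\mathbf{k}^{(A,0)}(\ell)\equiv I$; differentiating the defining ODE in $s$ at $s=0$ then yields a trivial (directly integrable) equation whose solution is
\[
\frac{d}{ds}\Big|_0\mathbf{k}^{(A,s\eta)}(\ell)=-\int_{a}^{b}\mathrm{Ad}_{\pt_x^A(\ell)^{-1}}\eta\big(x,y(x)\big)\,dx=-\,\zeta_\eta^A(\ell).
\]
Consequently $\pt^{A+s\eta}(\ell)=\pt^A(\ell)\,e^{-s\,\zeta_\eta^A(\ell)+O(s^2)}$; that is, to first order the perturbation of the holonomy is precisely a right translation of $\pt^A(\ell)$ by $-\zeta_\eta^A(\ell)$.

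To pass from a single horizontal edge to a general element $\sigma\in\mathbb{G}$ (a concatenation of vertical and forward/backward horizontal segments), I would argue segment by segment: vertical segments contribute nothing because parallel translation along them is the constant $I$ and $\dot\sigma_1\equiv 0$ there, while a backward horizontal segment is handled by differentiating the inverse $\pt^A(\cdot)^{-1}$. The product rule for the holonomy of a concatenation, together with the adjoint cocycle/additivity property of the defining integral, then collects these contributions into the single integral $\zeta_\eta^A(\sigma)$ over $[a_\sigma,b_\sigma]$ (the factor $\dot\sigma_1$ encoding both the vanishing on vertical pieces and the sign on backward pieces). With $\pt^{A+s\eta}(\sigma)=\pt^A(\sigma)\,e^{-s\,\zeta_\eta^A(\sigma)+O(s^2)}$ established for every $\sigma$, the multivariate chain rule and Definition \ref{def.2.16} give
\[
(\partial_\eta\Psi)(A)=\frac{d}{ds}\Big|_0 U\!\left(\big\{\pt^A(\sigma)\,e^{-s\,\zeta_\eta^A(\sigma)}\big\}_{\sigma\in\mathbb{G}}\right)=-\sum_{\sigma\in\mathbb{G}}\big(\nabla_{\zeta_\eta^A(\sigma)}^{\sigma}U\big)\!\left(\pt^A(\mathbb{G})\right),
\]
which is exactly $-\big(\tilde\zeta_\eta^A U\big)\!\left(\pt^A(\mathbb{G})\right)$ by Eq. (\ref{e.3.12}).

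The step I expect to be the main obstacle is this first-variation-of-holonomy computation: solving the linearized parallel-transport equation and recognizing the adjoint-transported integrand as the one defining $\zeta_\eta^A$. Although invoking Corollary \ref{cor.3.9} makes the computation nearly mechanical, one must still justify interchanging $\frac{d}{ds}$ with the solution of the ODE (smooth dependence of $\mathbf{k}^{(A,s\eta)}$ on the parameter $s$, valid since the coefficient depends smoothly on $s$ and is bounded in $x$ on the compact parameter interval) and keep careful track of the adjoint conventions and of the orientation bookkeeping in the concatenation, so that the contributions of the individual segments combine into a single integral with the correct sign.
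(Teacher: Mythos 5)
Your proof is correct and follows essentially the same route as the paper: the paper simply cites Proposition \ref{pro.A.6} for the first variation of the holonomy and then applies the chain rule, and your derivation of that first variation from the factorization $\pt^{A+s\eta}(\ell)=\pt^{A}(\ell)\,\mathbf{k}^{(A,s\eta)}(\ell)$ is precisely the ``second proof'' of Proposition \ref{pro.A.6} given in Appendix \ref{sec.A}. Your explicit treatment of the vertical and backward segments of a tame path is a welcome detail that the paper leaves implicit.
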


\begin{proof}
By Proposition \ref{pro.A.6} of Appendix \ref{sec.A},
\begin{align*}
\partial_{\eta}\left[  A\rightarrow U\left(  \pt^{A}\left(  \mathbb{G}\right)
\right)  \right]   &  =\frac{d}{ds}|_{0}U\left(  \left\{  \pt^{A}\left(
\sigma\right)  e^{-s\zeta_{\eta}^{A}\left(  \sigma\right)  }\right\}
_{\sigma\in\mathbb{G}}\right) \\
&  =-\left(  \tilde{\zeta}_{\eta}^{A}U\right)  \left(  \pt^{A}\left(
\mathbb{G}\right)  \right) \\
&  =-\sum_{\sigma\in\mathbb{G}}\left(  \nabla_{\zeta_{\eta}^{A}\left(
\sigma\right)  }^{\sigma}U\right)  \left(  \pt^{A}\left(  \mathbb{G}\right)
\right)  ,
\end{align*}
where the second equality is a consequence of the chain rule.
\end{proof}

\begin{metatheorem}
\label{mthm.3.14}If $\eta$ is a $\mathfrak{k}$-valued one form on
$\mathbb{R}^{2},$ and $U:K^{\mathbb{G}}\rightarrow\mathbb{C}$ is a $g_{s\eta
}|_{V\left(  \mathbb{G}\right)  }$-invariant for $s$ near $0,$ then
\[
\mathbb{E}\left[  \sum_{\sigma\in\mathbb{G}}\left(  \nabla_{\zeta_{\eta}%
^{A}\left(  \sigma\right)  }^{\sigma}U\right)  \left(  \pt^{A}\left(
\mathbb{G}\right)  \right)  \right]  =\mathbb{E}\left[  U\left(
\pt^{A}\left(  \mathbb{G}\right)  \right)  \cdot\left\langle f^{A}%
,\frac{\partial\eta}{\partial y}\right\rangle \right]
\]
where $f^{A}=-\partial_{y}A.$\iffalse%
\[
\mathbb{E}\left[  \left(  \tilde{\zeta}_{\eta}^{A}U\right)  \left(
\pt^{A}\left(  \mathbb{G}\right)  \right)  \right]  =\mathbb{E}\left[
U\left(  \pt^{A}\left(  \mathbb{G}\right)  \right)  \cdot\left\langle
f^{A},\frac{\partial\eta}{\partial y}\right\rangle \right]
\]
\fi

\end{metatheorem}

\begin{proof}
According to Meta-Corollary \ref{mcor.3.11}, under the given assumptions we
have%
\begin{align*}
\int_{\mathcal{A}_{0}}\partial_{\eta}\left[  A\rightarrow U\left(
\pt^{A}\left(  \mathbb{G}\right)  \right)  \right]  d\mu\left(  A\right)   &
=\int_{\mathcal{A}_{0}}\Psi\left(  A\right)  \left\langle \partial
_{y}A,\partial_{y}\eta\right\rangle d\mu\left(  A\right) \\
&  =-\int_{\mathcal{A}_{0}}\Psi\left(  A\right)  \left\langle f^{A}%
,\partial_{y}\eta\right\rangle d\mu\left(  A\right)  .
\end{align*}
This identity along with Lemma \ref{lem.3.13} completes the proof.
\end{proof}

\subsection{Heuristic proof of Theorem \ref{thm.2.23}\label{sec.3.2}}

We are now prepared to give a heuristic proof of Theorem \ref{thm.2.23} as a
corollary of Meta-Theorem \ref{mthm.3.14}.

\begin{proof}
[Heuristic proof of Theorem \ref{thm.2.23}]Let $h_{y}:\mathbb{R}%
^{2}\rightarrow\mathbb{R}$ be a bounded measurable function with compact
support,
\begin{equation}
h\left(  x,y\right)  :=\int_{-\infty}^{y}h_{y}\left(  x,y^{\prime}\right)
dy^{\prime}, \label{e.3.13}%
\end{equation}
$\xi\in\mathfrak{k,}$ and $\eta_{y}:=h_{y}\cdot\xi$ so that $\eta:=h\xi.$ By
Lemma \ref{lem.2.17}, $\nabla_{\xi}^{e_{2}}U$ is still invariant under
discrete gauge transformations, $u:V\left(  \mathbb{G}\right)  \rightarrow K,$
such that $u\left(  0\right)  =I.$ Thus we may apply Meta-Theorem
\ref{mthm.3.14} with $U$ replaced by $\nabla_{\xi}^{e_{2}}U,$ to find%
\begin{equation}
\mathbb{E}\left[  \sum_{\sigma\in\mathbb{G}}\left(  \nabla_{\zeta_{h\xi}%
^{A}\left(  \sigma\right)  }^{\sigma}\left(  \nabla_{\xi}^{e_{2}}U\right)
\right)  \left(  \pt^{A}\left(  \mathbb{G}\right)  \right)  \right]
=\mathbb{E}\left[  \left(  \nabla_{\xi}^{e_{2}}U\right)  \left(
\pt^{A}\left(  \mathbb{G}\right)  \right)  \cdot\left\langle f^{A},h_{y}%
\xi\right\rangle \right]  . \label{e.3.14}%
\end{equation}

Notice that $\zeta_{h\xi}^{A}\left(  \sigma\right)  =0$ for $\sigma\in\left\{
e_{2},e_{4}\right\}  $ since $\dot{\sigma}_{1}\left(  t\right)  =0$ whenever
$\sigma$ is a vertical path. Let us now further assume that $h$ is supported
in small neighborhood of $0$ so that $\zeta_{h\xi}^{A}\left(  \sigma\right)
=0$ unless $\sigma\in\left\{  e_{1},e_{3}\right\}  .$ Since $\pt_{t}%
^{A}\left(  e_{j}\right)  =I$ for $j\in\left\{  1,3\right\}  ,$ it follows
that
\begin{align*}
\zeta_{h\xi}^{A}\left(  e_{1}\right)   &  :=\left[  \int_{0}^{\infty}h\left(
t,0\right)  dt\right]  \xi=:\alpha_{1}\xi\text{ and }\\
\zeta_{h\xi}^{A}\left(  e_{3}\right)   &  :=\left[  \int_{0}^{-\infty}h\left(
t,0\right)  dt\right]  \xi=:\alpha_{3}\xi
\end{align*}
and so Eq. (\ref{e.3.14}) becomes,%
\[
\mathbb{E}\left[  \sum_{j=1,3}\alpha_{j}\left(  \nabla_{\xi}^{e_{j}}%
\nabla_{\xi}^{e_{2}}U\right)  \left(  \pt^{A}\left(  \mathbb{G}\right)
\right)  \right]  =\mathbb{E}\left[  \left(  \nabla_{\xi}^{e_{2}}U\right)
\left(  \pt^{A}\left(  \mathbb{G}\right)  \right)  \cdot\left\langle
f^{A},h_{y}\xi\right\rangle \right]  .
\]
Summing the last equation over $\xi$ in an orthonormal basis for
$\mathfrak{k}$ shows,%
\begin{equation}
\mathbb{E}\left[  \sum_{j=1,3}\alpha_{j}\left(  \nabla^{e_{j}}\cdot
\nabla^{e_{2}}U\right)  \left(  \pt^{A}\left(  \mathbb{G}\right)  \right)
\right]  =\mathbb{E}\left[  \left(  \nabla_{\left\langle f^{A},h_{y}%
\right\rangle }^{e_{2}}U\right)  \left(  \pt^{A}\left(  \mathbb{G}\right)
\right)  \right]  . \label{e.3.15}%
\end{equation}

Finally, we take $\eta$ as Example \ref{ex.3.5}, i.e. $\eta=h\xi$ where
\begin{equation}
h_{y}\left(  x,y\right)  =1_{RQ}\left(  x,y\right)  -1_{Q}\left(  x,y\right)
\label{e.3.16}%
\end{equation}
with $Q$ being a compact region in the first quadrant as shown in Figure
\ref{fig.7}. With this choice,%
\begin{equation}
h\left(  x,y\right)  =\int_{-\infty}^{y}\left[  1_{RQ}\left(  x,y^{\prime
}\right)  -1_{Q}\left(  x,y^{\prime}\right)  \right]  dy^{\prime},
\label{e.3.17}%
\end{equation}
$\alpha_{3}=0,$ and%
\[
\alpha_{1}=\int_{0}^{\infty}h\left(  x,0\right)  dx=\int_{0}^{\infty}%
dx\int_{-\infty}^{0}dy^{\prime}1_{RQ}\left(  x,y^{\prime}\right)  =m\left(
Q\right)  =\left\vert Q\right\vert .
\]
Combining these identities with Eq. (\ref{e.3.15}) shows%
\[
\left\vert Q\right\vert \cdot\mathbb{E}\left[  \left(  \nabla^{e_{1}}%
\cdot\nabla^{e_{2}}U\right)  \left(  \pt^{A}\left(  \mathbb{G}\right)
\right)  \right]  =\mathbb{E}\left[  \left(  \nabla_{\left[  f^{A}\left(
RQ\right)  -f^{A}\left(  Q\right)  \right]  }^{e_{2}}U\right)  \left(
\pt^{A}\left(  \mathbb{G}\right)  \right)  \right]
\]
which is equivalent to Eq. (\ref{e.2.8}).
\end{proof}

\subsection{Path expansions\label{sec.3.3}}

In order to deduce Theorem \ref{thm.2.27} from Theorem \ref{thm.2.23} we need
to approximate
\begin{equation}
f^{A}\left(  Q\right)  :=-\int_{Q}\partial_{y}A\left(  x,y\right)  dxdy.
\label{e.3.18}%
\end{equation}
in terms of parallel translation around the boundary of $Q.$ To be more
precise, let $\ell$ be the right boundary of $Q$ and for $\varepsilon>0$
(small) let $\varepsilon\ell$ and $\varepsilon Q$ be $\ell$ and $Q$ scaled by
$\varepsilon$ as depicted in Figure \ref{fig.9}. \begin{figure}[ptbh]
\centering
\par
\psize{1.0in} %
\executeiffilenewer{\GraphicsDirectoryellpath.svg}{\GraphicsDirectoryellpath.pdf}%
{inkscape -z -D --file=\GraphicsDirectoryellpath.svg --export-pdf=\GraphicsDirectoryellpath.pdf --export-latex}%
\input{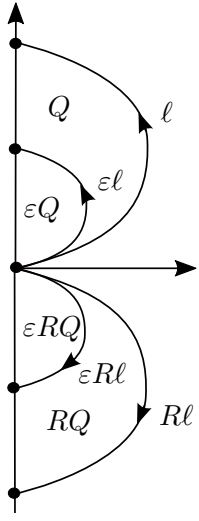}%
\caption{Here $Q$ ($\varepsilon Q)$ is the
region bounded by the path $\ell$ ($\varepsilon\ell)$ and the $y$-axis. }%
\label{fig.9}%
\end{figure}\iffalse%
\begin{figure}
[ptbh]
\begin{center}
\includegraphics[
natheight=3.470800in,
natwidth=1.680800in,
height=2.3492in,
width=1.1471in
]%
{ellpath.bmp}%
\end{center}
\end{figure}
\fi

\begin{notation}
[Right derivatives]\label{not.3.15}For $\psi\in C^{1}\left(  K,\mathbb{C}%
\right)  $ and $\xi\in\mathfrak{k}$ let $\hat{\nabla}_{\xi}\psi:K\rightarrow
\mathbb{C}$ be defined by,%
\begin{equation}
\left(  \hat{\nabla}_{\xi}\psi\right)  \left(  k\right)  :=\frac{d}{dt}%
|_{0}\psi\left(  e^{t\xi}k\right)  . \label{e.3.19}%
\end{equation}

\end{notation}

\begin{theorem}
\label{thm.3.16}Let $A\,dx$ be a smooth\footnote{For example, it would suffice
for $f=-\partial_{y}A$ to be continuous.} element of $\mathcal{A}_{0},$
$\varepsilon>0,$ and $\varepsilon\ell$ be the path shown in Figure
\ref{fig.9}. If $\psi\in C^{2}\left(  K,\mathbb{C}\right)  ,$ then%
\begin{equation}
\psi\left(  \pt_{1}^{A}\left(  \varepsilon\ell\right)  \right)  =\psi\left(
I\right)  -\left(  \nabla_{f^{A}\left(  \varepsilon Q\right)  }\psi\right)
\left(  e\right)  +O\left(  \varepsilon^{4}\right)  \label{e.3.20}%
\end{equation}
where $f^{A}\left(  \varepsilon Q\right)  $ is as in Eq. (\ref{e.3.18}) with
$Q$ replaced by $\varepsilon Q.$ Similarly if $R\left(  x,y\right)  :=\left(
x,-y\right)  $ is reflection across $x$-axis, then
\begin{equation}
\psi\left(  \pt_{1}^{A}\left(  \varepsilon R\ell\right)  \right)  =\psi\left(
I\right)  +\left(  \nabla_{f^{A}\left(  \varepsilon RQ\right)  }\psi\right)
\left(  e\right)  +O\left(  \varepsilon^{4}\right)  . \label{e.3.21}%
\end{equation}

\end{theorem}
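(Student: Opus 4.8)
The plan is to expand the holonomy $\pt_1^A(\varepsilon\ell)$ directly from its defining ODE and then identify its leading term with the curvature flux $f^A(\varepsilon Q)$ by Stokes' theorem. Parametrize $\ell$ over $[0,1]$ as $\ell(s)=(x(s),y(s))$, so that $\varepsilon\ell(s)=(\varepsilon x(s),\varepsilon y(s))$, and set $U(s):=\pt_s^A(\varepsilon\ell)$. Since $A=A\,dx$ has only a $dx$-component, Definition \ref{def.1.2} gives $\dot U(s)=-a_\varepsilon(s)U(s)$ with $U(0)=I$, where $a_\varepsilon(s):=A(\varepsilon x(s),\varepsilon y(s))\,\varepsilon\dot x(s)$. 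The crucial input is the axial-gauge condition $A(\cdot,0)=0$: a first-order Taylor expansion in $y$ gives $A(\varepsilon x(s),\varepsilon y(s))=-f^A(\varepsilon x(s),0)\,\varepsilon y(s)+O(\varepsilon^2)$, so $a_\varepsilon(s)=O(\varepsilon^2)$ uniformly on $[0,1]$ (using that $A$ is smooth and $\dot x$ is bounded).

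First I would feed this $O(\varepsilon^2)$ bound into the Dyson (Picard) expansion of $U(1)$. The first-order term is $-\int_0^1 a_\varepsilon(s)\,ds$, while the $n$-th iterated integral for $n\ge2$ is bounded by $\tfrac1{n!}(\sup_s|a_\varepsilon(s)|)^n=O(\varepsilon^4)$, so their sum is $O(\varepsilon^4)$. Hence
\[
\pt_1^A(\varepsilon\ell)=I-\int_{\varepsilon\ell}A+O(\varepsilon^4),
\]
where $\int_{\varepsilon\ell}A=\int_0^1 a_\varepsilon(s)\,ds$ is the $\mathfrak{k}$-valued line integral of the one-form $A$. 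The next step is to recognize this line integral as the curvature flux. Because $A$ has only a $dx$-component, $A\wedge A=0$, so $dA=-\partial_yA\,dx\wedge dy=f^A\,dx\wedge dy$; applying the ordinary (componentwise) Stokes theorem on $\varepsilon Q$, whose boundary consists of $\varepsilon\ell$ together with pieces on the $y$-axis (where $dx=0$) and possibly the $x$-axis (where $A=0$), all of which contribute nothing, yields $\int_{\varepsilon\ell}A=f^A(\varepsilon Q)$ exactly, with $f^A(\varepsilon Q)$ as in Eq. (\ref{e.3.18}) and the sign fixed by the orientation convention of Figure \ref{fig.9}. Therefore $\pt_1^A(\varepsilon\ell)=I-f^A(\varepsilon Q)+O(\varepsilon^4)$.

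To pass to $\psi$, write $k:=\pt_1^A(\varepsilon\ell)=\exp(\chi)$ with $\chi\in\mathfrak{k}$, legitimate since $k=I+O(\varepsilon^2)$ lies near $I$; taking the matrix logarithm and using $(k-I)^2=O(\varepsilon^4)$ gives $\chi=-f^A(\varepsilon Q)+O(\varepsilon^4)$. A second-order Taylor expansion of $t\mapsto\psi(e^{t\chi})$ on $[0,1]$, valid since $\psi\in C^2$, then gives $\psi(k)=\psi(I)+(\hat\nabla_\chi\psi)(I)+O(|\chi|^2)$ with $\hat\nabla$ as in Notation \ref{not.3.15} and $|\chi|^2=O(\varepsilon^4)$. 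Since $\xi\mapsto(\hat\nabla_\xi\psi)(I)$ is linear and bounded, substituting $\chi=-f^A(\varepsilon Q)+O(\varepsilon^4)$ produces Eq. (\ref{e.3.20}). Equation (\ref{e.3.21}) follows by the identical argument applied to $\varepsilon R\ell$: the Taylor expansion $A(\varepsilon x(s),-\varepsilon y(s))=f^A(\varepsilon x(s),0)\,\varepsilon y(s)+O(\varepsilon^2)$ flips the sign of the leading term, and the induced boundary orientation of $\varepsilon RQ$ reverses, so that $\int_{\varepsilon R\ell}A=-f^A(\varepsilon RQ)$, changing both signs in the final formula.

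The point requiring the most care is confirming that the error is genuinely $O(\varepsilon^4)$ rather than $O(\varepsilon^3)$. This hinges entirely on the axial-gauge gain of one power of $\varepsilon$ in $a_\varepsilon(s)=O(\varepsilon^2)$, which simultaneously pushes the second-order Dyson term and the second-order Taylor remainder of $\psi$ to order $\varepsilon^4$, while the first-order term is pinned \emph{exactly} to $f^A(\varepsilon Q)$ by Stokes (there is no intermediate $\varepsilon^3$ contribution). The remaining delicate bookkeeping is the sign and orientation tracking in the reflected case, which I would handle by the explicit expansion of $A(\varepsilon x(s),-\varepsilon y(s))$ above rather than by an abstract orientation argument.
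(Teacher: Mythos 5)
Your proposal is correct and follows essentially the same route as the paper's proof: both rest on the axial-gauge gain $A\left(\varepsilon x,\varepsilon y\right)=O\left(\varepsilon\right)$ which makes the integrand of the holonomy ODE $O\left(\varepsilon^{2}\right)$, on Green's/Stokes' theorem to identify the first-order term exactly with $f^{A}\left(\varepsilon Q\right)$ (with the orientation reversal under $R$ producing the sign change in Eq. (\ref{e.3.21})), and on the resulting $O\left(\varepsilon^{4}\right)$ bound for everything of second order. The only cosmetic difference is that the paper iterates the fundamental theorem of calculus on $t\mapsto\psi\left(g\left(t\right)\right)$ directly (Eqs. (\ref{e.3.24})--(\ref{e.3.25})), staying intrinsic to $K$ and never expanding the group element itself, whereas you expand $\pt_{1}^{A}\left(\varepsilon\ell\right)$ via the Dyson series in the matrix algebra and then pass through $\exp/\log$ to apply $\psi$ --- both are valid under the stated $C^{2}$ hypothesis.
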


\begin{proof}
Let us suppose $\ell$ is parametrized by $t\in\left[  0,1\right]  $ and let
$g\left(  t\right)  :=\pt_{t}^{A}\left(  \ell\right)  .$ Then $g\left(
t\right)  $ satisfies,%
\begin{equation}
\dot{g}\left(  t\right)  +\dot{\beta}_{\ell}\left(  t\right)  g\left(
t\right)  =0\text{ with }g\left(  0\right)  =I \label{e.3.22}%
\end{equation}
where%
\begin{align}
\beta_{\ell}\left(  t\right)   &  :=\int_{0}^{t}\left(  A\,dx\right)  \left(
\dot{\ell}\left(  \tau\right)  \right)  d\tau=-\int_{0}^{t}d\tau\dot{\ell}%
_{1}\left(  \tau\right)  \int_{0}^{\ell_{2}\left(  \tau\right)  }%
dyf^{A}\left(  \ell_{1}\left(  \tau\right)  ,y\right) \nonumber\\
&  =-\int_{0}^{t}d\tau\dot{\ell}_{1}\left(  \tau\right)  \ell_{2}\left(
\tau\right)  \int_{0}^{1}dsf^{A}\left(  \ell_{1}\left(  \tau\right)
,s\ell_{2}\left(  \tau\right)  \right)  . \label{e.3.23}%
\end{align}
By the fundamental theorem of calculus along with the ODE for $g$ in Eq.
(\ref{e.3.22}) we find,%
\begin{align}
\psi\left(  g\left(  t\right)  \right)   &  =\psi\left(  e\right)  +\int
_{0}^{t}\frac{d}{d\tau}\psi\left(  g\left(  \tau\right)  \right)
d\tau\nonumber\\
&  =\psi\left(  e\right)  -\int_{0}^{t}\left(  \hat{\nabla}_{\dot{\beta}%
_{\ell}\left(  \tau\right)  }\psi\right)  \left(  g\left(  \tau\right)
\right)  d\tau. \label{e.3.24}%
\end{align}
Applying Eq. (\ref{e.3.24}) with $\psi$ replaced by $\hat{\nabla}_{\dot{\beta
}_{\ell}\left(  \tau\right)  }\psi$ shows
\[
\left(  \hat{\nabla}_{\dot{\beta}_{\ell}\left(  \tau\right)  }\psi\right)
\left(  g\left(  \tau\right)  \right)  =\left(  \hat{\nabla}_{\dot{\beta
}_{\ell}\left(  \tau\right)  }\psi\right)  \left(  e\right)  -\int_{0}^{\tau
}ds\left(  \hat{\nabla}_{\dot{\beta}_{\ell}\left(  s\right)  }\hat{\nabla
}_{\dot{\beta}_{\ell}\left(  \tau\right)  }\psi\right)  \left(  g\left(
s\right)  \right)
\]
and then substituting this expression back into Eq. (\ref{e.3.24}) implies
(taking $t=1)$ that
\begin{equation}
\psi\left(  \pt_{1}^{A}\left(  \ell\right)  \right)  =\psi\left(  e\right)
-\left(  \hat{\nabla}_{\beta_{\ell}\left(  1\right)  }\psi\right)  \left(
e\right)  +\int_{0}^{1}d\tau\int_{0}^{\tau}ds\left(  \hat{\nabla}_{\dot{\beta
}_{\ell}\left(  s\right)  }\hat{\nabla}_{\dot{\beta}_{\ell}\left(
\tau\right)  }\psi\right)  \left(  g\left(  s\right)  \right)  .
\label{e.3.25}%
\end{equation}
Using the fact that $A\,dx\equiv0$ on the $y$-axis along with Green's (or
Stokes') theorem it follows that%
\begin{align}
\beta_{\ell}\left(  1\right)   &  =\int_{0}^{1}\left(  A\,dx\right)  \left(
\dot{\ell}\left(  t\right)  \right)  dt=\int_{\partial Q}A\,dx\nonumber\\
&  =\int_{Q}F^{A\,dx}=\int_{Q}f^{A}\left(  x,y\right)  dxdy=f^{A}\left(
Q\right)  . \label{e.3.26}%
\end{align}

Replacing $\ell$ by $\varepsilon\ell$ in Eqs. (\ref{e.3.23}), (\ref{e.3.25}),
and (\ref{e.3.26}) then gives Eq. (\ref{e.3.20}). Equation (\ref{e.3.21}) is
proved similarly noting that $R$ takes counterclockwise loops to clockwise
loops which changes a sign in Green's theorem. This then explains the change
of sign in the gradient term when passing from Eq. (\ref{e.3.20}) to Eq.
(\ref{e.3.21}).
\end{proof}

\subsection{Heuristic proof of Theorem \ref{thm.2.27}}

We are now almost ready to give a heuristic argument of Theorem \ref{thm.2.27}%
. In order to apply Theorem \ref{thm.3.16}, let $e_{2}^{\varepsilon}$ denote
the perturbation of $e_{2}$ consisting of traversing the path $\varepsilon
\ell$ followed by the straight line vertical path from $\varepsilon\ell\left(
1\right)  $ to $\ell\left(  1\right)  $ as shown in Figure \ref{fig.10}.
Similarly let $e_{4}^{\varepsilon}=Re_{2}^{\varepsilon}$ be the reflection of
$e_{2}^{\varepsilon}$ so that $e_{4}^{\varepsilon}$ is a perturbation of
$e_{4}.$ In order to simplify notation also let $\pt^{A}\left(  \mathbb{G}%
_{\pm},\varepsilon\right)  =\left\{  \pt_{1}^{A}\left(  \sigma\right)
:\sigma\in\mathbb{G}_{\pm},\varepsilon\right\}  .$\begin{figure}[ptbh]
\centering
\par
\psize{1.5in} %
\executeiffilenewer{\GraphicsDirectorysweep2a.svg}{\GraphicsDirectorysweep2a.pdf}%
{inkscape -z -D --file=\GraphicsDirectorysweep2a.svg --export-pdf=\GraphicsDirectorysweep2a.pdf --export-latex}%
\input{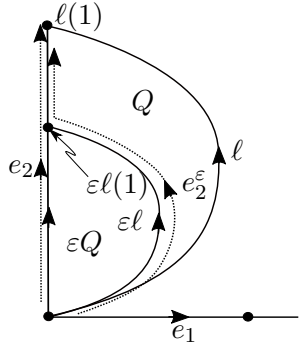}%
\caption{This figure shows the paths $e_{2}$
being deformed by the path $e_{2}^{\varepsilon}$ which consists of
$\varepsilon\ell_{1}$ followed by the straight line vertical path going from
$\varepsilon a$ to $a.$ We will also consider the reflection of this path
across the $x$-axis, $e_{4}^{\varepsilon}:=Re_{2}^{\varepsilon}.$}%
\label{fig.10}%
\end{figure}\iffalse%
\begin{figure}
[ptbh]
\begin{center}
\includegraphics[
natheight=8.149100in,
natwidth=7.414000in,
height=2.3263in,
width=2.1179in
]%
{sweep2a.bmp}%
\end{center}
\end{figure}
\fi

\begin{corollary}
\label{cor.3.17}Let $\mathbb{G}_{\pm,\varepsilon}$ be the perturbations of
$\mathbb{G}$ described in Notation \ref{not.2.26} where $e_{2}^{\varepsilon}$
and $e_{4}^{\varepsilon}$ are the perturbations of $e_{2}$ and $e_{4}$
described above. If $A\,dx$ is a smooth element of $\mathcal{A}_{0}$ and
$U:K^{\mathbb{G}}\rightarrow\mathbb{C}$ is a $C^{2}$-function, then%
\begin{equation}
U\left(  \pt^{A}\left(  \mathbb{G}_{+,\varepsilon}\right)  \right)  -U\left(
\pt^{A}\left(  \mathbb{G}_{-,\varepsilon}\right)  \right)  =-\left(
\nabla_{f^{A}\left(  \varepsilon Q\right)  }^{e_{2}}U+\nabla_{f^{A}\left(
\varepsilon RQ\right)  }^{e_{4}}U\right)  \left(  \pt^{A}\left(
\mathbb{G}\right)  \right)  +O\left(  \varepsilon^{4}\right)  \label{e.3.27}%
\end{equation}

\end{corollary}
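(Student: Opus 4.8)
The plan is to reduce the entire statement to Theorem \ref{thm.3.16} by freezing every edge of $\mathbb{G}$ except the single one being deformed. First I would record two elementary facts. Since $e_{2}$ and $e_{4}$ are the purely vertical legs of the cross in Figure \ref{fig.6}, Definition \ref{def.2.7} gives $\pt^{A}\left(  e_{2}\right)  =\pt^{A}\left(  e_{4}\right)  =I$; writing $\omega:=\pt^{A}\left(  \mathbb{G}\right)  $ this says $\omega\left(  e_{2}\right)  =\omega\left(  e_{4}\right)  =I$. Moreover, because $e_{2}^{\varepsilon}$ is the concatenation of $\varepsilon\ell$ with a purely vertical segment (and $e_{4}^{\varepsilon}=Re_{2}^{\varepsilon}$ likewise), and parallel translation along any vertical path is the constant $I$, the vertical factor drops out of the product defining parallel translation along a tame path, so that $\pt^{A}\left(  e_{2}^{\varepsilon}\right)  =\pt_{1}^{A}\left(  \varepsilon\ell\right)  $ and $\pt^{A}\left(  e_{4}^{\varepsilon}\right)  =\pt_{1}^{A}\left(  \varepsilon R\ell\right)  $. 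Since $e_{2}^{\varepsilon}$ and $e_{2}$ share their endpoints, $\mathbb{G}_{\pm,\varepsilon}$ are genuine tame graphs differing from $\mathbb{G}$ in a single edge only.

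Next I would introduce the one-variable slice functions. Fixing $\omega=\pt^{A}\left(  \mathbb{G}\right)  $, define $\psi_{2},\psi_{4}\in C^{2}\left(  K,\mathbb{C}\right)  $ by letting $\psi_{2}\left(  k\right)  $ (resp. $\psi_{4}\left(  k\right)  $) be $U$ evaluated at the configuration obtained from $\omega$ by replacing its $e_{2}$-entry (resp. $e_{4}$-entry) by $k$; these are $C^{2}$ because $U$ is. By the facts above,
\[
U\left(  \pt^{A}\left(  \mathbb{G}_{+,\varepsilon}\right)  \right)  =\psi_{2}\left(  \pt_{1}^{A}\left(  \varepsilon\ell\right)  \right)  ,\qquad U\left(  \pt^{A}\left(  \mathbb{G}_{-,\varepsilon}\right)  \right)  =\psi_{4}\left(  \pt_{1}^{A}\left(  \varepsilon R\ell\right)  \right)  ,
\]
while $\psi_{2}\left(  I\right)  =\psi_{4}\left(  I\right)  =U\left(  \omega\right)  $. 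I would then apply Theorem \ref{thm.3.16}, namely Eq. (\ref{e.3.20}) to $\psi_{2}$ and Eq. (\ref{e.3.21}) to $\psi_{4}$, obtaining
\[
U\left(  \pt^{A}\left(  \mathbb{G}_{+,\varepsilon}\right)  \right)  =U\left(  \omega\right)  -\left(  \hat{\nabla}_{f^{A}\left(  \varepsilon Q\right)  }\psi_{2}\right)  \left(  e\right)  +O\left(  \varepsilon^{4}\right)
\]
and the analogous identity for $\mathbb{G}_{-,\varepsilon}$ with a $+$ sign and $\hat{\nabla}_{f^{A}\left(  \varepsilon RQ\right)  }\psi_{4}$. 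Subtracting cancels $U\left(  \omega\right)  $ and leaves (\ref{e.3.27}), provided the two scalar derivatives are identified with $\nabla^{e_{2}}_{f^{A}\left(  \varepsilon Q\right)  }U$ and $\nabla^{e_{4}}_{f^{A}\left(  \varepsilon RQ\right)  }U$ evaluated at $\omega$.

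That identification is the only real obstacle, and it is precisely where the verticality of $e_{2},e_{4}$ is used a second, essential time. The derivative $\hat{\nabla}_{\xi}$ of Notation \ref{not.3.15} is a left-translation derivative, $\frac{d}{dt}|_{0}\psi\left(  e^{t\xi}k\right)  $, whereas the graph derivative $\nabla^{\sigma}_{\xi}$ of Definition \ref{def.2.16} translates the $\sigma$-entry on the right, $\frac{d}{dt}|_{0}U\left(  \dots\omega\left(  \sigma\right)  e^{t\xi}\dots\right)  $. Evaluated at $k=e=I$ these coincide exactly because $\omega\left(  e_{2}\right)  =\omega\left(  e_{4}\right)  =I$: then $e^{t\xi}=\omega\left(  e_{2}\right)  e^{t\xi}$, so $\left(  \hat{\nabla}_{f^{A}\left(  \varepsilon Q\right)  }\psi_{2}\right)  \left(  e\right)  =\left(  \nabla^{e_{2}}_{f^{A}\left(  \varepsilon Q\right)  }U\right)  \left(  \omega\right)  $ and likewise for the $e_{4}$ term. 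Once this is checked, the subtraction completes the proof, the error being $O\left(  \varepsilon^{4}\right)  $ as the larger of the two remainders.
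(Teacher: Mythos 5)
Your proposal is correct and follows essentially the same route as the paper's proof: freeze all edges except $e_{2}$ (resp.\ $e_{4}$), apply Eq. (\ref{e.3.20}) and Eq. (\ref{e.3.21}) of Theorem \ref{thm.3.16} to the resulting one-variable slice functions, and subtract. You merely spell out details the paper leaves implicit — that $\pt^{A}\left(e_{2}\right)=\pt^{A}\left(e_{4}\right)=I$ so the base point is the identity, that the vertical completion of $e_{2}^{\varepsilon}$ contributes trivially, and that the left derivative $\hat{\nabla}_{\xi}$ agrees with the right derivative $\nabla_{\xi}^{e_{2}}$ at the identity — all of which are correctly handled.
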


\begin{proof}
Freezing $\omega\left(  \sigma\right)  \in K$ for $\sigma\in\mathbb{G}%
\setminus\left\{  e_{2},e_{4}\right\}  $ and letting $\psi_{\pm}\in
C^{2}\left(  K,\mathbb{C}\right)  $ be chosen so that
\begin{align*}
\psi_{+}\left(  \omega\left(  e_{2}\right)  \right)   &  =U\left(
\omega\right)  \text{ with }\omega\left(  e_{4}\right)  :=e\text{ and}\\
\psi_{-}\left(  \omega\left(  e_{4}\right)  \right)   &  =U\left(
\omega\right)  \text{ with }\omega\left(  e_{2}\right)  :=e.
\end{align*}
It then follows from Eq. (\ref{e.3.20}) with $\psi=\psi_{+}$ and Eq.
(\ref{e.3.21}) with $\psi=\psi_{-}$ that%
\begin{align*}
U\left(  \pt^{A}\left(  \mathbb{G}_{+,\varepsilon}\right)  \right)   &
=U\left(  \pt^{A}\left(  \mathbb{G}\right)  \right)  -\left(  \nabla
_{f^{A}\left(  \varepsilon Q\right)  }^{e_{2}}U\right)  \left(  \pt^{A}\left(
\mathbb{G}\right)  \right)  +O\left(  \varepsilon^{4}\right)  \text{ and}\\
U\left(  \pt^{A}\left(  \mathbb{G}_{-,\varepsilon}\right)  \right)   &
=U\left(  \pt^{A}\left(  \mathbb{G}\right)  \right)  +\left(  \nabla
_{f^{A}\left(  \varepsilon RQ\right)  }^{e_{4}}U\right)  \left(
\pt^{A}\left(  \mathbb{G}\right)  \right)  +O\left(  \varepsilon^{4}\right)
\end{align*}
from which Eq. (\ref{e.3.27}) easily follows.
\end{proof}

\begin{proof}
[Heuristic Proof of Theorem \ref{thm.2.27}]By taking expectations of Eq.
(\ref{e.3.27}) we expect
\begin{equation}
\mathbb{E}\left[  U\left(  \pt^{A}\left(  \mathbb{G}_{+,\varepsilon}\right)
\right)  -U\left(  \pt^{A}\left(  \mathbb{G}_{-,\varepsilon}\right)  \right)
\right]  =-\mathbb{E}\left[  \left(  \nabla_{f^{A}\left(  \varepsilon
Q\right)  }^{e_{2}}U+\nabla_{f^{A}\left(  \varepsilon RQ\right)  }^{e_{4}%
}U\right)  \left(  \pt^{A}\left(  \mathbb{G}\right)  \right)  \right]
+O\left(  \varepsilon^{4}\right)  \label{e.3.28}%
\end{equation}
which would imply Theorem \ref{thm.2.27} in the setting described here as
$\left\vert \varepsilon Q\right\vert =\varepsilon^{2}\left\vert Q\right\vert $
so that $O\left(  \varepsilon^{4}\right)  =O\left(  \left\vert \varepsilon
Q\right\vert ^{2}\right)  =o\left(  \left\vert \varepsilon Q\right\vert
\right)  .$
\end{proof}

\begin{remark}
\label{rem.3.18}The heuristic \textquotedblleft proof\textquotedblright\ given
above follows the spirit of the arguments in \cite{MM}. However, as we will
see below the argument is too naive since the random connection one-forms
$A\,dx$ are very rough. In reality, $f^{A}\left(  Q\right)  $ fluctuates on
order of $\sqrt{\left\vert Q\right\vert }$ and the error term in Eqs.
(\ref{e.3.20}) and (\ref{e.3.21}) are really $O\left(  \left\vert Q\right\vert
\right)  $ rather than $o\left(  \left\vert Q\right\vert \right)  ,$ see
Theorem \ref{thm.5.4} below for a precise statement. Nevertheless we will see
below in subsection \ref{sec.5.1} that Theorem \ref{thm.2.27} is in fact true
because of a fortuitous cancellation and the simple covariance estimate, Lemma
\ref{lem.3.19} below.
\end{remark}

\fi

\section{Rigorous integration by parts\label{sec.4}}

Our goal in this section is to give a rigorous stochastic proof of Theorem
\ref{thm.2.23}. In order to prove the required integration by parts formula it
is necessary to understand the distribution of a white noise after it has been
transformed by rotations and translations. The key result is Corollary
\ref{cor.4.12} which specializes the abstract white noise result reviewed in
Theorem \ref{thm.4.5}.

\subsection{Rotating and translating white noise\label{sec.4.1}}

We begin by formally describing rotations of the white noise. If
$\mathcal{O}:L^{2}\left(  \mathbb{R}^{2},m;\mathfrak{k}\right)  \rightarrow
L^{2}\left(  \mathbb{R}^{2},m;\mathfrak{k}\right)  $ is any orthogonal
transformation and $f$ is a white noise then we define a new white noise,
$f^{\mathcal{O}},$ by%
\[
\left\langle f^{\mathcal{O}},u\right\rangle :=\left\langle f,\mathcal{O}%
u\right\rangle \text{ for all }u\in L^{2}\left(  \mathbb{R}^{2},m;\mathfrak{k}%
\right)  .
\]
If $\mathcal{R}:L^{2}\left(  \mathbb{R}^{2},m;\mathfrak{k}\right)  \rightarrow
L^{2}\left(  \mathbb{R}^{2},m;\mathfrak{k}\right)  $ is another isometry then
\[
\left\langle \left(  f^{\mathcal{R}}\right)  ^{\mathcal{O}},u\right\rangle
:=\left\langle f^{\mathcal{R}},\mathcal{O}u\right\rangle =\left\langle
f,\mathcal{RO}u\right\rangle =\left\langle f^{\mathcal{RO}},u\right\rangle
\]
from which it follows that $\left(  f,\mathcal{O}\right)  \rightarrow
f^{\mathcal{O}}$ is a right action. It should be clear that $f^{\mathcal{O}}$
and $f$ have the same distributions as mean zero Gaussian processes (like the
white noise) are completely determined by their covariances. We will be
interested here only in two special cases of this construction. The first is
the transformation, $u\rightarrow\hat{u}$ and correspondingly $f\rightarrow
\hat{f}$ given in Definition \ref{def.2.4} above and the second is given in
then next definition.

\begin{definition}
\label{def.4.1}If $g\in C\left(  \mathbb{R}^{2},K\right)  ,$ then we let
$f^{g}=f^{\mathrm{Ad}_{g}},$ i.e. $\left\langle f^{g},u\right\rangle
=\left\langle f,\mathrm{Ad}_{g}u\right\rangle $ for all $u\in L^{2}\left(
\mathbb{R}^{2},m;\mathfrak{k}\right)  .$
\end{definition}

If the white noise were a continuous process, then we would have $f^{g}\left(
p\right)  =\mathrm{Ad}_{g^{-1}\left(  p\right)  }f\left(  p\right)  $ for all
$p\in\mathbb{R}^{2}.$ Lemma \ref{lem.4.3} below gives a rigorous
interpretation of this informal representation of $f^{g}.$

\begin{notation}
[Oscillation semi-norms]\label{not.4.2}Suppose that $B\in\mathcal{B}%
_{\mathbb{R}^{2}}$ is a bounded set, $g\in C^{2}\left(  \mathbb{R}%
^{2},K\right)  ,$ and $\Pi\subset\mathcal{B}_{\mathbb{R}^{2}}$ denotes a
finite partition of $B.$ Then we let
\begin{align*}
\left\vert \Pi\right\vert  &  :=\max\left\{  \operatorname{diam}\left(
A\right)  :A\in\Pi\right\}  \text{ and }\\
\operatorname{osc}_{\Pi}\left(  g\right)   &  :=\max_{A\in\Pi}\sup_{p,q\in
A}\left\vert g\left(  q\right)  -g\left(  p\right)  \right\vert .
\end{align*}

\end{notation}

\begin{lemma}
\label{lem.4.3}Let $g\in C\left(  \mathbb{R}^{2},K\right)  ,$ $B\in
\mathcal{B}_{\mathbb{R}^{2}}$ be a bounded set, and $\left\{  \Pi_{n}\right\}
_{n=1}^{\infty}\subset\mathcal{B}_{\mathbb{R}^{2}}$ be a sequence of finite
partitions of $B$ such that $\lim_{n\rightarrow\infty}\operatorname{osc}%
_{\Pi_{n}}\left(  g\right)  =0,$ then
\begin{equation}
f^{g}\left(  B\right)  =L^{2}\left(  \mathbb{P}\right)  \text{-}%
\lim_{n\rightarrow\infty}\sum_{A\in\Pi_{n}}\mathrm{Ad}_{g\left(  p_{A}\right)
^{-1}}f\left(  A\right)  \label{e.4.1}%
\end{equation}
where $p_{A}$ denotes any choice of a point in $A$ for all $A\in\cup
_{n=1}^{\infty}\Pi_{n}.$
\end{lemma}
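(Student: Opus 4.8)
The plan is to verify the identity by testing both sides against an arbitrary $\xi\in\mathfrak{k}$ and then estimating the resulting $L^{2}\left(\mathbb{P}\right)$ error via the defining isometry of the white noise. First I would unwind the definitions. Since $\mathrm{Ad}_{g}$ acts on $L^{2}\left(\mathbb{R}^{2},m;\mathfrak{k}\right)$ pointwise by $\left(\mathrm{Ad}_{g}u\right)\left(p\right)=\mathrm{Ad}_{g\left(p\right)}u\left(p\right)$, Definition \ref{def.4.1} together with Notation \ref{not.2.2} gives $\left\langle f^{g}\left(B\right),\xi\right\rangle_{\mathfrak{k}}=\left\langle f^{g},1_{B}\xi\right\rangle=\left\langle f,1_{B}\,\mathrm{Ad}_{g\left(\cdot\right)}\xi\right\rangle$ for every $\xi\in\mathfrak{k}$. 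On the approximation side, because each $\mathrm{Ad}_{g\left(p_{A}\right)^{-1}}$ is an orthogonal transformation of $\mathfrak{k}$ with adjoint $\mathrm{Ad}_{g\left(p_{A}\right)}$, I obtain $\left\langle \mathrm{Ad}_{g\left(p_{A}\right)^{-1}}f\left(A\right),\xi\right\rangle=\left\langle f\left(A\right),\mathrm{Ad}_{g\left(p_{A}\right)}\xi\right\rangle=\left\langle f,1_{A}\,\mathrm{Ad}_{g\left(p_{A}\right)}\xi\right\rangle$.

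Next, writing $D_{n}$ for the difference between $f^{g}\left(B\right)$ and the Riemann sum $\sum_{A\in\Pi_{n}}\mathrm{Ad}_{g\left(p_{A}\right)^{-1}}f\left(A\right)$, and using that $\Pi_{n}$ partitions $B$ (so $1_{B}=\sum_{A\in\Pi_{n}}1_{A}$), the two computations above combine to
\[
\left\langle D_{n},\xi\right\rangle=\left\langle f,\sum_{A\in\Pi_{n}}1_{A}\left(\mathrm{Ad}_{g\left(\cdot\right)}-\mathrm{Ad}_{g\left(p_{A}\right)}\right)\xi\right\rangle.
\]
By the white-noise isometry $\mathbb{E}\left[\left\langle f,u\right\rangle^{2}\right]=\left\Vert u\right\Vert^{2}$ and the disjointness of the partition cells, I would then compute $\mathbb{E}\left\vert D_{n}\right\vert_{\mathfrak{k}}^{2}=\sum_{\xi\in\beta}\mathbb{E}\left[\left\langle D_{n},\xi\right\rangle^{2}\right]$ over an orthonormal basis $\beta$ of $\mathfrak{k}$, collapsing the basis sum into a Hilbert--Schmidt norm:
\[
\mathbb{E}\left\vert D_{n}\right\vert_{\mathfrak{k}}^{2}=\int_{B}\sum_{A\in\Pi_{n}}1_{A}\left(p\right)\left\Vert \mathrm{Ad}_{g\left(p\right)}-\mathrm{Ad}_{g\left(p_{A}\right)}\right\Vert_{\mathrm{HS}}^{2}\,dm\left(p\right).
\]

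Finally I would close the estimate using compactness of $K$: the map $k\mapsto\mathrm{Ad}_{k}$ is Lipschitz on $K$, so there is a constant $C<\infty$ with $\left\Vert \mathrm{Ad}_{k_{1}}-\mathrm{Ad}_{k_{2}}\right\Vert_{\mathrm{HS}}\leq C\left\vert k_{1}-k_{2}\right\vert$ for $k_{1},k_{2}\in K$. For $p$ in a cell $A\in\Pi_{n}$ this yields $\left\Vert \mathrm{Ad}_{g\left(p\right)}-\mathrm{Ad}_{g\left(p_{A}\right)}\right\Vert_{\mathrm{HS}}\leq C\operatorname{osc}_{\Pi_{n}}\left(g\right)$, whence $\mathbb{E}\left\vert D_{n}\right\vert_{\mathfrak{k}}^{2}\leq C^{2}m\left(B\right)\left(\operatorname{osc}_{\Pi_{n}}\left(g\right)\right)^{2}\to0$ by hypothesis, which is exactly the claimed $L^{2}\left(\mathbb{P}\right)$ convergence. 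The argument is essentially routine; the only points requiring care are the transpose identity $\left(\mathrm{Ad}_{g\left(p_{A}\right)^{-1}}\right)^{\ast}=\mathrm{Ad}_{g\left(p_{A}\right)}$ and the bookkeeping that turns the sum over $\beta$ into the Hilbert--Schmidt norm. These, rather than any analytic difficulty, are where an error could slip in, so I expect them to be the main thing to get exactly right. Note also that only continuity of $g$ is used (the oscillation hypothesis supplies what would otherwise come from uniform continuity on the compact set $B$).
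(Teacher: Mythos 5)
Your proof is correct and follows essentially the same route as the paper: both reduce the claim to the $L^{2}\left(m\right)$-convergence of $\sum_{A\in\Pi_{n}}1_{A}\mathrm{Ad}_{g\left(p_{A}\right)}\xi$ to $1_{B}\mathrm{Ad}_{g\left(\cdot\right)}\xi$ via the white-noise isometry and the $\mathrm{Ad}_{K}$-invariance of the inner product. The only difference is cosmetic: the paper invokes dominated convergence qualitatively, whereas you give the explicit bound $\mathbb{E}\left\vert D_{n}\right\vert _{\mathfrak{k}}^{2}\leq C^{2}m\left(B\right)\left(\operatorname{osc}_{\Pi_{n}}\left(g\right)\right)^{2}$, which is a harmless (indeed slightly sharper) refinement.
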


\begin{proof}
Let $\xi\in\mathfrak{k}$ be fixed so that
\[
\left\langle f^{g}\left(  B\right)  ,\xi\right\rangle :=\left\langle f^{g}%
,\xi1_{B}\right\rangle :=\left\langle f,\mathrm{Ad}_{g}\xi1_{B}\right\rangle
.
\]
By the assumption, $\lim_{n\rightarrow\infty}\operatorname{osc}_{\Pi_{n}%
}\left(  g\right)  =0,$ along with the dominated convergence theorem,%
\[
\mathrm{Ad}_{g\left(  \cdot\right)  }\xi1_{B}\left(  \cdot\right)
=L^{2}\left(  m\right)  \text{-}\lim_{n\rightarrow\infty}\sum_{A\in\Pi_{n}%
}\mathrm{Ad}_{g\left(  p_{A}\right)  }\xi1_{A}.
\]
This identity, the $\mathrm{Ad}_{g}$ -invariance of the inner product on
$\mathfrak{k,}$ and the isometry property of the white noise (see Definition
\ref{def.2.1}) then implies,%
\[
\left\langle f,\mathrm{Ad}_{g}\xi1_{B}\right\rangle =L^{2}\left(
\mathbb{P}\right)  \text{-}\lim_{n\rightarrow\infty}\sum_{A\in\Pi_{n}%
}\left\langle f\left(  A\right)  ,\mathrm{Ad}_{g\left(  p_{A}\right)  }%
\xi\right\rangle =L^{2}\left(  \mathbb{P}\right)  \text{-}\lim_{n\rightarrow
\infty}\sum_{A\in\Pi_{n}}\left\langle \mathrm{Ad}_{g\left(  p_{A}\right)
^{-1}}f\left(  A\right)  ,\xi\right\rangle .
\]
As $\xi\in\mathfrak{k}$ is arbitrary, Eq. (\ref{e.4.1}) is proved.
\end{proof}

Recall, as mentioned after Definition \ref{def.3.6}, we will routinely
identify $g\in C\left(  \mathbb{R},K\right)  $ with $g\circ p\in C\left(
\mathbb{R}^{2},K\right)  $ where $p:\mathbb{R}^{2}\rightarrow\mathbb{R}$ is
projection onto the first factor.

\begin{lemma}
\label{lem.4.4}If $g\in C\left(  \mathbb{R},K\right)  \subset C\left(
\mathbb{R}^{2},K\right)  ,$ then $\left(  \hat{f}\right)  ^{g}=\widehat{f^{g}%
}$ where $\hat{f}$ is as in Definition \ref{def.2.4}.
\end{lemma}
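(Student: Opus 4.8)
The plan is to verify the identity by testing both white noises against an arbitrary element $u\in L^{2}(\mathbb{R}^{2},m;\mathfrak{k})$. Since a white noise is by definition the linear map $u\mapsto\langle\,\cdot\,,u\rangle$, two white noises coincide precisely when they agree on every such $u$. Thus it suffices to check that $\langle(\hat f)^{g},u\rangle=\langle\widehat{f^{g}},u\rangle$ for all $u$, and I would simply unfold each side using Definitions \ref{def.2.4} and \ref{def.4.1}.

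Unfolding the left-hand side, Definition \ref{def.4.1} applied to the white noise $\hat f$ gives $\langle(\hat f)^{g},u\rangle=\langle\hat f,\mathrm{Ad}_{g}u\rangle$, and then Definition \ref{def.2.4} gives $\langle\hat f,\mathrm{Ad}_{g}u\rangle=\langle f,\widehat{\mathrm{Ad}_{g}u}\rangle$. Unfolding the right-hand side in the opposite order, Definition \ref{def.2.4} applied to the white noise $f^{g}$ gives $\langle\widehat{f^{g}},u\rangle=\langle f^{g},\hat u\rangle$, and Definition \ref{def.4.1} gives $\langle f^{g},\hat u\rangle=\langle f,\mathrm{Ad}_{g}\hat u\rangle$. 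Hence the whole lemma reduces to the purely deterministic $L^{2}$-identity $\widehat{\mathrm{Ad}_{g}u}=\mathrm{Ad}_{g}\hat u$.

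To prove this $L^{2}$-identity I would argue pointwise. Recalling from Eq. (\ref{e.2.3}) that the hat operation is multiplication by the scalar function $\mathrm{sgn}(y)$, and that $g$ is identified with $g\circ p$ so that $g(x,y)=g(x)$, I compute at almost every $(x,y)$ that $\widehat{\mathrm{Ad}_{g}u}(x,y)=\mathrm{sgn}(y)\,\mathrm{Ad}_{g(x)}u(x,y)$ and $(\mathrm{Ad}_{g}\hat u)(x,y)=\mathrm{Ad}_{g(x)}\bigl(\mathrm{sgn}(y)\,u(x,y)\bigr)$. These agree because the scalar $\mathrm{sgn}(y)$ commutes with the linear map $\mathrm{Ad}_{g(x)}$. (I would also note in passing that $\mathrm{Ad}_{g}$ preserves $L^{2}$, since the inner product on $\mathfrak{k}$ is $\mathrm{Ad}_{K}$-invariant and $g$ is continuous hence measurable, so both objects are genuine elements of $L^{2}$.)

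There is really no hard step here: the content of the lemma is the elementary observation that the reflection-symmetrization encoded by $u\mapsto\hat u$ is multiplication by a scalar, which commutes with the pointwise isometric action $u\mapsto\mathrm{Ad}_{g}u$. The only thing to be slightly careful about is bookkeeping the order in which the two defining relations are applied on each side, so that both sides are reduced to a pairing of the single white noise $f$ against functions that can then be compared directly in $L^{2}$.
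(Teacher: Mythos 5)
Your proof is correct and follows essentially the same route as the paper: test both sides against an arbitrary $u\in L^{2}$, unfold the two defining relations, and reduce to the identity $\widehat{\mathrm{Ad}_{g}u}=\mathrm{Ad}_{g}\hat{u}$, which the paper uses as the middle step of its one-line chain of equalities. Your pointwise justification that $\mathrm{sgn}(y)$ commutes with $\mathrm{Ad}_{g(x)}$ merely makes explicit the step the paper leaves unremarked.
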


\begin{proof}
If $u\in L^{2}\left(  \mathbb{R}^{2};\mathfrak{k}\right)  ,$ then
\[
\left\langle \left(  \hat{f}\right)  ^{g},u\right\rangle =\left\langle \hat
{f},\mathrm{Ad}_{g}u\right\rangle =\left\langle f,\widehat{\mathrm{Ad}_{g}%
u}\right\rangle =\left\langle f,\mathrm{Ad}_{g}\widehat{u}\right\rangle
=\left\langle f^{g},\hat{u}\right\rangle =\left\langle \widehat{f^{g}%
},u\right\rangle .
\]

\end{proof}

\begin{theorem}
[Affine change of variables]\label{thm.4.5}If $\alpha\in L^{2}\left(
\mathbb{R}^{2};\mathfrak{k}\right)  ,$ $g\in C\left(  \mathbb{R}^{2},K\right)
,$ and $\psi\left(  f\right)  $ is a bounded measurable function of the white
noise, $f,$ then
\begin{equation}
\mathbb{E}\left[  \psi\left(  f^{g}-\mathrm{Ad}_{g^{-1}}\alpha\right)
\right]  =\mathbb{E}\left[  \psi\left(  f\right)  e^{-\left\langle
f,\alpha\right\rangle -\frac{1}{2}\left\Vert \alpha\right\Vert ^{2}}\right]  .
\label{e.4.2}%
\end{equation}
In particular the laws of $f^{g}-\mathrm{Ad}_{g^{-1}}\alpha$ and $f$ are
mutually absolutely continuous relative to one another.
\end{theorem}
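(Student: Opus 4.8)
The plan is to deduce Eq. (\ref{e.4.2}) by combining two classical features of the white noise $f$ that already appear in the surrounding discussion: its invariance in law under orthogonal transformations of $L^{2}\left( \mathbb{R}^{2},m;\mathfrak{k}\right) $, and the Cameron--Martin (Girsanov) theorem describing the effect of a deterministic $L^{2}$-shift. Only the second ingredient needs to be supplied, since the first is recorded just before the statement: because $\left\langle \cdot,\cdot\right\rangle _{\mathfrak{k}}$ is $\mathrm{Ad}$-invariant, $\mathrm{Ad}_{g}$ is orthogonal on $L^{2}$, so $f^{g}$ and $f$ are mean zero Gaussian processes with the same covariance and hence the same law, giving $\mathbb{E}\left[ G\left( f^{g}\right) \right] =\mathbb{E}\left[ G\left( f\right) \right] $ for every bounded measurable $G$.

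First I would record the Cameron--Martin formula for the isonormal Gaussian process $f$. For $h\in L^{2}\left( \mathbb{R}^{2},m;\mathfrak{k}\right) $ and $\Phi $ a bounded measurable function of the white noise, writing $f-h$ for the shifted process $u\mapsto \left\langle f,u\right\rangle -\left\langle h,u\right\rangle $, one has
\[
\mathbb{E}\left[ \Phi \left( f-h\right) \right] =\mathbb{E}\left[ \Phi \left( f\right) e^{-\left\langle f,h\right\rangle -\frac{1}{2}\left\Vert h\right\Vert ^{2}}\right] .
\]
This is standard: one verifies it first for $\Phi $ a bounded continuous cylinder function of finitely many pairings $\left\langle f,u_{1}\right\rangle ,\dots ,\left\langle f,u_{n}\right\rangle $ by the finite dimensional Gaussian translation formula, using that $\left\{ \left\langle f,u\right\rangle \right\} $ is a centered Gaussian family with covariance $\left\langle u,v\right\rangle $, and then extends to all bounded measurable $\Phi $ by a monotone class argument together with the finiteness of all moments of the lognormal density $e^{-\left\langle f,h\right\rangle }$.

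Next I would combine the two ingredients by carrying the deterministic shift along before removing the rotation. Since $f^{g}$ and $f$ have the same law and subtracting the fixed element $\mathrm{Ad}_{g^{-1}}\alpha \in L^{2}$ is a fixed measurable operation on the process, the random fields $f^{g}-\mathrm{Ad}_{g^{-1}}\alpha $ and $f-\mathrm{Ad}_{g^{-1}}\alpha $ again have the same law, so
\[
\mathbb{E}\left[ \psi \left( f^{g}-\mathrm{Ad}_{g^{-1}}\alpha \right) \right] =\mathbb{E}\left[ \psi \left( f-\mathrm{Ad}_{g^{-1}}\alpha \right) \right] .
\]
Applying the Cameron--Martin formula with $h=\mathrm{Ad}_{g^{-1}}\alpha $ and using that $\mathrm{Ad}_{g}$ is an isometry, so $\left\Vert \mathrm{Ad}_{g^{-1}}\alpha \right\Vert =\left\Vert \alpha \right\Vert $, then produces the Cameron--Martin density of Eq. (\ref{e.4.2}); the positivity of this density immediately yields the asserted mutual absolute continuity of the laws of $f^{g}-\mathrm{Ad}_{g^{-1}}\alpha $ and $f$. (It is worth flagging that the pairing delivered by this argument is $\left\langle f,\mathrm{Ad}_{g^{-1}}\alpha \right\rangle $, matching the $\mathrm{Ad}_{g^{-1}}$-twisting that already surfaced in the heuristic formula Eq. (\ref{e.3.9}); one may equivalently write it as $\left\langle f^{g},\mathrm{Ad}_{g^{-1}}\alpha \right\rangle $ via orthogonality.)

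The hard part will not be the algebra but the rigorous passage in the Cameron--Martin step from cylinder functionals to arbitrary bounded measurable $\psi $ of the \emph{entire} family $\left\{ \left\langle f,u\right\rangle \right\} _{u\in L^{2}}$: the white noise is only a distribution-valued object, so $\psi $ lives on an infinite dimensional path space, and I would be most careful in approximating $\psi $ by cylinder functionals (using that such functionals generate the relevant $\sigma $-algebra) and in passing to the limit against the exponential weight, which is legitimate because $e^{-\left\langle f,h\right\rangle -\frac{1}{2}\left\Vert h\right\Vert ^{2}}$ is uniformly integrable. Everything else reduces to the isometry property of $\mathrm{Ad}_{g}$ and the covariance characterization of Gaussian laws already invoked in the text.
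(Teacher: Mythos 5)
Your two-step factorization --- first using that $f^{g}$ and $f$ are equal in law because $\mathrm{Ad}_{g}$ is orthogonal on $L^{2},$ and then applying the Cameron--Martin translation formula to the deterministic shift by $\mathrm{Ad}_{g^{-1}}\alpha$ --- is a legitimate and genuinely different route from the paper's, which treats the rotation and the shift simultaneously by a finite dimensional Gaussian change of variables in the rotated orthonormal basis $\left\{  \mathrm{Ad}_{g}u_{j}\right\}  $ followed by a multiplicative-system and limiting argument. Your decomposition isolates the two standard ingredients cleanly, and the cylinder-function plus monotone class reduction you describe is exactly the technology the paper also relies on.

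The gap is in your final identification. What your argument actually proves is
\[
\mathbb{E}\left[  \psi\left(  f^{g}-\mathrm{Ad}_{g^{-1}}\alpha\right)
\right]  =\mathbb{E}\left[  \psi\left(  f\right)  e^{-\left\langle
f,\mathrm{Ad}_{g^{-1}}\alpha\right\rangle -\frac{1}{2}\left\Vert
\alpha\right\Vert ^{2}}\right]  ,
\]
and this cannot be converted into the displayed right side of Eq. (\ref{e.4.2}) by the \textquotedblleft orthogonality\textquotedblright\ remark in your parenthetical: the random variables $\left\langle f,\mathrm{Ad}_{g^{-1}}\alpha\right\rangle $ and $\left\langle f,\alpha\right\rangle =\left\langle f^{g},\mathrm{Ad}_{g^{-1}}\alpha\right\rangle $ agree only in distribution, not pointwise, and since each sits inside an expectation against $\psi\left(  f\right)  $ their different correlations with $\psi\left(  f\right)  $ matter. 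Testing with $\psi\left(  f\right)  =e^{i\left\langle f,v\right\rangle }$ gives $e^{-\frac{1}{2}\left\Vert v\right\Vert ^{2}-i\left\langle \alpha,\mathrm{Ad}_{g}v\right\rangle }$ for your formula (and for the left side of Eq. (\ref{e.4.2})), versus $e^{-\frac{1}{2}\left\Vert v\right\Vert ^{2}-i\left\langle \alpha,v\right\rangle }$ for the displayed right side, so the two expressions genuinely differ unless $\mathrm{Ad}_{g^{-1}}\alpha=\alpha.$ You should therefore state your conclusion with the twisted pairing, or equivalently as $\mathbb{E}\left[  \psi\left(  f^{g}\right)  e^{-\left\langle f,\alpha\right\rangle -\frac{1}{2}\left\Vert \alpha\right\Vert ^{2}}\right]  $ --- which is in fact where the paper's own finite dimensional computation in Eq. (\ref{e.4.3}) lands, since $\tilde{\psi}\left(  \left\langle f,\mathrm{Ad}_{g}u_{1}\right\rangle ,\dots\right)  =\psi\left(  f^{g}\right)  $ --- rather than assert agreement with the literal display. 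The discrepancy is harmless in the one place the theorem is used (Corollary \ref{cor.4.12}), because there one differentiates the $s\eta$-family at $s=0,$ where $g_{s\eta}=I$ and the $\mathrm{Ad}$-twist disappears; but it should be flagged, not absorbed into an appeal to orthogonality.
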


\begin{proof}
By the multiplicative system theorem (see Dellacherie \cite[p. 14]%
{Dellacherie1972} or Janson \cite[Appendix A., p. 309]{Janson1997}) it
suffices to prove Eq. (\ref{e.4.2}) when $\psi$ is a cylinder functions of the
form,%
\[
\psi\left(  f\right)  =\tilde{\psi}\left(  \left\langle f,u_{1}\right\rangle
,\dots,\left\langle f,u_{k}\right\rangle \right)
\]
with $u_{i}\in L^{2}\left(  \mathbb{R}^{2};\mathfrak{k}\right)  $ or $u_{i}\in
C_{c}^{\infty}\left(  \mathbb{R}^{2};\mathfrak{k}\right)  $ if we prefer. We
may further assume that $\left\{  u_{i}\right\}  _{i=1}^{\infty}$ is an
orthonormal basis for $L^{2}\left(  \mathbb{R}^{2};\mathfrak{k}\right)  $ in
which case $\left\{  \mathrm{Ad}_{g}u_{i}\right\}  _{i=1}^{\infty}$ is also an
orthonormal basis for $L^{2}\left(  \mathbb{R}^{2};\mathfrak{k}\right)  .$
Since%
\begin{align*}
\psi\left(  f^{g}-\mathrm{Ad}_{g^{-1}}\alpha\right)   &  =\tilde{\psi}\left(
\left\langle f-\alpha,\mathrm{Ad}_{g}u_{1}\right\rangle ,\dots,\left\langle
f-\alpha,\mathrm{Ad}_{g}u_{k}\right\rangle \right) \\
&  =\tilde{\psi}\left(  \left\langle f,\mathrm{Ad}_{g}u_{1}\right\rangle
-\left\langle \alpha,\mathrm{Ad}_{g}u_{1}\right\rangle ,\dots,\left\langle
f,\mathrm{Ad}_{g}u_{k}\right\rangle -\left\langle \alpha,\mathrm{Ad}_{g}%
u_{k}\right\rangle \right)  ,
\end{align*}
it follows by a finite dimensional change of variables and the fact that
$\left\{  \left\langle f,\mathrm{Ad}_{g}u_{i}\right\rangle \right\}
_{i=1}^{\infty}$ are i.i.d. standard normal random variables that%
\begin{align}
\mathbb{E}\left[  \psi\left(  f^{g}-\mathrm{Ad}_{g^{-1}}\right)  \right]   &
=\mathbb{E}\left[  \tilde{\psi}\left(  \left\langle f,\mathrm{Ad}_{g}%
u_{1}\right\rangle -\left\langle \alpha,\mathrm{Ad}_{g}u_{1}\right\rangle
,\dots,\left\langle f,\mathrm{Ad}_{g}u_{k}\right\rangle -\left\langle
\alpha,\mathrm{Ad}_{g}u_{k}\right\rangle \right)  \right] \nonumber\\
&  =\mathbb{E}\left[  \tilde{\psi}\left(  \left\langle f,\mathrm{Ad}_{g}%
u_{1}\right\rangle ,\dots,\left\langle f,\mathrm{Ad}_{g}u_{k}\right\rangle
\right)  Z_{\tilde{k}}\right]  \label{e.4.3}%
\end{align}
where for any $\tilde{k}\geq k,$
\[
Z_{\tilde{k}}=\exp\left(  -\sum_{j=1}^{\tilde{k}}\left[  \left\langle
f,\mathrm{Ad}_{g}u_{j}\right\rangle \left\langle \alpha,\mathrm{Ad}_{g}%
u_{j}\right\rangle -\frac{1}{2}\left\vert \left\langle \alpha,\mathrm{Ad}%
_{g}u_{j}\right\rangle \right\vert ^{2}\right]  \right)  .
\]
Using%
\[
L^{\infty-}\text{-}\lim_{\tilde{k}\uparrow\infty}Z_{\tilde{k}}=\exp\left(
-\left\langle f,\alpha\right\rangle -\frac{1}{2}\left\Vert \alpha\right\Vert
^{2}\right)  ,
\]
we may pass to the limit as $\tilde{k}\rightarrow\infty$ in Eq. (\ref{e.4.3})
to arrive at Eq. (\ref{e.4.2}).
\end{proof}

\begin{proposition}
\label{pro.4.6}Suppose that $f$ is a $\mathfrak{k}$ -- valued white noise,
$\left[  a,b\right]  \ni t\rightarrow\ell\left(  t\right)  :=\left(
t,y\left(  t\right)  \right)  $ is a horizontal curve in $\mathbb{R}^{2},$ and
$\left\{  M_{t}^{f}\left(  \ell\right)  \right\}  _{t\in\left[  a,b\right]  }$
is the $\mathfrak{k}$-valued martingale as defined in Definition
\ref{def.2.6}. If $g\in C\left(  \mathbb{R},K\right)  $ such that $g\left(
0\right)  =I,$ then $dM_{t}^{f^{g}}\left(  \ell\right)  =\mathrm{Ad}%
_{g^{-1}\left(  t\right)  }dM_{t}^{f}\left(  \ell\right)  ,$ i.e.
\[
M_{t}^{f^{g}}\left(  \ell\right)  =\int_{a}^{t}\mathrm{Ad}_{g^{-1}\left(
\tau\right)  }dM_{\tau}^{f}\left(  \ell\right)
\]
where the latter integral is a It\^{o} (or essentially Wiener) stochastic integral.
\end{proposition}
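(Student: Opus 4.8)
The plan is to reduce the identity to the defining Riemann-sum approximation of the Itô integral, exploiting Lemmas \ref{lem.4.4} and \ref{lem.4.3} together with the fact that $g$ depends only on the first coordinate $x$. First I would rewrite the left-hand side purely in terms of $\hat{f}$: applying Definition \ref{def.2.6} to the white noise $f^{g}$ gives $M_{t}^{f^{g}}\left(\ell\right) = -\widehat{f^{g}}\left(R^{\ell}\left(t\right)\right)$, and Lemma \ref{lem.4.4} (valid since $g\in C\left(\mathbb{R},K\right)$) yields $\widehat{f^{g}} = \left(\hat{f}\right)^{g}$, so that $M_{t}^{f^{g}}\left(\ell\right) = -\left(\hat{f}\right)^{g}\left(R^{\ell}\left(t\right)\right)$.

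Next I would approximate $\left(\hat{f}\right)^{g}\left(R^{\ell}\left(t\right)\right)$ via Lemma \ref{lem.4.3} applied to the white noise $\hat{f}$, using \emph{vertical-strip} partitions. I would choose partitions $a=\tau_{0}^{n}<\cdots<\tau_{N_{n}}^{n}=t$ of $\left[a,t\right]$ with mesh tending to $0$ and set $A_{j}^{n}:=R^{\ell}\left(t\right)\cap\left(\left[\tau_{j}^{n},\tau_{j+1}^{n}\right)\times\mathbb{R}\right)$, which partition the finite-area set $R^{\ell}\left(t\right)$. The Euclidean diameters of these strips need not be small, but the relevant quantity for Lemma \ref{lem.4.3} is the oscillation of $g$, and since $g$ is identified with a function of $x$ alone,
\[
\operatorname{osc}_{\Pi_{n}}\left(g\right)\leq\max_{j}\sup_{x,x'\in\left[\tau_{j}^{n},\tau_{j+1}^{n}\right)}\left\vert g\left(x\right)-g\left(x'\right)\right\vert\longrightarrow0
\]
by uniform continuity of $g$ on $\left[a,t\right]$. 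Hence Lemma \ref{lem.4.3} gives $\left(\hat{f}\right)^{g}\left(R^{\ell}\left(t\right)\right)=L^{2}\left(\mathbb{P}\right)\text{-}\lim_{n}\sum_{j}\mathrm{Ad}_{g\left(p_{A_{j}^{n}}\right)^{-1}}\hat{f}\left(A_{j}^{n}\right)$ for any choice of $p_{A_{j}^{n}}\in A_{j}^{n}$.

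Then I would identify the pieces with martingale increments. Because $R^{\ell}\left(\tau_{j+1}^{n}\right)=R^{\ell}\left(\tau_{j}^{n}\right)\sqcup A_{j}^{n}$ up to a null set and $\hat{f}$ is a.s.\ additive on disjoint finite-area Borel sets, $\hat{f}\left(A_{j}^{n}\right)=-\bigl(M_{\tau_{j+1}^{n}}^{f}\left(\ell\right)-M_{\tau_{j}^{n}}^{f}\left(\ell\right)\bigr)$. Since the first coordinate of $p_{A_{j}^{n}}$ lies in $\left[\tau_{j}^{n},\tau_{j+1}^{n}\right)$ and $g$ is $y$-independent, I may replace $\mathrm{Ad}_{g\left(p_{A_{j}^{n}}\right)^{-1}}$ by $\mathrm{Ad}_{g^{-1}\left(\tau_{j}^{n}\right)}$; using independence of the increments and $\mathbb{E}\left\vert\hat{f}\left(A_{j}^{n}\right)\right\vert^{2}=\dim K\cdot m\left(A_{j}^{n}\right)$, the resulting $L^{2}\left(\mathbb{P}\right)$-error is at most $\max_{j}\bigl\Vert\mathrm{Ad}_{g\left(p_{A_{j}^{n}}\right)^{-1}}-\mathrm{Ad}_{g^{-1}\left(\tau_{j}^{n}\right)}\bigr\Vert\cdot\bigl(\dim K\cdot m\left(R^{\ell}\left(t\right)\right)\bigr)^{1/2}$, which vanishes. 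This leaves
\[
M_{t}^{f^{g}}\left(\ell\right)=L^{2}\left(\mathbb{P}\right)\text{-}\lim_{n}\sum_{j}\mathrm{Ad}_{g^{-1}\left(\tau_{j}^{n}\right)}\bigl(M_{\tau_{j+1}^{n}}^{f}\left(\ell\right)-M_{\tau_{j}^{n}}^{f}\left(\ell\right)\bigr).
\]

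Finally I would recognize the right-hand side as the left-endpoint Riemann sum of $\int_{a}^{t}\mathrm{Ad}_{g^{-1}\left(\tau\right)}\,dM_{\tau}^{f}\left(\ell\right)$: the integrand $\tau\mapsto\mathrm{Ad}_{g^{-1}\left(\tau\right)}$ is bounded, continuous and deterministic (hence predictable) and $M^{f}\left(\ell\right)$ is a continuous $L^{2}$-martingale, so these sums converge in $L^{2}\left(\mathbb{P}\right)$ to the Itô integral, which gives the claim. The main obstacle is the mismatch between the two notions of refinement — Lemma \ref{lem.4.3} requires small $g$-oscillation while the Itô integral requires small mesh — and the crux of the argument is that the $y$-independence of $g$ allows a \emph{single} family of thin vertical strips to satisfy both requirements at once, while those same strips realize the increments of $M^{f}\left(\ell\right)$. (Note $g\left(0\right)=I$ is not needed for this computation.)
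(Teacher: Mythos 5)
Your proof is correct and follows essentially the same route as the paper: both reduce $M_t^{f^{g}}(\ell)=-\widehat{f^{g}}(R^{\ell}(t))$ to Lemma \ref{lem.4.3} applied with vertical-strip partitions $R^{\ell}(t_j)\setminus R^{\ell}(t_{j-1})$, identify the terms as martingale increments, and recognize the left-endpoint Riemann sums of the It\^{o} integral. Your write-up is in fact more careful than the paper's on two points it elides --- that the strips may have large diameter but small $g$-oscillation because $g$ depends only on $x$, and the harmless replacement of the tag $p_{A}$ by the left endpoint --- and you correctly keep the inverse in $\mathrm{Ad}_{g^{-1}(\tau)}$ throughout, which the paper's displayed proof drops in a typo.
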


\begin{proof}
Let $\Pi=\left\{  a=t_{0}<t_{1}<\dots<t_{n}=t\right\}  $ denote a partition of
$\left[  a,t\right]  .$ By Lemma \ref{lem.4.3},%
\begin{align*}
M_{t}^{\ell}\left(  f^{g}\right)   &  =-\hat{f}^{g}\left(  R_{t}^{\ell
}\right)  =L^{2}\left(  \mathbb{P}\right)  \text{-}\lim_{\left\vert
\Pi\right\vert \rightarrow0}\sum_{j=1}^{n}-\mathrm{Ad}_{g\left(
t_{j-1}\right)  }\hat{f}\left(  R_{t_{j}}^{\ell}\setminus R_{t_{j-1}}^{\ell
}\right) \\
&  =L^{2}\left(  \mathbb{P}\right)  \text{-}\lim_{\left\vert \Pi\right\vert
\rightarrow0}\sum_{j=1}^{n}-\mathrm{Ad}_{g\left(  t_{j-1}\right)  }\left[
\hat{f}\left(  R_{t_{j}}^{\ell}\right)  -\hat{f}\left(  R_{t_{j-1}}^{\ell
}\right)  \right] \\
&  =L^{2}\left(  \mathbb{P}\right)  \text{-}\lim_{\left\vert \Pi\right\vert
\rightarrow0}\sum_{j=1}^{n}\mathrm{Ad}_{g\left(  t_{j-1}\right)  }\left[
M_{t_{j}}^{\ell}\left(  f\right)  -M_{t_{j-1}}^{\ell}\left(  f\right)  \right]
\\
&  =\int_{a}^{t}\mathrm{Ad}_{g\left(  \tau\right)  }dM_{\tau}^{f}\left(
\ell\right)  .
\end{align*}

\end{proof}

\subsection{Perturbations of $f,$ $M^{f},$ and $\pt^{f}$\label{sec.4.2}}

We start by making precise the perturbation, $f_{\eta},$ of $f$ which was
introduced informally in Eq. (\ref{e.3.7}).

\begin{definition}
\label{def.4.7}For $\eta,\eta_{y}:\mathbb{R}^{2}\rightarrow\mathfrak{k}$ as in
Notation \ref{not.3.2}, let
\begin{equation}
f_{\eta}:=f^{g_{\eta}}-\mathrm{Ad}_{g_{\eta}^{-1}}\eta_{y} \label{e.4.4}%
\end{equation}
where $g_{\eta}\in C\left(  \mathbb{R\rightarrow}K\right)  $ is the solution
to the ODE in Eq. (\ref{e.3.4}) in Definition \ref{def.3.6}.
\end{definition}

\begin{theorem}
[Martingale perturbations]\label{thm.4.8}Let $\left[  a,b\right]  \ni
x\rightarrow\ell\left(  x\right)  :=\left(  x,y\left(  x\right)  \right)  $ be
a horizontal curve in $\mathbb{R}^{2},$ $f$ be a $\mathfrak{k}$ -- valued
white noise, $\eta,\eta_{y}:\mathbb{R}^{2}\rightarrow\mathfrak{k}$ be as in
Notation \ref{not.3.2}, and $\left\{  M_{t}^{f}\left(  \ell\right)  \right\}
_{t\in\left[  a,b\right]  }$ be the martingale defined in Definition
\ref{def.2.6}. Then $\left\{  M_{t}^{f_{\eta}}\left(  \ell\right)  \right\}
_{t\in\left[  a,b\right]  }$ is the semi-martingale given by the following
It\^{o} integrals;
\begin{equation}
M_{t}^{f_{\eta}}\left(  \ell\right)  =\int_{a}^{t}\mathrm{Ad}_{g_{\eta}\left(
x\right)  ^{-1}}dM_{x}^{f}\left(  \ell\right)  +\int_{a}^{t}\mathrm{Ad}%
_{g_{\eta}\left(  x\right)  ^{-1}}\bar{\eta}\left(  x,y\left(  x\right)
\right)  dx. \label{e.4.5}%
\end{equation}
Alternatively, the differential form of Eq. (\ref{e.4.5}) is
\begin{equation}
dM_{x}^{f_{\eta}}\left(  \ell\right)  =\mathrm{Ad}_{g_{\eta}\left(  x\right)
^{-1}}\left[  dM_{x}^{f}\left(  \ell\right)  +\bar{\eta}\left(  x,y\left(
x\right)  \right)  dx\right]  . \label{e.4.6}%
\end{equation}
[Recall from Notation \ref{not.3.3} and Remark \ref{rem.3.4} that
\[
\bar{\eta}\left(  x,y\right)  :=\eta\left(  x,y\right)  -\eta\left(
x,0\right)  =\int_{0}^{y}\eta_{y}\left(  x,y^{\prime}\right)  dy^{\prime}.]
\]

\end{theorem}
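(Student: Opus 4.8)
The plan is to read off both terms in Eq. (\ref{e.4.5}) from the defining decomposition
\[
f_\eta = f^{g_\eta} - \mathrm{Ad}_{g_\eta^{-1}}\eta_y
\]
of Definition \ref{def.4.7}, which exhibits the perturbed noise as a genuine (rotated) white noise $f^{g_\eta}$ plus the \emph{deterministic} $L^2$-drift $-\mathrm{Ad}_{g_\eta^{-1}}\eta_y$. Since the pairing $u\mapsto\langle f_\eta,u\rangle$ is linear and $M_t^{f_\eta}(\ell):=-\widehat{f_\eta}\left(R^\ell(t)\right)$, this decomposition splits the process into a Gaussian martingale part and a deterministic finite-variation part:
\[
M_t^{f_\eta}(\ell) = -\widehat{f^{g_\eta}}\left(R^\ell(t)\right) + \widehat{\mathrm{Ad}_{g_\eta^{-1}}\eta_y}\left(R^\ell(t)\right) = M_t^{f^{g_\eta}}(\ell) + \widehat{\mathrm{Ad}_{g_\eta^{-1}}\eta_y}\left(R^\ell(t)\right).
\]
So the whole theorem reduces to identifying these two summands with the two integrals in Eq. (\ref{e.4.5}).

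First I would dispatch the martingale part. Because $g_\eta\in C(\mathbb{R},K)$ with $g_\eta(0)=I$, Lemma \ref{lem.4.4} gives $\widehat{f^{g_\eta}}=(\hat f)^{g_\eta}$, and then Proposition \ref{pro.4.6} (applied with $g=g_\eta$) identifies $M_t^{f^{g_\eta}}(\ell)$ with the Wiener integral $\int_a^t \mathrm{Ad}_{g_\eta(x)^{-1}}\,dM_x^f(\ell)$. This is exactly the first term of Eq. (\ref{e.4.5}), so no new work is needed here beyond invoking the two cited results.

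The second step is to evaluate the deterministic drift $\widehat{\mathrm{Ad}_{g_\eta^{-1}}\eta_y}\left(R^\ell(t)\right)$. Unwinding Definition \ref{def.2.4} and Notation \ref{not.2.2} against the $L^2$-pairing shows that for a deterministic $v\in L^2$ one has $\hat v(B)=\int_B \mathrm{sgn}(y)\,v(x,y)\,dx\,dy$; with $v(x,y)=\mathrm{Ad}_{g_\eta(x)^{-1}}\eta_y(x,y)$ this becomes a region integral over $R^\ell(t)$. I would then slice over $x\in[a,t]$: since $g_\eta$ depends on $x$ alone, the inner $y$-integral runs between $0$ and $y(x)$, and in both cases $y(x)>0$ and $y(x)<0$ the sign $\mathrm{sgn}(y)$ combines with the orientation of the slice so that the inner integral equals $\mathrm{Ad}_{g_\eta(x)^{-1}}\int_0^{y(x)}\eta_y(x,y)\,dy=\mathrm{Ad}_{g_\eta(x)^{-1}}\bar\eta(x,y(x))$, using Remark \ref{rem.3.4}. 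Integrating in $x$ yields the second term $\int_a^t \mathrm{Ad}_{g_\eta(x)^{-1}}\bar\eta(x,y(x))\,dx$, and the differential form Eq. (\ref{e.4.6}) then follows at once.

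The two steps together give Eq. (\ref{e.4.5}). The only point that demands care is the sign bookkeeping in the slicing, namely the interaction of $\mathrm{sgn}(y)$ with the orientation of $R^\ell(t)$ when $y(x)<0$, together with the appeal to Remark \ref{rem.3.4} to rewrite $\int_0^{y(x)}\eta_y$ as $\bar\eta(x,y(x))$. Conceptually there is no hard estimate: the perturbation is by a deterministic $L^2$ function, so it contributes only a finite-variation drift, and the stochastic content is entirely carried by Proposition \ref{pro.4.6} once the decomposition of $f_\eta$ is in hand. The main (mild) obstacle is thus organizational — recognizing that the heavy lifting has already been done in Lemma \ref{lem.4.4} and Proposition \ref{pro.4.6}.
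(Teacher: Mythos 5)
Your proposal is correct and follows essentially the same route as the paper: decompose $f_\eta=f^{g_\eta}-\mathrm{Ad}_{g_\eta^{-1}}\eta_y$, identify the martingale part via Lemma \ref{lem.4.4} and Proposition \ref{pro.4.6}, and evaluate the deterministic drift by slicing $\int_{R^\ell(t)}\mathrm{sgn}(y)\,\mathrm{Ad}_{g_\eta(x)^{-1}}\eta_y\,dx\,dy$ into $\int_a^t\mathrm{Ad}_{g_\eta(x)^{-1}}\bar\eta(x,y(x))\,dx$. The sign bookkeeping you flag is exactly the step the paper carries out, and it works out as you describe in both the $y(x)>0$ and $y(x)<0$ cases.
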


\begin{proof}
Let us first observe that it makes sense to replace $f$ by $f_{\eta}$ in
$M_{x}^{f}\left(  \ell\right)  $ since (as a consequence of Theorem
\ref{thm.4.5}) the laws of $f$ and $f_{\eta}$ are mutually absolutely
continuous relative to one another. The identity in Eq. (\ref{e.4.5}) is now a
matter of unwinding the definitions;%
\begin{align*}
M_{t}^{f_{\eta}}\left(  \ell\right)   &  =-\hat{f}_{\eta}\left(  R_{t}^{\ell
}\right)  =-\widehat{f^{g_{\eta}}}\left(  R_{t}^{\ell}\right)  +\int
_{R_{t}^{\ell}}\mathrm{Ad}_{g_{\eta}^{-1}\left(  x\right)  }\mathrm{sgn}%
\left(  y\right)  \eta_{y}\left(  x,y\right)  dxdy\\
&  =M_{t}^{f^{g_{\eta}}}\left(  \ell\right)  +\int_{a}^{t}dx\int_{0}^{y\left(
x\right)  }dy\mathrm{Ad}_{g_{\eta}^{-1}\left(  x\right)  }\eta_{y}\left(
x,y\right) \\
&  =M_{t}^{f^{g_{\eta}}}\left(  \ell\right)  +\int_{a}^{t}\mathrm{Ad}%
_{g_{\eta}\left(  x\right)  ^{-1}}\bar{\eta}\left(  x,y\left(  x\right)
\right)  dx.
\end{align*}
The desired result now follows from this equation along with Proposition
\ref{pro.4.6}.
\end{proof}

We now introduce the white noise variant of Definition \ref{def.3.8}.

\begin{definition}
\label{def.4.9}If $\left[  a,b\right]  \ni x\rightarrow\ell\left(  x\right)
=\left(  x,y\left(  x\right)  \right)  $ is a horizontal path, let
$k_{x}\left(  \ell\right)  $ denote the solution to the ODE,%
\begin{equation}
\frac{d}{dx}k_{x}\left(  \ell\right)  +\left[  \mathrm{Ad}_{\pt_{x}^{f}\left(
\ell\right)  ^{-1}}\eta\left(  x,\ell\left(  x\right)  \right)  \right]
k_{x}\left(  \ell\right)  =0\text{ with }k_{0}=I. \label{e.4.7}%
\end{equation}
Further let $\mathbf{k}^{\eta}\left(  \ell\right)  =k_{b}\left(  \ell\right)
.$
\end{definition}

Although suppressed from the notation, the functions, $x\rightarrow
k_{x}\left(  \ell\right)  $ are in general random and depend on the white
noise $f$ through the dependence of Eq. (\ref{e.4.7}) on $\left\{  \pt_{x}%
^{f}\left(  \ell\right)  \right\}  _{x\in\left[  a,b\right]  }.$ The key
result of this section is the following stochastic analogue of Eq.
(\ref{e.3.8}) of Corollary \ref{cor.3.9}.

\begin{theorem}
[Perturbed parallel translation]\label{thm.4.10}If $\eta$ is as above and
$\left[  a,b\right]  \ni x\rightarrow\ell\left(  x\right)  =\left(  x,y\left(
x\right)  \right)  $ is a horizontal path, then%
\begin{equation}
\pt^{f_{\eta}}\left(  \ell\right)  =g_{\eta}\left(  b\right)  ^{-1}%
\pt^{f}\left(  \ell\right)  \mathbf{k}^{\eta}\left(  \ell\right)  g_{\eta
}\left(  a\right)  . \label{e.4.8}%
\end{equation}
[Again, it makes sense to replace $f$ by $f_{\eta}$ in $\pt^{f}\left(
\ell\right)  $ since (as a consequence of Theorem \ref{thm.4.5}) the laws of
$f$ and $f_{\eta}$ are mutually absolutely continuous relative to one another.]
\end{theorem}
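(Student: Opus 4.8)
The plan is to exhibit the right-hand side of (\ref{e.4.8}), viewed as a process in its terminal time, as a solution of the Fisk--Stratonovich SDE that defines $\pt^{f_\eta}(\ell)$, and then to conclude by uniqueness. This is the stochastic mirror of the ODE proof of Eq. (\ref{e.3.8}) in Corollary \ref{cor.3.9}: because the Fisk--Stratonovich differential obeys the ordinary Leibniz rule, the deterministic manipulations there transcribe almost verbatim once the semimartingale decomposition of $M^{f_\eta}$ supplied by Theorem \ref{thm.4.8} is in hand. Two preliminary remarks are in order. By Theorem \ref{thm.4.5} the laws of $f$ and $f_\eta$ are mutually absolutely continuous, so $\pt^{f_\eta}(\ell)$ is well defined by Definition \ref{def.2.7}; and for each fixed sample point Eq. (\ref{e.4.7}) is a linear ODE with continuous bounded coefficients (as $\eta$ is bounded and $K$ is compact), so $t\mapsto k_t(\ell)$ is a well-defined adapted process of finite variation in $t$ with $k_a(\ell)=I$.

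Next I would set $P_t := g_\eta(t)^{-1}\,\pt_t^f(\ell)\,k_t(\ell)\,g_\eta(a)$. Since $\pt_a^f(\ell)=k_a(\ell)=I$ and $g_\eta(a)^{-1}g_\eta(a)=I$, we have $P_a=I$. Applying the Stratonovich product rule together with the three defining relations
\[
\delta\pt_t^f(\ell)=-\delta M_t^f(\ell)\,\pt_t^f(\ell),\qquad
\frac{d}{dt}g_\eta(t)^{-1}=g_\eta(t)^{-1}\eta(t,0),
\]
\[
\frac{d}{dt}k_t(\ell)=-\big[\mathrm{Ad}_{\pt_t^f(\ell)^{-1}}\eta(t,y(t))\big]\,k_t(\ell)
\]
produces one martingale term and two finite-variation terms. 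In the last of these the factor $\pt_t^f(\ell)\,\mathrm{Ad}_{\pt_t^f(\ell)^{-1}}\eta(t,y(t))$ collapses to $\eta(t,y(t))\,\pt_t^f(\ell)$, so the two finite-variation contributions combine, via $\eta(t,0)-\eta(t,y(t))=-\bar\eta(t,y(t))$ (Remark \ref{rem.3.4}), into the single term $-g_\eta(t)^{-1}\,\bar\eta(t,y(t))\,\pt_t^f(\ell)\,k_t(\ell)\,g_\eta(a)\,dt$.

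I would then compare with $-\delta M_t^{f_\eta}(\ell)\,P_t$. By Theorem \ref{thm.4.8} (Eq. (\ref{e.4.6})), and because $g_\eta$ is deterministic so that Itô and Stratonovich integration against $dM^f$ coincide, $\delta M_t^{f_\eta}(\ell)=\mathrm{Ad}_{g_\eta(t)^{-1}}\big[\delta M_t^f(\ell)+\bar\eta(t,y(t))\,dt\big]$. Writing $\mathrm{Ad}_{g_\eta(t)^{-1}}(\,\cdot\,)=g_\eta(t)^{-1}(\,\cdot\,)g_\eta(t)$ and cancelling $g_\eta(t)g_\eta(t)^{-1}=I$ against the leading factor of $P_t$ shows that $-\delta M_t^{f_\eta}(\ell)\,P_t$ is exactly the sum of the martingale term and the finite-variation term found above. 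Hence $\delta P_t=-\delta M_t^{f_\eta}(\ell)\,P_t$ with $P_a=I$; uniqueness for this linear Stratonovich SDE gives $P_t=\pt_t^{f_\eta}(\ell)$, and evaluation at $t=b$ (where $k_b(\ell)=\mathbf{k}^\eta(\ell)$) is precisely Eq. (\ref{e.4.8}).

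The one point demanding care, and the main obstacle, is the stochastic bookkeeping: one must confirm that the ordinary Leibniz rule may be applied with no stray Itô cross-variation corrections. This holds precisely because the two auxiliary factors $g_\eta$ and $k_\bullet(\ell)$ are of finite variation ($g_\eta$ deterministic, $k$ absolutely continuous in $t$), so their quadratic covariations with the martingale $M^f$ vanish; this same fact is what allows the Itô decomposition of $M^{f_\eta}$ from Theorem \ref{thm.4.8} to be read directly as its Fisk--Stratonovich differential in the computation above.
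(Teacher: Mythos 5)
Your proof is correct and is essentially the paper's argument run in the opposite direction: the paper shows that $V_x:=\pt_x^{f}\left(\ell\right)^{-1}g_{\eta}\left(x\right)\pt_x^{f_{\eta}}\left(\ell\right)$ solves the pathwise ODE defining $k_x\left(\ell\right)g_{\eta}\left(a\right)$, while you show that $P_t:=g_{\eta}\left(t\right)^{-1}\pt_t^{f}\left(\ell\right)k_t\left(\ell\right)g_{\eta}\left(a\right)$ solves the Stratonovich SDE defining $\pt_t^{f_{\eta}}\left(\ell\right)$. Both rest on exactly the same ingredients --- Eq. (\ref{e.4.6}), the Fisk--Stratonovich product rule, and the finite-variation character of $g_{\eta}$ and $k_{\cdot}\left(\ell\right)$ --- so the computations coincide term by term and either uniqueness statement closes the argument.
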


\begin{proof}
From Definition \ref{def.3.6} and Eq. (\ref{e.4.6}),
\begin{align*}
\delta\left[  g_{\eta}\left(  x\right)  \pt_{x}^{f_{\eta}}\left(  \ell\right)
\right]   &  =-\eta\left(  x,0\right)  g_{\eta}\left(  x\right)
\pt_{x}^{f_{\eta}}\left(  \ell\right)  dx-g_{\eta}\left(  x\right)  \delta
M_{x}^{f_{\eta}}\left(  \ell\right)  \pt_{x}^{f_{\eta}}\left(  \ell\right) \\
&  =-\eta\left(  x,0\right)  dxg_{\eta}\left(  x\right)  \pt_{x}^{f_{\eta}%
}\left(  \ell\right)  -g_{\eta}\left(  x\right)  \mathrm{Ad}_{g_{\eta}\left(
x\right)  ^{-1}}\left[  \delta M_{x}^{f}\left(  \ell\right)  +\bar{\eta
}\left(  x,y\left(  x\right)  \right)  dx\right]  \pt_{x}^{f_{\eta}}\left(
\ell\right) \\
&  =-\left[  \delta M_{x}^{f}\left(  \ell\right)  +\eta\left(  x,y\left(
x\right)  \right)  dx\right]  g_{\eta}\left(  x\right)  \pt_{x}^{f_{\eta}%
}\left(  \ell\right)  .
\end{align*}
As in the proof of Corollary \ref{cor.3.9}, taking the Stratonovich
differential of the identity, $\pt_{x}^{f}\left(  \ell\right)  ^{-1}%
\pt_{x}^{f_{\eta}}\left(  \ell\right)  =I,$ shows
\[
\delta\pt_{x}^{f}\left(  \ell\right)  ^{-1}=\pt_{x}^{f}\left(  \ell\right)
^{-1}\delta M_{x}^{f}\left(  \ell\right)  .
\]

Combining the previous two equations, it follows that $V_{x}:=\pt_{x}%
^{f}\left(  \ell\right)  ^{-1}g_{\eta}\left(  x\right)  \pt_{x}^{f_{\eta}%
}\left(  \ell\right)  $ satisfies,%
\begin{align*}
\delta V_{x}  &  =\pt_{x}^{f}\left(  \ell\right)  ^{-1}\delta M_{x}^{f}\left(
\ell\right)  g_{\eta}\left(  x\right)  \pt_{x}^{f_{\eta}}\left(  \ell\right)
-\pt_{x}^{f}\left(  \ell\right)  ^{-1}\left[  \delta M_{x}^{f}\left(
\ell\right)  +\eta\left(  x,y\left(  x\right)  \right)  dx\right]  g_{\eta
}\left(  x\right)  \pt_{x}^{f_{\eta}}\left(  \ell\right) \\
&  =-\pt_{x}^{f}\left(  \ell\right)  ^{-1}\eta\left(  x,y\left(  x\right)
\right)  g_{\eta}\left(  x\right)  \pt_{x}^{f_{\eta}}\left(  \ell\right)
dx=-\left[  \mathrm{Ad}_{\pt_{x}^{f}\left(  \ell\right)  ^{-1}}\eta\left(
x,y\left(  x\right)  \right)  \right]  V_{x}dx,
\end{align*}
i.e. $V_{x}$ satisfies the same ODE that $k_{x}g_{\eta}\left(  a\right)  $
satisfies. Therefore by the uniqueness of solutions to ODEs, $V_{x}%
=k_{x}g_{\eta}\left(  a\right)  $ and hence%
\[
\pt_{x}^{f_{\eta}}\left(  \ell\right)  =g_{\eta}\left(  x\right)  ^{-1}%
\pt_{x}^{f}\left(  \ell\right)  V_{x}=g_{\eta}\left(  x\right)  ^{-1}%
\pt_{x}^{f}\left(  \ell\right)  k_{x}g_{\eta}\left(  a\right)  .
\]

\end{proof}

\subsection{Proof of Theorem \ref{thm.2.23}\label{sec.4.3}}

For the proof of Theorem \ref{thm.2.23}, we first need to deduce the required
integration by parts formulas from the results in the previous subsection.

\begin{corollary}
\label{cor.4.11}Let $\mathbf{k}^{\eta}\left(  \sigma\right)  $ be as in
Definition \ref{def.4.9} and $g_{\eta}\in K$ be as in Definition
\ref{def.3.6}. If $U:K^{\mathbb{G}}\rightarrow\mathbb{R}$ is a smooth function
which is discrete gauge invariant under the action determined by $g_{\eta}$
(i.e. $u\left(  v\right)  =g_{\eta}\left(  v\right)  $ for all $v\in V\left(
\mathbb{G}\right)  ),$ then%
\begin{equation}
\mathbb{E}\left[  U\left(  \left\{  \pt^{f}\left(  \sigma\right)
\mathbf{k}^{\eta}\left(  \sigma\right)  \right\}  _{\sigma\in\mathbb{G}%
}\right)  \right]  =\mathbb{E}\left[  U\left(  \pt^{f}\left(  \mathbb{G}%
\right)  \right)  \cdot\exp\left(  -\left\langle f,\eta_{y}\right\rangle
-\frac{1}{2}\left\Vert \eta_{y}\right\Vert ^{2}\right)  \right]  .
\label{e.4.9}%
\end{equation}

\end{corollary}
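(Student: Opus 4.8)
The plan is to combine the affine change-of-variables formula (Theorem \ref{thm.4.5}) with the explicit description of perturbed parallel translation (Theorem \ref{thm.4.10}), and then to use the hypothesized discrete gauge invariance to absorb the leftover $g_{\eta}$-conjugation factors. First I would set $\psi(f):=U\left(\pt^{f}(\mathbb{G})\right)=U\left(\{\pt^{f}(\sigma)\}_{\sigma\in\mathbb{G}}\right)$, which is a bounded measurable function of the white noise since $K$ is compact and $U$ is continuous on $K^{\mathbb{G}}$. Recalling from Definition \ref{def.4.7} and Eq. (\ref{e.4.4}) that $f_{\eta}=f^{g_{\eta}}-\mathrm{Ad}_{g_{\eta}^{-1}}\eta_{y}$, I would apply Theorem \ref{thm.4.5} with $g=g_{\eta}$ and $\alpha=\eta_{y}$ (noting $\eta_{y}\in L^{2}$ by Notation \ref{not.3.2}) to obtain
\begin{equation}
\mathbb{E}\left[\psi(f_{\eta})\right]=\mathbb{E}\left[\psi(f)\,e^{-\langle f,\eta_{y}\rangle-\frac{1}{2}\|\eta_{y}\|^{2}}\right]=\mathbb{E}\left[U\left(\pt^{f}(\mathbb{G})\right)e^{-\langle f,\eta_{y}\rangle-\frac{1}{2}\|\eta_{y}\|^{2}}\right],
\end{equation}
which is already exactly the right-hand side of Eq. (\ref{e.4.9}). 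It therefore remains only to identify the left-hand side $\mathbb{E}[\psi(f_{\eta})]=\mathbb{E}\left[U\left(\{\pt^{f_{\eta}}(\sigma)\}_{\sigma\in\mathbb{G}}\right)\right]$ with $\mathbb{E}\left[U\left(\{\pt^{f}(\sigma)\mathbf{k}^{\eta}(\sigma)\}_{\sigma\in\mathbb{G}}\right)\right]$.

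For this second step I would invoke Theorem \ref{thm.4.10}. For each horizontal segment on $[a,b]$, Eq. (\ref{e.4.8}) reads $\pt^{f_{\eta}}(\ell)=g_{\eta}(b)^{-1}\pt^{f}(\ell)\,\mathbf{k}^{\eta}(\ell)\,g_{\eta}(a)$, and for a general tame path $\sigma\in\mathbb{G}$ the same shape of identity holds with $\mathbf{k}^{\eta}(\sigma)$ the corresponding composite correction factor: upon concatenating segments the interior values of $g_{\eta}$ (which depends only on the $x$-coordinate) telescope, leaving only the endpoint factors $g_{\eta}(\sigma_{f})^{-1}$ and $g_{\eta}(\sigma_{i})$. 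Writing $\tilde{\omega}(\sigma):=\pt^{f}(\sigma)\mathbf{k}^{\eta}(\sigma)$ and $u:=g_{\eta}|_{V(\mathbb{G})}$, this says
\begin{equation}
\pt^{f_{\eta}}(\sigma)=g_{\eta}(\sigma_{f})^{-1}\,\tilde{\omega}(\sigma)\,g_{\eta}(\sigma_{i})=\tilde{\omega}^{\,u}(\sigma),
\end{equation}
where the final equality is precisely the definition (Definition \ref{def.2.10}) of the discrete gauge transformation of $\tilde{\omega}$ by $u$.

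Now the hypothesis enters directly: $U$ is assumed discrete gauge invariant under exactly this $u=g_{\eta}|_{V(\mathbb{G})}$, so $U\left(\{\pt^{f_{\eta}}(\sigma)\}\right)=U\left(\tilde{\omega}^{\,u}\right)=U\left(\tilde{\omega}\right)=U\left(\{\pt^{f}(\sigma)\mathbf{k}^{\eta}(\sigma)\}\right)$. Taking expectations and comparing with the display from the first step yields Eq. (\ref{e.4.9}), and no further estimates are needed. The only genuinely delicate point is the passage from horizontal segments to arbitrary tame paths in the application of Theorem \ref{thm.4.10}: one must verify that the interior $g_{\eta}$-factors cancel across each concatenation and that the accumulated correction is the claimed $\mathbf{k}^{\eta}(\sigma)$ (with $\mathbf{k}^{\eta}=I$ on vertical segments, where $\pt^{f}=I$). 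Once this $g_{\eta}$-conjugation is organized as a single discrete gauge transformation at the vertices $V(\mathbb{G})$, the gauge-invariance hypothesis does all the remaining work.
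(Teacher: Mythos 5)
Your proposal is correct and follows essentially the same route as the paper: the paper's proof also combines the discrete gauge invariance under $u=g_{\eta}|_{V\left(  \mathbb{G}\right)  }$ with Theorem \ref{thm.4.10} to identify $U\left(  \left\{  \pt^{f}\left(  \sigma\right)  \mathbf{k}^{\eta}\left(  \sigma\right)  \right\}  \right)$ with $U\left(  \pt^{f_{\eta}}\left(  \mathbb{G}\right)  \right)$, and then invokes Theorem \ref{thm.4.5}; you have merely run the chain of equalities in the opposite direction. Your added remark about verifying that the interior $g_{\eta}$-factors telescope across concatenations of tame segments is a reasonable point of care, but it does not change the argument.
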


\begin{proof}
Using the gauge invariance assumption along with Theorem \ref{thm.4.10} we
find,%
\begin{align*}
\mathbb{E}\left[  U\left(  \left\{  \pt^{f}\left(  \sigma\right)
\mathbf{k}^{\eta}\left(  \sigma\right)  \right\}  _{\sigma\in\mathbb{G}%
}\right)  \right]   &  =\mathbb{E}\left[  U\left(  \left\{  g_{\eta}\left(
\sigma_{f}\right)  ^{-1}\pt^{f}\left(  \sigma\right)  \mathbf{k}^{\eta}\left(
\sigma\right)  g_{\eta}\left(  \sigma_{i}\right)  \right\}  _{\sigma
\in\mathbb{G}}\right)  \right] \\
&  =\mathbb{E}\left[  U\left(  \left\{  \pt^{f^{g_{\eta}}-\mathrm{Ad}%
_{g_{\eta}^{-1}}\eta_{y}}\left(  \sigma\right)  \right\}  _{\sigma
\in\mathbb{G}}\right)  \right]  .
\end{align*}
This equation along with Theorem \ref{thm.4.5} then completes the proof.
\end{proof}

The next corollary is a rigorous version of Theorem \ref{mthm.3.14} above.

\begin{corollary}
[Key IBP\ formula]\label{cor.4.12}Continuing the notation and assumptions of
Corollary \ref{cor.4.11} and further letting%
\begin{equation}
\zeta_{\eta}\left(  \sigma\right)  :=\int_{a_{\sigma}}^{b_{\sigma}}%
\mathrm{Ad}_{\pt_{x}^{f}\left(  \sigma\right)  }\eta\left(  x,y\left(
x\right)  \right)  dx\text{ with }\sigma\left(  x\right)  =\left(  x,y\left(
x\right)  \right)  , \label{e.4.10}%
\end{equation}
we have the integration by parts formula,%
\begin{equation}
\mathbb{E}\left[  \sum_{\sigma\in\mathbb{G}}\left(  \nabla_{\zeta_{\eta
}\left(  \sigma\right)  }^{\sigma}U\right)  \left(  \pt^{f}\left(
\mathbb{G}\right)  \right)  \right]  =\mathbb{E}\left[  U\left(
\pt^{f}\left(  \mathbb{G}\right)  \right)  \cdot\left\langle f,\eta
_{y}\right\rangle \right]  . \label{e.4.11}%
\end{equation}

\end{corollary}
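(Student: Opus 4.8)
The plan is to obtain the integration by parts formula \eqref{e.4.11} by differentiating the exact finite-$s$ identity of Corollary \ref{cor.4.11} at $s=0$. Replacing $\eta$ throughout by $s\eta$ (so that $\eta_y\rightsquigarrow s\eta_y$, $g_\eta\rightsquigarrow g_{s\eta}$, and $\mathbf{k}^\eta\rightsquigarrow\mathbf{k}^{s\eta}$) produces, for every $s$ in a neighborhood of $0$, the identity
\[
\mathbb{E}\left[U\left(\left\{\pt^f(\sigma)\mathbf{k}^{s\eta}(\sigma)\right\}_{\sigma\in\mathbb{G}}\right)\right] = \mathbb{E}\left[U(\pt^f(\mathbb{G}))\cdot\exp\left(-s\langle f,\eta_y\rangle-\tfrac12 s^2\|\eta_y\|^2\right)\right].
\]
Corollary \ref{cor.4.11} is available for each member of the family $\{s\eta\}$ because the discrete gauge invariance of $U$ persists under every $g_{s\eta}$ (it holds for all discrete gauge transformations $u:V(\mathbb{G})\to K$, in particular for $u=g_{s\eta}|_{V(\mathbb{G})}$). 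Both sides are then smooth in $s$, and \eqref{e.4.11} will fall out by applying $\frac{d}{ds}\big|_{0}$ to each side.

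For the left-hand side I would first linearize the defining ODE for $\mathbf{k}^{s\eta}$ in Definition \ref{def.4.9}. Its $s\eta$-version reads $\frac{d}{dx}k_x+s[\mathrm{Ad}_{\pt_x^f(\sigma)^{-1}}\eta]k_x=0$ with $k_0=I$, so $k_x\equiv I$ at $s=0$, and differentiating in $s$ shows that $s\mapsto\mathbf{k}^{s\eta}(\sigma)$ is smooth with $\frac{d}{ds}\big|_0\mathbf{k}^{s\eta}(\sigma)=-\zeta_\eta(\sigma)$, the Lie-algebra element of \eqref{e.4.10}. The chain rule together with Definition \ref{def.2.16} then yields
\[
\frac{d}{ds}\Big|_0 U\left(\left\{\pt^f(\sigma)\mathbf{k}^{s\eta}(\sigma)\right\}_{\sigma\in\mathbb{G}}\right) = -\sum_{\sigma\in\mathbb{G}}\left(\nabla_{\zeta_\eta(\sigma)}^\sigma U\right)(\pt^f(\mathbb{G})),
\]
since the $s$-derivative of the $\sigma$-component is $\pt^f(\sigma)\cdot(-\zeta_\eta(\sigma))$, i.e. the tangent vector produced by right-multiplying $\pt^f(\sigma)$ by $e^{-t\zeta_\eta(\sigma)}$. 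Differentiating the right-hand side is immediate: the quadratic term contributes nothing at $s=0$, leaving $-\mathbb{E}[U(\pt^f(\mathbb{G}))\langle f,\eta_y\rangle]$. Equating the two derivatives and cancelling the common sign gives exactly \eqref{e.4.11}.

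The step requiring genuine care, and the main obstacle, is justifying the interchange of $\frac{d}{ds}$ with $\mathbb{E}[\cdot]$ on both sides. Here compactness of $K$ is decisive: $K^{\mathbb{G}}$ is compact, so the smooth $U$ and its first derivatives are bounded, and since $\mathrm{Ad}$ acts isometrically on $\mathfrak{k}$ the $s$-derivative of $\mathbf{k}^{s\eta}(\sigma)$ is bounded in norm by the deterministic constant $\int_{a_\sigma}^{b_\sigma}|\eta(x,y(x))|\,dx$. Consequently the difference quotients of the left-hand integrand are dominated by a constant and dominated convergence applies. On the right-hand side $\langle f,\eta_y\rangle$ is Gaussian and the weight $\sup_{|s|\le\delta}e^{-s\langle f,\eta_y\rangle-\frac12 s^2\|\eta_y\|^2}$ has all moments, so $U(\pt^f(\mathbb{G}))\langle f,\eta_y\rangle\sup_{|s|\le\delta}e^{(\cdots)}$ is integrable and again dominates the relevant difference quotients. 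With these integrability points settled the differentiation is legitimate and the proof is complete.
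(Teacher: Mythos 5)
Your proposal is correct and follows essentially the same route as the paper: replace $\eta$ by $s\eta$ in the identity of Corollary \ref{cor.4.11}, differentiate at $s=0$, and compute $\frac{d}{ds}|_{0}\mathbf{k}^{s\eta}(\sigma)=-\zeta_{\eta}(\sigma)$ by linearizing the ODE of Definition \ref{def.4.9}, so the chain rule produces the sum of $\nabla^{\sigma}_{\zeta_{\eta}(\sigma)}U$ terms while the Gaussian weight contributes $-\langle f,\eta_{y}\rangle$. Your added justification of the interchange of $\frac{d}{ds}$ and $\mathbb{E}$ (via compactness of $K$ and integrability of the exponential weight) is a point the paper leaves implicit, and is a welcome addition rather than a deviation.
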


\begin{proof}
Let $\mathbf{k}^{s\eta}\left(  \sigma\right)  $ be defined as in Definition
\ref{def.4.9} with $\eta$ replaced by $s\eta$ in which case Eq. (\ref{e.4.9})
reads,%
\[
\mathbb{E}\left[  U\left(  \left\{  \pt^{f}\left(  \sigma\right)
\mathbf{k}^{s\eta}\left(  \sigma\right)  \right\}  _{\sigma\in\mathbb{G}%
}\right)  \right]  =\mathbb{E}\left[  U\left(  \pt^{f}\left(  \mathbb{G}%
\right)  \right)  \cdot\exp\left(  -s\left\langle f,\eta_{y}\right\rangle
-\frac{s^{2}}{2}\left\Vert \eta_{y}\right\Vert ^{2}\right)  \right]
\]
Differentiating this equation with respect to $s$ then gives the integration
by parts formula,%
\begin{equation}
\mathbb{E}\left[  \frac{d}{ds}|_{0}U\left(  \left\{  \pt^{f}\left(
\sigma\right)  \mathbf{k}^{s\eta}\left(  \sigma\right)  \right\}  _{\sigma
\in\mathbb{G}}\right)  \right]  =-\mathbb{E}\left[  U\left(  \pt^{f}\left(
\mathbb{G}\right)  \right)  \cdot\left\langle f,\eta_{y}\right\rangle \right]
. \label{e.4.12}%
\end{equation}
Letting $\kappa_{x}^{\sigma}:=\frac{d}{ds}|_{0}k_{x}^{s\eta}\left(
\sigma\right)  $ we find, by differentiating the ODE,%
\[
\frac{d}{dx}k_{x}^{s\eta}=-s\left(  u_{x}^{-1}\eta\left(  x,y\left(  x\right)
\right)  u_{x}\right)  k_{x}^{s\eta},
\]
for $k_{x}^{s\eta}$ at $s=0$ that
\begin{align*}
\frac{d}{dx}\kappa_{x}^{\sigma}  &  :=\frac{d}{ds}|_{0}\left[  -s\left(
u_{x}^{-1}\eta\left(  x,y\left(  x\right)  \right)  u_{x}\right)  k_{x}%
^{s\eta}\right] \\
&  =-\mathrm{Ad}_{u_{x}^{-1}}\eta\left(  x,y\left(  x\right)  \right)
=-\mathrm{Ad}_{\pt_{x}\left(  \sigma\right)  }\eta\left(  x,y\left(  x\right)
\right)
\end{align*}
and so
\[
\frac{d}{ds}|_{0}\pt^{f}\left(  \sigma\right)  \cdot\mathbf{k}^{s\eta}\left(
\sigma\right)  =-\pt^{f}\left(  \sigma\right)  \cdot\int_{a}^{b}%
\mathrm{Ad}_{\pt_{x}^{f}\left(  \sigma\right)  }\eta\left(  x,y\left(
x\right)  \right)  dx=-\pt^{f}\left(  \sigma\right)  \cdot\zeta_{\eta}\left(
\sigma\right)  .
\]
Therefore%
\[
\frac{d}{ds}|_{0}U\left(  \left\{  \pt^{f}\left(  \sigma\right)
\mathbf{k}^{s\eta}\left(  \sigma\right)  \right\}  _{\sigma\in\mathbb{G}%
}\right)  =-\left(  \tilde{\zeta}_{\eta}U\right)  \left(  \pt^{f}\left(
\mathbb{G}\right)  \right)  =-\sum_{\sigma\in\mathbb{G}}\left(  \nabla
_{\zeta_{\eta}\left(  \sigma\right)  }^{\sigma}U\right)  \left(
\pt^{f}\left(  \mathbb{G}\right)  \right)
\]
which combined with Eq. (\ref{e.4.12}) gives Eq. (\ref{e.4.11}).
\end{proof}

We are now ready to prove Theorem \ref{thm.2.23}.

\begin{proof}
[Proof of Theorem \ref{thm.2.23}]Let $\xi\in\mathfrak{k}:=\operatorname*{Lie}%
\left(  K\right)  $ and $h$ be the function defined in Eq. (\ref{e.3.17}) with
$Q$ sufficiently small. Following the heuristic proof in Subsection
\ref{sec.3.2}, we then find $\zeta_{\eta}\left(  \sigma\right)  =0$ unless
$\sigma=e_{1}$ and for $\sigma=e_{1},$
\[
\zeta_{\eta_{\varepsilon}}\left(  e_{1}\right)  :=\int_{0}^{\infty}h\left(
x,0\right)  \mathrm{Ad}_{\pt_{x}^{f}\left(  e_{1}\right)  }\xi dx=\int
_{0}^{\infty}h\left(  x,0\right)  dx\cdot\xi=\left\vert Q\right\vert \xi.
\]
By Lemma \ref{lem.2.17}, $\nabla_{\xi}^{e_{2}}U$ is still invariant under
discrete gauge transformations, $u:V\left(  \mathbb{G}\right)  \rightarrow K,$
such that $u\left(  0\right)  =I.$ Hence, applying Corollary \ref{cor.4.12}
with $U$ replaced by $\nabla_{\xi}^{e_{2}}U$ shows,%
\begin{align*}
\left\vert Q\right\vert \mathbb{E}\left[  \nabla_{\xi}^{e_{1}}\nabla_{\xi
}^{e_{2}}U\left(  \pt^{f}\left(  \mathbb{G}\right)  \right)  \right]   &
=\mathbb{E}\left[  \tilde{\zeta}_{\eta}\nabla_{\xi}^{e_{2}}U\left(
\pt^{f}\left(  \mathbb{G}\right)  \right)  \right] \\
&  =\mathbb{E}\left[  \left(  \nabla_{\xi}^{e_{2}}U\right)  \left(
\pt^{f}\left(  \mathbb{G}\right)  \right)  \cdot\left\langle f,\left(
\partial_{y}h\right)  \xi\right\rangle \right] \\
&  =\mathbb{E}\left[  \left(  \nabla_{\xi}^{e_{2}}U\right)  \left(
\pt^{f}\left(  \mathbb{G}\right)  \right)  \cdot\left\langle f\left(
RQ\right)  -f\left(  Q\right)  ,\xi\right\rangle _{\mathfrak{k}}\right]
\end{align*}
wherein we have used Eq. (\ref{e.3.16}) for the last equality. Summing this
equation on $\xi\in\beta$ (an orthonormal basis for $\mathfrak{k)}$ then
completes the proof of Eq. (\ref{e.2.8}).
\end{proof}

\section{Loop expansion of parallel translation\label{sec.5}}

As above, $f$ is the $\mathfrak{k}$ -- valued white noise on $\mathbb{R}^{2}.$
Let $u,v:\left[  0,1\right]  \rightarrow\mathbb{R}$ be continuous functions
such that either $0\leq u\left(  t\right)  \leq v\left(  t\right)  $ or $0\geq
u\left(  t\right)  \geq v\left(  t\right)  ,$ $\sigma\left(  t\right)
=\left(  t,u\left(  t\right)  \right)  $ and $\gamma\left(  t\right)  =\left(
t,v\left(  t\right)  \right)  $ be the associated horizontal paths, and
$Q_{t}$ be the region bounded by $y=u\left(  t\right)  ,$ $y=v\left(
t\right)  ,$ $x=0,$ and $x=t,$ see Figure \ref{fig.11}. Further let $M_{t}%
^{f}\left(  \gamma\right)  $ and $M_{t}^{f}\left(  \sigma\right)  $ be the
associated martingales,
\begin{equation}
b_{t}:=M_{t}^{f}\left(  \gamma\right)  -M_{t}^{f}\left(  \sigma\right)
=-\hat{f}\left(  Q_{t}\right)  \label{e.5.1}%
\end{equation}
and
\[
a_{t}^{\gamma}:=\int_{0}^{t}\left\vert v\left(  \tau\right)  \right\vert
d\tau,\text{ }a_{t}^{\sigma}:=\int_{0}^{t}\left\vert u\left(  \tau\right)
\right\vert d\tau,\text{ and }a_{t}:=\int_{0}^{t}\left\vert v\left(
\tau\right)  -u\left(  \tau\right)  \right\vert d\tau=\left\vert
Q_{t}\right\vert .
\]
\begin{figure}[ptbh]
\centering
\par
\psize{2.5in} %
\executeiffilenewer{\GraphicsDirectorysigmagamma_tp.svg}{\GraphicsDirectorysigmagamma_tp.pdf}%
{inkscape -z -D --file=\GraphicsDirectorysigmagamma_tp.svg --export-pdf=\GraphicsDirectorysigmagamma_tp.pdf --export-latex}%
\input{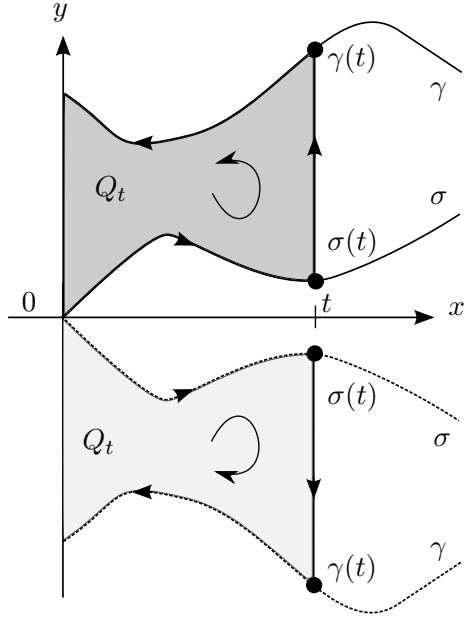}%
\caption{Both scenarios on the
ordering of the functions $u$ and $v$ are depicted in this picture. The reader
should refer to the figures in the upper (lower) half plane when $0\leq u\leq
v$ ($v\leq u\leq0).$}%
\label{fig.11}%
\end{figure}\iffalse%
\begin{figure}
[ptbh]
\begin{center}
\includegraphics[
natheight=6.131800in,
natwidth=4.571300in,
height=2.5008in,
width=1.8672in
]%
{sigmagamm_tp.bmp}%
\end{center}
\end{figure}
\fi

In this section, we will often make use of the Burkholder-Davis-Gundy
inequalities in the form we now describe. Let $V$ be a finite dimensional
inner product space, $n\in\mathbb{N},$ $L\left(  \mathfrak{k}^{n},V\right)  $
be the linear transformations from $\mathfrak{k}^{n}$ to $V,$ $\left\{
\beta_{t}\right\}  _{t\geq0}$ be a $\mathfrak{k}^{n}$-valued continuous square
integrable martingale with with independent increments, and $\left\{
\left\langle \beta\right\rangle _{t}\right\}  _{t\geq0}$ be the quadratic
variation of $\left\{  \beta_{t}\right\}  _{t\geq0}.$ If $\left\{  u_{t}\in
L\left(  \mathfrak{k}^{n},V\right)  \right\}  _{t\geq0}$ is an adapted
continuous process and $\left\{  \alpha_{t}\right\}  _{t\geq0}$ is an
increasing process dominating $\left\langle \beta\right\rangle _{t}$ (i.e.
there exists $c<\infty$ such that $d\left\langle \beta\right\rangle _{t}\leq
cd\alpha_{t}$ for all $t),$ then there exists a constant $C_{p}<\infty$
depending only on $p,$ $c,n,$ and $V,$ such that
\begin{equation}
\left\Vert \int_{0}^{T}w_{\tau}d\beta_{\tau}\right\Vert _{p}\leq C_{p}%
\sqrt{\int_{0}^{T}\left\Vert w_{\tau}\right\Vert _{p}^{2}d\alpha_{\tau}%
}~\forall~0<T<\infty. \label{e.5.3}%
\end{equation}
To prove this estimate we may assume $\int_{0}^{T}\left\Vert w_{\tau
}\right\Vert _{p}^{2}d\alpha_{\tau}<\infty$ since otherwise Eq. (\ref{e.5.3})
is trivial. Under this assumption, $M_{t}=\int_{0}^{t}w_{\tau}d\beta_{\tau}$
is a square integrable $V$-valued martingale satisfying, $\left\langle
M\right\rangle _{T}\leq c\int_{0}^{T}\left\vert w_{\tau}\right\vert
^{2}d\alpha_{\tau}$ where $\left\vert w_{\tau}\right\vert $ is an appropriate
Hilbert-Schmidt norm of $w_{\tau}.$ If we let $M_{T}^{\ast}=\max_{0\leq t\leq
T}\left\vert M_{t}\right\vert ,$ then the Burkholder-Davis-Gundy inequalities
state that $\left\Vert M_{T}^{\ast}\right\Vert _{p}\asymp\left\Vert
\sqrt{\left\langle M\right\rangle _{T}}\right\Vert _{p},$ i.e. there exist
finite constants, $c_{p}$ and $C_{p}$ such that%
\[
c_{p}\left\Vert \sqrt{\left\langle M\right\rangle _{T}}\right\Vert _{p}%
\leq\left\Vert M_{T}^{\ast}\right\Vert _{p}\leq C_{p}\left\Vert \sqrt
{\left\langle M\right\rangle _{T}}\right\Vert _{p}.
\]
From the second inequality we then have,%
\begin{align*}
\left\Vert \int_{0}^{T}w_{\tau}d\beta_{\tau}\right\Vert _{p}\lesssim\left\Vert
\sqrt{\int_{0}^{T}\left\vert w_{\tau}\right\vert ^{2}d\alpha_{\tau}%
}\right\Vert _{p}  &  =\left\Vert \int_{0}^{T}\left\vert w_{\tau}\right\vert
^{2}d\alpha_{\tau}\right\Vert _{p/2}^{1/2}\\
&  \leq\sqrt{\int_{0}^{T}\left\Vert \left\vert w_{\tau}\right\vert
^{2}\right\Vert _{p/2}d\alpha_{\tau}}=\sqrt{\int_{0}^{T}\left\Vert w_{\tau
}\right\Vert _{p}^{2}d\alpha_{\tau}}.
\end{align*}
which proves Eq. (\ref{e.5.3}).

\begin{example}
\label{ex.5.1}Let $\left\{  b_{t}\right\}  _{t\geq0}$ be as in Eq.
(\ref{e.5.1}), $n=1,$ $\beta_{t}=b_{t},$ and $\alpha_{t}=a_{t}=\left\vert
Q_{t}\right\vert $ so that $d\left\langle b\right\rangle _{t}=\dim
\mathfrak{k}\cdot d\alpha_{t}.$ Taking $V=\mathfrak{k}\otimes\mathfrak{k}$ and
$w_{t}=b_{t}\otimes\left(  \cdot\right)  $ in Eq. (\ref{e.5.3}) then gives,
\[
\left\Vert \int_{0}^{T}b_{t}\otimes db_{t}\right\Vert _{p}\lesssim\sqrt
{\int_{0}^{T}\left\Vert b_{\tau}\right\Vert _{p}^{2}d\alpha_{\tau}}%
\lesssim\sqrt{\int_{0}^{T}\alpha_{\tau}d\alpha_{\tau}}=\sqrt{\frac{1}{2}%
\alpha_{T}^{2}}=\sqrt{\frac{1}{2}}a_{T},
\]
wherein we have used $\left\Vert b_{\tau}\right\Vert _{p}^{2}\asymp\left\Vert
b_{\tau}\right\Vert _{2}^{2}=\dim\mathfrak{k}\cdot\alpha_{\tau}$ because
$b_{\tau}$ is a Gaussian random vector. Similarly one shows $\left\Vert
\int_{0}^{T}db_{t}\otimes b_{t}\right\Vert _{p}\lesssim a_{T}.$
\end{example}

Recall that $K$ is assumed to be a matrix Lie group imbedded $\mathbb{C}%
^{D\times D}$ -- the space of $D\times D$-matrices with complex entries. For a
function, $\psi:\mathbb{C}^{D\times D}\rightarrow\mathbb{C},$ let $D^{j}%
\psi=\psi^{\left(  j\right)  }$ denote the $j^{\text{th}}$ differential of
$\psi$ as function on $\mathbb{C}^{D\times D}$ thought of as a real vector
space. Also let $\Delta_{K}$ be the Laplacian on $K,$ i.e.%
\[
\left(  \Delta_{K}\psi\right)  \left(  k\right)  =\sum_{\xi\in\beta}%
\frac{d^{2}}{dt^{2}}|_{0}\psi\left(  ke^{t\xi}\right)
\]
where $\beta$ is any orthonormal basis for $\mathfrak{k.}$

\begin{lemma}
\label{lem.5.2}If $\psi:\mathbb{C}^{D\times D}\rightarrow\mathbb{C}$ is a
smooth function in a neighborhood of $I\in K\subset\mathbb{C}^{D\times D},$
then
\begin{equation}
\left(  \Delta_{K}\psi\right)  \left(  I\right)  =\sum_{\xi\in\beta}\left[
\psi^{\prime\prime}\left(  I\right)  \xi\otimes\xi+\psi^{\prime}\left(
I\right)  \xi^{2}\right]  =\sum_{\xi\in\beta}\psi^{\prime\prime}\left(
I\right)  \xi\otimes\xi+\psi^{\prime}\left(  I\right)  \kappa, \label{e.5.4}%
\end{equation}
where again $\beta$ is any orthonormal basis for $\mathfrak{k}$ and
$\kappa:=\sum_{\xi\in\beta}\xi^{2}$ is the \textbf{Casmir matrix. }[As usual,
it easily verified that $\Delta_{K}$ and the matrix $\kappa$ are independent
of the choice of orthonormal basis of $\mathfrak{k.]}.$
\end{lemma}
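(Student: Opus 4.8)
The plan is to reduce everything to the elementary second-order chain rule for a smooth function on the real vector space $\mathbb{C}^{D\times D}$, applied to the one-parameter subgroup $t\mapsto e^{t\xi}$. Since the two expressions on the right-hand side of Eq.~(\ref{e.5.4}) agree by linearity of $\psi'(I)$ together with the definition $\kappa=\sum_{\xi\in\beta}\xi^2$, it suffices to establish the first equality.

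First I would fix $\xi\in\beta$ and record the Taylor data of the curve $c(t):=e^{t\xi}$, viewed as a path in the real vector space $\mathbb{C}^{D\times D}$. Differentiating the matrix exponential gives $\dot{c}(t)=\xi e^{t\xi}$ and $\ddot{c}(t)=\xi^{2}e^{t\xi}$, so that $c(0)=I$, $\dot{c}(0)=\xi$, and $\ddot{c}(0)=\xi^{2}$. The key point is that the acceleration $\ddot{c}(0)=\xi^{2}$ is generally nonzero, and it is precisely this term that will produce the Casimir correction.

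Next I would apply the chain rule to second order. For $\psi$ smooth near $I$ and any smooth curve $c$,
\[
\frac{d^{2}}{dt^{2}}\psi\left(  c(t)\right)  =\psi''\left(  c(t)\right)
\left[  \dot{c}(t)\otimes\dot{c}(t)\right]  +\psi'\left(  c(t)\right)  \left[
\ddot{c}(t)\right]  ,
\]
where $\psi'(c(t))$ denotes the first differential (a linear functional on $\mathbb{C}^{D\times D}$) and $\psi''(c(t))$ the second differential (a symmetric bilinear form). Evaluating at $t=0$ and inserting the Taylor data above yields
\[
\frac{d^{2}}{dt^{2}}\Big|_{0}\psi\left(  e^{t\xi}\right)  =\psi''\left(
I\right)  \,\xi\otimes\xi+\psi'\left(  I\right)  \,\xi^{2}.
\]

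Finally, summing over the orthonormal basis $\beta$ and invoking the definition of $\Delta_{K}$ gives the first equality in Eq.~(\ref{e.5.4}); pulling the linear functional $\psi'(I)$ outside the sum and recognizing $\sum_{\xi\in\beta}\xi^{2}=\kappa$ gives the second. There is no genuine obstacle here; the only point requiring care is keeping the two contributions conceptually distinct --- the ``Hessian'' term $\psi''(I)\,\xi\otimes\xi$, which would constitute the whole answer in a flat setting, versus the noncommutative correction $\psi'(I)\,\xi^{2}$ arising from the nonzero acceleration of the exponential curve, whose sum over $\beta$ assembles the Casimir matrix $\kappa$. Basis-independence of both sides follows automatically from the definition of $\Delta_{K}$ and the invariance of $\kappa$ already noted in the statement.
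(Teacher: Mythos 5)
Your proposal is correct and follows essentially the same route as the paper: both compute $\frac{d^{2}}{dt^{2}}\big|_{0}\psi\left(e^{t\xi}\right)$ via the second-order chain rule along the curve $t\mapsto e^{t\xi}$, obtaining the Hessian term $\psi''\left(I\right)\xi\otimes\xi$ plus the acceleration term $\psi'\left(I\right)\xi^{2}$, and then sum over the orthonormal basis $\beta$. No gaps.
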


\begin{proof}
For $\xi\in\mathfrak{k}$ we have%
\[
\left(  \tilde{\xi}^{2}\psi\right)  \left(  I\right)  =\frac{d^{2}}{dt^{2}%
}|_{0}\psi\left(  e^{t\xi}\right)  =\frac{d}{dt}|_{0}\left[  \psi^{\prime
}\left(  e^{t\xi}\right)  \xi e^{t\xi}\right]  =\psi^{\prime\prime}\left(
I\right)  \xi\otimes\xi+\psi^{\prime}\left(  I\right)  \xi^{2}.
\]
Summing the above identity on $\xi\in\beta$ gives Eq. (\ref{e.5.4}).
\end{proof}

To simplify notation in the statements and the proofs to follow we will adopt
the following notation.

\begin{notation}
[$O$-notation]\label{not.5.3}If $\left\{  A_{t}\right\}  _{0<t<\delta}$ is a
collection of random variables and $\left(  0,\delta\right)  \ni t\rightarrow
a_{t}\in\left(  0,\infty\right)  $ is a positive function, we write
$A_{t}=O\left(  a_{t}^{3/2}\right)  $ provided; for all $1\leq p<\infty$ there
exists $C=C\left(  p,a,A\right)  <\infty$ such that $\left\Vert A_{t}%
\right\Vert _{p}\leq Ca_{t}$ for $0<t<\delta.$
\end{notation}

\begin{theorem}
[Loop Expansions]\label{thm.5.4}Let $u,$ $v,$ $\sigma,$ $\tau,$ and $Q_{t}$ be
as described at the start of this section, see Figure \ref{fig.11} and let
$\partial Q_{t}$ denote the path traversing the boundary of $Q_{t}$ in the
counter-clockwise (clockwise) direction starting at $0\in\mathbb{R}^{2}$ when
$0\leq u\leq v$ $\left(  v\leq u\leq0\right)  .$ Then
\begin{equation}
\pt^{f}\left(  \partial Q_{t}\right)  =I-\hat{f}\left(  Q_{t}\right)
+\frac{1}{2}\kappa a_{t}+\mathcal{R}_{t}+R_{t} \label{e.5.5}%
\end{equation}
where $\mathcal{R}_{t}$ and $R_{t}$ are matrix valued random variables
satisfying;
\begin{equation}
\mathbb{E}\mathcal{R}_{t}=0,\quad\mathcal{R}_{t}=O_{p}\left(  \sqrt
{a_{t}^{\gamma}a_{t}}\right)  ,\text{ and }R_{t}=O_{p}\left(  a_{t}%
^{3/2}\right)  . \label{e.5.6}%
\end{equation}

\end{theorem}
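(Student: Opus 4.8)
The plan is to realize the loop holonomy as a product of the two horizontal parallel translations and then expand that product by It\^o calculus, the decisive simplification being the $\mathrm{Ad}$-invariance (centrality) of the Casimir matrix $\kappa$. First I would set $g_{t}:=\pt_{t}^{f}\left(  \sigma\right)$ and $h_{t}:=\pt_{t}^{f}\left(  \gamma\right)$. Since parallel translation along a vertical segment is the identity (Definition \ref{def.2.7}), traversing $\partial Q_{t}$ (bottom along $\sigma$, up the right edge, back along $\gamma$, down the left edge) gives
\[
\pt^{f}\left(  \partial Q_{t}\right)  =h_{t}^{-1}g_{t}=:H_{t}.
\]
By Definition \ref{def.2.7} these satisfy the Stratonovich equations $\delta g_{t}=-\delta M_{t}^{f}\left(  \sigma\right)  g_{t}$ and $\delta h_{t}=-\delta M_{t}^{f}\left(  \gamma\right)  h_{t}$ with $g_{0}=h_{0}=I$, and $b_{t}=M_{t}^{f}\left(  \gamma\right)  -M_{t}^{f}\left(  \sigma\right)  =-\hat{f}\left(  Q_{t}\right)$.

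Next I would pass to It\^o form. Writing $M^{f}\left(  \sigma\right)  =\sum_{\xi\in\beta}M^{f,\xi}\left(  \sigma\right)  \xi$ and using that each martingale is a time change of Brownian motion with brackets $\langle M^{f,\xi}\left(  \sigma\right)  ,M^{f,\eta}\left(  \sigma\right)  \rangle_{t}=\delta_{\xi\eta}a_{t}^{\sigma}$, $\langle M^{f,\xi}\left(  \gamma\right)  ,M^{f,\eta}\left(  \gamma\right)  \rangle_{t}=\delta_{\xi\eta}a_{t}^{\gamma}$, and (because $R^{\sigma}\left(  t\right)  \subset R^{\gamma}\left(  t\right)$) the cross bracket $\langle M^{f,\xi}\left(  \gamma\right)  ,M^{f,\eta}\left(  \sigma\right)  \rangle_{t}=\delta_{\xi\eta}a_{t}^{\sigma}$, the standard Stratonovich-to-It\^o conversions give $dg_{t}=-dM_{t}^{f}\left(  \sigma\right)  g_{t}+\tfrac{1}{2}\kappa g_{t}\,da_{t}^{\sigma}$ and $d\left(  h_{t}^{-1}\right)  =h_{t}^{-1}dM_{t}^{f}\left(  \gamma\right)  +\tfrac{1}{2}h_{t}^{-1}\kappa\,da_{t}^{\gamma}$. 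Applying the It\^o product rule to $H_{t}=h_{t}^{-1}g_{t}$ and collecting the three contributions, including the bracket $d\left(  h_{t}^{-1}\right)  \,dg_{t}=-h_{t}^{-1}\kappa g_{t}\,da_{t}^{\sigma}$, I obtain after using $da_{t}^{\gamma}-da_{t}^{\sigma}=da_{t}$ that
\[
dH_{t}=h_{t}^{-1}\left(  db_{t}\right)  g_{t}+\tfrac{1}{2}h_{t}^{-1}\kappa g_{t}\,da_{t}.
\]
Here is the key point: since $\kappa$ is $\mathrm{Ad}$-invariant it commutes with every element of $K$, so $h_{t}^{-1}\kappa g_{t}=\kappa h_{t}^{-1}g_{t}=\kappa H_{t}$, and the drift closes:
\[
dH_{t}=h_{t}^{-1}\left(  db_{t}\right)  g_{t}+\tfrac{1}{2}\kappa H_{t}\,da_{t},\qquad H_{0}=I.
\]

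Finally I would integrate this identity and read off the three pieces. The drift gives $\tfrac{1}{2}\kappa a_{t}+R_{t}$ with $R_{t}:=\tfrac{1}{2}\kappa\int_{0}^{t}\left(  H_{s}-I\right)  \,da_{s}$; a direct application of \eqref{e.5.3} to the integral equation (the martingale part is dominated by $\alpha=a$, and $\Vert g\Vert,\Vert h^{-1}\Vert\leq1$) yields the crude bound $\Vert H_{s}-I\Vert_{p}\lesssim\sqrt{a_{s}}$, whence $\Vert R_{t}\Vert_{p}\lesssim\int_{0}^{t}\sqrt{a_{s}}\,da_{s}\lesssim a_{t}^{3/2}$, i.e. $R_{t}=O_{p}\left(  a_{t}^{3/2}\right)$. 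For the martingale term I would write $h_{s}^{-1}\left(  db_{s}\right)  g_{s}=db_{s}+\left(  h_{s}^{-1}-I\right)  \left(  db_{s}\right)  g_{s}+\left(  db_{s}\right)  \left(  g_{s}-I\right)$, so that $\int_{0}^{t}h_{s}^{-1}\left(  db_{s}\right)  g_{s}=b_{t}+\mathcal{R}_{t}$ with $\mathcal{R}_{t}$ a sum of two It\^o integrals, hence $\mathbb{E}\mathcal{R}_{t}=0$. Using $\Vert g_{s}-I\Vert_{p}\lesssim\sqrt{a_{s}^{\sigma}}$ and $\Vert h_{s}^{-1}-I\Vert_{p}\lesssim\sqrt{a_{s}^{\gamma}}$ (again from \eqref{e.5.3} applied to the defining SDEs) together with \eqref{e.5.3} gives $\Vert\mathcal{R}_{t}\Vert_{p}\lesssim\sqrt{\int_{0}^{t}a_{s}^{\gamma}\,da_{s}}\leq\sqrt{a_{t}^{\gamma}a_{t}}$. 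Since $b_{t}=-\hat{f}\left(  Q_{t}\right)$, this is exactly \eqref{e.5.5}--\eqref{e.5.6}.

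The main obstacle is the drift error: handled naively one would expand $h_{s}^{-1}\kappa g_{s}-\kappa$ and produce a non-centered term of order $\sqrt{a_{t}^{\gamma}}\,a_{t}$, which exceeds the permitted $a_{t}^{3/2}$ whenever $a_{t}^{\gamma}$ dominates $a_{t}$. The centrality of $\kappa$ is precisely what removes this term and replaces it by the harmless centered combination $\tfrac{1}{2}\kappa\left(  H_{s}-I\right)$. The only remaining point needing care is the second scenario $v\leq u\leq0$: there $\partial Q_{t}$ is traversed clockwise and $\hat{f}=-f$ on the lower half plane, but one checks that $da_{t}^{\gamma}-da_{t}^{\sigma}=da_{t}$ still holds and the orientation sign is absorbed into $b_{t}=-\hat{f}\left(  Q_{t}\right)$, so the identical computation applies.
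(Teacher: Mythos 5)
Your argument is correct and is essentially the paper's own proof: both perform a second-order It\^{o} expansion of the loop holonomy $\pt^{f}\left(\partial Q_{t}\right)=\pt^{f}\left(\gamma\right)^{-1}\pt^{f}\left(\sigma\right)$, identify $\frac{1}{2}\kappa a_{t}$ as the quadratic-variation (Casimir) correction using the cross bracket $d\left\langle M^{f}\left(\gamma\right),M^{f}\left(\sigma\right)\right\rangle _{t}=\kappa\,da_{t}^{\sigma}$, and split the remainder into a mean-zero martingale of size $O_{p}\left(\sqrt{a_{t}^{\gamma}a_{t}}\right)$ plus a drift term of size $O_{p}\left(a_{t}^{3/2}\right)$ via the BDG estimate (\ref{e.5.3}). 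The only difference is organizational: the paper first rewrites the loop SDE as $\delta g_{t}=g_{t}\,\delta B_{t}$ with the conjugated noise $B_{t}=\int_{0}^{t}\mathrm{Ad}_{h_{\tau}^{-1}}\delta b_{\tau}$ and then iterates that integral equation, whereas you apply the It\^{o} product rule to the two holonomies directly and invoke the centrality of $\kappa$ (which the paper uses in the equivalent form $\sum_{\xi\in\beta}\left[\mathrm{Ad}_{h_{\tau}^{-1}}\xi\right]^{2}=\kappa$), yielding the slightly cleaner single drift remainder $\frac{1}{2}\kappa\int_{0}^{t}\left(H_{s}-I\right)da_{s}$.
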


\begin{proof}
Let $h_{t}=\pt_{t}^{f}\left(  \sigma\right)  $ and $k_{t}=\pt_{t}^{f}\left(
\gamma\right)  $ be parallel translation along $\sigma$ and $\gamma$
respectively so that (in both scenarios)
\[
g_{t}:=\pt^{f}\left(  \partial Q_{t}\right)  =k_{t}^{-1}h_{t}.
\]
Further let (with $b_{t}$ as in Eq. (\ref{e.5.1}))
\begin{equation}
B_{t}:=\int_{0}^{t}\mathrm{Ad}_{h_{\tau}^{-1}}\delta b_{\tau}\text{ and
}\mathbb{B}_{t}:=\int_{0\leq r\leq s\leq t}dB_{r}\otimes dB_{s}=\int_{0}%
^{t}B_{s}\otimes dB_{s} \label{e.5.7}%
\end{equation}
and using,
\[
\left[  d\mathrm{Ad}_{h_{\tau}^{-1}}\right]  db_{\tau}=\mathrm{Ad}_{h_{\tau
}^{-1}}\mathrm{ad}_{dM_{\tau}^{f}\left(  \sigma\right)  }\left[  dM_{\tau}%
^{f}\left(  \gamma\right)  -dM_{\tau}^{f}\left(  \sigma\right)  \right]  =0,
\]
we note that
\begin{equation}
B_{t}:=\int_{0}^{t}\mathrm{Ad}_{h_{\tau}^{-1}}db_{\tau}\text{ and }\left[
dB_{t}\right]  ^{2}=\sum_{\xi\in\beta}\left[  \mathrm{Ad}_{h_{\tau}^{-1}}%
\xi\right]  ^{2}da_{t}=\kappa da_{t}, \label{e.5.8}%
\end{equation}
where $\left[  dB_{t}\right]  ^{2}$ is used to denote the differential of the
quadratic variation matrix of $B_{\left(  \cdot\right)  }.$ As we have
mentioned before, $\delta k_{t}^{-1}=k_{t}^{-1}\delta M^{f}\left(
\gamma\right)  ,$ and thus%
\begin{equation}
\delta g_{t}=k_{t}^{-1}\delta M_{t}^{f}\left(  \gamma\right)  h_{t}-k_{t}%
^{-1}\delta M_{t}^{f}\left(  \sigma\right)  h_{t}=k_{t}^{-1}\delta b_{t}%
h_{t}=g_{t}\mathrm{Ad}_{h_{t}^{-1}}\delta b_{t}=g_{t}\delta B_{t}.
\label{e.5.9}%
\end{equation}
The integral form of Eq. (\ref{e.5.9}) expressed in It\^{o}'s form is now
given by
\begin{align}
g_{t}  &  =I+\int_{0}^{t}g_{\tau}\delta B_{\tau}=I+\int_{0}^{t}g_{\tau
}dB_{\tau}+\frac{1}{2}\int_{0}^{t}g_{\tau}\left[  dB_{\tau}\right]
^{2}\nonumber\\
&  =I+\int_{0}^{t}g_{\tau}dB_{\tau}+\frac{1}{2}\int_{0}^{t}g_{\tau}\kappa
da_{\tau}. \label{e.5.10}%
\end{align}

Making use of Eq. (\ref{e.5.3}) (with $\beta_{t}=B_{t})$ it follows that
\begin{align}
\left\Vert g_{t}-I\right\Vert _{p}  &  \leq\left\Vert \int_{0}^{t}g_{\tau
}dB_{\tau}\right\Vert _{p}+\frac{1}{2}\left\Vert \int_{0}^{t}g_{\tau}\kappa
da_{\tau}\right\Vert _{p}\nonumber\\
&  \lesssim\left[  \sqrt{a_{t}}+a_{t}\right]  =O_{p}\left(  \sqrt{a_{t}%
}\right)  . \label{e.5.11}%
\end{align}

Feeding the expansion for $g_{\left(  \cdot\right)  }$ in Eq. (\ref{e.5.10})
back into the right side of Eq. (\ref{e.5.10}) gives%
\begin{align*}
g_{t}  &  =I+\int_{0}^{t}\left[  I+\int_{0}^{\tau}g_{s}dB_{s}+\frac{1}{2}%
\int_{0}^{\tau}g_{s}\kappa da_{s}\right]  dB_{\tau}\\
&  +\frac{1}{2}\int_{0}^{t}\left[  I+\int_{0}^{\tau}g_{s}dB_{s}+\frac{1}%
{2}\int_{0}^{\tau}g_{s}\kappa da_{s}\right]  \kappa da_{\tau}.
\end{align*}
This identity may be rewritten as
\[
g_{t}=I+B_{t}+\frac{1}{2}\kappa a_{t}+\mathcal{R}_{t}^{\prime}+R_{t}%
\]
where
\begin{align*}
\mathcal{R}_{t}^{\prime}  &  =\int_{0}^{t}\left[  \int_{0}^{\tau}g_{s}%
dB_{s}\right]  dB_{\tau}\text{ and }\\
R_{t}  &  =\frac{1}{2}\int_{0}^{t}\left[  \int_{0}^{\tau}g_{s}\kappa
da_{s}\right]  dB_{\tau}+\int_{0}^{t}\left[  \int_{0}^{\tau}g_{s}dB_{s}%
+\frac{1}{2}\int_{0}^{\tau}g_{s}\kappa da_{s}\right]  \kappa da_{\tau}.
\end{align*}

Using basic estimates along with Eq. (\ref{e.5.3}) one easily shows
\begin{align*}
\left\Vert R_{t}\right\Vert _{p}  &  \lesssim a_{t}^{3/2}+a_{t}^{3/2}%
+a_{t}^{2}=O_{p}\left(  a_{t}^{3/2}\right)  ,\\
\mathbb{E}\mathcal{R}_{t}^{\prime}  &  =0,\text{ and }\left\Vert
\mathcal{R}_{t}^{\prime}\right\Vert _{p}\lesssim\sqrt{\int_{0}^{T}\left\Vert
\int_{0}^{\tau}g_{s}dB_{s}\right\Vert _{p}^{2}da_{\tau}}=\sqrt{\int_{0}%
^{T}a_{\tau}da_{\tau}}=\frac{a_{\tau}}{\sqrt{2}}.
\end{align*}
Similarly
\[
B_{t}=\int_{0}^{t}\mathrm{Ad}_{h_{\tau}^{-1}}db_{\tau}=\int_{0}^{t}\left[
I+\int_{0}^{\tau}\mathrm{Ad}_{h_{\tau}^{-1}}\mathrm{ad}_{\delta M_{\tau}%
^{f}\left(  \sigma\right)  }\right]  db_{\tau}=b_{t}+\mathcal{R}_{t}%
^{\prime\prime}%
\]
where
\[
\mathcal{R}_{t}^{\prime\prime}:=\int_{0}^{t}\int_{0}^{\tau}\mathrm{Ad}%
_{h_{\tau}^{-1}}\mathrm{ad}_{\delta M_{\tau}^{f}\left(  \sigma\right)
}db_{\tau}%
\]
satisfies%
\[
\mathbb{E}\mathcal{R}_{t}^{\prime\prime}=0\text{ and }\mathcal{R}_{t}%
^{\prime\prime}=O\left(  \sqrt{a_{t}^{\sigma}\cdot a_{t}}\right)  .
\]
Thus we have shown
\[
g_{t}=I+b_{t}+\frac{1}{2}\kappa a_{t}+\mathcal{R}_{t}+R_{t}%
\]
where $R_{t}=O\left(  a_{t}^{3/2}\right)  ,$ $\mathcal{R}_{t}=\mathcal{R}%
_{t}^{\prime}+\mathcal{R}_{t}^{\prime\prime},$ and
\[
\left\Vert \mathcal{R}_{t}\right\Vert _{p}\leq\left\Vert \mathcal{R}%
_{t}^{\prime}\right\Vert _{p}+\left\Vert \mathcal{R}_{t}^{\prime\prime
}\right\Vert _{p}=O_{p}\left(  a_{t}\right)  +O\left(  \sqrt{a_{t}^{\sigma
}\cdot a_{t}}\right)  =O\left(  \sqrt{a_{t}^{\gamma}a_{t}}\right)  .
\]

This completes the proof since $b_{t}=-\hat{f}\left(  Q_{t}\right)  .$
\end{proof}

Let $\tilde{K}\subset\mathbb{C}^{D\times D}$ be defined by,%
\[
\tilde{K}:=\left\{  I+s\left(  k-I\right)  :0\leq s\leq1\text{ and }k\in
K\right\}  ,
\]
and note that $K\subset\tilde{K}$ and $\tilde{K}$ is compact in $\mathbb{C}%
^{D\times D}.$

\begin{proposition}
\label{pro.5.5}We continue the setup in Theorem \ref{thm.5.4} and further
suppose that $\psi:\mathbb{C}^{D\times D}\rightarrow\mathbb{C}$ is a random
function taking values in $C^{3}\left(  \mathbb{C}^{D\times D},\mathbb{C}%
\right)  .$ If there exists a (non-random) constant, $C<\infty,$ such that
\[
\left\vert D^{j}\psi\left(  k\right)  \right\vert \leq C\text{ }\forall
~k\in\tilde{K},\text{ and }0\leq j\leq3,
\]
then
\begin{equation}
\psi\left(  g_{t}\right)  =\psi\left(  I\right)  -\left(  \nabla_{\hat
{f}\left(  Q_{t}\right)  }\psi\right)  \left(  I\right)  +a_{t}\frac{1}%
{2}\left(  \Delta_{K}\psi\right)  \left(  I\right)  +\psi^{\prime}\left(
I\right)  \mathcal{R}_{t}+\frac{1}{2}\psi^{\prime\prime}\left(  I\right)
\int_{0}^{t}b_{s}\vee db_{s}+O\left(  a_{t}^{3/2}\right)  , \label{e.5.12}%
\end{equation}
where, for $a,b\in\mathfrak{k\subset\mathbb{C}}^{D\times D}\mathfrak{,}$
$a\vee b:=a\otimes b+b\otimes a$ is the symmetrization of $a\otimes b.$
\end{proposition}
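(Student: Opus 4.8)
The plan is to Taylor-expand $\psi$ about $I\in K$ to second order and then feed in the loop expansion of $g_t=\pt^f(\partial Q_t)$ from Theorem~\ref{thm.5.4}, matching the surviving terms against the right-hand side of Eq.~(\ref{e.5.12}). Write $\Delta_t:=g_t-I=b_t+\tfrac12\kappa a_t+\mathcal{R}_t+R_t$, recalling $b_t=-\hat{f}(Q_t)$. Taylor's theorem (with the derivatives taken on the real vector space $\mathbb{C}^{D\times D}$) gives
\[
\psi(g_t)=\psi(I)+\psi'(I)\Delta_t+\tfrac12\psi''(I)[\Delta_t\otimes\Delta_t]+E_t .
\]
The reason for introducing the compact set $\tilde{K}$ is exactly that the whole segment $\{I+s(g_t-I):0\le s\le1\}$ lies in $\tilde{K}$, on which $|D^3\psi|\le C$; hence $|E_t|\le\tfrac{C}{6}|\Delta_t|^3$ pointwise, and by the a priori estimate $\|g_t-I\|_p=O_p(\sqrt{a_t})$ of Eq.~(\ref{e.5.11}) together with H\"older, $\|E_t\|_p\lesssim\|\Delta_t\|_{3p}^3=O(a_t^{3/2})$.

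For the first-order term I would split
\[
\psi'(I)\Delta_t=\psi'(I)b_t+\tfrac12 a_t\,\psi'(I)\kappa+\psi'(I)\mathcal{R}_t+\psi'(I)R_t .
\]
Here $\psi'(I)R_t=O(a_t^{3/2})$ because $R_t=O_p(a_t^{3/2})$ and $|\psi'(I)|\le C$; the term $\psi'(I)\mathcal{R}_t$ is kept verbatim, since $\mathcal{R}_t$ is only of size $\sqrt{a_t^\gamma a_t}$ and need not be $O(a_t^{3/2})$. Because $(\nabla_\xi\psi)(I)=\psi'(I)\xi$ for all $\xi\in\mathfrak{k}$, the identity $b_t=-\hat{f}(Q_t)$ rewrites $\psi'(I)b_t=-(\nabla_{\hat{f}(Q_t)}\psi)(I)$, while the piece $\tfrac12 a_t\,\psi'(I)\kappa$ is held back for recombination.

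The second-order term produces the two remaining named contributions. Every summand of $\tfrac12\psi''(I)[\Delta_t\otimes\Delta_t]$ other than $\tfrac12\psi''(I)[b_t\otimes b_t]$ is error (the cross terms with $\tfrac12\kappa a_t$ and with $R_t$ are $O(a_t^{3/2})$ by H\"older and the bounds of Theorem~\ref{thm.5.4}), and $b_t\otimes b_t$ is handled by the tensor It\^o formula: since the quadratic covariation satisfies $d[b\otimes b]_t=\big(\sum_{\xi\in\beta}\xi\otimes\xi\big)\,da_t$, one obtains $b_t\otimes b_t=\int_0^t b_s\vee db_s+a_t\sum_{\xi\in\beta}\xi\otimes\xi$, whence
\[
\tfrac12\psi''(I)[b_t\otimes b_t]=\tfrac12\psi''(I)\!\int_0^t b_s\vee db_s+\tfrac12 a_t\sum_{\xi\in\beta}\psi''(I)[\xi\otimes\xi].
\]
Adding the stray $\tfrac12 a_t\,\psi'(I)\kappa$ from the first-order term and applying Lemma~\ref{lem.5.2} collapses the two $a_t$-terms into $\tfrac12 a_t(\Delta_K\psi)(I)$, which is exactly Eq.~(\ref{e.5.12}).

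The hard part will be the quadratic cross terms carrying $\mathcal{R}_t$, namely $b_t\otimes\mathcal{R}_t$ and $\mathcal{R}_t\otimes\mathcal{R}_t$. H\"older together with Theorem~\ref{thm.5.4} only gives $\|b_t\otimes\mathcal{R}_t\|_p\lesssim\sqrt{a_t}\,\sqrt{a_t^\gamma a_t}=a_t\sqrt{a_t^\gamma}$, which is $O(a_t^{3/2})$ precisely when the geometric comparison $a_t^\gamma\lesssim a_t$ (equivalently $a_t^\sigma\lesssim a_t$, i.e. the slenderness/shadow hypothesis that holds for a transversal crossing as $t\downarrow0$) is available. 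I expect this to be the only estimate demanding genuine care; the decomposition $\mathcal{R}_t=\mathcal{R}_t'+\mathcal{R}_t''$ from the proof of Theorem~\ref{thm.5.4}, with $\mathcal{R}_t'=O_p(a_t)$ and $\mathcal{R}_t''=O_p(\sqrt{a_t^\sigma a_t})$, pins the difficulty on the factor $\mathcal{R}_t''$, and it is there that the comparison of $a_t^\sigma$ with $a_t$ must be invoked.
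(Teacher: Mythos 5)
Your proposal is correct and follows essentially the same route as the paper's proof: Taylor expansion of $\psi$ about $I$ to second order with the uniform bound on $D^{3}\psi$ over $\tilde{K}$ controlling the remainder via $\left\Vert g_{t}-I\right\Vert _{p}=O_{p}\left(  \sqrt{a_{t}}\right)  ,$ discarding $R_{t}$ and the quadratic terms other than $b_{t}\otimes b_{t}$ into $O\left(  a_{t}^{3/2}\right)  ,$ and recombining $\frac{1}{2}a_{t}\psi^{\prime}\left(  I\right)  \kappa$ with the It\^{o} correction of $b_{t}\otimes b_{t}$ via Lemma \ref{lem.5.2}. Your flag on the cross term $b_{t}\otimes\mathcal{R}_{t}$ is well taken: the paper simply asserts $\delta_{t}\otimes\delta_{t}-b_{t}\otimes b_{t}=O\left(  a_{t}^{3/2}\right)  ,$ which does tacitly use $a_{t}^{\gamma}\lesssim a_{t}$ (i.e. Assumption \ref{ass.2}, in force for the rest of the paper) exactly as you identify.
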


\begin{proof}
Let $g_{t}:=\pt^{f}\left(  \partial Q_{t}\right)  $ and $b_{t}=-\hat{f}\left(
Q_{t}\right)  $ (as above) and define%
\[
\delta_{t}:=g_{t}-I=b_{t}+\frac{1}{2}\kappa a_{t}+\mathcal{R}_{t}+R_{t},
\]
where $\mathcal{R}_{t}$ and $R_{t}$ are as in Theorem \ref{thm.5.4}. Notice
that $I+s\delta_{t}\in\tilde{K}$ for all $t\geq0$ and $0\leq s\leq1.$

By Taylor's theorem with integral remainder we have%
\begin{equation}
\psi\left(  g_{t}\right)  =\psi\left(  I\right)  +\psi^{\prime}\left(
I\right)  \delta_{t}+\int_{0}^{1}\psi^{\prime\prime}\left(  I+s\delta
_{t}\right)  \left[  \delta_{t}\otimes\delta_{t}\right]  \left(  1-s\right)
ds. \label{e.5.13}%
\end{equation}
Since $\psi^{\left(  3\right)  }$ is bounded on $\tilde{K},$
\[
\left\vert \psi^{\prime\prime}\left(  I+s\delta_{t}\right)  -\psi
^{\prime\prime}\left(  I\right)  \right\vert \lesssim\left\vert \delta
_{t}\right\vert ,
\]
which along with the estimate in Eq. (\ref{e.5.11}) and H\"{o}lder's
inequality shows%
\[
\left\Vert \int_{0}^{1}\left(  \psi^{\prime\prime}\left(  I+s\delta
_{t}\right)  -\psi^{\prime\prime}\left(  I\right)  \right)  \left[  \delta
_{t}\otimes\delta_{t}\right]  \left(  1-s\right)  ds\right\Vert _{p}%
\lesssim\left\Vert \left\vert \delta_{t}\right\vert ^{3}\right\Vert
_{p}=O\left(  a_{t}^{3/2}\right)  .
\]
Combining these estimates with Eq. (\ref{e.5.13}) implies%
\begin{equation}
\psi\left(  g_{t}\right)  =\psi\left(  I\right)  +\psi^{\prime}\left(
I\right)  \delta_{t}+\frac{1}{2}\psi^{\prime\prime}\left(  I\right)  \left[
\delta_{t}\otimes\delta_{t}\right]  +O\left(  a_{t}^{3/2}\right)  ,
\label{e.5.14}%
\end{equation}
Using $R_{t}=O\left(  a_{t}^{3/2}\right)  $ and $\delta_{t}\otimes\delta
_{t}-b_{t}\otimes b_{t}=O\left(  a_{t}^{3/2}\right)  $ in Eq. (\ref{e.5.14})
then shows%
\begin{align*}
\psi\left(  g_{t}\right)   &  =\psi\left(  I\right)  +\psi^{\prime}\left(
I\right)  \left[  b_{t}+\frac{1}{2}\kappa a_{t}+\mathcal{R}_{t}\right]
+\frac{1}{2}\psi^{\prime\prime}\left(  I\right)  \left[  b_{t}\otimes
b_{t}\right]  +O\left(  a_{t}^{3/2}\right) \\
&  =\psi\left(  I\right)  -\left(  \nabla_{\hat{f}\left(  Q_{t}\right)  }%
\psi\right)  +\frac{1}{2}\left[  a_{t}\psi^{\prime}\left(  I\right)
\kappa+\psi^{\prime\prime}\left(  I\right)  \left[  b_{t}\otimes b_{t}\right]
\right]  +\psi^{\prime}\left(  I\right)  \mathcal{R}_{t}+O\left(  a_{t}%
^{3/2}\right)  .
\end{align*}

By It\^{o}'s formula,%
\[
b_{t}\otimes b_{t}=\int_{0}^{t}b_{s}\vee db_{s}+a_{t}\cdot\sum_{\xi\in\beta
}\xi\otimes\xi,
\]
and so by Lemma \ref{lem.5.2},
\[
a_{t}\psi^{\prime}\left(  I\right)  \kappa+\psi^{\prime\prime}\left(
I\right)  \left[  b_{t}\otimes b_{t}\right]  =a_{t}\left(  \Delta_{K}%
\psi\right)  \left(  I\right)  +\psi^{\prime\prime}\left(  I\right)  \int
_{0}^{t}b_{s}\vee db_{s}.
\]
Combining these identities and estimates gives Eq. (\ref{e.5.12}).
\end{proof}

\begin{lemma}
\label{lem.5.6}If $u$ is a $C^{2}$-function defined in a neighborhood of $I\in
K$ and $\tilde{\psi}\left(  k\right)  :=\psi\left(  k^{-1}\right)  ,$ then
$\nabla\tilde{\psi}\left(  I\right)  =-\nabla\psi\left(  I\right)  $ and
$\left(  \Delta_{K}\tilde{\psi}\right)  \left(  I\right)  =\left(  \Delta
_{K}\psi\right)  \left(  I\right)  .$
\end{lemma}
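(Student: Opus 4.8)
The plan is to reduce everything to the behaviour of $\psi$ along the one-parameter subgroups $t\mapsto e^{t\xi}$, on which inversion acts simply by reversing time. Recall that at the identity the left- and right-invariant directional derivatives coincide, so $\nabla_{\xi}\psi\left(  I\right)  =\frac{d}{dt}\big|_{0}\psi\left(  e^{t\xi}\right)  $ is unambiguous, and that by the very definition of $\Delta_{K}$ we have $\left(  \Delta_{K}\psi\right)  \left(  I\right)  =\sum_{\xi\in\beta}\frac{d^{2}}{dt^{2}}\big|_{0}\psi\left(  e^{t\xi}\right)  $ for any orthonormal basis $\beta$ of $\mathfrak{k}.$ Since $\tilde{\psi}\left(  k\right)  =\psi\left(  k^{-1}\right)  $ and $\left(  e^{t\xi}\right)  ^{-1}=e^{-t\xi},$ the whole computation rests on the single identity $\tilde{\psi}\left(  e^{t\xi}\right)  =\psi\left(  e^{-t\xi}\right)  .$

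First I would establish the gradient claim. Fixing $\xi\in\mathfrak{k}$ and differentiating this identity once, then substituting $s=-t,$ gives
\[
\nabla_{\xi}\tilde{\psi}\left(  I\right)  =\frac{d}{dt}\Big|_{0}\psi\left(  e^{-t\xi}\right)  =-\frac{d}{ds}\Big|_{0}\psi\left(  e^{s\xi}\right)  =-\nabla_{\xi}\psi\left(  I\right)  .
\]
As $\xi$ is arbitrary this is exactly $\nabla\tilde{\psi}\left(  I\right)  =-\nabla\psi\left(  I\right)  .$

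Next I would treat the Laplacian. Differentiating $\tilde{\psi}\left(  e^{t\xi}\right)  =\psi\left(  e^{-t\xi}\right)  $ twice and again substituting $s=-t$ (so that the two sign changes now cancel) yields $\frac{d^{2}}{dt^{2}}\big|_{0}\psi\left(  e^{-t\xi}\right)  =\frac{d^{2}}{ds^{2}}\big|_{0}\psi\left(  e^{s\xi}\right)  $ for every $\xi.$ Summing over an orthonormal basis $\beta$ then gives
\[
\left(  \Delta_{K}\tilde{\psi}\right)  \left(  I\right)  =\sum_{\xi\in\beta}\frac{d^{2}}{dt^{2}}\Big|_{0}\psi\left(  e^{-t\xi}\right)  =\sum_{\xi\in\beta}\frac{d^{2}}{ds^{2}}\Big|_{0}\psi\left(  e^{s\xi}\right)  =\left(  \Delta_{K}\psi\right)  \left(  I\right)  ,
\]
which is the second assertion.

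There is no real obstacle here; the only subtlety is that the second derivative is insensitive to the reparametrization $t\mapsto-t$ while the first derivative is not, and this is precisely the mechanism producing the sign in the gradient and its absence in the Laplacian. As a cross-check one could instead expand $\tilde{\psi}=\psi\circ\iota$ by the chain rule on $\mathbb{C}^{D\times D},$ using $D\iota\left(  I\right)  \left[  a\right]  =-a$ and $D^{2}\iota\left(  I\right)  \left[  a,b\right]  =ab+ba$ for $\iota\left(  k\right)  =k^{-1},$ and then invoke Lemma \ref{lem.5.2}: one finds $\tilde{\psi}^{\prime}\left(  I\right)  =-\psi^{\prime}\left(  I\right)  $ and $\tilde{\psi}^{\prime\prime}\left(  I\right)  \left[  \xi,\xi\right]  =\psi^{\prime\prime}\left(  I\right)  \left[  \xi,\xi\right]  +2\psi^{\prime}\left(  I\right)  \xi^{2},$ so that the extra term from the second differential of $\iota$ combines with the sign flip of $\tilde{\psi}^{\prime}\left(  I\right)  $ to leave $\sum_{\xi\in\beta}\left[  \psi^{\prime\prime}\left(  I\right)  \xi\otimes\xi+\psi^{\prime}\left(  I\right)  \xi^{2}\right]  =\left(  \Delta_{K}\psi\right)  \left(  I\right)  $ unchanged, recovering the same conclusion.
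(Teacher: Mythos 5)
Your argument is correct and complete; the paper itself omits the proof (``left to the reader''), and the reduction to the one-parameter subgroups $t\mapsto e^{t\xi}$, where inversion is just $t\mapsto -t$, is exactly the intended elementary argument: the first derivative is odd under this reparametrization while the second is even, which produces the sign in $\nabla\tilde{\psi}\left(I\right)=-\nabla\psi\left(I\right)$ and its absence in $\left(\Delta_{K}\tilde{\psi}\right)\left(I\right)=\left(\Delta_{K}\psi\right)\left(I\right)$. Your cross-check via the chain rule on $\mathbb{C}^{D\times D}$ and Lemma \ref{lem.5.2} is also correct and fits well with how that lemma is used elsewhere in Section \ref{sec.5}.
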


\begin{proof}
The elementary proof is left to the reader.
\end{proof}

\begin{notation}
\label{not.5.7}Let $\mathcal{R}_{\varepsilon}^{\pm}=\mathcal{R}_{\varepsilon}$
and $b_{t}^{\pm}=b_{t}$ be as in Theorem \ref{thm.5.4} and Eq. (\ref{e.5.1})
respectively when $Q_{t}$ is in the upper/lower half plane. [In what follows
the lower half plane region will be $RQ_{t}$ where $Q_{t}$ is the region in
the upper half plane.]
\end{notation}

\begin{corollary}
\label{cor.5.8}Let $\mathbb{G},$ $\varepsilon>0,$ $\mathbb{G}_{\varepsilon
,\pm},$ and $U:K^{\mathbb{G}}\rightarrow\mathbb{C}$ be as in Theorem
\ref{thm.2.27} and further assume that $U$ has been extended (arbitrarily) to
a smooth function on $\left[  \mathbb{C}^{D\times D}\right]  ^{\mathbb{G}}.$
Then
\begin{equation}
U\left(  \pt^{f}\left(  \mathbb{G}_{+,\varepsilon}\right)  \right)  -U\left(
\pt^{f}\left(  \mathbb{G}_{-,\varepsilon}\right)  \right)  =-\left(
\nabla_{_{f\left(  Q_{\varepsilon}\right)  }}^{e_{2}}U+\nabla_{_{f\left(
RQ_{\varepsilon}\right)  }}^{e_{4}}U\right)  \left(  \pt^{f}\left(
\mathbb{G}\right)  \right)  +E_{\varepsilon}^{+}-E_{\varepsilon}^{-}+O\left(
a_{\varepsilon}^{3/2}\right)  \label{e.5.15}%
\end{equation}
where%
\begin{align}
E_{\varepsilon}^{+}  &  :=\left(  D^{e_{2}}U\right)  \left(  \pt^{f}\left(
\mathbb{G}\right)  \right)  \mathcal{R}_{\varepsilon}^{+}+\frac{1}{2}\left(
D^{e_{2}}D^{e_{2}}U\right)  \left(  \pt^{f}\left(  \mathbb{G}\right)  \right)
\int_{0}^{\varepsilon}b_{s}^{+}\vee db_{s}^{+}\text{ and}\label{e.5.16}\\
E_{\varepsilon}^{-}  &  :=\left(  D^{e_{4}}U\right)  \left(  \pt^{f}\left(
\mathbb{G}\right)  \right)  \mathcal{R}_{\varepsilon}^{-}+\frac{1}{2}\left(
D^{e_{4}}D^{e_{4}}U\right)  \left(  \pt^{f}\left(  \mathbb{G}\right)  \right)
\int_{0}^{\varepsilon}b_{s}^{-}\vee db_{s}^{-}. \label{e.5.17}%
\end{align}

\end{corollary}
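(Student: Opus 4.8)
The plan is to deduce (\ref{e.5.15}) by applying the loop expansion of Proposition \ref{pro.5.5} separately to the two deformations and then subtracting, mirroring the heuristic computation in Corollary \ref{cor.3.17}. First I would freeze $\omega(\sigma)=\pt^{f}(\sigma)$ for every $\sigma\in\mathbb{G}\setminus\{e_{2},e_{4}\}$ and, using the arbitrary smooth extension of $U$ to $[\mathbb{C}^{D\times D}]^{\mathbb{G}}$, introduce the random functions $\psi_{+},\psi_{-}\in C^{3}(\mathbb{C}^{D\times D},\mathbb{C})$ by
\[
\psi_{+}(k):=U(\dots,\omega(e_{2})=k,\ \omega(e_{4})=I,\dots),\qquad \psi_{-}(k):=U(\dots,\omega(e_{2})=I,\ \omega(e_{4})=k,\dots).
\]
Because $e_{2}$ and $e_{4}$ lie on the $y$-axis, Definition \ref{def.2.7} gives $\pt^{f}(e_{2})=\pt^{f}(e_{4})=I$, so $\psi_{+}(I)=\psi_{-}(I)=U(\pt^{f}(\mathbb{G}))$, and the extension guarantees that the $\psi_{\pm}$ have all derivatives bounded on the compact set $\tilde{K}$, which is exactly the hypothesis required by Proposition \ref{pro.5.5}.

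The second step is to identify the two deformed holonomies as closed-loop holonomies. Since parallel translation along the vertical closing segment of $e_{2}^{\varepsilon}$ is trivial and $\pt^{f}(e_{2})=I$, replacing $e_{2}$ by $e_{2}^{\varepsilon}$ alters only the $e_{2}$-slot, changing it from $I$ to $g_{\varepsilon}^{+}:=\pt^{f}(\partial Q_{\varepsilon})$; likewise replacing $e_{4}$ by $e_{4}^{\varepsilon}$ changes the $e_{4}$-slot from $I$ to $g_{\varepsilon}^{-}:=\pt^{f}(\partial RQ_{\varepsilon})$, traversed clockwise as in Theorem \ref{thm.5.4}. Hence $U(\pt^{f}(\mathbb{G}_{+,\varepsilon}))=\psi_{+}(g_{\varepsilon}^{+})$ and $U(\pt^{f}(\mathbb{G}_{-,\varepsilon}))=\psi_{-}(g_{\varepsilon}^{-})$, and I would apply Proposition \ref{pro.5.5} to each, using $a_{\varepsilon}^{+}=|Q_{\varepsilon}|=|RQ_{\varepsilon}|=a_{\varepsilon}^{-}$ (as $R$ is area preserving) together with $\hat{f}(Q_{\varepsilon})=f(Q_{\varepsilon})$ and $\hat{f}(RQ_{\varepsilon})=-f(RQ_{\varepsilon})$, since $Q_{\varepsilon}\subset H_{+}$ and $RQ_{\varepsilon}\subset H_{-}$.

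Subtracting the two expansions, the constant terms $\psi_{\pm}(I)$ cancel, and the first-order terms give
\[
-\nabla_{\hat{f}(Q_{\varepsilon})}\psi_{+}(I)+\nabla_{\hat{f}(RQ_{\varepsilon})}\psi_{-}(I)=-\nabla^{e_{2}}_{f(Q_{\varepsilon})}U-\nabla^{e_{4}}_{f(RQ_{\varepsilon})}U,
\]
the asserted main term, after inserting the two sign conventions above and the identifications $\nabla_{\cdot}\psi_{+}(I)=\nabla^{e_{2}}_{\cdot}U$, $\nabla_{\cdot}\psi_{-}(I)=\nabla^{e_{4}}_{\cdot}U$. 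The two surviving stochastic contributions $\psi_{\pm}'(I)\mathcal{R}_{\varepsilon}^{\pm}+\frac{1}{2}\psi_{\pm}''(I)\int_{0}^{\varepsilon}b_{s}^{\pm}\vee db_{s}^{\pm}$ are precisely $E_{\varepsilon}^{+}$ and $E_{\varepsilon}^{-}$ of (\ref{e.5.16}) and (\ref{e.5.17}) once we write $\psi_{+}'(I)=D^{e_{2}}U$, $\psi_{+}''(I)=D^{e_{2}}D^{e_{2}}U$ and likewise for $e_{4}$, and the Taylor remainder is $O(a_{\varepsilon}^{3/2})$ by the $L^{p}$ bounds already built into Proposition \ref{pro.5.5}. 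The only term that must be made to disappear is the Casimir drift $\frac{1}{2}a_{\varepsilon}(\Delta_{K}\psi_{+}(I)-\Delta_{K}\psi_{-}(I))$; here Lemma \ref{lem.5.6} is decisive, since under the gauge structure at the crossing $\psi_{-}$ is the inversion of $\psi_{+}$, whence $\Delta_{K}\psi_{-}(I)=\Delta_{K}\psi_{+}(I)$ and this order-$a_{\varepsilon}$ drift cancels in the difference.

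I expect the decisive difficulty to be exactly this cancellation of the Casimir drift. The gradient and noise terms drop out of Proposition \ref{pro.5.5} by a mechanical substitution, but the drift is of order $a_{\varepsilon}$, strictly larger than the claimed remainder $O(a_{\varepsilon}^{3/2})$, so it can only be absorbed by genuinely identifying $\psi_{-}$ with the inversion of $\psi_{+}$ at the crossing and invoking the invariance $\Delta_{K}\tilde{\psi}=\Delta_{K}\psi$ of Lemma \ref{lem.5.6}. Making the reflection $R$, the clockwise orientation of $\partial RQ_{\varepsilon}$, and the $\hat{f}$-sign convention simultaneously consistent, so that both the gradient signs match (\ref{e.5.15}) and the drift cancellation is valid, is the delicate bookkeeping at the heart of the argument.
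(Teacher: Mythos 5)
Your proposal is correct and follows essentially the same route as the paper: freeze the slots other than $e_{2},e_{4}$, apply Proposition \ref{pro.5.5} to $\psi_{\pm}$ at the loop holonomies $\pt^{f}\left(  \partial Q_{\varepsilon}\right)  $ and $\pt^{f}\left(  \partial RQ_{\varepsilon}\right)  ,$ subtract, and cancel the order-$a_{\varepsilon}$ Casimir/Laplacian terms via extended gauge invariance and Lemma \ref{lem.5.6}, with the sign bookkeeping coming from $\hat{f}\left(  Q_{\varepsilon}\right)  =f\left(  Q_{\varepsilon}\right)  $ and $\hat{f}\left(  RQ_{\varepsilon}\right)  =-f\left(  RQ_{\varepsilon}\right)  .$ Your identification of the drift cancellation as the one step that genuinely needs the gauge structure (rather than mechanical substitution) matches the paper's proof exactly.
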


\begin{proof}
Let $\mathbb{G}^{\prime}:=\mathbb{G}\setminus\left\{  e_{2},e_{3}\right\}  $
and let us write $U\left(  \omega\right)  $ as $U\left(  \omega\left(
e_{2}\right)  ,\omega\left(  e_{4}\right)  ,\omega\left(  \mathbb{G}^{\prime
}\right)  \right)  .$ We then have,
\[
U\left(  \pt^{f}\left(  \mathbb{G}_{+,\varepsilon}\right)  \right)  =U\left(
\pt^{f}\left(  \partial Q_{\varepsilon}\right)  ,I,\pt^{f}\left(
\mathbb{G}^{\prime}\right)  \right)
\]
and by Eq. (\ref{e.5.12}) with $\psi\left(  k\right)  :=U\left(
k,I,\pt^{f}\left(  \mathbb{G}^{\prime}\right)  \right)  $ it follows that
\begin{equation}
U\left(  \pt^{f}\left(  \mathbb{G}_{+,\varepsilon}\right)  \right)  =U\left(
\pt^{f}\left(  \mathbb{G}\right)  \right)  -\left(  \nabla_{_{\hat{f}\left(
Q_{\varepsilon}\right)  }}^{e_{2}}U\right)  \left(  \pt^{f}\left(
\mathbb{G}\right)  \right)  +a_{\varepsilon}\frac{1}{2}\left(  \Delta
_{K}^{e_{2}}U\right)  \left(  \pt^{f}\left(  \mathbb{G}\right)  \right)
+E_{\varepsilon}^{+}+O\left(  a_{\varepsilon}^{3/2}\right)  . \label{e.5.18}%
\end{equation}
Similarly%
\[
U\left(  \pt^{f}\left(  \mathbb{G}_{-,\varepsilon}\right)  \right)  =U\left(
I,\pt^{f}\left(  \partial RQ_{\varepsilon}\right)  ,\pt^{f}\left(
\mathbb{G}^{\prime}\right)  \right)
\]
and by Eq. (\ref{e.5.12}) with $\psi\left(  k\right)  :=U\left(
I,k,\pt^{f}\left(  \mathbb{G}^{\prime}\right)  \right)  $ it follows that
\begin{equation}
U\left(  \pt^{f}\left(  \mathbb{G}_{-,\varepsilon}\right)  \right)  =U\left(
\pt^{f}\left(  \mathbb{G}\right)  \right)  -\left(  \nabla_{_{\hat{f}\left(
RQ_{\varepsilon}\right)  }}^{e_{4}}U\right)  \left(  \pt^{f}\left(
\mathbb{G}\right)  \right)  +a_{\varepsilon}\frac{1}{2}\left(  \Delta
_{K}^{e_{4}}U\right)  \left(  \pt^{f}\left(  \mathbb{G}\right)  \right)
+E_{\varepsilon}^{-}+O\left(  a_{\varepsilon}^{3/2}\right)  . \label{e.5.19}%
\end{equation}
Since $U$ has extended gauge invariance at $0,$ Lemma \ref{lem.5.6} implies
$\left(  \Delta_{K}^{e_{4}}U\right)  \left(  \pt^{f}\left(  \mathbb{G}\right)
\right)  =\left(  \Delta_{K}^{e_{2}}U\right)  \left(  \pt^{f}\left(
\mathbb{G}\right)  \right)  .$ Using the previous identities and the fact that
$\hat{f}\left(  RQ_{\varepsilon}\right)  =-f\left(  RQ_{\varepsilon}\right)
,$ we may subtract Eqs. (\ref{e.5.19}) from Eq. (\ref{e.5.18}) to arrive at
Eq. (\ref{e.5.15}).
\end{proof}

In order to complete the proof of Theorem \ref{thm.2.27} it remains to
estimate the error terms, $E_{\varepsilon}^{\pm},$ in Eqs. (\ref{e.5.16}) and
(\ref{e.5.17}) which we will do with the aid of Lemma \ref{lem.3.19}. Lemma
\ref{lem.5.10} below contains the key estimate needed to make this scheme work.

\begin{notation}
\label{not.5.9}For $t>0,$ let $J_{t}:=\left[  0,t\right]  \times\mathbb{R}$
and
\begin{equation}
\mathcal{B}_{t}:=\sigma\left\{  f\left(  R\right)  :R\in\mathcal{B}%
_{\mathbb{R}^{2}}^{o}\text{ with }R\subset J_{t}\right\}  \label{e.5.20}%
\end{equation}
be the $\sigma$-algebra generated by the white noise over $J_{t}.$
\end{notation}

In the following lemma, let $\hat{\nabla}^{\sigma}$ be the \textquotedblleft
right\textquotedblright\ analogue of $\nabla^{\sigma},$ that is replace
$\omega\left(  b\right)  e^{t\delta_{\sigma,b}\xi}$ by $e^{t\delta_{\sigma
,b}\xi}\omega\left(  b\right)  $ in the definition of $\nabla^{\sigma}$ in
Definition \ref{def.2.16}.

\begin{lemma}
\label{lem.5.10}Let $\varepsilon>0,$ $\Lambda$ be a finite collection of
horizontal curves over $\left[  0,\varepsilon\right]  ,$ and $V:K^{\Lambda
}\rightarrow\mathbb{C}$ is a random function independent of $\mathcal{B}%
_{\varepsilon}$ such that $V$ takes values in $C^{2}\left(  K^{\Lambda
},\mathbb{C}\right)  .$ If there exists a (non-random) constant, $C<\infty,$
such that%
\begin{equation}
\sup_{\left\vert \xi\right\vert _{\mathfrak{k}}=1~\&~\left\vert \eta
\right\vert _{\mathfrak{k}}=1}\left[  \left\vert V\right\vert +\sum_{\sigma
\in\Lambda}\left\vert \hat{\nabla}_{\xi}^{\sigma}V\right\vert +\sum
_{\sigma,\tau\in\Lambda}\left\vert \hat{\nabla}_{\xi}^{\sigma}\hat{\nabla
}_{\eta}^{\tau}V\right\vert \right]  \leq C\text{ on }K^{\Lambda},
\label{e.5.21}%
\end{equation}
then
\begin{equation}
\left\Vert V\left(  \pt^{f}\left(  \Lambda\right)  \right)  -V\left(
\mathbf{I}\right)  \right\Vert _{2}=O\left(  \sqrt{\varepsilon}\right)
\label{e.5.22}%
\end{equation}
where $\mathbf{I}$ is the identity in $K^{\Lambda}$ and $\pt^{f}\left(
\Lambda\right)  :=\left\{  \pt^{f}\left(  \sigma\right)  :\sigma\in
\Lambda\right\}  \in K^{\Lambda}$ as in Notation \ref{not.2.9}.
\end{lemma}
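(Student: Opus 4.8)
The plan is to apply It\^{o}'s formula to the process $\Phi_t := V\!\left(\left\{\pt_t^f(\sigma)\right\}_{\sigma\in\Lambda}\right)$ on $t\in[0,\varepsilon]$, so that $\Phi_0 = V(\mathbf I)$ and $\Phi_\varepsilon = V(\pt^f(\Lambda))$, and to bound $\|\Phi_\varepsilon-\Phi_0\|_2$. The structural input is that for each $\sigma\in\Lambda$ the parallel translation solves the Stratonovich equation $\delta\pt_t^f(\sigma) = -\,\delta M_t^f(\sigma)\,\pt_t^f(\sigma)$ of Definition \ref{def.2.7}, i.e.\ $t\mapsto\pt_t^f(\sigma)$ moves by \emph{left} multiplication; hence the infinitesimal action on $V$ is generated precisely by the right derivatives $\hat\nabla^\sigma$ occurring in the hypothesis \eqref{e.5.21}. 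Moreover $M_t^f(\sigma) = -\hat f(R^\sigma(t))$ with $R^\sigma(t)\subset J_\varepsilon$, so the whole driving family $\{M_t^f(\sigma):t\le\varepsilon,\ \sigma\in\Lambda\}$ is $\mathcal B_\varepsilon$-measurable, and since $\Lambda$ is a finite family of continuous curves on $[0,\varepsilon]$ the covariations satisfy $|d[M^\sigma,M^\tau]_s|\le C'\,ds$ on $[0,\varepsilon]$, whence $\int_0^\varepsilon|d[M^\sigma,M^\tau]_s| = O(\varepsilon)$.

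First I would remove the randomness of $V$ by conditioning on $\mathcal G := \sigma(V)$. Because $V$ is independent of $\mathcal B_\varepsilon$ and the driving martingales are $\mathcal B_\varepsilon$-measurable, conditionally on $\mathcal G$ the processes $\{M_t^f(\sigma)\}_{t\le\varepsilon}$ retain their law --- they are still continuous, square-integrable martingales with independent increments and unchanged quadratic covariations --- while $V$ becomes a fixed $v\in C^2(K^\Lambda,\mathbb C)$ satisfying the uniform bounds \eqref{e.5.21}. It therefore suffices to prove \eqref{e.5.22} conditionally with $v$ deterministic, since the resulting bound will depend only on the non-random constant $C$.

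Second, with $v$ fixed I would expand $\Phi_t = v(\pt_t^f(\Lambda))$ by It\^{o}'s formula as $\Phi_\varepsilon - \Phi_0 = N_\varepsilon + A_\varepsilon$, where the martingale part
\[
N_t = -\int_0^t \sum_{\sigma\in\Lambda}\left(\hat\nabla_{dM_s^f(\sigma)}^\sigma v\right)\!\left(\pt_s^f(\Lambda)\right)
\]
has integrand given by the first right derivatives of $v$, and the It\^{o}-correction term $A_t = \tfrac12\int_0^t \sum_{\sigma,\tau}\left(\hat\nabla^\sigma\!\cdot\hat\nabla^\tau v\right)\!\left(\pt_s^f(\Lambda)\right) d[M^\sigma,M^\tau]_s$ has integrand controlled by the second right derivatives. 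I would estimate $\|N_\varepsilon\|_2$ by the Burkholder--Davis--Gundy bound \eqref{e.5.3} (applied with $\beta_t = (M_t^f(\sigma))_{\sigma\in\Lambda}$ and $\alpha_t = C'\,t$ dominating $\langle\beta\rangle_t$): since $\sum_\sigma|\hat\nabla_\xi^\sigma v|\le C$ for unit $\xi$, this gives $\|N_\varepsilon\|_2\lesssim\sqrt{\int_0^\varepsilon C^2\,d\alpha_s} = O(\sqrt\varepsilon)$. Likewise, using $\sum_{\sigma,\tau}|\hat\nabla_\xi^\sigma\hat\nabla_\eta^\tau v|\le C$ from \eqref{e.5.21}, one gets $\|A_\varepsilon\|_2 \le \tfrac12 C\int_0^\varepsilon|d[M^\sigma,M^\tau]_s| = O(\varepsilon)$. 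Adding the two estimates yields $\|\Phi_\varepsilon-\Phi_0\|_2 = O(\sqrt\varepsilon)$ conditionally, and integrating over the law of $V$ gives \eqref{e.5.22}.

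The step I expect to require the most care is the conditioning argument of the second paragraph: one must verify that adjoining the independent $\sigma$-algebra $\mathcal G=\sigma(V)$ to the white-noise filtration of Definition \ref{def.2.5} preserves the martingale property and the quadratic covariations of the $M_t^f(\sigma)$, so that It\^{o}'s formula and \eqref{e.5.3} remain valid conditionally with $v$ treated as a constant (i.e.\ $\mathcal G$-measurable, hence $\widehat{\mathcal F}_0$-measurable) coefficient. The companion subtlety is identifying the right derivatives $\hat\nabla^\sigma$ --- rather than $\nabla^\sigma$ --- as the correct generators of the motion, which is exactly why \eqref{e.5.21} is phrased in terms of $\hat\nabla$; the role of the two derivative bounds is then transparent, the first controlling the martingale part $N_\varepsilon$ and the second the bounded-variation part $A_\varepsilon$. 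Tracking $d[M^\sigma,M^\tau]$ uniformly over the finite family $\Lambda$ on $[0,\varepsilon]$ is routine given the boundedness of the curves there.
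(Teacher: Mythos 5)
Your proposal is correct and follows essentially the same route as the paper: apply the (Stratonovich-to-It\^{o}) change-of-variables formula to $t\mapsto V\left(  \pt_{t}^{f}\left(  \Lambda\right)  \right)  ,$ identify the right derivatives $\hat{\nabla}^{\sigma}$ as the generators because parallel translation evolves by left multiplication, bound the martingale part by the Burkholder--Davis--Gundy estimate (\ref{e.5.3}) to get $O\left(  \sqrt{\varepsilon}\right)  $ and the quadratic-covariation term by $O\left(  \varepsilon\right)  .$ Your explicit conditioning on $\sigma\left(  V\right)  $ is just a more careful justification of the paper's one-line remark that the independence of $V$ from $\mathcal{B}_{\varepsilon}$ permits the use of adapted stochastic calculus with $V$ treated as a frozen coefficient.
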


\begin{proof}
For $0\leq t\leq\varepsilon,$ let $G_{t}:=\pt_{t}^{f}\left(  \Lambda\right)
\in K^{\Lambda}$ and $M_{t}$ be the $L\left(  \mathfrak{k}^{\Lambda
},\mathfrak{k}^{\Lambda}\right)  $-valued martingale which is block diagonal
having $M_{t}^{f}\left(  \sigma\right)  $ in the $\sigma$-$\sigma$ block for
each $\sigma\in\Lambda.$ With this notation $G_{t}$ solves the Stratonovich
differential equation,%
\[
\delta G_{t}=-\delta M_{t}G_{t}\text{ with }G_{0}=\mathbf{I}.
\]
Although $V$ is a random function, because it is independent of $\mathcal{B}%
_{\varepsilon},$ we may still use the adapted stochastic calculus to find,%
\begin{align}
V\left(  \pt_{\varepsilon}^{f}\left(  \Lambda\right)  \right)   &  -V\left(
\mathbf{I}\right)  =V\left(  G_{\varepsilon}\right)  -V\left(  G_{0}\right)
=-\sum_{\sigma\in\Lambda}\int_{0}^{\varepsilon}\left(  \hat{\nabla}_{\delta
M_{t}^{f}\left(  \sigma\right)  }^{\sigma}V\right)  \left(  G_{t}\right)
\nonumber\\
=  &  -\sum_{\sigma\in\Lambda}\int_{0}^{\varepsilon}\left(  \hat{\nabla
}^{\sigma}V\right)  \left(  G_{t}\right)  dM_{t}^{f}\left(  \sigma\right)
+\frac{1}{2}\sum_{\sigma,\tau\in\Lambda}\int_{0}^{\varepsilon}\left(
\hat{\nabla}^{\tau}\hat{\nabla}^{\sigma}V\right)  \left(  G_{t}\right)
\left[  dM_{t}^{f}\left(  \tau\right)  \otimes dM_{t}^{f}\left(
\sigma\right)  \right]  . \label{e.5.23}%
\end{align}
Since $\Lambda$ is a finite set, the quadratic covariances of $M_{t}%
^{f}\left(  \tau\right)  $ and $M_{t}^{f}\left(  \sigma\right)  $ for all
$\sigma,\tau\in\Lambda$ are controlled by $dt$ and therefore,%
\[
\left\vert \sum_{\sigma,\tau\in\Lambda}\int_{0}^{\varepsilon}\left(
\hat{\nabla}^{\tau}\hat{\nabla}^{\sigma}V\right)  \left(  G_{t}\right)
\left[  dM_{t}^{f}\left(  \tau\right)  \otimes dM_{t}^{f}\left(
\sigma\right)  \right]  \right\vert \leq C_{1}\varepsilon
\]
and (from the estimate in Eq. (\ref{e.5.3}))
\[
\left\Vert \sum_{\sigma\in\Lambda}\int_{0}^{\varepsilon}\left(  \hat{\nabla
}^{\sigma}V\right)  \left(  G_{t}\right)  dM_{t}^{f}\left(  \sigma\right)
\right\Vert _{2}\leq C_{2}\sqrt{\varepsilon}.
\]
These estimates along with Eq. (\ref{e.5.23}) prove Eq. (\ref{e.5.22}).
\end{proof}

\subsection{Proof of Theorem \ref{thm.2.27}\label{sec.5.1}}

We start by recording the following elementary covariance estimate.

\begin{lemma}
\label{lem.3.19}If $Z,$ $\bar{Z},$ and $N$ are square integrable random
variables such that $\mathbb{E}N=0$ and $\bar{Z}$ is independent of $N,$ then
\[
\left\vert \mathbb{E}\left[  NZ\right]  \right\vert =\left\vert \mathbb{E}%
\left[  N\left(  Z-\bar{Z}\right)  \right]  \right\vert \leq\left\Vert
N\right\Vert _{2}\cdot\left\Vert Z-\bar{Z}\right\Vert _{2}.
\]

\end{lemma}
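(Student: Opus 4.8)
The plan is to reduce everything to two standard facts: that an independent factor can be pulled out of an expectation, and the Cauchy--Schwarz inequality. The only structural observation needed is that the centering term $\bar{Z}$ contributes nothing to the correlation with $N$. First I would note that all the relevant products are integrable: since $N$, $Z$, and $\bar{Z}$ are each in $L^{2}$, the Cauchy--Schwarz inequality guarantees that $NZ$, $N\bar{Z}$, and $N(Z-\bar{Z})$ all lie in $L^{1}$, so every expectation written in the statement is well defined.

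The key step is to compute $\mathbb{E}\left[  N\bar{Z}\right]  $. Because $\bar{Z}$ is independent of $N$, the expectation of the product factors as $\mathbb{E}\left[  N\bar{Z}\right]  =\mathbb{E}\left[  N\right]  \mathbb{E}\left[  \bar{Z}\right]  $, and since we are given $\mathbb{E}N=0$, this forces $\mathbb{E}\left[  N\bar{Z}\right]  =0$. Subtracting this vanishing term and using linearity of expectation then yields the first (exact) equality,
\[
\mathbb{E}\left[  NZ\right]  =\mathbb{E}\left[  N\left(  Z-\bar{Z}\right)  \right]  +\mathbb{E}\left[  N\bar{Z}\right]  =\mathbb{E}\left[  N\left(  Z-\bar{Z}\right)  \right]  .
\]

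For the inequality I would simply apply Cauchy--Schwarz to the pair $N$ and $Z-\bar{Z}$, giving $\left\vert \mathbb{E}\left[  N\left(  Z-\bar{Z}\right)  \right]  \right\vert \leq\left\Vert N\right\Vert _{2}\cdot\left\Vert Z-\bar{Z}\right\Vert _{2}$, which combined with the equality above completes the proof. There is no genuine obstacle here; the content of the lemma is entirely in the observation that one is free to replace $Z$ by $Z-\bar{Z}$ at no cost whenever $\bar{Z}$ is independent of the mean-zero random variable $N$. In the intended application (estimating the error terms $E_{\varepsilon}^{\pm}$ of Corollary \ref{cor.5.8}) the utility is that $\bar{Z}$ will be chosen as a good independent approximation to $Z$, so that $\left\Vert Z-\bar{Z}\right\Vert _{2}$ is small even when $Z$ itself is not, which is exactly the fortuitous cancellation anticipated in Remark \ref{rem.3.18}.
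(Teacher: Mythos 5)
Your proof is correct and is exactly the elementary argument the paper intends (the paper labels this an ``elementary covariance estimate'' and omits the proof): independence plus $\mathbb{E}N=0$ kills $\mathbb{E}\left[ N\bar{Z}\right]$, giving the equality, and Cauchy--Schwarz gives the bound. Your remarks on integrability and on how the lemma is used for the error terms $E_{\varepsilon}^{\pm}$ are accurate as well.
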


As in the statement of Theorem \ref{thm.2.27} we will assume there is a fixed
constant, $C,$ independent of $\varepsilon>0$ such that $\left\vert
S_{\varepsilon}\right\vert \leq C\left\vert Q_{\varepsilon}\right\vert $ for
all $\varepsilon>0$ where $S_{\varepsilon}$ is the \textquotedblleft
shadow\textquotedblright\ region as in Figure \ref{fig.8}. This means in
practice that $a_{t}^{\gamma}\leq Ca_{t}$ and so the error, $\mathcal{R}%
_{t}=O_{p}\left(  \sqrt{a_{t}^{\gamma}a_{t}}\right)  ,$ in Eq. (\ref{e.5.6})
may be rewritten as $\mathcal{R}_{t}=O_{p}\left(  a_{t}\right)  .$ We are now
in a position to give a rigorous proof of Theorem \ref{thm.2.27}.

\begin{proof}
[Proof of Theorem \ref{thm.2.27}]Using $a_{\varepsilon}\leq c\varepsilon$ for
some $c>0,$ according to Corollary \ref{cor.5.8} to finish the proof of
Theorem \ref{thm.2.27} it suffices to show $\left\vert \mathbb{E}\left[
E_{\varepsilon}^{+}-E_{\varepsilon}^{-}\right]  \right\vert =O\left(
\sqrt{\varepsilon}a_{\varepsilon}\right)  .$ The error term, $E_{\varepsilon
}^{+}-E_{\varepsilon}^{-},$ is a sum of four terms all of the form,
$\mathcal{U}\left(  \pt^{f}\left(  \mathbb{G}\right)  \right)  W_{\varepsilon
},$ where $\mathcal{U}\in C^{2}\left(  K^{\mathbb{G}},Y^{\ast}\right)  $ with
$Y=\mathfrak{k}$ or $\mathfrak{k}\otimes\mathfrak{k}$ and $W_{\varepsilon}$ is
a $\mathcal{B}_{\varepsilon}$-measurable $Y$-valued random vector such that;
$\mathbb{E}W_{\varepsilon}=0$ and $\left\Vert W_{\varepsilon}\right\Vert
_{2}=O\left(  a_{\varepsilon}\right)  .$ [Here we use $Y^{\ast}$ to denote the
real linear functionals on $Y.$] So to finish the proof it suffices to show
that the expectation of any such expression, $\mathcal{U}\left(
\pt^{f}\left(  \mathbb{G}\right)  \right)  W_{\varepsilon},$ is $O\left(
\sqrt{\varepsilon}a_{\varepsilon}\right)  .$ Before going into the details,
let us give a sketch of the proof.

Let $f_{\varepsilon}:=1_{\mathbb{R}^{2}\setminus J_{\varepsilon}}f$ where $f$
is the white noise. Then $\mathcal{U}\left(  \pt^{f_{\varepsilon}}\left(
\mathbb{G}\right)  \right)  $ now depends only on the white noise over
$\mathbb{R}^{2}\setminus J_{\varepsilon}$ and is therefore independent of
$\mathcal{B}_{\varepsilon}.$ So by Lemma \ref{lem.3.19} with $N=W_{\varepsilon
},$ $Z=\mathcal{U}\left(  \pt^{f}\left(  \mathbb{G}\right)  \right)  ,$ and
$\bar{Z}=\mathcal{U}\left(  \pt^{f_{\varepsilon}}\left(  \mathbb{G}\right)
\right)  ,$ it follows that
\begin{align*}
\left\vert \mathbb{E}\left[  \mathcal{U}\left(  \pt^{f}\left(  \mathbb{G}%
\right)  \right)  W_{\varepsilon}\right]  \right\vert  &  \leq\left\Vert
\mathcal{U}\left(  \pt^{f}\left(  \mathbb{G}\right)  \right)  -\mathcal{U}%
\left(  \pt^{f_{\varepsilon}}\left(  \mathbb{G}\right)  \right)  \right\Vert
_{2}\left\Vert W_{\varepsilon}\right\Vert _{2}\\
&  \leq\left\Vert \mathcal{U}\left(  \pt^{f}\left(  \mathbb{G}\right)
\right)  -\mathcal{U}\left(  \pt^{f_{\varepsilon}}\left(  \mathbb{G}\right)
\right)  \right\Vert _{2}O\left(  a_{\varepsilon}\right)  .
\end{align*}
The proof will be completed by showing, with the aid of Lemma \ref{lem.5.10},
that
\begin{equation}
\left\Vert \mathcal{U}\left(  \pt^{f}\left(  \mathbb{G}\right)  \right)
-\mathcal{U}\left(  \pt^{f_{\varepsilon}}\left(  \mathbb{G}\right)  \right)
\right\Vert _{2}\preceq\sqrt{\varepsilon}. \label{e.5.24}%
\end{equation}
We now proceed to the details.

By subdividing the paths in $\mathbb{G}$ and changing the arbitrary
orientations if necessary, we may assume that all paths in $\mathbb{G}$ are
either purely vertical paths or are horizontal paths oriented from left to
right of the form $\left[  a_{\sigma},b_{\sigma}\right]  \ni x\rightarrow
\sigma\left(  x\right)  =\left(  x,y\left(  x\right)  \right)  \in
\mathbb{R}^{2}$ with $y$ being a continuous function of $x.$ Let
$\mathbb{G}_{h}$ denote the horizontal paths in $\mathbb{G}$ and recall for
$\sigma\in\mathbb{G}_{h}$ that $\tilde{\sigma}=\sigma\left(  \left[
a_{\sigma},b_{\sigma}\right]  \right)  $ is the image of $\sigma$ in
$\mathbb{R}^{2}.$ We now define $\Lambda\subset\mathbb{G}_{h}$ to be the those
paths in $\mathbb{G}_{h}$ which \textquotedblleft cross\textquotedblright\ the
$y$-axis, i.e.
\[
\Lambda=\left\{  \sigma\in\mathbb{G}_{h}:\tilde{\sigma}\cap J_{\varepsilon
}\neq\emptyset\text{ for all }\varepsilon>0\right\}  .
\]
As $\Lambda$ is a finite set there exists $\varepsilon_{0}>0$ such that
$\tilde{\sigma}\cap J_{\varepsilon_{0}}\neq\emptyset$ for all $\sigma
\in\Lambda$ and $\tilde{\sigma}\cap J_{\varepsilon_{0}}=\emptyset$ for all
$\sigma\notin\mathbb{G}_{h}\setminus\Lambda.$ We now assume that
$0<\varepsilon<\varepsilon_{0}$ for the rest of the argument.

For those $\sigma\in\Lambda$ we split $\sigma$ into three paths,
$\sigma|_{\left[  a_{\sigma},0\right]  },$ $\sigma|_{\left[  0,\varepsilon
\right]  },$ and $\sigma|_{\left[  \varepsilon,b_{\sigma}\right]  }$ and note
that
\[
\pt^{f}\left(  \sigma\right)  =\pt^{f}\left(  \sigma|_{\left[  \varepsilon
,b_{\sigma}\right]  }\right)  \pt^{f}\left(  \sigma|_{\left[  0,\varepsilon
\right]  }\right)  \pt^{f}\left(  \sigma|_{\left[  a_{\sigma},0\right]
}\right)
\]
while
\[
\pt^{f_{\varepsilon}}\left(  \sigma\right)  =\pt^{f}\left(  \sigma|_{\left[
\varepsilon,b_{\sigma}\right]  }\right)  \pt^{f}\left(  \sigma|_{\left[
a_{\sigma},0\right]  }\right)  .
\]
Define the random function, $V:K^{\Lambda}\rightarrow Y^{\ast},$ by
\[
V\left(  \omega\right)  :=\mathcal{U}\left(  \left\{  \pt^{f}\left(
\sigma\right)  \right\}  _{\sigma\notin\Lambda},\left\{  \pt^{f}\left(
\sigma|_{\left[  \varepsilon,b_{\sigma}\right]  }\right)  \omega\left(
\sigma\right)  \pt^{f}\left(  \sigma|_{\left[  a_{\sigma},0\right]  }\right)
\right\}  _{\sigma\in\Lambda}\right)  \text{ }\forall~\omega\in K^{\Lambda}.
\]
As $\mathcal{U}$ is $C^{2},$ $V$ is also $C^{2}$ and furthermore $V$ depends
only on the white noise over $\mathbb{R}^{2}\setminus J_{\varepsilon}$ and
hence is independent of $\mathcal{B}_{\varepsilon}$. For $\sigma,\tau
\in\Lambda$ and $\xi,\eta\in\mathfrak{k}$ we have, with
\[
\mathbf{G:=}\left(  \left\{  \pt^{f}\left(  \sigma\right)  \right\}
_{\sigma\notin\Lambda},\left\{  \pt^{f}\left(  \sigma|_{\left[  \varepsilon
,b_{\sigma}\right]  }\right)  \omega\left(  \sigma\right)  \pt^{f}\left(
\sigma|_{\left[  a_{\sigma},0\right]  }\right)  \right\}  _{\sigma\in\Lambda
}\right)  ,
\]
that%
\[
\left(  \hat{\nabla}_{\xi}^{\sigma}V\right)  \left(  \omega\right)  =\left(
\hat{\nabla}_{\mathrm{Ad}_{\pt^{f}\left(  \sigma|_{\left[  \varepsilon
,b_{\sigma}\right]  }\right)  }\xi}^{\sigma}\mathcal{U}\right)  \left(
\mathbf{G}\right)
\]
and similarly%
\[
\left(  \hat{\nabla}_{\eta}^{\tau}\hat{\nabla}_{\xi}^{\sigma}V\right)  \left(
\omega\right)  =\left(  \hat{\nabla}_{\mathrm{Ad}_{\pt^{f}\left(
\sigma|_{\left[  \varepsilon,b_{\sigma}\right]  }\right)  }\eta}^{\tau}%
\hat{\nabla}_{\mathrm{Ad}_{\pt^{f}\left(  \sigma|_{\left[  \varepsilon
,b_{\sigma}\right]  }\right)  }\xi}^{\sigma}\mathcal{U}\right)  \left(
\mathbf{G}\right)  .
\]
Since the inner product on $\mathfrak{k}$ is $\mathrm{Ad}_{K}$-invariant and
$\mathcal{U}$ is a $C^{2}$-function on a compact set, it follows that $V$
satisfies the estimates in Eq. (\ref{e.5.21}). Applying Lemma \ref{lem.5.10}
then shows%
\[
\left\Vert \mathcal{U}\left(  \pt^{f}\left(  \mathbb{G}\right)  \right)
-\mathcal{U}\left(  \pt^{f_{\varepsilon}}\left(  \mathbb{G}\right)  \right)
\right\Vert _{2}=\left\Vert V\left(  \pt^{f}\left(  \Lambda\right)  \right)
-V\left(  \mathbf{I}\right)  \right\Vert _{2}=O\left(  \sqrt{\varepsilon
}\right)  .
\]

\end{proof}

\appendix

\section{Appendix: connections, parallel translation, and
curvature\label{sec.A}}

In this first appendix, we review a few basic facts about covariant
derivatives, parallel translation, and curvature. Recall that we have assumed
that our compact Lie group is a matrix Lie sub-group of $GL\left(
\mathbb{C}^{D}\right)  \subset\mathbb{C}^{D\times D}$ for some $D\in
\mathbb{N}.$

\subsection{Transformation properties\label{sec.A.1}}

The next result recalls how $g\in\mathcal{G}$ acts on covariant
differentiation, parallel translation, and curvature.

\begin{theorem}
[Gauge transformed quantities]\label{thm.A.1}If $A\in\mathcal{A},$
$g\in\mathcal{G},$ $\ell:\left[  a,b\right]  \rightarrow M$ is an absolutely
continuous path in $M,$ and $S:\left[  a,b\right]  \rightarrow\mathbb{C}^{D}$
(or $S:\left[  a,b\right]  \rightarrow\mathbb{C}^{D\times D})$ be a $C^{1}%
$-function, then

\begin{enumerate}
\item The operator $\nabla_{t}^{A^{g}}$ is conjugate to $\nabla_{t}^{A}.$ More
precisely,
\begin{equation}
\nabla_{t}^{A^{g}}S\left(  t\right)  =g\left(  \ell\left(  t\right)  \right)
^{-1}\nabla_{t}^{A}\left[  g\left(  \ell\left(  t\right)  \right)  S\left(
t\right)  \right]  \label{e.A.1}%
\end{equation}
so that $\nabla_{t}^{A^{g}}=M_{g\left(  \ell\left(  t\right)  \right)  ^{-1}%
}\nabla_{t}^{A}M_{g\left(  \ell\left(  t\right)  \right)  }$ where $M_{g}$ is
used to denote multiplication by $g.$

\item For $t\in\left[  a,b\right]  ,$%
\begin{equation}
\pt_{t}^{A^{g}}\left(  \ell\right)  =g\left(  \ell\left(  t\right)  \right)
^{-1}\pt_{t}^{A}\left(  \ell\right)  g\left(  \ell\left(  a\right)  \right)  .
\label{e.A.2}%
\end{equation}

\item The curvature tensor, $F^{A},$ satisfies,%
\begin{equation}
F^{A^{g}}\left\langle v,w\right\rangle =\mathrm{Ad}_{g\left(  x\right)  ^{-1}%
}F^{A}\left\langle v,w\right\rangle \text{ for all }v,w\in T_{x}M\text{ and
}x\in M. \label{e.A.3}%
\end{equation}

\end{enumerate}
\end{theorem}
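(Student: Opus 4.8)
The plan is to establish the three parts in sequence, each feeding the next, so that the only genuine computation is hidden in the curvature identity (3). For part (1) I would simply unwind the definitions and apply the product rule. Writing $h(t):=g(\ell(t))$, Eq.~(\ref{e.1.1}) evaluated along $\ell$ gives $A^{g}(\dot{\ell}(t))=h(t)^{-1}A(\dot{\ell}(t))h(t)+h(t)^{-1}\dot{h}(t)$, since $(dg)(\dot{\ell}(t))=\frac{d}{dt}g(\ell(t))=\dot{h}(t)$. Then for any $C^{1}$ function $S$,
\[
h^{-1}\nabla_{t}^{A}\left[hS\right]=h^{-1}\left[\dot{h}S+h\dot{S}+A(\dot{\ell})hS\right]=\dot{S}+\left(h^{-1}\dot{h}+h^{-1}A(\dot{\ell})h\right)S=\dot{S}+A^{g}(\dot{\ell})S=\nabla_{t}^{A^{g}}S,
\]
which is exactly Eq.~(\ref{e.A.1}). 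There is no obstacle here; it is purely Leibniz together with $\mathrm{Ad}$-conjugation.

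For part (2) I would guess the answer and verify it by uniqueness. Set $\tilde{S}(t):=g(\ell(t))^{-1}\pt_{t}^{A}(\ell)g(\ell(a))$ and check that it solves the parallel-translation ODE of Definition~\ref{def.1.2} for the connection $A^{g}$. The initial condition is immediate: $\tilde{S}(a)=g(\ell(a))^{-1}\,I\,g(\ell(a))=I$. Since $g(\ell(a))$ is constant in $t$ and $\nabla_{t}^{A}[Kc]=(\nabla_{t}^{A}K)c$ for constant $c$, part (1) yields
\[
\nabla_{t}^{A^{g}}\tilde{S}=g(\ell(t))^{-1}\nabla_{t}^{A}\!\left[\pt_{t}^{A}(\ell)\,g(\ell(a))\right]=g(\ell(t))^{-1}\left(\nabla_{t}^{A}\pt_{t}^{A}(\ell)\right)g(\ell(a))=0,
\]
because $\pt_{t}^{A}(\ell)$ is $\nabla^{A}$-parallel by Definition~\ref{def.1.2}. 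Uniqueness of solutions to the defining ODE then forces $\tilde{S}=\pt_{t}^{A^{g}}(\ell)$, which is Eq.~(\ref{e.A.2}).

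For part (3) the cleanest route is to read the curvature off as the commutator of covariant derivatives and reuse part (1) in operator form. Working in coordinates with $A=\sum_{i}A_{i}\,dx_{i}$, I would set $\nabla_{i}^{A}:=\partial_{i}+M_{A_{i}}$ acting on $\mathbb{C}^{D\times D}$-valued functions; the directional analogue of part (1) gives $\nabla_{i}^{A^{g}}=M_{g^{-1}}\nabla_{i}^{A}M_{g}$ by the same Leibniz calculation. A short bracket computation then shows $[\nabla_{i}^{A},\nabla_{j}^{A}]=M_{\partial_{i}A_{j}-\partial_{j}A_{i}+[A_{i},A_{j}]}$, which is precisely multiplication by $F^{A}(\partial_{i},\partial_{j})$ of Definition~\ref{def.1.3}. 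Conjugating, $[\nabla_{i}^{A^{g}},\nabla_{j}^{A^{g}}]=M_{g^{-1}}[\nabla_{i}^{A},\nabla_{j}^{A}]M_{g}$, so $F^{A^{g}}(\partial_{i},\partial_{j})=g^{-1}F^{A}(\partial_{i},\partial_{j})g=\mathrm{Ad}_{g^{-1}}F^{A}(\partial_{i},\partial_{j})$, and bilinearity and antisymmetry extend this to all $v,w$, giving Eq.~(\ref{e.A.3}). The hard part---really the only delicate bookkeeping in the whole theorem---is getting the signs and matrix orderings right: one must confirm that the commutator of the $\nabla_{i}^{A}$ reproduces exactly the two-form of Definition~\ref{def.1.3}. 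If instead one prefers the direct computation $F^{A^{g}}=dA^{g}+A^{g}\wedge A^{g}$, the same difficulty reappears as the cancellation of the Maurer--Cartan terms $d(g^{-1}dg)+(g^{-1}dg)\wedge(g^{-1}dg)=0$ together with the mixed cross terms, using $d(g^{-1})=-g^{-1}(dg)g^{-1}$.
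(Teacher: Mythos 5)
Your proposal is correct and follows essentially the same route as the paper: part (1) by the Leibniz rule, part (2) by verifying that $g(\ell(t))^{-1}\pt_{t}^{A}(\ell)g(\ell(a))$ solves the defining ODE and invoking uniqueness, and part (3) by expressing curvature as the commutator of covariant derivatives and conjugating via part (1). The only cosmetic difference is that the paper phrases the commutator identity along a two-parameter family $\Sigma(s,t)$ (its Eq.~(\ref{e.A.4})) rather than in coordinates with $\nabla_{i}^{A}=\partial_{i}+M_{A_{i}}$, but the underlying computation and the role of Cartan's formula are identical.
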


\begin{proof}
Equation \ref{e.A.1} follows by direct computation using the product rule and
basic calculus. The proof of Eq. (\ref{e.A.2}) is now elementary as
$u_{t}:=g\left(  \ell\left(  t\right)  \right)  ^{-1}\pt_{t}^{A}\left(
\ell\right)  g\left(  \ell\left(  a\right)  \right)  $ satisfies,%
\[
\nabla_{t}^{A^{g}}u_{t}=g\left(  \ell\left(  t\right)  \right)  ^{-1}%
\nabla_{t}^{A}\left[  g\left(  \ell\left(  t\right)  \right)  g\left(
\ell\left(  t\right)  \right)  ^{-1}\pt_{t}^{A}\left(  \ell\right)  g\left(
\ell\left(  a\right)  \right)  \right]  =0\text{ with }u_{a}=I.
\]
Although the curvature assertion in Eq. (\ref{e.A.3}) may be proved by direct
calculation, let us give a more conceptual proof which makes use of the fact
that curvature is related to the commutator of two covariant derivatives. More
precisely, let $\Sigma\left(  s,t\right)  \in M$ and $S\left(  s,t\right)
\in\mathbb{C}^{D}$ (or $\mathbb{C}^{D\times D})$ be two $C^{1}$-functions of
$\left(  s,t\right)  \in\mathbb{R}^{2}$ and let
\[
\nabla_{t}^{A}:=\frac{d}{dt}+A\left(  \dot{\Sigma}\left(  t,s\right)  \right)
\text{ and }\nabla_{s}^{A}:=\frac{d}{ds}+A\left(  \Sigma^{\prime}\left(
t,s\right)  \right)  .
\]
A straightforward computation, using $\left[  \frac{d}{dt},\frac{d}%
{ds}\right]  =0$ and Cartan's formula,
\[
\frac{d}{dt}A\left(  \Sigma^{\prime}\left(  t,s\right)  \right)  -\frac{d}%
{ds}A\left(  \dot{\Sigma}\left(  t,s\right)  \right)  =dA\left(  \dot{\Sigma
}\left(  t,s\right)  ,\Sigma^{\prime}\left(  t,s\right)  \right)  ,
\]
shows%
\begin{equation}
\left[  \nabla_{t}^{A},\nabla_{s}^{A}\right]  S\left(  s,t\right)
=F^{A}\left(  \dot{\Sigma}\left(  t,s\right)  ,\Sigma^{\prime}\left(
t,s\right)  \right)  S\left(  s,t\right)  . \label{e.A.4}%
\end{equation}
Thus it follows that
\begin{align*}
F^{A^{g}}\left(  \dot{\Sigma}\left(  t,s\right)  ,\Sigma^{\prime}\left(
t,s\right)  \right)  S\left(  s,t\right)   &  =\left[  \nabla_{t}^{A^{g}%
},\nabla_{s}^{A^{g}}\right]  S\left(  s,t\right) \\
&  =\left[  M_{g\left(  \Sigma\left(  s,t\right)  \right)  ^{-1}}\nabla
_{t}^{A}M_{g\left(  \Sigma\left(  s,t\right)  \right)  },M_{g\left(
\Sigma\left(  s,t\right)  \right)  ^{-1}}\nabla_{s}^{A}M_{g\left(
\Sigma\left(  s,t\right)  \right)  }\right]  S\left(  s,t\right) \\
&  =M_{g\left(  \Sigma\left(  s,t\right)  \right)  ^{-1}}\left[  \nabla
_{t}^{A},\nabla_{s}^{A}\right]  M_{g\left(  \Sigma\left(  s,t\right)  \right)
}S\left(  s,t\right) \\
&  =M_{g\left(  \Sigma\left(  s,t\right)  \right)  ^{-1}}F^{A}\left(
\dot{\Sigma}\left(  t,s\right)  ,\Sigma^{\prime}\left(  t,s\right)  \right)
M_{g\left(  \Sigma\left(  s,t\right)  \right)  }S\left(  s,t\right)
\end{align*}
from which Eq. (\ref{e.A.3}) is easily deduced.
\end{proof}

\begin{remark}
\label{rem.A.2}One more formula connecting covariant differentiation to
parallel translation is the identity;%
\begin{equation}
\nabla_{t}^{A}S=\pt_{t}^{A}\left(  \ell\right)  \frac{d}{dt}\left[
\pt_{t}^{A}\left(  \ell\right)  ^{-1}S\left(  t\right)  \right]  .
\label{e.A.5}%
\end{equation}
To prove this let $V\left(  t\right)  :=\pt_{t}^{A}\left(  \ell\right)
^{-1}S\left(  t\right)  $ so that $S\left(  t\right)  =\pt_{t}^{A}\left(
\ell\right)  V\left(  t\right)  .$ Now apply the product rule and use
$\nabla_{t}^{A}\pt_{t}^{A}\left(  \ell\right)  =0$ to find,%
\begin{align*}
\nabla_{t}^{A}S\left(  t\right)   &  =\nabla_{t}^{A}\left[  \pt_{t}^{A}\left(
\ell\right)  V\left(  t\right)  \right]  =\left(  \frac{d}{dt}+A\left(
\dot{\ell}\left(  t\right)  \right)  \right)  \left[  \pt_{t}^{A}\left(
\ell\right)  V\left(  t\right)  \right] \\
&  =\left[  \left(  \frac{d}{dt}+A\left(  \dot{\ell}\left(  t\right)  \right)
\right)  \pt_{t}^{A}\left(  \ell\right)  \right]  V\left(  t\right)
+\pt_{t}^{A}\left(  \ell\right)  \dot{V}\left(  t\right)  =\pt_{t}^{A}\left(
\ell\right)  \dot{V}\left(  t\right)
\end{align*}
which is Eq. (\ref{e.A.5}).
\end{remark}

\begin{proposition}
[Connections and diffeomorphisms]\label{pro.A.3}Let $A\in\mathcal{A},$
$\sigma\in C^{1}\left(  \left[  a,b\right]  ,M\right)  ,$ and $\varphi
:M\rightarrow M$ be a diffeomorphism of $M.$ Then $F^{\varphi^{\ast}A}%
=\varphi^{\ast}F^{A}$ and $\pt_{t}^{\varphi^{\ast}A}\left(  \sigma\right)
=\pt_{t}^{A}\left(  \varphi\circ\sigma\right)  $ for all $a\leq t\leq b.$
\end{proposition}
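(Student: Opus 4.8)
The plan is to prove the two assertions separately, each reducing to elementary functoriality of the pullback.

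First, for the curvature identity $F^{\varphi^{\ast}A}=\varphi^{\ast}F^{A}$, I would use that pullback along a diffeomorphism commutes with the exterior derivative and is an algebra homomorphism for the wedge product (where $A\wedge A$ is formed using the matrix product on $\mathbb{C}^{D\times D}$, as in Definition \ref{def.1.3}). Writing $F^{A}=dA+A\wedge A$ and applying these two properties gives
\[
F^{\varphi^{\ast}A}=d\left(  \varphi^{\ast}A\right)  +\left(  \varphi^{\ast
}A\right)  \wedge\left(  \varphi^{\ast}A\right)  =\varphi^{\ast}\left(
dA\right)  +\varphi^{\ast}\left(  A\wedge A\right)  =\varphi^{\ast}\left(
dA+A\wedge A\right)  =\varphi^{\ast}F^{A},
\]
where the middle step uses $\varphi^{\ast}d=d\varphi^{\ast}$ and $\varphi^{\ast}\left(  \alpha\wedge\beta\right)  =\varphi^{\ast}\alpha\wedge\varphi^{\ast}\beta$. (Alternatively one could argue, as in the proof of Theorem \ref{thm.A.1}, via the commutator of covariant derivatives, but the direct computation above is cleaner.)

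For the parallel translation identity, I would appeal to uniqueness of solutions of the defining ODE in Definition \ref{def.1.2}. Set $u_{t}:=\pt_{t}^{A}\left(  \varphi\circ\sigma\right)$. By definition $u_{t}$ solves
\[
\frac{d}{dt}u_{t}+A\left(  \frac{d}{dt}\left(  \varphi\circ\sigma\right)
\left(  t\right)  \right)  u_{t}=0\text{ with }u_{a}=I.
\]
The chain rule gives $\frac{d}{dt}\left(  \varphi\circ\sigma\right)  \left(  t\right)  =d\varphi_{\sigma\left(  t\right)  }\left(  \dot{\sigma}\left(  t\right)  \right)$, and the definition of the pullback one-form yields
\[
\left(  \varphi^{\ast}A\right)  \left(  \dot{\sigma}\left(  t\right)  \right)
=A_{\varphi\left(  \sigma\left(  t\right)  \right)  }\left(  d\varphi
_{\sigma\left(  t\right)  }\left(  \dot{\sigma}\left(  t\right)  \right)
\right)  =A\left(  \frac{d}{dt}\left(  \varphi\circ\sigma\right)  \left(
t\right)  \right)  .
\]
Substituting this shows that $u_{t}$ satisfies exactly the ODE $\frac{d}{dt}u_{t}+\left(  \varphi^{\ast}A\right)  \left(  \dot{\sigma}\left(  t\right)  \right)  u_{t}=0$ with $u_{a}=I$, which is the equation defining $\pt_{t}^{\varphi^{\ast}A}\left(  \sigma\right)$. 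Since this linear ODE has a unique solution on the compact interval $\left[  a,b\right]$, I conclude $u_{t}=\pt_{t}^{\varphi^{\ast}A}\left(  \sigma\right)$, i.e. $\pt_{t}^{A}\left(  \varphi\circ\sigma\right)  =\pt_{t}^{\varphi^{\ast}A}\left(  \sigma\right)$ for all $a\leq t\leq b$.

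The main point requiring care is purely bookkeeping: matching the pullback convention $\left(  \varphi^{\ast}A\right)  _{p}\left(  v\right)  =A_{\varphi\left(  p\right)  }\left(  d\varphi_{p}v\right)$ with the pushforward $\frac{d}{dt}\left(  \varphi\circ\sigma\right)  =\varphi_{\ast}\dot{\sigma}$ coming from the chain rule. There is no genuine analytic obstacle, since both claims are instances of functoriality and the transport ODE is uniquely solvable.
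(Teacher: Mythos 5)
Your proof is correct and follows essentially the same route as the paper: the curvature identity via functoriality of the pullback (commuting with $d$ and $\wedge$), and the parallel translation identity by checking that $\pt_{t}^{A}\left(\varphi\circ\sigma\right)$ satisfies the defining ODE of $\pt_{t}^{\varphi^{\ast}A}\left(\sigma\right)$ and invoking uniqueness. No issues.
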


\begin{proof}
Using the basic properties of pull backs on forms we have,%
\[
\varphi^{\ast}F^{A}=\varphi^{\ast}\left[  dA+A\wedge A\right]  =d\left[
\varphi^{\ast}A\right]  +\varphi^{\ast}A\wedge\varphi^{\ast}A=F^{\varphi
^{\ast}A}.
\]
For the second assertion we compute,%
\begin{align*}
0  &  =\frac{\nabla}{dt}\pt_{t}^{A}\left(  \varphi\circ\sigma\right)
:=\left[  \frac{d}{dt}+A\left\langle \frac{d}{dt}\left(  \varphi\circ
\sigma\left(  t\right)  \right)  \right\rangle \right]  \pt_{t}^{A}\left(
\varphi\circ\sigma\right) \\
&  =\left[  \frac{d}{dt}+A\left\langle \varphi_{\ast}\dot{\sigma}\left(
t\right)  \right\rangle \right]  \pt_{t}^{A}\left(  \varphi\circ\sigma\right)
=\left[  \frac{d}{dt}+\left(  \varphi^{\ast}A\right)  \left\langle \dot
{\sigma}\left(  t\right)  \right\rangle \right]  \pt_{t}^{A}\left(
\varphi\circ\sigma\right)
\end{align*}
from which we see that $\pt_{t}^{A}\left(  \varphi\circ\sigma\right)  $
satisfies the same differential equation as $\pt_{t}^{\varphi^{\ast}A}\left(
\sigma\right)  .$
\end{proof}

\begin{corollary}
\label{cor.A.4}If $\dim M=2,$ $A\in\mathcal{A},$ and $\varphi:M\rightarrow M$
is an area preserving diffeomorphism, then $\left\Vert F^{\varphi^{\ast}%
A}\right\Vert ^{2}=\left\Vert F^{A}\right\Vert ^{2}.$
\end{corollary}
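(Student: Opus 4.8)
The plan is to reduce the statement to the pointwise transformation law for the curvature energy density in dimension two, after which the conclusion follows from the ordinary change of variables formula. First I would record what the integrand in Definition \ref{def.1.4} actually is when $d=2$: fixing $p\in M$ and any orthonormal basis $\{e_{1},e_{2}\}$ of $T_{p}M$, the only term in the sum is $1\leq 1<2\leq 2$, so
\[
\left\vert F^{A}\right\vert ^{2}\left(  p\right)  =\left\vert F^{A}
\left\langle e_{1},e_{2}\right\rangle \right\vert _{\mathfrak{k}}^{2},
\qquad\text{hence}\qquad
\left\Vert F^{A}\right\Vert ^{2}=\int_{M}\left\vert F^{A}\left\langle
e_{1},e_{2}\right\rangle \right\vert _{\mathfrak{k}}^{2}\,d\operatorname{Vol}
\nolimits_{g}.
\]

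Next I would invoke Proposition \ref{pro.A.3} to write $F^{\varphi^{\ast}A}
=\varphi^{\ast}F^{A}$ and compute the integrand of $\left\Vert F^{\varphi
^{\ast}A}\right\Vert ^{2}$ pointwise. Since $F^{A}$ is an alternating bilinear
form on each tangent space, for an orthonormal basis $\{e_{1},e_{2}\}$ of
$T_{p}M$ and any orthonormal basis $\{e_{1}^{\prime},e_{2}^{\prime}\}$ of
$T_{\varphi\left(  p\right)  }M$ we have
\[
\left(  \varphi^{\ast}F^{A}\right)  \left\langle e_{1},e_{2}\right\rangle
=F^{A}\left\langle d\varphi_{p}e_{1},d\varphi_{p}e_{2}\right\rangle
=J_{p}\,F^{A}\left\langle e_{1}^{\prime},e_{2}^{\prime}\right\rangle ,
\]
where $J_{p}$ is the determinant of the matrix expressing $d\varphi_{p}e_{1}
,d\varphi_{p}e_{2}$ in the basis $\{e_{1}^{\prime},e_{2}^{\prime}\}$. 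By
construction $\left\vert J_{p}\right\vert $ is exactly the local
area-distortion factor of $\varphi$ at $p$, which equals $1$ because $\varphi$
is area preserving. Taking $\mathfrak{k}$-norms therefore gives $\left\vert
F^{\varphi^{\ast}A}\right\vert ^{2}\left(  p\right)  =\left\vert J_{p}
\right\vert ^{2}\left\vert F^{A}\left\langle e_{1}^{\prime},e_{2}^{\prime
}\right\rangle \right\vert _{\mathfrak{k}}^{2}=\left\vert F^{A}\right\vert
^{2}\left(  \varphi\left(  p\right)  \right)  $, i.e. $\left\vert
F^{\varphi^{\ast}A}\right\vert ^{2}=\left\vert F^{A}\right\vert ^{2}
\circ\varphi$.

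Finally I would integrate this identity and apply the change of variables
formula. Since $\varphi$, and hence $\varphi^{-1}$, preserves $\operatorname{Vol}
\nolimits_{g}$, the Jacobian of $\varphi^{-1}$ has absolute value $1$ and so
$\int_{M}\left(  h\circ\varphi\right)  \,d\operatorname{Vol}\nolimits_{g}
=\int_{M}h\,d\operatorname{Vol}\nolimits_{g}$ for every nonnegative measurable
$h$; applying this with $h=\left\vert F^{A}\right\vert ^{2}$ yields
\[
\left\Vert F^{\varphi^{\ast}A}\right\Vert ^{2}=\int_{M}\left\vert
F^{A}\right\vert ^{2}\circ\varphi\,d\operatorname{Vol}\nolimits_{g}=\int
_{M}\left\vert F^{A}\right\vert ^{2}\,d\operatorname{Vol}\nolimits_{g}
=\left\Vert F^{A}\right\Vert ^{2}.
\]
I expect the only genuinely substantive point to be the second step: verifying
that the pointwise energy density is merely precomposed with $\varphi$, with
no residual weight. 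This rests on the special feature of dimension two that
curvature is a top-degree form, so its size at a point is captured by a single
scalar, and the area-preservation hypothesis is precisely what forces the
accompanying Jacobian factor $\left\vert J_{p}\right\vert $ to be $1$.
Everything else — the collapse of the sum in Definition \ref{def.1.4} to one
term, the independence of $\left\vert F^{A}\right\vert ^{2}$ from the chosen
orthonormal frame, and the change of variables step — is routine.
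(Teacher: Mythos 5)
Your proposal is correct and follows essentially the same route as the paper: both reduce to the pointwise identity $\left\vert F^{\varphi^{\ast}A}\right\vert ^{2}=\left\vert F^{A}\right\vert ^{2}\circ\varphi$ via $F^{\varphi^{\ast}A}=\varphi^{\ast}F^{A}$ together with the unit Jacobian, and then finish by the measure-preservation of $\varphi$. The paper phrases the middle step by writing $F^{A}=f\cdot\omega$ with $\varphi^{\ast}\omega=\pm\omega$, which is just your determinant computation in different notation.
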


\begin{proof}
By definition,%
\begin{equation}
\left\Vert F^{\varphi^{\ast}A}\right\Vert ^{2}=\int_{M}\left\vert
F^{\varphi^{\ast}A}\right\vert ^{2}d\operatorname*{Vol}\nolimits_{g}=\int
_{M}\left\vert \varphi^{\ast}F^{A}\right\vert ^{2}d\operatorname*{Vol}%
\nolimits_{g}. \label{e.A.6}%
\end{equation}
Since $d=2,$ if we let $\omega$ denote the (local) Riemannian volume form on
$M$ then $F^{A}=f\cdot\omega$ for some $f:M\rightarrow\mathfrak{k.}$ The
assumption that $\varphi$ is area preserving means $\varphi^{\ast}\omega
=\pm\omega$ and therefore,%
\[
\varphi^{\ast}F^{A}=f\circ\varphi\cdot\varphi^{\ast}\omega=\pm f\circ
\varphi\cdot\omega.
\]
As $\omega\left(  e_{1},e_{2}\right)  =\pm1$ where $\left\{  e_{1}%
,e_{2}\right\}  $ is any local orthonormal frame on $M,$ we find%
\[
\left\vert \varphi^{\ast}F^{A}\right\vert ^{2}=\left\vert f\circ
\varphi\right\vert ^{2}=\left\vert F^{A}\right\vert ^{2}\circ\varphi.
\]
Using this result in Eq. (\ref{e.A.6}) gives,%
\[
\left\Vert F^{\varphi^{\ast}A}\right\Vert ^{2}=\int_{M}\left\vert
F^{A}\right\vert ^{2}\circ\varphi~d\operatorname*{Vol}\nolimits_{g}=\int
_{M}\left\vert F^{A}\right\vert ^{2}~d\operatorname*{Vol}\nolimits_{g}%
=\left\Vert F^{A}\right\Vert ^{2},
\]
where in the second equality we have the area preserving assumption again,
namely that $\varphi_{\ast}\operatorname*{Vol}_{g}=\operatorname*{Vol}_{g}.$
\end{proof}

\subsection{Differential properties of parallel translation\label{sec.A.2}}

\begin{proposition}
[Connection Comparison]\label{pro.A.5}Suppose that $A$ and $B$ are two
connection $1$-forms, $\ell\in C^{1}\left(  \left[  0,1\right]  ,M\right)  ,$
and $k_{t}:=\pt_{t}^{B}\left(  \ell\right)  \pt_{t}^{A}\left(  \ell\right)
^{-1}$, then
\begin{equation}
\dot{k}_{t}+\left(  \mathrm{Ad}_{\pt_{t}^{A}\left(  \ell\right)  ^{-1}}\left[
B\left(  \dot{\ell}\left(  t\right)  \right)  -A\left(  \dot{\ell}\left(
t\right)  \right)  \right]  \right)  k_{t}=0\text{ with }k_{0}=0.
\label{e.A.7}%
\end{equation}

\end{proposition}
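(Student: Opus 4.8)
The plan is to obtain Eq.~(\ref{e.A.7}) by a single application of the Leibniz rule to the product $k_t$, using the two first--order ODEs that define parallel translation together with the formula for the derivative of an inverse. Write $P_t:=\pt_t^A(\ell)$ and $R_t:=\pt_t^B(\ell)$ for brevity; by Definition~\ref{def.1.2} these satisfy $\dot P_t+A(\dot\ell(t))P_t=0$ and $\dot R_t+B(\dot\ell(t))R_t=0$ with $P_0=R_0=I$. (The product in the definition of $k_t$ is to be read as $k_t=P_t^{-1}R_t=\pt_t^A(\ell)^{-1}\pt_t^B(\ell)$, which is the order for which Eq.~(\ref{e.A.7}) holds, with initial value $k_0=I$.) The one preparatory computation I would record is the derivative of the inverse: differentiating $P_t^{-1}P_t=I$ and inserting the ODE for $P_t$ gives
\[
\frac{d}{dt}P_t^{-1}=-P_t^{-1}\dot P_t P_t^{-1}=P_t^{-1}A(\dot\ell(t)),
\]
exactly as in the computation already carried out inside the proof of Corollary~\ref{cor.3.9}.

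Differentiating $k_t=P_t^{-1}R_t$ by the product rule and substituting this identity together with $\dot R_t=-B(\dot\ell(t))R_t$ then yields
\[
\dot k_t=\bigl(P_t^{-1}A(\dot\ell(t))\bigr)R_t+P_t^{-1}\bigl(-B(\dot\ell(t))R_t\bigr)=-P_t^{-1}\bigl[B(\dot\ell(t))-A(\dot\ell(t))\bigr]R_t.
\]
The key step is then to insert $P_tP_t^{-1}=I$ between the bracket and $R_t$ so as to expose an adjoint action: since $P_t^{-1}\xi P_t=\mathrm{Ad}_{P_t^{-1}}\xi$ for $\xi\in\mathfrak{k}$, the right--hand side equals $-\bigl(\mathrm{Ad}_{P_t^{-1}}[B(\dot\ell(t))-A(\dot\ell(t))]\bigr)P_t^{-1}R_t=-\bigl(\mathrm{Ad}_{\pt_t^A(\ell)^{-1}}[B(\dot\ell(t))-A(\dot\ell(t))]\bigr)k_t$, which is precisely Eq.~(\ref{e.A.7}). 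The initial condition $k_0=P_0^{-1}R_0=I$ is immediate.

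I do not expect a genuine obstacle: the whole argument is a one--line Leibniz computation, and uniqueness of solutions to linear ODEs is not even needed since the identity is derived directly. The only point that requires care is the bookkeeping of the non--commuting factors, namely keeping $B(\dot\ell)-A(\dot\ell)$ conjugated by $P_t^{-1}$ on the correct side and recognizing the conjugation as $\mathrm{Ad}_{P_t^{-1}}$ rather than $\mathrm{Ad}_{P_t}$. This is exactly the deterministic shadow of the Stratonovich manipulation used to prove Theorem~\ref{thm.4.10} (and of Corollary~\ref{cor.3.9}): inserting $\pt^A(\pt^A)^{-1}=I$ to produce the adjoint action is what collapses the two linear terms into the single $\mathrm{Ad}$--term.
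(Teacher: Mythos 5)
Your proof is correct and is essentially the paper's own argument: the paper applies $\nabla_t^B$ to the identity $\pt_t^B\left(\ell\right)=\pt_t^A\left(\ell\right)k_t$ and uses $\nabla_t^B-\nabla_t^A=B\left(\dot\ell\right)-A\left(\dot\ell\right)$ together with $\nabla_t^A\pt_t^A\left(\ell\right)=0$, which is the same one-line Leibniz computation you carry out directly on $k_t=\pt_t^A\left(\ell\right)^{-1}\pt_t^B\left(\ell\right)$, including the insertion of $\pt_t^A\left(\ell\right)\pt_t^A\left(\ell\right)^{-1}$ to expose $\mathrm{Ad}_{\pt_t^A\left(\ell\right)^{-1}}$. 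You also correctly identified that the statement's $k_t:=\pt_t^B\left(\ell\right)\pt_t^A\left(\ell\right)^{-1}$ and $k_0=0$ are typos for $k_t=\pt_t^A\left(\ell\right)^{-1}\pt_t^B\left(\ell\right)$ and $k_0=I$ --- the reading the paper's own proof (and its later use in Proposition \ref{pro.A.6}) implicitly adopts.
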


\begin{proof}
Since $\pt_{t}^{B}\left(  \ell\right)  =\pt_{t}^{A}\left(  \ell\right)
k_{t},$ $\nabla_{t}^{B}\pt_{t}^{B}\left(  \ell\right)  =0=\nabla_{t}%
^{A}\pt_{t}^{A}\left(  \ell\right)  $ it follows that%
\begin{align*}
0  &  =\nabla_{t}^{B}\pt_{t}^{B}\left(  \ell\right)  =\left(  \nabla_{t}%
^{B}\pt_{t}^{A}\left(  \ell\right)  \right)  k_{t}+\pt_{t}^{A}\left(
\ell\right)  \dot{k}_{t}\\
&  =\left(  \left[  \nabla_{t}^{B}-\nabla_{t}^{A}\right]  \pt_{t}^{A}\left(
\ell\right)  \right)  k_{t}+\pt_{t}^{A}\left(  \ell\right)  \dot{k}_{t}\\
&  =\left[  B\left(  \dot{\ell}\left(  t\right)  \right)  -A\left(  \dot{\ell
}\left(  t\right)  \right)  \right]  \pt_{t}^{A}\left(  \ell\right)
k_{t}+\pt_{t}^{A}\left(  \ell\right)  \dot{k}_{t}%
\end{align*}
from which Eq. (\ref{e.A.7}) follows.
\end{proof}

\begin{proposition}
[Connection Differentiation]\label{pro.A.6}If $\eta$ is a $\mathfrak{k}$
-valued one form on $M$ and $\ell\in C^{1}\left(  \left[  0,1\right]
,M\right)  ,$ then%
\begin{align}
\partial_{\eta}\left[  A\rightarrow\pt^{A}\left(  \ell\right)  \right]   &
=-\pt^{A}\left(  \ell\right)  \int_{0}^{1}\mathrm{Ad}_{\pt_{t}^{A}\left(
\ell\right)  ^{-1}}\eta\left(  \dot{\ell}\left(  t\right)  \right)
dt\label{e.A.8}\\
&  =-\left[  \int_{0}^{1}\mathrm{Ad}_{\pt_{t}^{A}\left(  \ell\right)  }%
\eta\left(  \dot{\ell}\left(  t\right)  \right)  dt\right]  \pt^{A}\left(
\ell\right)  \label{e.A.9}%
\end{align}

\end{proposition}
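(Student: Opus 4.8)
The plan is to realize $\partial_\eta$ as an $s$-derivative of the holonomy of the perturbed connection and to feed this into the Connection Comparison identity of Proposition \ref{pro.A.5}. Concretely, set $A_s := A + s\eta$ and let
\[
k_t^{(s)} := \pt_t^A\left(\ell\right)^{-1}\pt_t^{A_s}\left(\ell\right),
\]
so that $\pt^{A_s}\left(\ell\right) = \pt^A\left(\ell\right)\,k_1^{(s)}$ and hence
\[
\partial_\eta\left[A\rightarrow\pt^A\left(\ell\right)\right] = \frac{d}{ds}\Big|_0 \pt^{A_s}\left(\ell\right) = \pt^A\left(\ell\right)\,\frac{d}{ds}\Big|_0 k_1^{(s)}.
\]
By Proposition \ref{pro.A.5} applied with $B = A_s$ (so that $B\left(\dot\ell\right) - A\left(\dot\ell\right) = s\,\eta\left(\dot\ell\right)$), the process $k_t^{(s)}$ solves the linear ODE
\[
\dot k_t^{(s)} + s\,\big(\mathrm{Ad}_{\pt_t^A\left(\ell\right)^{-1}}\eta\left(\dot\ell\left(t\right)\right)\big)\,k_t^{(s)} = 0,\qquad k_0^{(s)} = I.
\]

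First I would note that $k_t^{(0)} \equiv I$ and then differentiate this equation in $s$ at $s=0$. Writing $w_t := \frac{d}{ds}\big|_0 k_t^{(s)}$ and using $k_t^{(0)} = I$, the product rule annihilates the quadratic-in-$s$ contribution and leaves the elementary equation $\dot w_t = -\mathrm{Ad}_{\pt_t^A\left(\ell\right)^{-1}}\eta\left(\dot\ell\left(t\right)\right)$ with $w_0 = 0$. Integrating gives $w_1 = -\int_0^1 \mathrm{Ad}_{\pt_t^A\left(\ell\right)^{-1}}\eta\left(\dot\ell\left(t\right)\right)\,dt$, and substituting into the display above yields Eq. (\ref{e.A.8}) immediately. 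The differentiation here needs essentially no justification: since the coefficient is $s$ times an $s$-independent (and, under the standing $C^1$ hypothesis on $\ell$ and boundedness of $\eta$ along $\ell$, bounded measurable) matrix function of $t$, the solution $k_t^{(s)}$ is an entire function of $s$ given by its convergent time-ordered exponential, so $\frac{d}{ds}\big|_0$ merely extracts the term linear in $s$.

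For the second equality (\ref{e.A.9}) I would commute the leading holonomy $\pt^A\left(\ell\right)$ past the adjoint factor inside the integral, using the concatenation property of parallel translation. For any $X\in\mathfrak{k}$ (here $X = \eta\left(\dot\ell\left(t\right)\right)$),
\[
\pt^A\left(\ell\right)\,\mathrm{Ad}_{\pt_t^A\left(\ell\right)^{-1}}\left(X\right) = \mathrm{Ad}_{\pt^A\left(\ell\right)\pt_t^A\left(\ell\right)^{-1}}\left(X\right)\,\pt^A\left(\ell\right),
\]
where $\pt^A\left(\ell\right)\pt_t^A\left(\ell\right)^{-1}$ is parallel translation along the remaining arc of $\ell$ from $\ell\left(t\right)$ to its endpoint. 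Moving this identity through the integral transfers $\pt^A\left(\ell\right)$ to the right of the integral and recasts the integrand in the left-factored adjoint form of (\ref{e.A.9}).

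The computation is short, so there is no serious obstacle; the only point that requires care is bookkeeping of the order of the noncommuting factors (which side the perturbation and the endpoint holonomy land on), and the conjugation step for the second equality. As a cross-check, and as an alternative route that avoids invoking Proposition \ref{pro.A.5}, one can differentiate the defining ODE $\frac{d}{dt}\pt_t^{A_s}\left(\ell\right) + A_s\left(\dot\ell\left(t\right)\right)\pt_t^{A_s}\left(\ell\right) = 0$ directly in $s$, obtaining $\nabla_t^A p_t = -\eta\left(\dot\ell\left(t\right)\right)\pt_t^A\left(\ell\right)$ for $p_t := \frac{d}{ds}\big|_0\pt_t^{A_s}\left(\ell\right)$, and then integrate this using Remark \ref{rem.A.2} (Eq. (\ref{e.A.5})), namely $\nabla_t^A p_t = \pt_t^A\left(\ell\right)\frac{d}{dt}\big[\pt_t^A\left(\ell\right)^{-1}p_t\big]$; this reproduces (\ref{e.A.8}) verbatim.
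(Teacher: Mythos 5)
Your proof is correct and follows essentially the same route as the paper: your main argument is precisely the paper's second proof (apply the Connection Comparison Proposition \ref{pro.A.5} with $B=A+s\eta$ and differentiate the resulting ODE for $k_t^{(s)}$ at $s=0$), and your ``cross-check'' is verbatim the paper's first proof (differentiate the defining ODE in $s$ and integrate via Eq. (\ref{e.A.5})). One small caveat on the passage to Eq. (\ref{e.A.9}): your conjugation identity correctly produces $\mathrm{Ad}_{\pt^{A}\left(\ell\right)\pt_{t}^{A}\left(\ell\right)^{-1}}$ (the adjoint of the holonomy along the \emph{remaining} arc), which is not literally the $\mathrm{Ad}_{\pt_{t}^{A}\left(\ell\right)}$ printed in Eq. (\ref{e.A.9}); the printed formula appears to be a typo (compare the notation $\pt_{1\leftarrow\tau}^{A}$ in Theorem \ref{thm.B.25}), so what you derived is the correct statement, but you should not claim the two adjoints coincide as written.
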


\begin{proof}
\textbf{First proof. }Differentiating the identity, $0=\nabla_{t}^{A+s\eta
}\pt_{t}^{A+s\eta}\left(  \ell\right)  $ with respect to $s$ gives,
\begin{align*}
0  &  =\frac{d}{ds}|_{0}\left[  \nabla_{t}^{A+s\eta}\pt_{t}^{A+s\eta}\left(
\ell\right)  \right] \\
&  =\left[  \frac{d}{ds}|_{0}\nabla_{t}^{A+s\eta}\right]  \pt_{t}^{A}\left(
\ell\right)  +\nabla_{t}^{A}\left[  \frac{d}{ds}|_{0}\pt_{t}^{A+s\eta}\left(
\ell\right)  \right] \\
&  =\eta\left(  \dot{\ell}\left(  t\right)  \right)  \pt_{t}^{A}\left(
\ell\right)  +\nabla_{t}^{A}\partial_{\eta}\pt_{t}^{A}\left(  \ell\right) \\
&  =\eta\left(  \dot{\ell}\left(  t\right)  \right)  \pt_{t}^{A}\left(
\ell\right)  +\pt_{t}^{A}\left(  \ell\right)  \frac{d}{dt}\left[  \pt_{t}%
^{A}\left(  \ell\right)  ^{-1}\partial_{\eta}\pt_{t}{}^{A}\left(  \ell\right)
\right]
\end{align*}
wherein we have used Eq. (\ref{e.A.5}) in the last equality. Multiplying this
equation by $\pt_{t}^{A}\left(  \ell\right)  ^{-1}$ and then integrating the
result easily gives Eq. (\ref{e.A.8}) which is equivalent to Eq. (\ref{e.A.9}).

\textbf{Second proof.} Letting $B=A+s\eta$ in Proposition \ref{pro.A.5} shows
$\pt_{t}^{A+s\eta}\left(  \ell\right)  =\pt_{t}^{A}\left(  \ell\right)
k_{t}^{s}$ where
\[
\dot{k}_{t}^{s}+s\left(  \mathrm{Ad}_{\pt_{t}^{A}\left(  \ell\right)  ^{-1}%
}\eta\left(  \dot{\ell}\left(  t\right)  \right)  \right)  k_{t}^{s}=0\text{
with }k_{0}^{s}=0.
\]
Differentiating this equation with respect to $s$ at $s=0$ while using
$k_{t}^{0}=I$ shows
\[
\frac{d}{ds}|_{0}\dot{k}_{t}^{s}+\mathrm{Ad}_{\pt_{t}^{A}\left(  \ell\right)
^{-1}}\eta\left(  \dot{\ell}\left(  t\right)  \right)  =0
\]
and then integrating this result relative to $t$ shows
\[
\frac{d}{ds}|_{0}k_{t}^{s}=-\int_{0}^{t}\mathrm{Ad}_{\pt_{\tau}^{A}\left(
\ell\right)  ^{-1}}\eta\left(  \dot{\ell}\left(  \tau\right)  \right)  d\tau.
\]
Hence it follows that
\begin{align*}
\partial_{\eta}\left[  A\rightarrow\pt_{t}^{A}\left(  \ell\right)  \right]
&  =\frac{d}{ds}|_{0}\pt_{t}^{A+s\eta}\left(  \ell\right)  =\frac{d}{ds}%
|_{0}\pt_{t}^{A}\left(  \ell\right)  k_{t}^{s}\\
&  =-\pt_{t}^{A}\left(  \ell\right)  \int_{0}^{t}\mathrm{Ad}_{\pt_{\tau}%
^{A}\left(  \ell\right)  ^{-1}}\eta\left(  \dot{\ell}\left(  \tau\right)
\right)  d\tau.
\end{align*}

\end{proof}

\begin{proposition}
[Path Differentiation]\label{pro.A.7}Suppose $\ell_{s}$ is a one parameter
family of curves parametrized by an interval, $\left[  0,1\right]  $ such that
$\ell_{s}\left(  0\right)  $ is constant independent of $s$ and let
$\pt_{t}^{A}\left(  \ell\right)  $ denote parallel translation along $\ell
_{s}|_{\left[  0,t\right]  }.$ Then,%
\begin{equation}
\frac{\nabla^{A}}{ds}\pt_{t}^{A}\left(  \ell_{s}\right)  =\pt_{t}^{A}\left(
\ell_{s}\right)  \int_{0}^{t}\mathrm{Ad}_{\pt_{\tau}^{A}\left(  \ell
_{s}\right)  ^{-1}}F^{A}\left(  \dot{\ell}_{s}\left(  \tau\right)  ,\ell
_{s}^{\prime}\left(  \tau\right)  \right)  d\tau\label{e.A.10}%
\end{equation}
and, if we further assume that $\ell_{s}\left(  1\right)  $ is constant
independent of $s$, then%
\begin{equation}
\frac{d}{ds}\pt_{1}^{A}\left(  \ell_{s}\right)  =\pt_{1}^{A}\left(  \ell
_{s}\right)  \int_{0}^{1}\mathrm{Ad}_{\pt_{\tau}^{A}\left(  \ell_{s}\right)
^{-1}}F^{A}\left(  \dot{\ell}_{s}\left(  \tau\right)  ,\ell_{s}^{\prime
}\left(  \tau\right)  \right)  d\tau. \label{e.A.11}%
\end{equation}
Equation (\ref{e.A.10}) may also be expressed as,%
\begin{equation}
\frac{d}{ds}\left[  \pt_{s}^{A}\left(  \ell_{\left(  \cdot\right)  }\left(
t\right)  \right)  ^{-1}\pt_{t}^{A}\left(  \ell_{s}\right)  \right]  =\left[
\pt_{s}^{A}\left(  \ell_{\left(  \cdot\right)  }\left(  t\right)  \right)
^{-1}\pt_{t}^{A}\left(  \ell_{s}\right)  \right]  \int_{0}^{t}\mathrm{Ad}%
_{\pt_{\tau}^{A}\left(  \ell_{s}\right)  ^{-1}}F^{A}\left(  \dot{\ell}%
_{s}\left(  \tau\right)  ,\ell_{s}^{\prime}\left(  \tau\right)  \right)
d\tau. \label{e.A.12}%
\end{equation}

\end{proposition}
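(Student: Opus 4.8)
The plan is to exploit the fact, already recorded in Eq.~(\ref{e.A.4}), that curvature is precisely the commutator of the two covariant derivatives, together with the variation-of-parameters identity Eq.~(\ref{e.A.5}). I introduce the $C^1$ map $\Sigma(s,t):=\ell_s(t)$ and abbreviate $U(s,t):=\pt_t^A(\ell_s)$ for parallel translation along $\ell_s|_{[0,t]}$, so that by definition $\nabla_t^A U(s,t)=0$ with $U(s,0)=I$. The object to identify is $V(s,t):=\frac{\nabla^A}{ds}U(s,t)=\nabla_s^A U(s,t)$, and the strategy is to derive a linear ODE in $t$ that $V$ satisfies and then solve it explicitly.

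First I would compute $\nabla_t^A V$. Since $\nabla_t^A U\equiv 0$ for every fixed $s$, applying $\nabla_s^A$ gives $\nabla_s^A\nabla_t^A U=0$, so by the commutator identity Eq.~(\ref{e.A.4}),
\[
\nabla_t^A V=\nabla_t^A\nabla_s^A U=\left[\nabla_t^A,\nabla_s^A\right]U=F^A\!\left(\dot\ell_s(t),\ell_s'(t)\right)U(s,t).
\]
For the initial condition I would use the hypothesis that $\ell_s(0)$ is independent of $s$: this forces $\ell_s'(0)=0$, while $U(s,0)\equiv I$, so that $V(s,0)=\frac{d}{ds}U(s,0)+A\!\left(\ell_s'(0)\right)U(s,0)=0$.

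Next I would solve this inhomogeneous linear ODE by variation of parameters, using Eq.~(\ref{e.A.5}) with $S=V$, which reads $\nabla_t^A V=U\,\frac{d}{dt}\!\left[U^{-1}V\right]$. Combining with the displayed formula for $\nabla_t^A V$ and cancelling $U$ gives $\frac{d}{dt}\!\left[U^{-1}V\right]=\mathrm{Ad}_{U(s,t)^{-1}}F^A\!\left(\dot\ell_s(t),\ell_s'(t)\right)$; integrating from $0$ to $t$ and using $U^{-1}V|_{t=0}=0$ yields exactly Eq.~(\ref{e.A.10}). To obtain Eq.~(\ref{e.A.11}) I would observe that if $\ell_s(1)$ is also constant in $s$ then $\ell_s'(1)=0$, whence $\nabla_s^A U(s,1)=\frac{d}{ds}U(s,1)$, so Eq.~(\ref{e.A.10}) at $t=1$ collapses to Eq.~(\ref{e.A.11}). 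Finally, Eq.~(\ref{e.A.12}) is just a rewriting of Eq.~(\ref{e.A.10}): applying Eq.~(\ref{e.A.5}) in the $s$-variable along the transverse curve $s\mapsto\ell_s(t)$, with $P(s):=\pt_s^A(\ell_{(\cdot)}(t))$, converts $\nabla_s^A U=U\,(\text{integral})$ into $\frac{d}{ds}\!\left[P(s)^{-1}U(s,t)\right]=P(s)^{-1}U(s,t)\,(\text{integral})$.

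The only real care needed is bookkeeping. I must set up $\nabla_t^A$ and $\nabla_s^A$ exactly as in the derivation of Eq.~(\ref{e.A.4}), so that the commutator produces $F^A\!\left(\dot\Sigma,\Sigma'\right)$ with the correct ordering of the two tangent arguments, and I must verify that the boundary contributions vanish solely by virtue of the endpoint-constancy hypotheses on $\ell_s(0)$ (and, for Eq.~(\ref{e.A.11}), $\ell_s(1)$). There is no genuine analytic obstacle, since $\Sigma$ is $C^1$ in $(s,t)$ and all the equations involved are linear first-order ODEs with unique solutions.
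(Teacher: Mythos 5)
Your proposal is correct and follows essentially the same route as the paper: apply the commutator identity Eq.~(\ref{e.A.4}) together with $\nabla_t^A\pt_t^A(\ell_s)=0$, rewrite via Eq.~(\ref{e.A.5}), and integrate in $t$, with the endpoint hypotheses killing the boundary terms. You are in fact slightly more thorough than the paper's own proof, which leaves the vanishing initial condition at $t=0$ and the derivation of Eq.~(\ref{e.A.12}) implicit.
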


\begin{proof}
By Eq. (\ref{e.A.4}) and the fact that $\frac{\nabla^{A}}{dt}\pt_{t}%
^{A}\left(  \ell_{s}\right)  =0,$%
\[
\frac{\nabla^{A}}{dt}\frac{\nabla^{A}}{ds}\pt_{t}^{A}\left(  \ell_{s}\right)
=\left[  \frac{\nabla^{A}}{dt},\frac{\nabla^{A}}{ds}\right]  \pt_{t}%
^{A}\left(  \ell_{s}\right)  =F^{A}\left(  \dot{\ell}_{s}\left(  t\right)
,\ell_{s}^{\prime}\left(  t\right)  \right)  \pt_{t}^{A}\left(  \ell
_{s}\right)  .
\]
By Remark \ref{rem.A.2}, the last identity may be rewritten as,
\[
\frac{d}{dt}\left[  \pt_{t}^{A}\left(  \ell\right)  ^{-1}\frac{\nabla^{A}}%
{ds}\pt_{t}^{A}\left(  \ell_{s}\right)  \right]  =\pt_{t}^{A}\left(
\ell\right)  ^{-1}F^{A}\left(  \dot{\ell}\left(  t\right)  ,\ell_{s}^{\prime
}\left(  t\right)  \right)  \pt_{t}^{A}\left(  \ell\right)  .
\]
Integrating this equation on $t$ gives Eq. (\ref{e.A.10}). If we now assume
$\ell_{s}\left(  1\right)  $ is constant in $s,$ then
\[
\frac{\nabla^{A}}{ds}\pt_{1}^{A}\left(  \ell_{s}\right)  =\left(  \frac{d}%
{ds}+A\left\langle \ell_{s}^{\prime}\left(  1\right)  \right\rangle \right)
\pt_{1}^{A}\left(  \ell_{s}\right)  =\frac{d}{ds}\pt_{1}^{A}\left(  \ell
_{s}\right)
\]
which combined with Eq. (\ref{e.A.10}) at $t=1$ gives Eq. (\ref{e.A.11}).
\end{proof}

For more information on Proposition \ref{pro.A.7} much more related material
to this and the next appendix, see \cite{Gross1985} and \cite{Driver89a}.

\section{Homotopy gauge fixing of Yang-Mills\label{sec.B}}

The goal of this appendix is to motivate the definition of the Yang-Mills
measure as used in this paper. We also wish to give a heuristic argument that
the resulting expectations should be invariant under area preserving
diffeomorphisms. We begin with a few general results in finite dimensions
which we will later apply (illegally) in the infinite dimensional Yang-Mills
context. For an interesting general discussion of gauge fixing from a
differential form point, as apposed to the more measure theoretic view
described in this appendix, see \cite[Section III]{Nguyen2016b}. There is of
course a huge physics literature on various methods of gauge fixing which we
do not attempt to survey here. However, the interested reader might start with
Chapter 13 in \cite[Section 13.6]{DHoker2004} or Chapter 15 in
\cite{Weinberg2005a} and then consult some of the references in
\cite{Nguyen2016b}.

\subsection{Group actions and gauges\label{sec.B.1}}

We will use the following notation throughout this subsection.

\begin{notation}
\label{not.B.1}Let $\left(  \mathcal{A},\mathcal{G},m,\lambda\right)  $ be a
quadruple consisting of a smooth manifold, $\mathcal{A},$ a Lie group
$\mathcal{G},$ a smooth measure $\left(  m\right)  $ on $\mathcal{A},$ and a
right invariant Haar measure $\left(  \lambda\right)  $ on $\mathcal{G}.$ We
assume that there is a given right action of $\mathcal{G}$ on $\mathcal{A}$
and that the measure $m$ is invariant under this right action, i.e. $m$ is
invariant under the transformation, $\mathcal{A\ni}A\rightarrow Ag\in
\mathcal{A}$ for each $g\in\mathcal{G}.$
\end{notation}

\begin{definition}
\label{def.B.2}A \textbf{gauge} is a smooth function, $v:\mathcal{A}%
\rightarrow\mathcal{G},$ such that $v\left(  Ag\right)  =v\left(  A\right)  g$
for all $A\in\mathcal{A}$ and $g\in\mathcal{G}.$ Associated to $v$ we defined
the \textquotedblleft projection map,\textquotedblright\ $\pi_{v}%
:\mathcal{A}\rightarrow\mathcal{A},$ by $\pi_{v}\left(  A\right)  :=A\cdot
v\left(  A\right)  ^{-1}$ and let
\[
\mathcal{A}_{v}:=\pi_{v}\left(  \mathcal{A}\right)  =\left\{  A\cdot v\left(
A\right)  ^{-1}:A\in\mathcal{A}\right\}  .
\]

\end{definition}

\begin{lemma}
\label{lem.B.3}If $v:\mathcal{A}\rightarrow\mathcal{G}$ is a gauge and
$A,B\in\mathcal{A}$, then;

\begin{enumerate}
\item $\pi_{v}$ is constant on gauge orbits,

\item $\pi_{v}\circ\pi_{v}=\pi_{v}$ (i.e. $\pi_{v}|_{\mathcal{A}_{v}}$ is the
identity on $\mathcal{A}_{v}),$

\item $\mathcal{A}_{v}$ may also be expresses as%
\[
\mathcal{A}_{v}=\left\{  A\in\mathcal{A}:v\left(  A\right)  =I\in
\mathcal{G}\right\}  ,
\]

\item $\mathcal{A}_{v}$ is an embedded submanifold of $\mathcal{A},$

\item $\pi_{v}\left(  A\right)  =\pi_{v}\left(  B\right)  $ iff $A$ and $B$
are in the same $\mathcal{G}$-orbit, and

\item the map,
\begin{equation}
\mathcal{A}_{v}\times\mathcal{G}\ni\left(  A,g\right)  \rightarrow A\cdot
g\in\mathcal{A} \label{e.B.1}%
\end{equation}
is as diffeomorphism of smooth manifolds.
\end{enumerate}
\end{lemma}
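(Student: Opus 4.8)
I need to verify the six claims in Lemma~\ref{lem.B.3} about the gauge $v:\mathcal{A}\rightarrow\mathcal{G}$, the projection $\pi_v(A)=A\cdot v(A)^{-1}$, and the slice $\mathcal{A}_v=\pi_v(\mathcal{A})$. The whole proof rests on the defining equivariance $v(Ag)=v(A)g$, so my first move is to extract its consequences: applying it with the identity-shifted argument and using right-invariance of the group operation. Let me sketch each part.

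\textbf{The plan.} First I would prove (1) and (2) together, since they are purely algebraic. For (1), if $B=Ag$ lies in the same orbit as $A$, then
\[
\pi_v(Ag)=Ag\cdot v(Ag)^{-1}=Ag\cdot\bigl(v(A)g\bigr)^{-1}=Ag\cdot g^{-1}v(A)^{-1}=A\cdot v(A)^{-1}=\pi_v(A),
\]
so $\pi_v$ is constant on orbits. For (2), I first compute $v(\pi_v(A))$: using equivariance with $g=v(A)^{-1}$ gives $v\bigl(A\cdot v(A)^{-1}\bigr)=v(A)\cdot v(A)^{-1}=I$. Hence $\pi_v(\pi_v(A))=\pi_v(A)\cdot v(\pi_v(A))^{-1}=\pi_v(A)\cdot I=\pi_v(A)$, which is idempotence. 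This same computation $v(\pi_v(A))=I$ immediately proves one inclusion of (3), namely $\mathcal{A}_v\subseteq\{A:v(A)=I\}$; the reverse inclusion is trivial since $v(A)=I$ forces $A=\pi_v(A)\in\mathcal{A}_v$. Next, (5) follows by combining (1) with the orbit structure: if $\pi_v(A)=\pi_v(B)$ then $A\cdot v(A)^{-1}=B\cdot v(B)^{-1}$, so $B=A\cdot\bigl(v(A)^{-1}v(B)\bigr)$ shows $A,B$ lie in one orbit; conversely (1) gives the other direction.

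\textbf{The analytic steps.} For (4), that $\mathcal{A}_v$ is an embedded submanifold, I would use the characterization in (3) as a level set together with the idempotence in (2). The cleanest route is to observe that $\pi_v:\mathcal{A}\rightarrow\mathcal{A}$ is a smooth idempotent map (a retraction) onto $\mathcal{A}_v$; a standard fact is that the image of a smooth idempotent (the fixed-point set) is an embedded submanifold, with tangent space at $A\in\mathcal{A}_v$ equal to the range of $d\pi_v(A)$. I would either invoke this directly or, if the paper prefers self-containment, note $\mathcal{A}_v=\{v=I\}$ and check that $I$ is a regular value of $v$ using equivariance to show the differential of $v$ is surjective along fibers (the orbit directions map isomorphically onto $T_I\mathcal{G}$). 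For (6), the map $\Phi:\mathcal{A}_v\times\mathcal{G}\rightarrow\mathcal{A}$, $\Phi(A,g)=A\cdot g$, I would construct the inverse explicitly: given $B\in\mathcal{A}$, set $g=v(B)$ and $A=\pi_v(B)=B\cdot v(B)^{-1}$; then $A\in\mathcal{A}_v$ by (3), and $\Phi(A,g)=B\cdot v(B)^{-1}\cdot v(B)=B$. Uniqueness follows from (5) and the freeness encoded in the action, so $\Phi$ is a smooth bijection with smooth inverse $B\mapsto(\pi_v(B),v(B))$, hence a diffeomorphism.

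\textbf{The main obstacle.} The only nontrivial point is the submanifold claim (4) and the smoothness of the inverse in (6); everything else is formal manipulation of the equivariance identity. The delicate issue in (4) is confirming constant rank of $\pi_v$ (or regularity of $v$) so that the retraction theorem applies cleanly. I expect the smoothness of $\pi_v$ and $v$ (built into the definition of a gauge) plus equivariance to give constant rank for free, but that is the step where I would be most careful to state the regularity hypothesis precisely. Once (4) is in hand, the diffeomorphism in (6) is immediate from the explicit formulas, since both $\pi_v$ and $v$ are smooth by assumption.
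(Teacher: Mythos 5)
Your proof is correct and follows essentially the same route as the paper: direct manipulation of the equivariance identity $v(Ag)=v(A)g$ for items (1)--(3) and (5), the observation that $\frac{d}{dt}|_{0}v(Ae^{t\xi})=L_{v(A)*}\xi$ makes $v$ a submersion so that $\mathcal{A}_v=v^{-1}(\{I\})$ is embedded for item (4), and the explicit smooth inverse $B\mapsto(\pi_v(B),v(B))$ for item (6). The only cosmetic differences are that the paper derives idempotence of $\pi_v$ directly from item (1) rather than from $v(\pi_v(A))=I$, and it does not need your retraction-theorem alternative or the constant-rank worry, since the submersion argument handles (4) outright.
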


\begin{proof}
We take each item in turn.

\begin{enumerate}
\item If $A\in\mathcal{A}$ and $g\in\mathcal{G},$ then%
\[
\pi_{v}\left(  Ag\right)  =Ag\cdot v\left(  Ag\right)  ^{-1}=Ag\cdot\left[
v\left(  A\right)  g\right]  ^{-1}=A\cdot v\left(  A\right)  ^{-1}=\pi
_{v}\left(  A\right)
\]
which shows $\pi_{\nu}$ is constant on $\mathcal{G}$-orbits.

\item If $A\in\mathcal{A},$ then $A$ and $\pi_{v}\left(  A\right)  $ are in
the same gauge orbit and hence $\pi_{v}\left(  \pi_{v}\left(  A\right)
\right)  =\pi_{v}\left(  A\right)  .$

\item If $A\in\mathcal{A}_{v}$ then $A=\pi_{\nu}\left(  A\right)  =A\cdot
v\left(  A\right)  ^{-1}$ and therefore,%
\[
v\left(  A\right)  =v\left(  A\cdot v\left(  A\right)  ^{-1}\right)  =v\left(
A\right)  \cdot v\left(  A\right)  ^{-1}=I.
\]
Conversely if $\nu\left(  A\right)  =I,$ then $\pi_{v}\left(  A\right)
=A\in\mathcal{A}_{v}.$

\item If $A\in\mathcal{A\ }$and $\xi\in\operatorname*{Lie}\left(
\mathcal{G}\right)  =T_{I}\mathcal{G},$ then
\[
\frac{d}{dt}|_{0}v\left(  Ae^{t\xi}\right)  =\frac{d}{dt}|_{0}\left[  v\left(
A\right)  e^{t\xi}\right]  =L_{v\left(  A\right)  \ast}\xi
\]
where the latter expression varies over $T_{v\left(  A\right)  }G$ as $\xi$
varies over $\operatorname*{Lie}\left(  \mathcal{G}\right)  .$ This shows
$\nu$ is a submersion and so the level sets of $v$ are all embedded
submanifolds, in particular$\mathcal{A}_{v}=v^{-1}\left(  \left\{  I\right\}
\right)  $ is an embedded submanifold.

\item The condition that $\pi_{v}\left(  A\right)  =\pi_{v}\left(  B\right)  $
is equivalent to $A\cdot v\left(  A\right)  ^{-1}=B\cdot v\left(  B\right)
^{-1}$ which is then equivalent to $B=A\cdot\left[  v\left(  A\right)
^{-1}v\left(  B\right)  \right]  ,$ i.e. $B$ and $A$ are in the same gauge orbit.

\item The inverse to the smooth map in Eq. (\ref{e.B.1}) is the smooth map,
$\mathcal{A}\ni A\rightarrow\left(  \pi_{v}\left(  A\right)  ,v\left(
A\right)  \right)  .$
\end{enumerate}
\end{proof}

\begin{example}
[Product groups]\label{ex.B.4}Let $\mathcal{G}$ be a Lie group, $N\in
\mathbb{N},$ $\mathcal{A}=\mathcal{G}^{N},$ and let $\mathcal{G}$ act on
$\mathcal{A}$ on the right by the diagonal action,
\[
\mathcal{A\times G}\ni\left(  \overrightarrow{g},k\right)  \rightarrow
\overrightarrow{g}\cdot k\in\mathcal{A}\text{ where }\left[  \overrightarrow
{g}\cdot k\right]  _{i}=g_{i}k\text{ for }1\leq i\leq N.
\]
Then $\nu:\mathcal{A}\rightarrow\mathcal{G}$ defined by $v\left(
\overrightarrow{g}\right)  =g_{1}$ is a gauge. In this case
\[
\pi_{v}\left(  \overrightarrow{g}\right)  =\overrightarrow{g}\cdot g_{1}%
^{-1}=\left(
\begin{array}
[c]{c}%
e\\
g_{2}g_{1}^{-1}\\
\vdots\\
g_{N}g_{1}^{-1}%
\end{array}
\right)
\]
and $\mathcal{A}_{v}=\left\{  e\right\}  \times\mathcal{G}^{N-1}.$
\end{example}

\begin{example}
\label{ex.B.5}Let $\mathcal{A}=\mathbb{R}^{n},$ $\mathcal{G}=\mathbb{R},$
$\xi\in sl\left(  n,\mathbb{R}\right)  $ such that $\xi_{lk}=0$ if either $l$
or $k=n$ and for $x\in\mathbb{R}^{n}$ (thought of as row vector) and
$t\in\mathbb{R}$ let
\[
x\cdot t:=xe^{t\xi}+te_{n}=\left[  x+te_{n}\right]  e^{t\xi}.
\]
Since $e_{n}\xi=0$ we have $e_{n}e^{t\xi}=e_{n}$ and by assumption $e^{t\xi}$
preserves $\operatorname*{span}\left(  e_{k}\right)  _{k<n}$ and hence%
\[
\left(  x\cdot t\right)  \cdot s=\left(  xe^{t\xi}+te_{n}\right)  e^{s\xi
}+se_{n}=xe^{t\xi}e^{s\xi}+te_{n}+se_{n}=x\cdot\left(  t+s\right)  .
\]
In this case the projection map, $v\left(  x\right)  =x_{n}$ is a gauge with
\[
\pi_{\nu}\left(  x\right)  =xe^{-x_{n}\xi}-x_{n}e_{n}=\left(
\begin{array}
[c]{c}%
x_{1}\\
\vdots\\
x_{n-1}\\
0
\end{array}
\right)  e^{-x_{n}\xi}\text{ and }\mathcal{A}_{v}=\mathbb{R}^{n-1}%
\times\left\{  0\right\}  .
\]

\end{example}

\begin{example}
\label{ex.B.6}Let us specializing Example \ref{ex.B.5} to $n=3$ and%
\[
\xi=\left[
\begin{array}
[c]{ccc}%
0 & -1 & 0\\
1 & 0 & 0\\
0 & 0 & 0
\end{array}
\right]  \quad\implies\quad\text{ }e^{t\xi}=\left[
\begin{array}
[c]{ccc}%
\cos t & -\sin t & 0\\
\sin t & \cos t & 0\\
0 & 0 & 1
\end{array}
\right]  .
\]
In this case the gauge orbits are spirals. For example, the gauge orbit of
$e_{1}=\left(  1,0,0\right)  \in\mathbb{R}^{3}$ is the spiral, $\left\{
e_{1}\cdot t=\left(  \cos t,-\sin t,t\right)  :\text{ }t\in\mathbb{R}\right\}
. $ \iffalse%
\[
\left\{  e_{1}\cdot t=\left(  1,0,0\right)  \cdot t=\left(  \cos t,-\sin
t,t\right)  =\left(  1,0,t\right)  e^{t\xi}\text{ for }t\in\mathbb{R}%
.\right\}
\]
\begin{figure}[ptbh]
\centering
\par
\psize{3in} %
\executeiffilenewer{\GraphicsDirectorygorbit.svg}{\GraphicsDirectorygorbit.pdf}%
{inkscape -z -D --file=\GraphicsDirectorygorbit.svg --export-pdf=\GraphicsDirectorygorbit.pdf --export-latex}%
\input{\GraphicsDirectorygorbit.pdf_tex}%
\caption{Here is the gauge orbit of
$e_{1}=\left(  1,0,0\right)  .$}%
\end{figure}\fi

\end{example}

Examples \ref{ex.B.5} and \ref{ex.B.6} were examples of \textquotedblleft
affine actions,\textquotedblright\ which we now define.

\begin{definition}
[Affine actions]\label{def.B.7}Assume $\left(  \mathcal{A},\mathcal{G}\right)
$ as above with $\mathcal{A}$ being a finite dimensional vector space and let
$SL\left(  \mathcal{A}\right)  $ denote the special linear transformations on
$\mathcal{A}.$ We say the group action of $\mathcal{G}$ on $\mathcal{A}$ is an
\textbf{affine action }if it may be written in the form;%
\begin{equation}
Ag=\rho\left(  g^{-1}\right)  A+T\left(  g\right)  \label{e.B.2}%
\end{equation}
where $\rho:\mathcal{G}\rightarrow SL\left(  \mathcal{A}\right)  $ is a
representation of $\mathcal{G}$ and $T:\mathcal{G}\rightarrow\mathcal{A}$ is a
smooth function.
\end{definition}

\begin{remark}
\label{rem.B.8}It is left to the interested reader to verify that $T\left(
e\right)  =0$ and the pair, $\left(  \rho,T\right)  ,$ must satisfy the
\textquotedblleft cocylcle\textquotedblright\ condition;%
\begin{equation}
T\left(  gh\right)  =\rho\left(  h^{-1}\right)  T\left(  g\right)  +T\left(
h\right)  \text{ }\forall~g,h\in\mathcal{G}. \label{e.B.3}%
\end{equation}

\end{remark}

The key formal example of an affine action is the right action of the
restricted gauge group acting on connection one forms as in Eq. (\ref{e.1.1}).
We will work heuristically with this formal infinite dimensional setup in
Subsection \ref{sec.B.4} below.

\subsection{Disintegration formulas\label{sec.B.2}}

\begin{proposition}
[Disintegration]\label{pro.B.9}Let $\left(  \mathcal{A},\mathcal{G}%
,m,\lambda\right)  $ be as in Notation \ref{not.B.1}. To each gauge,
$v:\mathcal{A}\rightarrow\mathcal{G},$ there exists a unique (smooth) measure
$m_{v}$ on $\mathcal{A}_{v}$ such that
\begin{equation}
\int_{\mathcal{A}}f\left(  A\right)  dm\left(  A\right)  =\int_{\mathcal{A}%
_{v}}dm_{v}\left(  B\right)  \int_{\mathcal{G}}d\lambda\left(  g\right)
f\left(  Bg\right)  \label{e.B.4}%
\end{equation}
for all $f:\mathcal{A\rightarrow}\left[  0,\infty\right]  $ measurable.
\end{proposition}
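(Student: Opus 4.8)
The plan is to transport everything to the product manifold $\mathcal{A}_{v}\times\mathcal{G}$ via the diffeomorphism from Lemma \ref{lem.B.3}(6) and then exploit the right-invariance of both $m$ and $\lambda$. Write $\Phi:\mathcal{A}_{v}\times\mathcal{G}\rightarrow\mathcal{A}$ for the diffeomorphism $\Phi(B,g)=Bg$, whose inverse is $A\mapsto(\pi_{v}(A),v(A))$, and let $\tilde{m}:=(\Phi^{-1})_{\ast}m$ be the pushforward of $m$ to $\mathcal{A}_{v}\times\mathcal{G}$; concretely, for measurable $F\geq0$,
\[
\int_{\mathcal{A}_{v}\times\mathcal{G}}F\,d\tilde{m}=\int_{\mathcal{A}}F(\pi_{v}(A),v(A))\,dm(A).
\]
Since the pushforward of a smooth measure under a diffeomorphism is again smooth, $\tilde{m}$ is a smooth measure, and by construction Eq. (\ref{e.B.4}) will follow once $\tilde{m}$ is shown to factor as a product.

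The key observation is that the right action $R_{h}:A\mapsto Ah$ corresponds under $\Phi$ to the action $(B,g)\mapsto(B,gh)$ on the product: indeed $\Phi(B,gh)=B(gh)=(Bg)h=R_{h}(\Phi(B,g))$, while on the inverse side $\pi_{v}(Ah)=\pi_{v}(A)$ and $v(Ah)=v(A)h$. As $m$ is $R_{h}$-invariant by the hypotheses of Notation \ref{not.B.1}, $\tilde{m}$ is invariant under right translation in the $\mathcal{G}$-factor. I would then fix any smooth positive measure $\mu_{0}$ on $\mathcal{A}_{v}$ and write $d\tilde{m}=\rho(B,g)\,d\mu_{0}(B)\,d\lambda(g)$ with $\rho$ a smooth nonnegative density. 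Comparing $\int F(B,gh)\rho(B,g)\,d\mu_{0}\,d\lambda$ with $\int F(B,g)\rho(B,g)\,d\mu_{0}\,d\lambda$ and moving the translation off $F$ onto $\rho$ by the right-invariance of $\lambda$ (applied with $h$ replaced by $h^{-1}$) yields $\rho(B,gh^{-1})=\rho(B,g)$ for a.e. $(B,g)$ and every $h$; smoothness upgrades this to an everywhere identity, and setting $g=I$ shows $\rho(B,g)=\rho(B,I)=:r(B)$ is independent of $g$. Defining $dm_{v}(B):=r(B)\,d\mu_{0}(B)$, a smooth measure on $\mathcal{A}_{v}$, we obtain $\tilde{m}=m_{v}\otimes\lambda$.

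With the factorization in hand, Eq. (\ref{e.B.4}) follows by unwinding definitions and applying Fubini (all integrands are nonnegative):
\[
\int_{\mathcal{A}}f\,dm=\int_{\mathcal{A}_{v}\times\mathcal{G}}f(\Phi(B,g))\,d\tilde{m}(B,g)=\int_{\mathcal{A}_{v}}dm_{v}(B)\int_{\mathcal{G}}d\lambda(g)\,f(Bg).
\]
For uniqueness I would feed test functions of the form $f(A)=h(\pi_{v}(A))\phi(v(A))$ into (\ref{e.B.4}); using $\pi_{v}(Bg)=B$ and $v(Bg)=g$ for $B\in\mathcal{A}_{v}$ (the latter from $v(B)=I$, Lemma \ref{lem.B.3}(3)), the right side collapses to $\bigl(\int_{\mathcal{G}}\phi\,d\lambda\bigr)\int_{\mathcal{A}_{v}}h\,dm_{v}$, so normalizing $\int_{\mathcal{G}}\phi\,d\lambda=1$ determines $\int h\,dm_{v}$ in terms of $m$ alone for every $h$, forcing $m_{v}$ to be unique.

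The main obstacle is the factorization step: one must convert the right-invariance of $\tilde{m}$ in the group factor into the statement that its density is constant in $g$, which is exactly where the right-invariance of the Haar measure $\lambda$ is used (together with the smoothness of $\rho$, to pass from almost-everywhere to everywhere equality). Everything else is bookkeeping resting on the diffeomorphism of Lemma \ref{lem.B.3}(6).
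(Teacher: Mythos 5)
Your proposal is correct and follows essentially the same route as the paper: both express the transported measure on $\mathcal{A}_{v}\times\mathcal{G}$ as a smooth density against a reference product measure, use the right-invariance of $m$ together with the right-invariance of $\lambda$ to show the density is independent of the group variable, and then absorb it into $m_{v}$. Your explicit uniqueness argument via test functions $f(A)=h(\pi_{v}(A))\phi(v(A))$ is a small but welcome addition that the paper's proof leaves implicit.
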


\begin{proof}
Let $\gamma$ be a fixed smooth measure on $\mathcal{A}_{v}.$ Since the map in
Eq. (\ref{e.B.1}) is a diffeomorphism and Haar measure, $\lambda,$ is a smooth
measure on $\mathcal{G},$ there exists a smooth density, $\mu:\mathcal{A}%
_{v}\times\mathcal{G}\rightarrow\left(  0,\infty\right)  ,$ such that
\begin{equation}
\int_{\mathcal{A}}f\left(  A\right)  dm\left(  A\right)  =\int_{\mathcal{A}%
_{v}}d\gamma\left(  B\right)  \int_{\mathcal{G}}d\lambda\left(  g\right)
\mu\left(  B,g\right)  f\left(  Bg\right)  \label{e.B.5}%
\end{equation}
for all $f:\mathcal{A\rightarrow}\left[  0,\infty\right]  $ measurable. Using
the invariance of $m$ and $\lambda$ under the right $\mathcal{G}$-actions on
$\mathcal{A}$ and $\mathcal{G}$ respectively, if $k\in\mathcal{G},$ then
\begin{align}
\int_{\mathcal{A}}f\left(  A\right)  dm\left(  A\right)   &  =\int
_{\mathcal{A}}f\left(  Ak\right)  dm\left(  A\right) \nonumber\\
&  =\int_{\mathcal{A}_{v}}d\gamma\left(  B\right)  \int_{\mathcal{G}}%
d\lambda\left(  g\right)  \mu\left(  B,g\right)  f\left(  Bgk\right)
\nonumber\\
&  =\int_{\mathcal{A}_{v}}d\gamma\left(  B\right)  \int_{\mathcal{G}}%
d\lambda\left(  g\right)  \mu\left(  B,gk^{-1}\right)  f\left(  Bg\right)  .
\label{e.B.6}%
\end{align}
Comparing Eqs. (\ref{e.B.5}) and (\ref{e.B.6}) implies $\mu\left(
B,gk^{-1}\right)  =\mu\left(  B,g\right)  $ for all $B\in\mathcal{A}_{v}$ and
$g,k\in\mathcal{G}.$ Taking $k=g$ shows $\mu\left(  B,g\right)  =\mu\left(
B,e\right)  $ and so Eq. (\ref{e.B.4}) holds with $dm_{v}\left(  B\right)
:=\mu\left(  B,e\right)  d\gamma\left(  B\right)  .$
\end{proof}

\begin{theorem}
[Affine Action Disintegrations]\label{thm.B.10}Assume $\left(  \mathcal{A}%
,\mathcal{G}\right)  $ as above with $\mathcal{A}$ being a finite dimensional
vector space equipped with an affine action of $\mathcal{G}$ on $\mathcal{A},$
see Definition \ref{def.B.7}. Then;

\begin{enumerate}
\item Lebesgue measure $\left(  m\right)  $ on $\mathcal{A}$ is invariant
under the $\mathcal{G}$-action.

\item If $v:\mathcal{A}\rightarrow\mathcal{G}$ is a gauge such that
$\mathcal{A}_{v}$ is a linear subspace which is invariant under the action of
$\rho,$ then the measure $\left(  m_{\nu}\right)  $ in Proposition
\ref{pro.B.9} is a Lebesgue measure on $\mathcal{A}_{v}.$
\end{enumerate}
\end{theorem}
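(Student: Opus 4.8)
The plan is to prove the two assertions independently: item (1) by a one-line affine change of variables, and item (2) by evaluating the Jacobian of the disintegration diffeomorphism of Lemma \ref{lem.B.3}(6) along the slice. Throughout write $\mathfrak{g}:=\operatorname{Lie}(\mathcal{G})$ and let $\rho_*:\mathfrak{g}\to\operatorname{End}(\mathcal{A})$ denote the Lie-algebra representation induced by $\rho$.

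For item (1), I would observe that for each fixed $g\in\mathcal{G}$ the map $A\mapsto Ag=\rho(g^{-1})A+T(g)$ is an affine transformation of the vector space $\mathcal{A}$ whose linear part is $\rho(g^{-1})\in SL(\mathcal{A})$. Since $\det\rho(g^{-1})=1$ and translation by $T(g)$ preserves Lebesgue measure, the affine change of variables formula shows that the push-forward of $m$ under $A\mapsto Ag$ is again $m$. Hence $m$ is invariant under the right action, and the hypotheses of Notation \ref{not.B.1} (hence the conclusion of Proposition \ref{pro.B.9}) apply.

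For item (2), I would take the auxiliary measure $\gamma$ in the proof of Proposition \ref{pro.B.9} to be Lebesgue measure on the linear subspace $\mathcal{A}_v$, so that $dm_v(B)=\mu(B,e)\,d\gamma(B)$, and it remains only to show that $B\mapsto\mu(B,e)$ is constant on $\mathcal{A}_v$. Writing $\Phi(B,g):=Bg=\rho(g^{-1})B+T(g)$ for the diffeomorphism $\mathcal{A}_v\times\mathcal{G}\to\mathcal{A}$, the density $\mu(B,e)$ equals, up to a constant fixed by the Haar normalization, the absolute Jacobian determinant of $\Phi$ at $(B,e)$ relative to fixed volume forms on $\mathcal{A}_v$, $\mathfrak{g}$, and $\mathcal{A}$. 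I would compute $D\Phi_{(B,e)}$ directly: the derivative in a slice direction $b\in\mathcal{A}_v$ is $\rho(e)b=b$, while differentiating along $s\mapsto ge^{s\eta}$ and using the cocycle identity (\ref{e.B.3}) of Remark \ref{rem.B.8} together with $T(e)=0$ gives $-\rho_*(\eta)B+\tau(\eta)$, where $\tau(\eta):=\tfrac{d}{ds}\big|_0 T(e^{s\eta})$. Thus $D\Phi_{(B,e)}(b,\eta)=b+\tau(\eta)-\rho_*(\eta)B$.

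The one substantive step, and the only place the $\rho$-invariance of $\mathcal{A}_v$ is used, is that the $B$-dependent term disappears after projecting to a complement; this is where I expect the argument to hinge. Fix a splitting $\mathcal{A}=\mathcal{A}_v\oplus W$ with projection $P_W$ onto $W$ (note $\dim W=\dim\mathfrak{g}$ since $\Phi$ is a diffeomorphism). Differentiating $\rho(g)\mathcal{A}_v\subseteq\mathcal{A}_v$ yields $\rho_*(\eta)\mathcal{A}_v\subseteq\mathcal{A}_v$, so $\rho_*(\eta)B\in\mathcal{A}_v$ and $P_W\big(\rho_*(\eta)B\big)=0$. Consequently, in the block form of $D\Phi_{(B,e)}$ adapted to $\mathcal{A}_v\oplus\mathfrak{g}\to\mathcal{A}_v\oplus W$, the $\mathcal{A}_v$-to-$\mathcal{A}_v$ block is the identity, the $\mathcal{A}_v$-to-$W$ block vanishes, and the $\mathfrak{g}$-to-$W$ block is $\eta\mapsto P_W\tau(\eta)$, independent of $B$. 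The determinant of this block-triangular map equals $\det(P_W\tau)$, a nonzero constant (nonzero since $D\Phi_{(B,e)}$ is invertible). Hence $\mu(B,e)$ is independent of $B$, so $m_v$ is a constant multiple of Lebesgue measure on $\mathcal{A}_v$, i.e. a Lebesgue measure. The remaining care is purely bookkeeping: identifying $\mu(B,e)$ with this Jacobian under the right-Haar normalization, which is harmless because Proposition \ref{pro.B.9} already guarantees $\mu(B,g)=\mu(B,e)$, so evaluation at $g=e$ with any fixed bases suffices.
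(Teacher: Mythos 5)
Your proof is correct, and item (1) coincides with the paper's one-line Jacobian argument; but for item (2) you take a genuinely different route. The paper's proof is global and measure-theoretic: it applies the disintegration formula to $f(\cdot+B)$ for $B\in\mathcal{A}_{v}$, uses the identity $A\cdot g+B=\left[A+\rho(g)B\right]\cdot g$ (written there as $A+\mathrm{Ad}_{g}B$) together with the $\rho$-invariance of $\mathcal{A}_{v}$ and the translation invariance of both $m$ on $\mathcal{A}$ and $m_{0}$ on $\mathcal{A}_{v}$, to conclude $\mu(A-\rho(g)B)=\mu(A)$ and hence, taking $g=I$ and $B=A$, that $\mu$ is constant. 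You instead argue locally: you identify $\mu(B,e)$ with the Jacobian of the disintegration diffeomorphism $\Phi(B,g)=Bg$ at $(B,e)$ and show that, because $\rho_{*}(\eta)B\in\mathcal{A}_{v}$, the derivative is block-triangular with respect to a splitting $\mathcal{A}=\mathcal{A}_{v}\oplus W$ and its determinant $\det(P_{W}\tau)$ is independent of $B$. Both arguments turn on exactly the same structural input (invariance of $\mathcal{A}_{v}$ under $\rho$, resp.\ $\rho_{*}$), and your reliance on $\mu(B,g)=\mu(B,e)$ from Proposition \ref{pro.B.9} is not circular. What each buys: your computation makes the density completely explicit and isolates the one block of $D\Phi$ that matters, at the cost of choosing a complement $W$, a chart for $\lambda$ near $e$, and some volume-form bookkeeping; the paper's argument needs none of that smooth-structure apparatus and establishes translation invariance of $m_{v}$ directly from translation invariance of $m$, which is why it transfers more gracefully to the purely formal infinite-dimensional application in Meta-Proposition \ref{mpro.B.40}, where Jacobians are not available but translation invariance of the heuristic Lebesgue measure is the working axiom.
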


\begin{proof}
1. The Jacobian-determinant factor for the change of variables, $B=Ag,$ is
$\left\vert \det\rho\left(  g^{-1}\right)  \right\vert =1$ and hence the
affine transformation $A\rightarrow Ag$ leaves $m$ invariant on $\mathcal{A}.$

2. Let $m_{0}$ be a Lebesgue measure on $\mathcal{A}_{v}$ (i.e. a translation
invariant Radon measure on $\mathcal{A}_{v}\mathcal{)}.$ The smooth measure
$\left(  m_{\nu}\right)  $ may be expressed as $dm_{\nu}\left(  A\right)
=\mu\left(  A\right)  dm_{0}\left(  A\right)  $ for some smooth density
$\mu:\mathcal{A}\rightarrow\left(  0,\infty\right)  .$ Our goal is to show
that $\mu$ is a constant.

According to Proposition \ref{pro.B.9}, if $f:\mathcal{A}\rightarrow\left[
0,\infty\right]  $ is measurable, then%
\begin{equation}
\int_{\mathcal{A}}f\left(  C\right)  dm\left(  C\right)  =\int_{\mathcal{A}%
_{v}}dm_{\nu}\left(  A\right)  \int_{\mathcal{G}}d\lambda\left(  g\right)
f\left(  A\cdot g\right)  =\int_{\mathcal{G}}d\lambda\left(  g\right)
\int_{\mathcal{A}_{v}}dm_{0}\left(  A\right)  \mu\left(  A\right)  f\left(
A\cdot g\right)  . \label{e.B.7}%
\end{equation}
Let $B\in\mathcal{A}_{v}$ and apply Eq. (\ref{e.B.7}) with $f$ replaced by
$f\left(  \cdot+B\right)  $ to find%
\begin{align*}
\int_{\mathcal{A}}f\left(  C+B\right)  dm\left(  C\right)   &  =\int
_{\mathcal{G}}d\lambda\left(  g\right)  \int_{\mathcal{A}_{v}}dm_{0}\left(
A\right)  \mu\left(  A\right)  f\left(  A\cdot g+B\right) \\
&  =\int_{\mathcal{G}}d\lambda\left(  g\right)  \int_{\mathcal{A}_{v}}%
dm_{0}\left(  A\right)  \mu\left(  A\right)  f\left(  \left[  A+\mathrm{Ad}%
_{g}B\right]  ^{g}\right) \\
&  =\int_{\mathcal{G}}d\lambda\left(  g\right)  \int_{\mathcal{A}_{v}}%
dm_{0}\left(  A\right)  \mu\left(  A-\mathrm{Ad}_{g}B\right)  f\left(  A\cdot
g\right)  ,
\end{align*}
wherein the last line we have used $m_{0}$ is a translation invariant measure.
On the other hand $m$ is also translation invariant and so%
\[
\int_{\mathcal{A}}f\left(  C+B\right)  dm\left(  C\right)  =\int_{\mathcal{A}%
}f\left(  C\right)  dm\left(  C\right)  =\int_{\mathcal{G}}d\lambda\left(
g\right)  \int_{\mathcal{A}_{v}}dm_{0}\left(  A\right)  \mu\left(  A\right)
f\left(  A\cdot g\right)  .
\]
Using the map in Eq. (\ref{e.B.1}) is a diffeomorphism and the last two
displayed equations are valid for all measurable functions, $f:\mathcal{A}%
\rightarrow\left[  0,\infty\right]  ,$ we conclude that $\mu\left(
A-\mathrm{Ad}_{g}B\right)  =\mu\left(  A\right)  $ for all $A,B\in
\mathcal{A}_{v}$ and $g\in\mathcal{G}.$ Taking $g\equiv I$ and $B=A$ then
implies $\mu\left(  A\right)  =\mu\left(  0\right)  $ for all $A\in
\mathcal{A}_{\nu},$ i.e. $\mu$ is constant.
\end{proof}

\subsection{Abstract gauge fixing\label{sec.B.3}}

If $\Psi:\mathcal{A}\rightarrow\lbrack0,\infty)$ is a $\mathcal{G}$-invariant
function, then from Eq. (\ref{e.B.4}) it follows that
\begin{align*}
\int_{\mathcal{A}}\Psi\left(  A\right)  dm\left(  A\right)   &  =\int
_{\mathcal{A}_{v}}dm_{v}\left(  B\right)  \int_{\mathcal{G}}d\lambda\left(
g\right)  \Psi\left(  Bg\right) \\
&  =\int_{\mathcal{A}_{v}}dm_{v}\left(  B\right)  \int_{\mathcal{G}}%
d\lambda\left(  g\right)  \Psi\left(  B\right) \\
&  =\lambda\left(  \mathcal{G}\right)  \cdot\int_{\mathcal{A}_{v}}\Psi\left(
B\right)  dm_{v}\left(  B\right)  .
\end{align*}
This suggests that we normalize $\int_{\mathcal{A}}\Psi\left(  A\right)
dm\left(  A\right)  $ by \textquotedblleft dividing\textquotedblright\ the
integral by $\lambda\left(  \mathcal{G}\right)  $ and setting
\[
\dashint_{\mathcal{A}}\Psi\left(  A\right)  dm\left(  A\right)
=\text{\textquotedblleft}\frac{1}{\lambda\left(  \mathcal{G}\right)  }%
\int_{\mathcal{A}}\Psi\left(  A\right)  dm\left(  A\right)
\text{\textquotedblright.}%
\]
The problem with this formula is that (in the interesting cases)
$\lambda\left(  \mathcal{G}\right)  =\infty.$ To avoid this division by
infinity we make the following definition.

\begin{definition}
\label{def.B.11}The $v$-\textbf{normalized integral} of a $\mathcal{G}%
$-invariant function, $\Psi:\mathcal{A}\rightarrow\lbrack0,\infty)$, is%
\[
\dashint_{\mathcal{A}}\Psi\left(  A\right)  dm\left(  A\right)  =\int
_{\mathcal{A}_{v}}\Psi\left(  B\right)  dm_{v}\left(  B\right)  .
\]

\end{definition}

\begin{notation}
\label{not.B.12}Let $\Delta:\mathcal{G\rightarrow}\left(  0,\infty\right)  $
be the \textbf{modular function} on $\mathcal{G}$ defined by requiring%
\[
\int_{\mathcal{G}}\psi\left(  kg\right)  d\lambda\left(  g\right)
=\Delta\left(  k\right)  \cdot\int_{\mathcal{G}}\psi\left(  g\right)
d\lambda\left(  g\right)
\]
for all $k\in\mathcal{G}$ and $\psi:\mathcal{G}\rightarrow\left[
0,\infty\right]  $ measurable. Recall $\mathcal{G}$ is said to be
\textbf{unimodular }if $\Delta\equiv1.$
\end{notation}

\begin{theorem}
\label{thm.B.13}If $\mathcal{G}$ is a unimodular Lie group, $\Psi
:\mathcal{A}\rightarrow\lbrack0,\infty)$ is a $\mathcal{G}$-invariant
function, and $v,w$ are two gauges, then
\begin{equation}
\int_{\mathcal{A}_{v}}\Psi\left(  B\right)  dm_{v}\left(  B\right)
=\int_{\mathcal{A}_{w}}\Psi\left(  B\right)  dm_{w}\left(  B\right)  .
\label{e.B.8}%
\end{equation}

\begin{proof}
Let $\alpha\in C\left(  \mathcal{G},[0,\infty)\right)  $ such that
$\int_{\mathcal{G}}\alpha\left(  g\right)  d\lambda\left(  g\right)  =1$ and
set $f\left(  A\right)  :=\Psi\left(  A\right)  \alpha\left(  v\left(
A\right)  \right)  .$ By Eq. (\ref{e.B.4})) we find%
\begin{align*}
\int_{\mathcal{A}}f\left(  A\right)  dm\left(  A\right)   &  =\int
_{\mathcal{A}_{w}}dm_{w}\left(  B\right)  \int_{\mathcal{G}}d\lambda\left(
g\right)  \Psi\left(  Bg\right)  \alpha\left(  v\left(  Bg\right)  \right) \\
&  =\int_{\mathcal{A}_{w}}dm_{w}\left(  B\right)  \int_{\mathcal{G}}%
d\lambda\left(  g\right)  \Psi\left(  B\right)  \alpha\left(  v\left(
B\right)  g\right) \\
&  =\int_{\mathcal{A}_{w}}dm_{w}\left(  B\right)  \int_{\mathcal{G}}%
d\lambda\left(  g\right)  \Psi\left(  B\right)  \Delta\left(  v\left(
B\right)  \right)  \alpha\left(  g\right) \\
&  =\int_{\mathcal{A}_{w}}\Psi\left(  B\right)  \Delta\left(  v\left(
B\right)  \right)  dm_{w}\left(  B\right)  .
\end{align*}
In the case $w=v,$ so that $B\in\mathcal{A}_{v},$ we find
\begin{align*}
\int_{\mathcal{A}}f\left(  A\right)  dm\left(  A\right)   &  =\int
_{\mathcal{A}_{v}}\Psi\left(  B\right)  \Delta\left(  v\left(  B\right)
\right)  dm_{v}\left(  B\right) \\
&  =\int_{\mathcal{A}_{v}}\Psi\left(  B\right)  \Delta\left(  e\right)
dm_{v}\left(  B\right)  =\int_{\mathcal{A}_{v}}\Psi\left(  B\right)
dm_{v}\left(  B\right)  .
\end{align*}
Thus we have shown in general that
\[
\int_{\mathcal{A}_{v}}\Psi\left(  B\right)  dm_{v}\left(  B\right)
=\int_{\mathcal{A}_{w}}\Psi\left(  B\right)  \Delta\left(  v\left(  B\right)
\right)  dm_{w}\left(  B\right)
\]
and in particular if $\mathcal{G}$ is unimodular, Eq. (\ref{e.B.8}) holds.
\end{proof}
\end{theorem}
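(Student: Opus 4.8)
The plan is to reduce the comparison of the two slice integrals to a single integral over $\mathcal{A}$, and then to extract that integral in two different ways by applying the disintegration formula of Proposition \ref{pro.B.9}---once with respect to the gauge $w$ and once with respect to $v$. The key idea is to build an auxiliary integrand that carries the $\mathcal{G}$-invariant data $\Psi$ but is ``cut off'' along each gauge orbit in a $\lambda$-normalized way, so that integrating out the group direction produces exactly a slice integral, up to a modular factor.

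Concretely, first I would fix any $\alpha\in C\left(\mathcal{G},[0,\infty)\right)$ with $\int_{\mathcal{G}}\alpha\,d\lambda=1$ and set $f\left(A\right):=\Psi\left(A\right)\alpha\left(v\left(A\right)\right)$. Applying Proposition \ref{pro.B.9} with the gauge $w$ gives
\begin{equation*}
\int_{\mathcal{A}}f\,dm=\int_{\mathcal{A}_{w}}dm_{w}\left(B\right)\int_{\mathcal{G}}d\lambda\left(g\right)\Psi\left(Bg\right)\alpha\left(v\left(Bg\right)\right).
\end{equation*}
At this point I would invoke the two structural properties of the data: the $\mathcal{G}$-invariance of $\Psi$, which gives $\Psi\left(Bg\right)=\Psi\left(B\right)$, and the gauge equivariance $v\left(Bg\right)=v\left(B\right)g$ from Definition \ref{def.B.2}. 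The inner integral then becomes $\Psi\left(B\right)\int_{\mathcal{G}}\alpha\left(v\left(B\right)g\right)\,d\lambda\left(g\right)$, and the defining relation of the modular function (Notation \ref{not.B.12}) together with $\int_{\mathcal{G}}\alpha\,d\lambda=1$ collapses this to $\Psi\left(B\right)\Delta\left(v\left(B\right)\right)$.

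Next I would run the identical computation with $w$ replaced by $v$. Since every $B\in\mathcal{A}_{v}$ satisfies $v\left(B\right)=I$ by item (3) of Lemma \ref{lem.B.3}, the modular factor degenerates to $\Delta\left(v\left(B\right)\right)=\Delta\left(I\right)=1$ and drops out, so that $\int_{\mathcal{A}}f\,dm=\int_{\mathcal{A}_{v}}\Psi\,dm_{v}$. Comparing the two evaluations of the single quantity $\int_{\mathcal{A}}f\,dm$ would yield the general identity
\begin{equation*}
\int_{\mathcal{A}_{v}}\Psi\left(B\right)\,dm_{v}\left(B\right)=\int_{\mathcal{A}_{w}}\Psi\left(B\right)\Delta\left(v\left(B\right)\right)\,dm_{w}\left(B\right),
\end{equation*}
and specializing to the unimodular case $\Delta\equiv1$ would give the asserted equality.

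I do not expect a serious obstacle once Proposition \ref{pro.B.9} is in hand; the only point demanding care is the bookkeeping with the modular function, which is precisely where the unimodularity hypothesis enters. I would emphasize in the writeup that the choice of $\alpha$ is immaterial---any $\lambda$-probability density works---and that the displayed identity \emph{before} specialization already records exactly how the normalized integral would depend on the gauge in the non-unimodular setting.
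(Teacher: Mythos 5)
Your proposal is correct and follows essentially the same route as the paper's own proof: the same auxiliary integrand $f\left(A\right)=\Psi\left(A\right)\alpha\left(v\left(A\right)\right)$ with a $\lambda$-normalized cutoff, the same double application of the disintegration formula (once for $w$, once for $v$), and the same use of the modular function to isolate where unimodularity is needed. No gaps.
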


\begin{theorem}
\label{thm.B.14}Let $\left(  \mathcal{A},\mathcal{G},m,\lambda\right)  $ be as
in Notation \ref{not.B.1}, $v:\mathcal{A}\rightarrow\mathcal{G}$ be a gauge,
and $\varphi:\mathcal{A\rightarrow A}$ be a diffeomorphism such that;

\begin{enumerate}
\item $\varphi$ is volume preserving, i.e. $\varphi_{\ast}m=m.$

\item $\varphi$ acts equivariantly on $\mathcal{A}$ in the sense that there
exists a Lie group isomorphism, $\gamma:\mathcal{G}\rightarrow\mathcal{G},$
such that
\begin{equation}
\varphi\left(  Ag\right)  =\varphi\left(  A\right)  \gamma\left(  g\right)
~\text{ }\forall~A\in\mathcal{A}\text{ and }g\in\mathcal{G}. \label{e.B.9}%
\end{equation}
[Note this implies $\varphi$ preserves gauge orbits.]
\end{enumerate}

Under these assumptions, if $\Psi:\mathcal{A}\rightarrow\left[  0,\infty
\right]  $ is a gauge invariant function, then%
\begin{equation}
\int_{\mathcal{A}_{v}}\Psi\left(  \varphi\left(  A\right)  \right)
dm_{v}\left(  A\right)  =\frac{1}{c_{\gamma}}\int_{\mathcal{A}_{v}}\Psi\left(
A\right)  \Delta\left(  v\left(  \varphi^{-1}\left(  A\right)  \right)
\right)  dm_{v}\left(  A\right)  \label{e.B.10}%
\end{equation}
where $c_{\gamma}$ is the constant determined by%
\begin{equation}
\gamma_{\ast}\lambda=c_{\gamma}\lambda. \label{e.B.11}%
\end{equation}

\end{theorem}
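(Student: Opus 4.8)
The plan is to adapt the ``selector function'' device used in the proof of Theorem \ref{thm.B.13}. First I would fix $\alpha \in C(\mathcal{G}, [0,\infty))$ with $\int_{\mathcal{G}} \alpha\, d\lambda = 1$ and set $F(A) := \Psi(\varphi(A))\,\alpha(v(A))$. Applying the disintegration formula of Proposition \ref{pro.B.9} and using that $v(B) = I$ for $B \in \mathcal{A}_v$, together with the gauge invariance of $\Psi \circ \varphi$ (which holds since $\Psi(\varphi(Bg)) = \Psi(\varphi(B)\gamma(g)) = \Psi(\varphi(B))$ by Eq. (\ref{e.B.9})), the inner $\mathcal{G}$-integral collapses to $\int_{\mathcal{G}} \alpha\, d\lambda = 1$ and one recovers the left-hand side of Eq. (\ref{e.B.10}): $\int_{\mathcal{A}} F\, dm = \int_{\mathcal{A}_v} \Psi(\varphi(B))\, dm_v(B)$.

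Next I would push $\varphi$ off of $\Psi$ using the volume-preserving hypothesis. Since $\varphi_{\ast} m = m$ gives $\int_{\mathcal{A}} (G \circ \varphi)\, dm = \int_{\mathcal{A}} G\, dm$, choosing $G(A) := \Psi(A)\,\alpha(v(\varphi^{-1}(A)))$ yields $G \circ \varphi = F$ (because $v(\varphi^{-1}(\varphi(A))) = v(A)$), so $\int_{\mathcal{A}} F\, dm = \int_{\mathcal{A}} G\, dm$. Now I would disintegrate $\int_{\mathcal{A}} G\, dm$ a second time via Proposition \ref{pro.B.9}. The crux is the inner integral $\int_{\mathcal{G}} \alpha(v(\varphi^{-1}(Bg)))\, d\lambda(g)$ for $B \in \mathcal{A}_v$; here the equivariance hypothesis, rewritten as $\varphi^{-1}(Bg) = \varphi^{-1}(B)\,\gamma^{-1}(g)$, combined with the gauge identity $v(Ah) = v(A)h$, turns the integrand into $\alpha(v(\varphi^{-1}(B))\,\gamma^{-1}(g))$.

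The main obstacle is the careful bookkeeping of the automorphism $\gamma$ through Haar measure. From $\gamma_{\ast} \lambda = c_\gamma \lambda$ I would first deduce $(\gamma^{-1})_{\ast} \lambda = c_\gamma^{-1} \lambda$ by applying $(\gamma^{-1})_{\ast}$ to both sides. Writing $k := v(\varphi^{-1}(B))$, the change of variables $g \mapsto \gamma^{-1}(g)$ then gives $\int_{\mathcal{G}} \alpha(k\gamma^{-1}(g))\, d\lambda(g) = c_\gamma^{-1}\int_{\mathcal{G}} \alpha(kg)\, d\lambda(g)$, and the definition of the modular function (Notation \ref{not.B.12}) evaluates $\int_{\mathcal{G}} \alpha(kg)\, d\lambda(g) = \Delta(k)\int_{\mathcal{G}}\alpha\, d\lambda = \Delta(k)$. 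Thus the inner integral equals $c_\gamma^{-1}\Delta(v(\varphi^{-1}(B)))$, so $\int_{\mathcal{A}} G\, dm = c_\gamma^{-1}\int_{\mathcal{A}_v} \Psi(B)\,\Delta(v(\varphi^{-1}(B)))\, dm_v(B)$. Equating the two expressions for $\int_{\mathcal{A}} F\, dm$ then delivers Eq. (\ref{e.B.10}). The only genuinely delicate points are the directionality of the pushforward and the left-versus-right modular-function convention, which must be matched exactly to those fixed earlier in the appendix.
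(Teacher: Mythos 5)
Your proposal is correct and follows essentially the same route as the paper: fix a selector $\alpha$ with $\int_{\mathcal{G}}\alpha\,d\lambda=1$, convert the $m_v$-integral of $\Psi\circ\varphi$ into $\int_{\mathcal{A}}\Psi(\varphi(A))\alpha(v(A))\,dm(A)$, use $\varphi_{\ast}m=m$ to transfer $\varphi$ onto the selector, disintegrate again, and evaluate the inner $\mathcal{G}$-integral via the equivariance identity $\varphi^{-1}(Bg)=\varphi^{-1}(B)\gamma^{-1}(g)$, the relation $(\gamma^{-1})_{\ast}\lambda=c_{\gamma}^{-1}\lambda$, and the modular function. The only cosmetic difference is that the paper first runs the computation under the weaker hypothesis that $\varphi$ merely preserves gauge orbits, isolating the density $\mu_{\varphi}(B)=\int_{\mathcal{G}}\alpha(v(\varphi^{-1}(Bg)))\,d\lambda(g)$ before specializing to the equivariant case.
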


\begin{proof}
For the moment let us simply suppose that $\varphi$ preserves gauge orbits
which may be stated as saying $\varphi\left(  Ag\right)  =\varphi\left(
A\right)  \Gamma\left(  A,g\right)  $ for some function $\Gamma:\mathcal{A}%
\times\mathcal{G}\rightarrow\mathcal{G}.$ As in the proof of Theorem
\ref{thm.B.13}, let $\alpha\in C\left(  \mathcal{G},[0,\infty)\right)  $ such
that $\int_{\mathcal{G}}\alpha\left(  g\right)  d\lambda\left(  g\right)  =1$
and $\Psi:\mathcal{A}\rightarrow\left[  0,\infty\right]  $ be a gauge
invariant function in which case,%
\[
\int_{\mathcal{A}_{v}}\Psi\left(  A\right)  dm_{v}\left(  A\right)
=\int_{\mathcal{A}}\Psi\left(  A\right)  \alpha\left(  v\left(  A\right)
\right)  dm\left(  A\right)  .
\]
Applying this identity with $\Psi$ replaced by $\Psi\circ\varphi$ gives,%
\begin{align}
\int_{\mathcal{A}_{v}}\Psi\circ\varphi\left(  A\right)  dm_{v}\left(
A\right)   &  =\int_{\mathcal{A}}\Psi\circ\varphi\left(  A\right)
\alpha\left(  v\left(  A\right)  \right)  dm\left(  A\right) \nonumber\\
&  =\int_{\mathcal{A}}\Psi\left(  A\right)  \alpha\left(  v\left(
\varphi^{-1}\left(  A\right)  \right)  \right)  dm\left(  A\right)
\text{\quad}\left(  \varphi_{\ast}m=m\right) \nonumber\\
&  =\int_{\mathcal{A}_{v}}dm_{v}\left(  B\right)  \int_{\mathcal{G}}%
d\lambda\left(  g\right)  ~\Psi\left(  Bg\right)  \alpha\left(  v\left(
\varphi^{-1}\left(  B\cdot g\right)  \right)  \right) \nonumber\\
&  =\int_{\mathcal{A}_{v}}dm_{v}\left(  B\right)  \int_{\mathcal{G}}%
d\lambda\left(  g\right)  ~\Psi\left(  B\right)  \alpha\left(  v\left(
\varphi^{-1}\left(  B\cdot g\right)  \right)  \right) \nonumber\\
&  =\int_{\mathcal{A}_{v}}\Psi\left(  B\right)  \mu_{\varphi}\left(  B\right)
dm_{v}\left(  B\right)  , \label{e.B.12}%
\end{align}
where $\mu_{\varphi}:\mathcal{A}_{v}\rightarrow\lbrack0,\infty)$ is defined
by
\[
\mu_{\varphi}\left(  B\right)  :=\int_{\mathcal{G}}\alpha\left(  v\left(
\varphi^{-1}\left(  B\cdot g\right)  \right)  \right)  d\lambda\left(
g\right)  .
\]

Let us now assume that $\varphi$ satisfies Eq. (\ref{e.B.9}). Applying
$\varphi^{-1}$ to Eq. (\ref{e.B.9}) with $A$ replaced by $\varphi^{-1}\left(
A\right)  $ and $g$ by $\gamma^{-1}\left(  g\right)  $ implies,%
\[
\varphi^{-1}\left(  A\right)  \gamma^{-1}\left(  g\right)  =\varphi
^{-1}\left(  Ag\right)  .
\]
Using this fact and noting that $\gamma_{\ast}\lambda=c_{\gamma}\lambda$
implies $\lambda=c_{\gamma}\left(  \gamma^{-1}\right)  _{\ast}\lambda,$ if
follows that%
\begin{align*}
\mu_{\varphi}\left(  B\right)   &  :=\int_{\mathcal{G}}\alpha\left(  v\left(
\varphi^{-1}\left(  B\right)  \gamma^{-1}\left(  g\right)  \right)  \right)
d\lambda\left(  g\right) \\
&  =\frac{1}{c_{\gamma}}\int_{\mathcal{G}}\alpha\left(  v\left(  \varphi
^{-1}\left(  B\right)  \right)  g\right)  d\lambda\left(  g\right)
=c_{\gamma}^{-1}\Delta\left(  v\left(  \varphi^{-1}\left(  B\right)  \right)
\right)  .
\end{align*}
Combining the last equation with Eq. (\ref{e.B.12}) gives Eq. (\ref{e.B.10}).
\end{proof}

\begin{remark}
\label{rem.B.15}One might hope to relax the condition in Eq. (\ref{e.B.9}) in
the previous theorem as follows. Suppose that $\varphi:\mathcal{A\rightarrow
A}$ is a diffeomorphism which takes gauge orbits to gauge orbits. Then define
$\tilde{\varphi}\left(  Ag\right)  =\pi_{v}\left(  \varphi\left(  A\right)
\right)  \cdot g$ for all $A\in\mathcal{A}_{v}$ and $g\in\mathcal{G}.$ Then if
$\Psi:\mathcal{A}\rightarrow\left[  0,\infty\right]  $ is a gauge invariant
function we will have%
\[
\Psi\left(  \tilde{\varphi}\left(  Ag\right)  \right)  =\Psi\left(  \pi
_{v}\left(  \varphi\left(  A\right)  \right)  \cdot g\right)  =\Psi\left(
\pi_{v}\left(  \varphi\left(  A\right)  \right)  \right)  =\Psi\left(
\varphi\left(  A\right)  \right)
\]
so that
\[
\int_{\mathcal{A}_{v}}\Psi\left(  \varphi\left(  A\right)  \right)
dm_{v}\left(  A\right)  =\int_{\mathcal{A}_{v}}\Psi\left(  \tilde{\varphi
}\left(  A\right)  \right)  dm_{v}\left(  A\right)  .
\]
The point being that $\tilde{\varphi}:\mathcal{A\rightarrow A}$ is a
diffeomorphism such that $\tilde{\varphi}\left(  Agk\right)  =\pi_{v}\left(
\varphi\left(  A\right)  \right)  \cdot gk=\tilde{\varphi}\left(  Ag\right)
k$ so that Eq. (\ref{e.B.9}) holds with $\gamma\left(  g\right)  =g.$
\textbf{However}, the problem is that there is no reason that $\tilde{\varphi
}$ should still preserve $m.$
\end{remark}

\begin{corollary}
\label{cor.B.16}Let us continue the notation and assumptions of Theorem
\ref{thm.B.14}. If we further assume that $\mathcal{G}$ is unimodular and
$c_{\gamma}=1,$ then%
\begin{equation}
\dashint_{\mathcal{A}}\Psi\left(  \varphi\left(  A\right)  \right)  dm\left(
A\right)  =\dashint_{\mathcal{A}}\Psi\left(  A\right)  dm\left(  A\right)
\nonumber
\end{equation}
for all gauge invariant functions, $\Psi:\mathcal{A}\rightarrow\left[
0,\infty\right]  .$
\end{corollary}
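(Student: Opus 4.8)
The plan is to reduce the statement directly to Theorem \ref{thm.B.14} by unwinding Definition \ref{def.B.11} of the normalized integral on both sides of the claimed identity. The argument is essentially bookkeeping once that theorem is in hand, so the work consists mostly of checking that the normalized integral of the composite $\Psi\circ\varphi$ is well defined and then specializing the two extra hypotheses.

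First I would verify that $\Psi\circ\varphi$ is again gauge invariant, since otherwise the left-hand normalized integral would not even be defined via Definition \ref{def.B.11}. This follows from the equivariance hypothesis Eq. (\ref{e.B.9}): for $A\in\mathcal{A}$ and $g\in\mathcal{G}$ one has $\varphi\left(Ag\right)=\varphi\left(A\right)\gamma\left(g\right)$, and as $\gamma\left(g\right)\in\mathcal{G}$ and $\Psi$ is gauge invariant we get $\Psi\left(\varphi\left(Ag\right)\right)=\Psi\left(\varphi\left(A\right)\gamma\left(g\right)\right)=\Psi\left(\varphi\left(A\right)\right)$.

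Next I would apply Definition \ref{def.B.11} to both sides, writing
\[
\dashint_{\mathcal{A}}\Psi\left(\varphi\left(A\right)\right)dm\left(A\right)=\int_{\mathcal{A}_{v}}\Psi\left(\varphi\left(B\right)\right)dm_{v}\left(B\right)
\]
and, for the target expression, $\dashint_{\mathcal{A}}\Psi\left(A\right)dm\left(A\right)=\int_{\mathcal{A}_{v}}\Psi\left(B\right)dm_{v}\left(B\right)$. Then I invoke Theorem \ref{thm.B.14} to rewrite the first right-hand integral as $\frac{1}{c_{\gamma}}\int_{\mathcal{A}_{v}}\Psi\left(B\right)\Delta\left(v\left(\varphi^{-1}\left(B\right)\right)\right)dm_{v}\left(B\right)$. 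Finally I substitute the two extra hypotheses: unimodularity of $\mathcal{G}$ forces $\Delta\equiv1$ (Notation \ref{not.B.12}), so the weight $\Delta\left(v\left(\varphi^{-1}\left(B\right)\right)\right)$ is identically $1$, and $c_{\gamma}=1$ removes the scalar prefactor. The right-hand side then collapses to $\int_{\mathcal{A}_{v}}\Psi\left(B\right)dm_{v}\left(B\right)$, which is exactly $\dashint_{\mathcal{A}}\Psi\left(A\right)dm\left(A\right)$, giving the claim.

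I do not anticipate any real obstacle, since all of the measure-theoretic content — the disintegration of $m$ and the change-of-variables computation producing the modular-function weight — has already been carried out in the proof of Theorem \ref{thm.B.14}. The only subtlety worth flagging is the preliminary gauge-invariance check for $\Psi\circ\varphi$, which is what makes the left-hand normalized integral meaningful; everything after that is a one-line specialization obtained by setting $\Delta\equiv1$ and $c_{\gamma}=1$.
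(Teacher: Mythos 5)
Your argument is correct and is exactly the intended one: the paper leaves this corollary without a written proof precisely because it is the immediate specialization of Theorem \ref{thm.B.14} via Definition \ref{def.B.11} with $\Delta\equiv 1$ and $c_{\gamma}=1$. Your preliminary check that $\Psi\circ\varphi$ is gauge invariant (via Eq. (\ref{e.B.9})), so that the left-hand normalized integral is well defined, is a worthwhile point that the paper leaves implicit.
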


\begin{example}
[Example \ref{ex.B.4} continued]\label{ex.B.17}Let us continue the notation in
Example \ref{ex.B.4} and further assume that $\mathcal{G}$ is a unimodular Lie
group. Further let $m=\lambda^{\otimes N}$ where $\lambda$ is a Haar measure
on $\mathcal{G}$ and set $v\left(  \overrightarrow{g}\right)  =g_{1}$ so that
$\mathcal{A}_{v}=\left\{  e\right\}  \times\mathcal{G}^{N-1}.$ To make a gauge
invariant function, let $f:\mathcal{G}^{N-1}\rightarrow\mathbb{C}$ be any
function and the set $\Psi\left(  \overrightarrow{g}\right)  =f\left(
\left\{  g_{j}g_{1}^{-1}\right\}  _{j=2}^{N}\right)  .$ In this case, $m_{v}$
is given by $m_{v}=\delta_{e}\otimes\lambda^{\otimes\left(  N-1\right)  }$
since for $f:\mathcal{A}\rightarrow\mathbb{C}$ we have
\begin{align*}
\int_{\mathcal{A}}f\left(  \overrightarrow{g}\right)  dm\left(
\overrightarrow{g}\right)   &  =\int_{\mathcal{A}}f\left(
\begin{array}
[c]{c}%
g_{1}\\
g_{2}\\
\vdots\\
g_{N}%
\end{array}
\right)  d\lambda\left(  g_{1}\right)  d\lambda^{\otimes}\left(
\overrightarrow{g_{\geq2}}\right) \\
&  =\int_{\mathcal{A}}f\left(
\begin{array}
[c]{c}%
g_{1}\\
g_{2}g_{1}\\
\vdots\\
g_{N}g_{1}%
\end{array}
\right)  d\lambda\left(  g_{1}\right)  d\lambda^{\otimes}\left(
\overrightarrow{g_{\geq2}}\right) \\
&  =\int_{\mathcal{A}}f\left(
\begin{array}
[c]{c}%
e\\
g_{2}\\
\vdots\\
g_{N}%
\end{array}
\right)  \cdot g_{1}d\lambda\left(  g_{1}\right)  d\lambda^{\otimes}\left(
\overrightarrow{g_{\geq2}}\right)  .
\end{align*}
For an example of a $\varphi:\mathcal{A\rightarrow A}$ satisfying the
assumption of Corollary \ref{cor.B.16}, fix $a,b\in\mathcal{G}$ and then
define,%
\[
\varphi\left(  \overrightarrow{g}\right)  =a\cdot\overrightarrow{g}\cdot
b:=\left(
\begin{array}
[c]{c}%
ag_{1}b\\
ag_{2}b\\
\vdots\\
ag_{N}b
\end{array}
\right)  .
\]
Then $\varphi$ an $m$-preserving diffeomorphism on $\mathcal{A}$ with
\[
\varphi\left(  \overrightarrow{g}\cdot k\right)  =a\cdot\overrightarrow
{g}\cdot k\cdot b=a\cdot\overrightarrow{g}\cdot b\cdot b^{-1}kb=\varphi\left(
\overrightarrow{g}\right)  \cdot\gamma\left(  k\right)
\]
where $\gamma\left(  k\right)  :=\mathrm{Ad}_{b^{-1}}k,$ and so $\varphi$
satisfies the hypothesis of Corollary \ref{cor.B.16}.
\end{example}

\subsection{Yang-Mills gauge fixing\label{sec.B.4}}

In this section, we suppose (as defined in Notation \ref{not.1.1} with
$M=\mathbb{R}^{2}$) that $\mathcal{A}:=\Omega^{1}\left(  \mathbb{R}%
^{2},\mathfrak{k}\right)  ,$ $\mathcal{G}$ is the\textbf{ }gauge group of
functions, $g:\mathbb{R}^{2}\rightarrow K,$ and $\mathcal{G}_{o}=\left\{
g\in\mathcal{G}:g\left(  o\right)  =I\right\}  $ is the restricted gauge group.

\begin{definition}
[Homotopies]\label{def.B.18}A continuous map, $\mathbb{R}^{d}\times\left[
0,1\right]  \ni\left(  x,t\right)  \rightarrow\sigma_{x}\left(  t\right)
\in\mathbb{R}^{d}$ is a homotopy contracting $\mathbb{R}^{d}$ to $\left\{
0\right\}  $ if $\sigma_{x}\left(  1\right)  =x$ and $\sigma_{x}\left(
0\right)  =0$ for all $x\in\mathbb{R}^{d}.$ We further say $\sigma$ is a
\textbf{follow the leader homotopy} if $\sigma_{\sigma_{x}\left(  t\right)  }$
is a reparametrization of $\sigma_{x}|_{\left[  0,t\right]  }$ for all
$x\in\mathbb{R}^{d}$ and $t\in(0,1].$ [We will further assume that
$t\rightarrow\sigma_{x}\left(  t\right)  $ is at least piecewise smooth.]
\end{definition}

\begin{example}
\label{ex.B.19}The \textbf{radial homotopy}, $\sigma,$ is define by
$\sigma_{x}\left(  t\right)  =tx$ for all $x\in\mathbb{R}^{d}$ and
$t\in\left[  0,1\right]  .$ This is a follow the leader homotopy.
\end{example}

In the main part of this paper we have secretly been using the following
\textquotedblleft complete axial homotopy\textquotedblright\ on $\mathbb{R}%
^{2}$, another follow the leader homotopy.

\begin{notation}
[Complete axial homotopy]\label{n.B.20}For any $x\in\mathbb{R}^{2},$ let
$\sigma_{x}$ be the straight line path joining $0$ to $\left(  x_{1},0\right)
$ followed by the straight line path joining $\left(  x_{1},0\right)
\rightarrow\left(  x_{1},x_{2}\right)  =x$ as in Figure \ref{fig.12}. We refer
to this homotopy as the\textbf{ complete axial homotopy.}\begin{figure}[ptbh]
\centering
\par
\psize{2.5in} %
\executeiffilenewer{\GraphicsDirectorysigmap.svg}{\GraphicsDirectorysigmap.pdf}%
{inkscape -z -D --file=\GraphicsDirectorysigmap.svg --export-pdf=\GraphicsDirectorysigmap.pdf --export-latex}%
\input{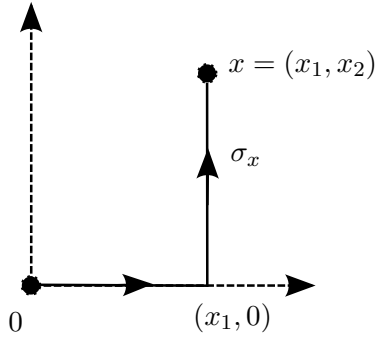}%
\caption{The taxi-cab path, $\varphi_{x},$
joining $0$ to $x\in\mathbb{R}^{2}.$}%
\label{fig.12}%
\end{figure}\iffalse%
\begin{figure}
[ptbh]
\begin{center}
\includegraphics[
natheight=2.357700in,
natwidth=2.950700in,
height=1.3919in,
width=1.7384in
]%
{sigmap.bmp}%
\end{center}
\end{figure}
\fi

\end{notation}

\begin{example}
\label{ex.B.21}If $\left\{  \sigma_{x}:x\in\mathbb{R}^{d}\right\}  $ is a
homotopy contracting $\mathbb{R}^{d}$ to $\left\{  0\right\}  ,$ then
$v_{\sigma}\left(  A\right)  \in\mathcal{G}$ defined by%
\begin{equation}
v_{\sigma}\left(  A\right)  \left(  x\right)  :=\left[  \pt_{1}^{A}\left(
\sigma_{x}\right)  \right]  ^{-1}\text{ for }x\in\mathbb{R}^{d} \label{e.B.13}%
\end{equation}
is a gauge on $\mathcal{A}$ because,
\begin{align*}
v_{\sigma}\left(  A^{g}\right)  \left(  x\right)   &  =\left[  \pt_{1}^{A^{g}%
}\left(  \sigma_{x}\right)  \right]  ^{-1}=\left[  g^{-1}\left(  x\right)
\pt_{1}^{A}\left(  \sigma_{x}\right)  g\left(  0\right)  \right]  ^{-1}\\
&  =\left[  \pt_{1}^{A}\left(  \sigma_{x}\right)  \right]  ^{-1}g\left(
x\right)  =v_{\sigma}\left(  A\right)  \left(  x\right)  g\left(  x\right)
=\left(  v_{\sigma}\left(  A\right)  g\right)  \left(  x\right)  .
\end{align*}
In this case we write $\mathcal{A}_{\sigma}$ for $\mathcal{A}_{v_{\sigma}}$ so
that%
\begin{equation}
\mathcal{A}_{\sigma}:=v_{\sigma}^{-1}\left(  \left\{  I\right\}  \right)
=\left\{  A\in\mathcal{A}:\pt_{1}^{A}\left(  \sigma_{x}\right)  =I\text{ for
all }x\in\mathbb{R}^{d}\right\}  \subset\mathcal{A}. \label{e.B.14}%
\end{equation}

\end{example}

\begin{definition}
[Homotopy gauges]If $\sigma$ is a homotopy contracting $\mathbb{R}^{d}$ to
$\left\{  0\right\}  $ we refer to $v_{\sigma}$ in Eq. (\ref{e.B.13}) as
\textbf{homotopy gauge }and $\mathcal{A}_{\sigma}$ in Eq. (\ref{e.B.14}) as a
\textbf{homotopy slice.}
\end{definition}

\begin{proposition}
[Follow the leader gauges]\label{pro.B.23}If $\sigma$ is a follow the leader
homotopy and $A$ is a connection one form then the following are equivalent;

\begin{enumerate}
\item $A$ is in the $\sigma$-gauge (i.e. $\pt_{1}^{A}\left(  \sigma
_{x}\right)  =I$ for all $x\in\mathbb{R}^{d}),$

\item $\pt_{t}^{A}\left(  \sigma_{x}\right)  =I$ for all $x\in\mathbb{R}^{d}$
and $t\in\left[  0,1\right]  ,$ and

\item $A\left\langle \dot{\sigma}_{x}\left(  t\right)  \right\rangle =0$ for
all $x\in\mathbb{R}^{d}$ and a.e. $t\in\left[  0,1\right]  .$
\end{enumerate}

Consequently, by item 3. above,
\[
\mathcal{A}_{\sigma}=\left\{  A\in\mathcal{A}:A\left\langle \dot{\sigma}%
_{x}\left(  t\right)  \right\rangle =0\text{ for all }x\in\mathbb{R}^{d}\text{
and a.e. }t\in\left[  0,1\right]  \right\}
\]
is a linear slice for any follow the leader homotopy.
\end{proposition}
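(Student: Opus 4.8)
The plan is to prove the three conditions equivalent by a short cycle of implications. The implications $(2)\Leftrightarrow(3)$ follow immediately from the defining ODE for parallel translation in Definition \ref{def.1.2}, and $(2)\Rightarrow(1)$ is trivial (take $t=1$). The entire geometric content, and the only place the follow-the-leader hypothesis enters, is the implication $(1)\Rightarrow(2)$; this is the crux. I would therefore organize the argument as $(3)\Rightarrow(2)$, $(2)\Rightarrow(3)$, $(2)\Rightarrow(1)$, and finally $(1)\Rightarrow(2)$, with the closing linear-slice assertion dropping out of the equivalence $(1)\Leftrightarrow(3)$.

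For the easy equivalence, fix $x\in\mathbb{R}^{d}$. If $(3)$ holds, then $A\langle\dot{\sigma}_{x}(t)\rangle=0$ for a.e. $t$, so the equation $\frac{d}{dt}\pt_{t}^{A}(\sigma_{x})+A\langle\dot{\sigma}_{x}(t)\rangle\pt_{t}^{A}(\sigma_{x})=0$ with $\pt_{0}^{A}(\sigma_{x})=I$ reduces to $\frac{d}{dt}\pt_{t}^{A}(\sigma_{x})=0$, whence $\pt_{t}^{A}(\sigma_{x})\equiv I$, giving $(2)$. Conversely, if $(2)$ holds then differentiating $\pt_{t}^{A}(\sigma_{x})\equiv I$ and substituting back into the same ODE yields $A\langle\dot{\sigma}_{x}(t)\rangle=A\langle\dot{\sigma}_{x}(t)\rangle\cdot I=0$ for a.e. $t$, which is $(3)$. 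Setting $t=1$ in $(2)$ gives $(1)$.

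The main obstacle is $(1)\Rightarrow(2)$. Fix $x\in\mathbb{R}^{d}$ and $t\in(0,1]$, and put $y:=\sigma_{x}(t)$. The follow-the-leader hypothesis of Definition \ref{def.B.18} says precisely that the full homotopy path $\sigma_{y}=\sigma_{\sigma_{x}(t)}$ is a reparametrization of $\sigma_{x}|_{[0,t]}$; since both paths run from $0$ to $y$, this reparametrization is orientation preserving. I would first record the reparametrization invariance of parallel translation: if $\phi:[0,1]\to[0,t]$ is an increasing piecewise-$C^{1}$ bijection with $\phi(0)=0$ and $\phi(1)=t$, and $\tilde{\ell}(s):=\ell(\phi(s))$, then by the chain rule $\pt_{s}^{A}(\tilde{\ell})$ and $\pt_{\phi(s)}^{A}(\ell)$ satisfy the same ODE with the same initial value, so by uniqueness $\pt_{1}^{A}(\tilde{\ell})=\pt_{t}^{A}(\ell)$. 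Applying this with $\ell=\sigma_{x}$ gives $\pt_{1}^{A}(\sigma_{\sigma_{x}(t)})=\pt_{t}^{A}(\sigma_{x})$. Hypothesis $(1)$ applied at the point $y=\sigma_{x}(t)$ gives $\pt_{1}^{A}(\sigma_{y})=I$, so $\pt_{t}^{A}(\sigma_{x})=I$; the case $t=0$ is trivial, and $(2)$ follows.

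Finally, the displayed description of $\mathcal{A}_{\sigma}$ is just a restatement of $(1)\Leftrightarrow(3)$: a connection lies in the slice if and only if $A\langle\dot{\sigma}_{x}(t)\rangle=0$ for all $x$ and a.e. $t$. Since for each fixed $(x,t)$ the map $A\mapsto A\langle\dot{\sigma}_{x}(t)\rangle$ is linear, this set is an intersection of kernels of linear functionals and hence a linear subspace of $\mathcal{A}$, i.e. a linear slice, as required for the hypotheses of Theorem \ref{thm.B.10}. I expect the only genuinely delicate point to be making the reparametrization-invariance step precise in the piecewise-smooth, absolutely continuous setting, but this is routine given uniqueness of solutions to the defining ODE.
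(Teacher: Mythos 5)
Your proposal is correct and follows essentially the same route as the paper: the crux in both is the implication $(1)\Rightarrow(2)$ via reparametrization invariance of parallel translation combined with the follow-the-leader property (giving $\pt_{t}^{A}(\sigma_{x})=\pt_{1}^{A}(\sigma_{\sigma_{x}(t)})$), with $(2)\Leftrightarrow(3)$ read off from the defining ODE. Your version merely reorganizes the cycle of implications and spells out the reparametrization step and the linearity of the slice in slightly more detail than the paper does.
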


\begin{proof}
Since \thinspace$\pt^{A}\left(  \sigma\right)  $ invariant under
reparametrizations of $\sigma$ it follows that for a follow the leader gauge,
\[
\pt_{t}^{A}\left(  \sigma_{x}\right)  =\pt_{1}^{A}\left(  \sigma_{\sigma
_{x}\left(  t\right)  }\right)  \text{ }\forall~x\in\mathbb{R}^{d}\text{ and
}t\in\left[  0,1\right]
\]
and this shows $1.\implies2.$ To prove $2.$ implies 3. simply notice that
\[
0=\frac{d}{dt}I=\frac{d}{dt}\pt_{t}^{A}\left(  \sigma_{x}\right)
=-A\left\langle \dot{\sigma}_{x}\left(  t\right)  \right\rangle \pt_{t}%
^{A}\left(  \sigma_{x}\right)  =-A\left\langle \dot{\sigma}_{x}\left(
t\right)  \right\rangle .
\]
The assertion $3.\implies1.$ is obvious since $\pt_{t}^{A}\left(  \sigma
_{x}\right)  $ satisfies
\[
0=\frac{d}{dt}\pt_{t}^{A}\left(  \sigma_{x}\right)  +A\left\langle \dot
{\sigma}_{x}\left(  t\right)  \right\rangle \pt_{t}^{A}\left(  \sigma
_{x}\right)  =\frac{d}{dt}\pt_{t}^{A}\left(  \sigma_{x}\right)  .
\]

\end{proof}

In what follows, if $x,v\in\mathbb{R}^{2},$ we let $v_{x}\in T\mathbb{R}^{d}$
be the tangent vector defined by
\[
v_{x}f:=\left(  \partial_{v}f\right)  \left(  x\right)  =\frac{d}{ds}%
|_{0}f\left(  x+sv\right)
\]
where $f$ is any differentiable function on $\mathbb{R}^{2}.$

\begin{corollary}
\label{cor.B.24}If $\sigma$ is a follow the leader homotopy, $A\in
\mathcal{A}_{\sigma},$ and $F=F^{A}$ is the curvature of $A,$ then we can
recover $A$ from $F$ using
\begin{equation}
A\left\langle v_{x}\right\rangle =\int_{0}^{1}F\left\langle \dot{\sigma}%
_{x}\left(  t\right)  ,v_{x}\sigma_{\left(  \cdot\right)  }\left(
\tau\right)  \right\rangle dt\text{ }\forall~v_{x}\in T\mathbb{R}^{d},
\label{e.B.15}%
\end{equation}
where explicitly,
\[
v_{x}\sigma_{\left(  \cdot\right)  }\left(  \tau\right)  =\partial_{v}%
\sigma_{x}\left(  \tau\right)  =\frac{d}{ds}|_{0}\sigma_{\left(  x+sv\right)
}\left(  \tau\right)  .
\]

\end{corollary}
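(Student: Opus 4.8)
The plan is to read the formula off directly from the path-differentiation identity of Proposition \ref{pro.A.7}, once the defining property of the $\sigma$-gauge from Proposition \ref{pro.B.23} is exploited. First I would fix $x,v\in\mathbb{R}^{2}$ and introduce the one-parameter family of homotopy paths $\ell_{s}:=\sigma_{x+sv}$, each parametrized by $t\in[0,1]$. Because $\sigma$ contracts $\mathbb{R}^{d}$ to $\{0\}$ we have $\ell_{s}(0)=\sigma_{x+sv}(0)=0$ independently of $s$, which is precisely the hypothesis needed to invoke Eq. (\ref{e.A.10}). The two velocity vectors appearing there are $\dot{\ell}_{s}(\tau)=\dot{\sigma}_{x+sv}(\tau)$ (the $t$-derivative) and, at $s=0$, $\ell_{s}^{\prime}(\tau)=\partial_{v}\sigma_{x}(\tau)=v_{x}\sigma_{(\cdot)}(\tau)$ (the $s$-derivative).

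Next I would apply Proposition \ref{pro.B.23}: since $A\in\mathcal{A}_{\sigma}$, parallel translation is trivial along every homotopy path, i.e. $\pt_{t}^{A}(\sigma_{y})=I$ for all $y\in\mathbb{R}^{d}$ and $t\in[0,1]$, and in particular $\pt_{t}^{A}(\ell_{s})\equiv I$ for all $s,t$. Feeding this into Eq. (\ref{e.A.10}) and evaluating at $s=0$ collapses both sides. On the right, the prefactor $\pt_{t}^{A}(\ell_{0})$ and the conjugations $\mathrm{Ad}_{\pt_{\tau}^{A}(\ell_{0})^{-1}}$ all become the identity, leaving $\int_{0}^{t}F^{A}\langle\dot{\sigma}_{x}(\tau),\partial_{v}\sigma_{x}(\tau)\rangle\,d\tau$. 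On the left, $s\mapsto\pt_{t}^{A}(\ell_{s})$ is constant, so its ordinary $s$-derivative vanishes and $\tfrac{\nabla^{A}}{ds}\pt_{t}^{A}(\ell_{s})|_{s=0}$ reduces to the undifferentiated connection term $A\langle\partial_{v}\sigma_{x}(t)\rangle$.

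Finally I would set $t=1$ and use $\sigma_{x}(1)=x$ to identify $\partial_{v}\sigma_{x}(1)=\tfrac{d}{ds}|_{0}(x+sv)=v$, so that the left side is exactly $A\langle v_{x}\rangle$ and the displayed identity is Eq. (\ref{e.B.15}). I do not anticipate a serious obstacle, since essentially all the work has been done in the earlier propositions; the argument is a bookkeeping consequence of them. The only points demanding mild care are: (i) the regularity needed to differentiate $\pt_{t}^{A}(\ell_{s})$ in $s$ and to treat the $s$- and $t$-derivatives on the same footing, which is covered by the standing assumption that $t\mapsto\sigma_{x}(t)$ is at least piecewise smooth and depends smoothly on $x$; and (ii) correctly matching the two distinct tangent directions $\dot{\sigma}_{x}$ and $v_{x}\sigma_{(\cdot)}$ inserted into the antisymmetric curvature two-form, so that no spurious sign or transposition creeps in.
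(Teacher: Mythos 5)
Your argument is correct, but it follows a different route from the one the paper takes for this corollary. The paper's proof is a direct pointwise computation: writing $F^{A}=dA+A\wedge A$ and applying Cartan's formula to the pair of vector fields $\dot{\sigma}_{x}\left(  t\right)  $ and $\partial_{v}\sigma_{x}\left(  t\right)  ,$ item 3 of Proposition \ref{pro.B.23} (namely $A\left\langle \dot{\sigma}_{x}\left(  t\right)  \right\rangle =0$) kills both the term $\partial_{v}A\left\langle \dot{\sigma}_{x}\left(  t\right)  \right\rangle $ and the entire $A\wedge A$ contribution, leaving $F\left\langle \dot{\sigma}_{x}\left(  t\right)  ,\partial_{v}\sigma_{x}\left(  t\right)  \right\rangle =\frac{d}{dt}A\left\langle \partial_{v}\sigma_{x}\left(  t\right)  \right\rangle ,$ which is then integrated in $t$ using $\partial_{v}\sigma_{x}\left(  0\right)  =0$ and $\partial_{v}\sigma_{x}\left(  1\right)  =v_{x}.$ Your proof instead invokes the variation formula of Proposition \ref{pro.A.7} together with item 2 of Proposition \ref{pro.B.23} (triviality of $\pt_{t}^{A}\left(  \sigma_{y}\right)  $), which collapses all the $\mathrm{Ad}$-conjugations and the $\frac{d}{ds}$-term and leaves exactly the claimed identity at $t=1.$ This is in fact the specialization to $A\in\mathcal{A}_{\sigma}$ of the paper's proof of the more general Theorem \ref{thm.B.25}, so your approach buys uniformity with that later result (and generalizes immediately to arbitrary $A$ at the cost of the $g_{A}$-conjugations), while the paper's proof of the corollary is more elementary, requiring only first-order calculus on forms and no differentiation of parallel translation in the parameter $s.$ The two points you flag as needing care (regularity in $s$ and the ordering of the two tangent directions in the antisymmetric $F$) are indeed the only delicate ones, and both are handled correctly: your ordering $\left(  \dot{\sigma}_{x},\,v_{x}\sigma_{\left(  \cdot\right)  }\right)  $ matches the ordering $\left(  \dot{\ell}_{s},\ell_{s}^{\prime}\right)  $ in Eq. (\ref{e.A.10}) and in Eq. (\ref{e.B.15}).
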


\begin{proof}
Using Cartan's formula while repeatedly using item 3. of Proposition
\ref{pro.B.23} shows%
\[
dA\left\langle \dot{\sigma}_{x}\left(  t\right)  ,\partial_{v}\sigma
_{x}\left(  t\right)  \right\rangle =\frac{d}{dt}A\left\langle \partial
_{v}\sigma_{x}\left(  t\right)  \right\rangle -\partial_{v}A\left\langle
\dot{\sigma}_{x}\left(  t\right)  \right\rangle =\frac{d}{dt}A\left\langle
\partial_{v}\sigma_{x}\left(  t\right)  \right\rangle
\]
and
\[
\left(  A\wedge A\right)  \left\langle \dot{\sigma}_{x}\left(  t\right)
,\partial_{v}\sigma_{x}\left(  t\right)  \right\rangle =\left[  A\left\langle
\dot{\sigma}_{x}\left(  t\right)  \right\rangle ,A\left\langle \partial
_{v}\sigma_{x}\left(  t\right)  \right\rangle \right]  =0.
\]
Therefore we may conclude that%
\[
F\left\langle \dot{\sigma}_{x}\left(  t\right)  ,\partial_{v}\sigma_{x}\left(
t\right)  \right\rangle =\frac{d}{dt}A\left\langle \partial_{v}\sigma
_{x}\left(  t\right)  \right\rangle .
\]
Integrating this expression on $t$ while using $\partial_{v}\sigma_{x}\left(
0\right)  =\partial_{v}0=0$ and $\partial_{v}\sigma_{x}\left(  1\right)
=\partial_{v}x=v_{x}$ gives Eq. (\ref{e.B.15}).
\end{proof}

Let us generalize the previous result in order to compute $\pi_{\sigma}\left(
A\right)  $ for an arbitrary $A\in\mathcal{A}.$

\begin{theorem}
\label{thm.B.25}If $\sigma$ is a follow the leader homotopy and $A\in
\mathcal{A},$ then%
\begin{equation}
\pi_{\sigma}\left(  A\right)  \left\langle v_{x}\right\rangle =\mathrm{Ad}%
_{g_{A}\left(  x\right)  ^{-1}}\int_{0}^{1}\mathrm{Ad}_{\pt_{1\leftarrow\tau
}^{A}\left(  \sigma_{x}\right)  }F^{A}\left(  \dot{\sigma}_{x}\left(
\tau\right)  ,v_{x}\sigma_{\left(  \cdot\right)  }\left(  \tau\right)
\right)  d\tau\label{e.B.16}%
\end{equation}
where $\pi_{\sigma}\left(  A\right)  :=A^{g_{A}}$ with $g_{A}\left(  x\right)
:=\pt_{1}^{A}\left(  \sigma_{x}\left(  \cdot\right)  \right)  $ and
\[
\pt_{1\leftarrow\tau}^{A}\left(  \sigma_{x}\right)  :=\pt_{1}^{A}\left(
\sigma_{x}\right)  \pt_{\tau}^{A}\left(  \sigma_{x}\right)  ^{-1}.
\]

\end{theorem}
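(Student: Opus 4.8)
The plan is to reduce the general statement to Corollary \ref{cor.B.24} by applying that corollary not to $A$ itself but to its gauge projection. Set $B := \pi_{\sigma}(A) = A^{g_A}$ with $g_A(x) = \pt_1^A(\sigma_x)$. By Lemma \ref{lem.B.3} (part 2 together with part 3) the projection $\pi_{\sigma}(A)$ always lands in the homotopy slice, so $B \in \mathcal{A}_{\sigma}$, and Proposition \ref{pro.B.23} guarantees $B$ is a genuine follow-the-leader gauge representative. Corollary \ref{cor.B.24} then applies verbatim to $B$, giving
\[
B\langle v_x\rangle = \int_0^1 F^B\langle \dot\sigma_x(\tau), v_x\sigma_{(\cdot)}(\tau)\rangle\,d\tau .
\]
It remains only to re-express $F^B$ through $F^A$ and to keep careful track of base points along the path.

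First I would transform the curvature using the gauge-covariance formula of Theorem \ref{thm.A.1} (Eq. \eqref{e.A.3}) with $g = g_A$: for tangent vectors based at the point $p = \sigma_x(\tau)$,
\[
F^{A^{g_A}}\langle \dot\sigma_x(\tau), v_x\sigma_{(\cdot)}(\tau)\rangle = \mathrm{Ad}_{g_A(\sigma_x(\tau))^{-1}}\,F^A\langle \dot\sigma_x(\tau), v_x\sigma_{(\cdot)}(\tau)\rangle .
\]
Thus the whole computation reduces to identifying the group element $g_A(\sigma_x(\tau))$. Here the follow-the-leader hypothesis is exactly what is needed: by Definition \ref{def.B.18} the path $\sigma_{\sigma_x(\tau)}$ is a reparametrization of $\sigma_x|_{[0,\tau]}$, and since parallel translation is invariant under orientation-preserving reparametrization (as already used in the proof of Proposition \ref{pro.B.23}),
\[
g_A(\sigma_x(\tau)) = \pt_1^A(\sigma_{\sigma_x(\tau)}) = \pt_\tau^A(\sigma_x).
\]

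Finally I would assemble the adjoint factors. From the definition $\pt_{1\leftarrow\tau}^A(\sigma_x) = \pt_1^A(\sigma_x)\,\pt_\tau^A(\sigma_x)^{-1} = g_A(x)\,\pt_\tau^A(\sigma_x)^{-1}$, one gets $g_A(x)^{-1}\,\pt_{1\leftarrow\tau}^A(\sigma_x) = \pt_\tau^A(\sigma_x)^{-1} = g_A(\sigma_x(\tau))^{-1}$, and hence, using that $\mathrm{Ad}$ is a homomorphism,
\[
\mathrm{Ad}_{g_A(\sigma_x(\tau))^{-1}} = \mathrm{Ad}_{g_A(x)^{-1}}\,\mathrm{Ad}_{\pt_{1\leftarrow\tau}^A(\sigma_x)} .
\]
Substituting this into the integral and pulling the $\tau$-independent factor $\mathrm{Ad}_{g_A(x)^{-1}}$ outside the integral produces Eq. \eqref{e.B.16}. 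As a consistency check, when $A \in \mathcal{A}_{\sigma}$ one has $g_A \equiv I$, all inner adjoints collapse to the identity, and the formula recovers Corollary \ref{cor.B.24} precisely. I expect the only genuinely delicate point to be the reparametrization identity $g_A(\sigma_x(\tau)) = \pt_\tau^A(\sigma_x)$: one must invoke the exact meaning of \emph{follow the leader} and confirm that the endpoints of $\sigma_{\sigma_x(\tau)}$ and of $\sigma_x|_{[0,\tau]}$ coincide, so that the two holonomies agree as group elements rather than merely up to conjugation. Everything else is bookkeeping of base points and bilinearity of $\mathrm{Ad}$.
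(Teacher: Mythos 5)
Your proof is correct, but it takes a genuinely different route from the paper's. The paper proves Eq.~(\ref{e.B.16}) by differentiating $g_{A}\left(  x\right)  =\pt_{1}^{A}\left(  \sigma_{x}\right)  $ in $x$: it applies the path-differentiation formula of Proposition \ref{pro.A.7} (Eq.~(\ref{e.A.10})) to the family $\ell_{s}=\sigma_{x\left(  s\right)  }$ to express $dg_{A}\left\langle v_{x}\right\rangle +A\left\langle v_{x}\right\rangle g_{A}\left(  x\right)  $ as a curvature integral, and then assembles $A^{g_{A}}\left\langle v_{x}\right\rangle =g_{A}\left(  x\right)  ^{-1}\left[  A\left\langle v_{x}\right\rangle g_{A}\left(  x\right)  +dg_{A}\left\langle v_{x}\right\rangle \right]  $ directly. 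You instead observe that $\pi_{\sigma}\left(  A\right)  $ lies in the slice $\mathcal{A}_{\sigma}$ (immediate from $\mathcal{A}_{v_{\sigma}}=\pi_{v_{\sigma}}\left(  \mathcal{A}\right)  $ in Lemma \ref{lem.B.3}), invoke the slice-recovery formula of Corollary \ref{cor.B.24} (whose proof rests only on Cartan's formula and item 3 of Proposition \ref{pro.B.23}), and then convert $F^{A^{g_{A}}}$ back to $F^{A}$ via the gauge covariance of curvature in Theorem \ref{thm.A.1}. Both arguments hinge on the same follow-the-leader identity $g_{A}\left(  \sigma_{x}\left(  \tau\right)  \right)  =\pt_{\tau}^{A}\left(  \sigma_{x}\right)  $, which you correctly flag as the one delicate point and which is justified exactly as in the proof of Proposition \ref{pro.B.23}; your final bookkeeping $\mathrm{Ad}_{g_{A}\left(  \sigma_{x}\left(  \tau\right)  \right)  ^{-1}}=\mathrm{Ad}_{g_{A}\left(  x\right)  ^{-1}}\mathrm{Ad}_{\pt_{1\leftarrow\tau}^{A}\left(  \sigma_{x}\right)  }$ coincides with the paper's last step. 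What your route buys is that it bypasses the second-variation machinery of Proposition \ref{pro.A.7} entirely, reuses a result already established in the appendix, and makes the consistency with the special case $A\in\mathcal{A}_{\sigma}$ (where $g_{A}\equiv I$) transparent; the cost is only the mild extra step of checking that the projection lands in the slice and that the base point of the curvature in Eq.~(\ref{e.A.3}) is $\sigma_{x}\left(  \tau\right)  $, both of which you handle correctly.
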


\begin{proof}
Let $v_{x}\in T_{x}\mathbb{R}^{d}$ and $x\left(  s\right)  \in\mathbb{R}^{d}$
such that $x^{\prime}\left(  0\right)  =v_{x}$ and in particular $x\left(
0\right)  =x.$ We are now going to apply Proposition \ref{pro.A.7} with
$\ell_{s}\left(  t\right)  =\sigma_{x\left(  s\right)  }\left(  t\right)  .$
First observe that%
\[
dg_{A}\left(  v_{x}\right)  +A\left\langle v_{x}\right\rangle g_{A}\left(
x\right)  =\frac{\nabla}{ds}|_{0}g_{A}\left(  x\left(  s\right)  \right)
=\frac{\nabla}{ds}|_{0}\pt_{1}^{A}\left(  \sigma_{x\left(  s\right)  }\right)
=\frac{\nabla}{ds}|_{0}\pt_{1}^{A}\left(  \ell_{s}\right)
\]
and, by Eq. (\ref{e.A.10}) of Proposition \ref{pro.A.7},
\begin{align*}
\frac{\nabla}{ds}|_{0}\pt_{1}^{A}\left(  \ell_{s}\right)   &  =\pt_{1}%
^{A}\left(  \ell_{0}\right)  \int_{0}^{1}\mathrm{Ad}_{\pt_{\tau}^{A}\left(
\ell_{0}\right)  ^{-1}}F^{A}\left(  \dot{\ell}_{0}\left(  \tau\right)
,\ell_{0}^{\prime}\left(  \tau\right)  \right)  d\tau\\
&  =g_{A}\left(  x\right)  \int_{0}^{1}\mathrm{Ad}_{\pt_{\tau}^{A}\left(
\sigma_{x}\right)  ^{-1}}F^{A}\left(  \dot{\sigma}_{x}\left(  \tau\right)
,v_{x}\sigma_{\left(  \cdot\right)  }\left(  \tau\right)  \right)  d\tau\\
&  =g_{A}\left(  x\right)  \int_{0}^{1}\mathrm{Ad}_{g_{A}\left(  \sigma
_{x}\left(  \tau\right)  \right)  ^{-1}}F^{A}\left(  \dot{\sigma}_{x}\left(
\tau\right)  ,v_{x}\sigma_{\left(  \cdot\right)  }\left(  \tau\right)
\right)  d\tau.
\end{align*}
Combining these two identities and multiplying the result on the left by
$g_{A}\left(  x\right)  ^{-1}$ gives,%
\begin{align*}
B\left\langle v_{x}\right\rangle  &  =A^{g_{A}}\left\langle v_{x}\right\rangle
=\int_{0}^{1}\mathrm{Ad}_{g_{A}\left(  \sigma_{x}\left(  \tau\right)  \right)
^{-1}}F^{A}\left(  \dot{\sigma}_{x}\left(  \tau\right)  ,v_{x}\sigma_{\left(
\cdot\right)  }\left(  \tau\right)  \right)  d\tau\\
&  =\mathrm{Ad}_{g_{A}\left(  x\right)  ^{-1}}\int_{0}^{1}\mathrm{Ad}%
_{g_{A}\left(  x\right)  }\mathrm{Ad}_{g_{A}\left(  \sigma_{x}\left(
\tau\right)  \right)  ^{-1}}F^{A}\left(  \dot{\sigma}_{x}\left(  \tau\right)
,v_{x}\sigma_{\left(  \cdot\right)  }\left(  \tau\right)  \right)  d\tau.
\end{align*}
Finally we have $g_{A}\left(  x\right)  g_{A}\left(  \sigma_{x}\left(
\tau\right)  \right)  ^{-1}=\pt_{1\leftarrow\tau}^{A}\left(  \sigma
_{x}\right)  $ so that
\[
A^{g_{A}}\left\langle v_{x}\right\rangle =\mathrm{Ad}_{g_{A}\left(  x\right)
^{-1}}\int_{0}^{1}\mathrm{Ad}_{\pt_{1\leftarrow\tau}^{A}\left(  \sigma
_{x}\right)  }F^{A}\left(  \dot{\sigma}_{x}\left(  \tau\right)  ,v_{x}%
\sigma_{\left(  \cdot\right)  }\left(  \tau\right)  \right)  d\tau.
\]

\end{proof}

\begin{remark}
\label{rem.B.26}If we take $v_{x}=\frac{d}{ds}\sigma_{y}\left(  s\right)  $
for some $y\in\mathbb{R}^{d}$ and $s\in\left[  0,1\right]  $ (so that
$x=\sigma_{y}\left(  s\right)  ),$ then
\[
\dot{\sigma}_{x}\left(  \tau\right)  =\frac{d}{d\tau}\sigma_{\sigma_{y}\left(
s\right)  }\left(  \tau\right)  \text{ and }v_{x}\sigma_{\left(  \cdot\right)
}\left(  \tau\right)  =\frac{d}{ds}\sigma_{\sigma_{y}\left(  s\right)
}\left(  \tau\right)
\]
are parallel by the follow the leader property so that $F^{A}\left(
\dot{\sigma}_{x}\left(  \tau\right)  ,v_{x}\sigma_{\left(  \cdot\right)
}\left(  \tau\right)  \right)  =0$ for a.e. $\tau$ in this case. This shows
explicitly that right side of Eq. (\ref{e.B.16}) is indeed in $\mathcal{A}%
_{\sigma}.$
\end{remark}

\begin{corollary}
\label{cor.B.27}If $A\in\mathcal{A}_{\sigma}$ and $\eta\in\mathcal{A},$ then
\begin{align}
\pi_{\sigma}\left(  A+\eta\right)  \left\langle v_{x}\right\rangle  &
=\left(  A+\eta\right)  ^{g_{\eta}}\left\langle v_{x}\right\rangle =\left[
\mathrm{Ad}_{g_{\eta}^{-1}}A+\pi_{\sigma}\left(  \eta\right)  \right]
\left\langle v_{x}\right\rangle \label{e.B.17}\\
&  =\mathrm{Ad}_{g_{\eta}^{-1}\left(  x\right)  }\left[  A\left\langle
v_{x}\right\rangle +\int_{0}^{1}\mathrm{Ad}_{\pt_{1\leftarrow\tau}^{\eta
}\left(  \sigma_{x}\right)  }F^{\eta}\left(  \dot{\sigma}_{x}\left(
\tau\right)  ,v_{x}\sigma_{\left(  \cdot\right)  }\left(  \tau\right)
\right)  d\tau\right]  . \label{e.B.18}%
\end{align}

\end{corollary}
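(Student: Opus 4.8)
The plan is to reduce the entire statement to Theorem \ref{thm.B.25} applied to the single connection $\eta$, after first establishing that the homotopy gauge function of the sum $A+\eta$ coincides with that of $\eta$ alone. This identity of gauge functions is the one genuine point; everything else is algebraic manipulation of the gauge action in Eq. (\ref{e.1.1}) together with the already-proved Theorem \ref{thm.B.25}.

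First I would exploit the hypothesis $A\in\mathcal{A}_\sigma$. By item 3 of Proposition \ref{pro.B.23} this means $A\langle\dot\sigma_x(t)\rangle=0$ for all $x\in\mathbb{R}^{d}$ and a.e. $t\in[0,1]$. Consequently the defining ODE for $\pt_t^{A+\eta}(\sigma_x)$,
\[
\frac{d}{dt}\pt_t^{A+\eta}(\sigma_x)+\left[(A+\eta)\langle\dot\sigma_x(t)\rangle\right]\pt_t^{A+\eta}(\sigma_x)=0,
\]
with $\pt_0^{A+\eta}(\sigma_x)=I$, loses its $A$-term and becomes exactly the ODE defining $\pt_t^{\eta}(\sigma_x)$. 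By uniqueness of solutions to linear ODEs we obtain $\pt_t^{A+\eta}(\sigma_x)=\pt_t^{\eta}(\sigma_x)$ for all $t$, and in particular the homotopy gauge functions agree: $g_{A+\eta}(x):=\pt_1^{A+\eta}(\sigma_x)=\pt_1^{\eta}(\sigma_x)=:g_\eta(x)$. This is the step where $A\in\mathcal{A}_\sigma$ is used essentially, and it is the main (though not difficult) obstacle.

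With $g_{A+\eta}=g_\eta$ in hand, the definition $\pi_\sigma(B):=B^{g_B}$ from Theorem \ref{thm.B.25} gives $\pi_\sigma(A+\eta)=(A+\eta)^{g_\eta}$, the first equality in Eq. (\ref{e.B.17}). Expanding the gauge action by Eq. (\ref{e.1.1}),
\[
(A+\eta)^{g_\eta}=\mathrm{Ad}_{g_\eta^{-1}}(A+\eta)+g_\eta^{-1}dg_\eta=\mathrm{Ad}_{g_\eta^{-1}}A+\left[\mathrm{Ad}_{g_\eta^{-1}}\eta+g_\eta^{-1}dg_\eta\right]=\mathrm{Ad}_{g_\eta^{-1}}A+\eta^{g_\eta}.
\]
Since $g_\eta(x)=\pt_1^\eta(\sigma_x)$ is precisely the function $g_A$ of Theorem \ref{thm.B.25} taken for the connection $\eta$, one has $\eta^{g_\eta}=\pi_\sigma(\eta)$, giving the second equality in Eq. (\ref{e.B.17}).

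Finally I would substitute the closed formula of Theorem \ref{thm.B.25} with $A$ replaced by $\eta$,
\[
\pi_\sigma(\eta)\langle v_x\rangle=\mathrm{Ad}_{g_\eta(x)^{-1}}\int_0^1\mathrm{Ad}_{\pt_{1\leftarrow\tau}^{\eta}(\sigma_x)}F^\eta(\dot\sigma_x(\tau),v_x\sigma_{(\cdot)}(\tau))\,d\tau,
\]
and combine it with $\mathrm{Ad}_{g_\eta^{-1}}A\langle v_x\rangle=\mathrm{Ad}_{g_\eta(x)^{-1}}A\langle v_x\rangle$, factoring $\mathrm{Ad}_{g_\eta(x)^{-1}}$ out of the sum. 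This routine rearrangement produces Eq. (\ref{e.B.18}), completing the argument; no analytic estimate is required.
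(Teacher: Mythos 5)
Your proposal is correct and follows essentially the same route as the paper: use $A\in\mathcal{A}_{\sigma}$ (via item 3 of Proposition \ref{pro.B.23}) to see that the defining ODE for $\pt_{t}^{A+\eta}\left(  \sigma_{x}\right)  $ reduces to that for $\pt_{t}^{\eta}\left(  \sigma_{x}\right)  ,$ hence $g_{A+\eta}=g_{\eta};$ then expand the gauge action to get Eq. (\ref{e.B.17}) and invoke Theorem \ref{thm.B.25} with $A$ replaced by $\eta$ to obtain Eq. (\ref{e.B.18}). The only difference is cosmetic: you spell out the uniqueness-of-ODE argument that the paper leaves implicit.
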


\begin{proof}
As $A\in\mathcal{A}_{\sigma}$ we know that $A\left\langle \dot{\sigma}%
_{x}\left(  t\right)  \right\rangle =0$ for a.e. $t$ and therefore,
\[
g_{A+\eta}\left(  x\right)  =\pt_{1}^{A+\eta}\left(  \sigma_{x}\right)
=\pt_{1}^{\eta}\left(  \sigma_{x}\right)  =g_{\eta}\left(  x\right)
\]
and hence
\begin{align*}
\pi_{\sigma}\left(  A+\eta\right)   &  =\left(  A+\eta\right)  ^{g_{\eta}%
}=\mathrm{Ad}_{g_{\eta}^{-1}}\left(  A+\eta\right)  +g_{\eta}^{-1}dg_{\eta}\\
&  =\mathrm{Ad}_{g_{\eta}^{-1}}A+\mathrm{Ad}_{g_{\eta}^{-1}}\eta+g_{\eta}%
^{-1}dg_{\eta}=\mathrm{Ad}_{g_{\eta}^{-1}}A+\pi_{\sigma}\left(  \eta\right)
\end{align*}
which gives Eq. (\ref{e.B.17}). Making use of Theorem \ref{thm.B.25} with $A$
replaced by $\eta$ to evaluate $\pi_{\sigma}\left(  \eta\right)  $ in Eq.
(\ref{e.B.17}) then gives Eq. (\ref{e.B.18}).
\end{proof}

\begin{metapropositions}
\label{mpro.B.28}If $\sigma$ is a follow the leader homotopy, $\eta
\in\mathcal{A},$ and $\Psi:\mathcal{A}\rightarrow\left[  0,\infty\right]  $ is
a function such that $\Psi\left(  A^{g_{\eta}}\right)  =\Psi\left(  A\right)
$ for all $A\in\mathcal{A},$ then
\begin{equation}
\int_{\mathcal{A}_{\sigma}}\Psi\left(  A+\eta\right)  dm_{\sigma}\left(
A\right)  =\int_{\mathcal{A}_{\sigma}}\Psi\left(  A\right)  dm_{\sigma}\left(
A\right)  . \label{e.B.19}%
\end{equation}
[Note that $\eta$ is not assumed to be in $\mathcal{A}_{\sigma}$ and so we can
not directly prove Eq. (\ref{e.B.19}) by invoking translation invariance of
$m_{\sigma}.]$
\end{metapropositions}

\begin{proof}
If $A\in\mathcal{A}_{\sigma}$ and $\eta\in\mathcal{A},$ then $A+\eta
\in\mathcal{A}$ and so by assumption and
\[
\Psi\left(  A+\eta\right)  =\Psi\left(  \left[  A+\eta\right]  ^{g_{\eta}%
}\right)  =\Psi\left(  \mathrm{Ad}_{g_{\eta}^{-1}}A+\pi_{\sigma}\left(
\eta\right)  \right)  .
\]
As we have already explained, $A\rightarrow\mathrm{Ad}_{g_{\eta}^{-1}}%
A+\pi_{\sigma}\left(  \eta\right)  $ is a rotation followed by a translation
which preserves Lebesgue measure and $m_{\sigma}$ is a Lebesgue measure. Thus,
it follows that
\[
\int_{\mathcal{A}_{\sigma}}\Psi\left(  A+\eta\right)  dm_{\sigma}\left(
A\right)  =\int_{\mathcal{A}_{\sigma}}\Psi\left(  \mathrm{Ad}_{g_{\eta}^{-1}%
}A+\pi_{\sigma}\left(  \eta\right)  \right)  dm_{\sigma}\left(  A\right)
=\int_{\mathcal{A}_{\sigma}}\Psi\left(  A\right)  dm_{\sigma}\left(  A\right)
.
\]

\end{proof}

\begin{metacorollary}
\label{mcor.B.29}If $\sigma$ is a follow the leader homotopy, $\eta
\in\mathcal{A},$ and $\Psi:\mathcal{A}\rightarrow\left[  0,\infty\right]  $ is
a function such that $\Psi\left(  A^{g_{s\eta}}\right)  =\Psi\left(  A\right)
$ for all $A\in\mathcal{A}$ and $s\in\left(  -\varepsilon,\varepsilon\right)
$ for some $\varepsilon>0,$ then
\begin{equation}
\int_{\mathcal{A}_{\sigma}}\left(  \partial_{\eta}\Psi\right)  \left(
A\right)  dm_{\sigma}\left(  A\right)  =0. \label{e.B.20}%
\end{equation}

\end{metacorollary}

\begin{proof}
By Proposition \ref{mpro.B.28},
\[
\int_{\mathcal{A}_{\sigma}}\Psi\left(  A\right)  dm_{\sigma}\left(  A\right)
=\int_{\mathcal{A}_{\sigma}}\Psi\left(  A+s\eta\right)  dm_{\sigma}\left(
A\right)  \text{ }\forall~s\in\left(  -\varepsilon,\varepsilon\right)  .
\]
Differentiating this equation at $s=0$ then gives Eq. (\ref{e.B.20}).
\end{proof}

\begin{remark}
\label{rem.B.30}\textbf{Warning: }even if $\Psi$ is gauge invariant, it is
quite unlikely that $\partial_{\eta}\Psi$ will still be gauge invariant since
in general we have,
\begin{align*}
\left(  \partial_{\eta}\Psi\right)  \left(  A^{g}\right)   &  =\frac{d}%
{dt}|_{0}\Psi\left(  A^{g}+t\eta\right)  =\frac{d}{dt}|_{0}\Psi\left(  \left(
A^{g}+t\eta\right)  ^{g^{-1}}\right) \\
&  =\frac{d}{dt}|_{0}\Psi\left(  A+t\mathrm{Ad}_{g}\eta\right)  =\left(
\partial_{\mathrm{Ad}_{g}\eta}\Psi\right)  \left(  A\right)  .
\end{align*}
So in order for $\partial_{\eta}\Psi$ to be gauge invariant we would typically
need $\mathrm{Ad}_{g}\eta=\eta$ for all $g\in\mathcal{G}$ which would force
$\eta$ to take values in the center of $\mathfrak{k}.$ On the other hand, for
any $g\in\mathcal{G}$ such that $\mathrm{Ad}_{g}\eta=\eta$ we will have
\[
\left(  \partial_{\eta}\Psi\right)  \left(  A^{g}\right)  =\left(
\partial_{\eta}\Psi\right)  \left(  A\right)  \text{ for all }A\in
\mathcal{A}.
\]

\end{remark}

\begin{notation}
[$\sigma$-fixed $YM$ \textquotedblleft measures\textquotedblright%
]\label{not.B.31}To each follow the leader homotopy, $\sigma,$ let
$\mu_{\sigma}$ be the formal probability measure on $\mathcal{A}_{\sigma}$
given by,%
\[
d\mu_{\sigma}\left(  A\right)  =\frac{1}{Z_{\sigma}}e^{-\frac{1}{2}\left\Vert
F^{A}\right\Vert ^{2}}dm_{\sigma}\left(  A\right)  .
\]

\end{notation}

\begin{metacorollary}
\label{mcor.B.32}If $\sigma$ is a follow the leader homotopy, $\eta
\in\mathcal{A},$ and $\Psi:\mathcal{A}\rightarrow\left[  0,\infty\right]  $ is
a function such that $\Psi\left(  A^{g_{s\eta}}\right)  =\Psi\left(  A\right)
$ for all $A\in\mathcal{A}$ and $s\in\left(  -\varepsilon,\varepsilon\right)
$ for some $\varepsilon>0,$ then%
\begin{equation}
\int_{\mathcal{A}_{\sigma}}\left(  \partial_{\eta}\Psi\right)  \left(
A\right)  d\mu_{\sigma}\left(  A\right)  =\int_{\mathcal{A}_{\sigma}}%
\Psi\left(  A\right)  \cdot\left\langle d^{A}\eta,F^{A}\right\rangle
d\mu_{\sigma}\left(  A\right)  , \label{e.B.21}%
\end{equation}
where
\[
\left(  d^{A}\eta\right)  _{ij}=\nabla_{i}^{A}\eta_{j}-\nabla_{j}^{A}\eta
_{i}\text{ and }\nabla_{i}^{A}\eta_{j}:=\partial_{i}\eta_{j}+\mathrm{ad}%
_{A_{i}}\eta_{j}.
\]
\textbf{Warning: }gauge invariance has been broken in Eq. (\ref{e.B.21}) which
holds for all follow the leader homotopies, $\sigma,$ but both sides of this
equation may very well depend on the choice of $\sigma.$
\end{metacorollary}

\begin{proof}
Since $A\rightarrow e^{-\frac{1}{2}\left\Vert F^{A}\right\Vert ^{2}}$ is gauge
invariant we may apply Meta-Corollary \ref{mcor.B.29} with $\Psi$ replaced by
$A\rightarrow\Psi\left(  A\right)  e^{-\frac{1}{2}\left\Vert F^{A}\right\Vert
^{2}}$ in order to find,%
\begin{align*}
0  &  =\frac{1}{Z_{\sigma}}\int_{\mathcal{A}_{\sigma}}\partial_{\eta}\left[
A\rightarrow\Psi\left(  A\right)  e^{-\frac{1}{2}\left\Vert F^{A}\right\Vert
^{2}}\right]  dm_{\sigma}\left(  A\right) \\
&  =\frac{1}{Z_{\sigma}}\int_{\mathcal{A}_{\sigma}}\left[  \left(
\partial_{\eta}\Psi\right)  \left(  A\right)  -\frac{1}{2}\partial_{\eta
}\left\Vert F^{A}\right\Vert ^{2}\right]  e^{-\frac{1}{2}\left\Vert
F^{A}\right\Vert ^{2}}dm_{\sigma}\left(  A\right)
\end{align*}
wherein we have used the product and the chain rule for the second equality.
This completes the proof since,%
\[
\frac{1}{2}\partial_{\eta}\left\Vert F^{A}\right\Vert ^{2}=\left\langle
\partial_{\eta}F^{A},F^{A}\right\rangle
\]
and
\[
\partial_{\eta}F_{ij}^{A}=\partial_{\eta}\left(  \partial_{i}A_{j}%
-\partial_{j}A_{i}+\left[  A_{i},A_{j}\right]  \right)  =\partial_{i}\eta
_{j}-\partial_{j}\eta_{i}+\left[  \eta_{i},A_{j}\right]  +\left[  A_{i}%
,\eta_{j}\right]  =\nabla_{i}^{A}\eta_{j}-\nabla_{j}^{A}\eta_{i}.
\]

\end{proof}

\begin{notation}
\label{not.B.33}To each follow the leader homotopy, $\sigma,$ and $\eta
\in\mathcal{A}$ let $u_{\eta}^{\sigma}:\mathbb{R}^{d}\rightarrow\mathfrak{k}$
be defined by
\[
u_{\eta}^{\sigma}\left(  x\right)  :=\int_{0}^{1}\eta\left\langle \dot{\sigma
}_{x}\left(  \tau\right)  \right\rangle d\tau.
\]

\end{notation}

\begin{remark}
\label{rem.B.34} Since $\sigma$ is a follow the leader homotopy we have,%
\[
u_{\eta}^{\sigma}\left(  \sigma_{x}\left(  t\right)  \right)  =\int_{0}%
^{1}\eta\left\langle \dot{\sigma}_{\sigma_{x}\left(  t\right)  }\left(
\tau\right)  \right\rangle d\tau=\int_{\sigma_{x}|_{\left[  0,t\right]  }}%
\eta=\int_{0}^{t}\eta\left\langle \dot{\sigma}_{x}\left(  \tau\right)
\right\rangle d\tau
\]
and therefore,
\begin{equation}
du_{\eta}^{\sigma}\left(  \dot{\sigma}_{x}\left(  t\right)  \right)  =\frac
{d}{dt}u_{\eta}^{\sigma}\left(  \sigma_{x}\left(  t\right)  \right)  =\frac
{d}{dt}\int_{0}^{t}\eta\left\langle \dot{\sigma}_{x}\left(  \tau\right)
\right\rangle d\tau=\eta\left\langle \dot{\sigma}_{x}\left(  t\right)
\right\rangle . \label{e.B.22}%
\end{equation}

\end{remark}

\begin{proposition}
[Projected vector fields]\label{pro.B.35}If $A\in\mathcal{A}_{\sigma}$ and
$A,\eta\in\mathcal{A},$ then
\begin{equation}
d\pi_{\sigma}\left\langle \eta_{A}\right\rangle :=\frac{d}{ds}|_{0}\pi
_{\sigma}\left(  A+s\eta\right)  =-\mathrm{ad}_{u_{\eta}^{\sigma}}%
A+\eta-du_{\eta}^{\sigma}. \label{e.B.23}%
\end{equation}
[As is seen directly from Eq. (\ref{e.B.22}), $\eta-du_{\eta}^{\sigma}%
\in\mathcal{A}_{\sigma}$ for all $\eta\in\mathcal{A}.]$
\end{proposition}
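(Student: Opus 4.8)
The plan is to differentiate the closed formula for $\pi_{\sigma}\left(  A+\eta\right)  $ provided by Corollary \ref{cor.B.27}. Since $A\in\mathcal{A}_{\sigma}$, that corollary gives, for every real $s$,
\[
\pi_{\sigma}\left(  A+s\eta\right)  =\mathrm{Ad}_{g_{s\eta}^{-1}}A+\pi_{\sigma}\left(  s\eta\right)  ,\qquad g_{s\eta}\left(  x\right)  =\pt_{1}^{s\eta}\left(  \sigma_{x}\right)  .
\]
This already isolates the two mechanisms driving the variation: a term \emph{linear} in the fixed $A$ but conjugated by the $s$-dependent homotopy gauge $g_{s\eta}$, and a term $\pi_{\sigma}\left(  s\eta\right)  $ depending on $\eta$ alone. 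Because $g_{0}=I$, differentiating at $s=0$ will split cleanly into a contribution from each summand, which I would compute in turn and then add.

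First I would treat $\frac{d}{ds}|_{0}\mathrm{Ad}_{g_{s\eta}^{-1}}A$. The single input required is $\frac{d}{ds}|_{0}g_{s\eta}$, obtained by linearizing the parallel-transport ODE of Definition \ref{def.1.2} along $\sigma_{x}$ for the connection $s\eta$: since the $s=0$ holonomy is the identity, one finds $\frac{d}{ds}|_{0}g_{s\eta}\left(  x\right)  =-\int_{0}^{1}\eta\left\langle \dot{\sigma}_{x}\left(  \tau\right)  \right\rangle d\tau=-u_{\eta}^{\sigma}\left(  x\right)  $, exactly the quantity of Notation \ref{not.B.33}. Combined with $g_{0}=I$, the chain rule for $\mathrm{Ad}$ converts this into the Lie-bracket (adjoint) term $-\mathrm{ad}_{u_{\eta}^{\sigma}}A$ recorded in the statement. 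Second, I would compute $\frac{d}{ds}|_{0}\pi_{\sigma}\left(  s\eta\right)  $ from the explicit formula of Theorem \ref{thm.B.25}. At $s=0$ both the outer factor $\mathrm{Ad}_{g_{s\eta}^{-1}}$ and the inner $\mathrm{Ad}_{\pt_{1\leftarrow\tau}^{s\eta}\left(  \sigma_{x}\right)  }$ degenerate to the identity, while $F^{s\eta}=s\,d\eta+s^{2}\,\eta\wedge\eta$ has $s$-derivative $d\eta$ at $s=0$, so that
\[
\frac{d}{ds}|_{0}\pi_{\sigma}\left(  s\eta\right)  \left\langle v_{x}\right\rangle =\int_{0}^{1}d\eta\left\langle \dot{\sigma}_{x}\left(  \tau\right)  ,v_{x}\sigma_{\left(  \cdot\right)  }\left(  \tau\right)  \right\rangle d\tau .
\]

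This last integral is precisely the one evaluated in the proof of Corollary \ref{cor.B.24}, but now \emph{without} assuming $\eta\in\mathcal{A}_{\sigma}$. Applying Cartan's formula $d\eta\left\langle \dot{\Sigma},\Sigma^{\prime}\right\rangle =\frac{d}{d\tau}\eta\left\langle \Sigma^{\prime}\right\rangle -\frac{d}{ds}\eta\left\langle \dot{\Sigma}\right\rangle $ to $\Sigma\left(  s,\tau\right)  :=\sigma_{x+sv}\left(  \tau\right)  $ and integrating in $\tau$ produces a boundary term $\eta\left\langle v_{x}\right\rangle $ — using $\partial_{v}\sigma_{x}\left(  1\right)  =v_{x}$ and $\partial_{v}\sigma_{x}\left(  0\right)  =0$ — minus $\partial_{v}u_{\eta}^{\sigma}\left(  x\right)  =du_{\eta}^{\sigma}\left\langle v_{x}\right\rangle $, so the second piece is $\eta-du_{\eta}^{\sigma}$. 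Adding the two contributions gives Eq. (\ref{e.B.23}). Finally, the parenthetical claim $\eta-du_{\eta}^{\sigma}\in\mathcal{A}_{\sigma}$ is immediate from Remark \ref{rem.B.34} together with item 3 of Proposition \ref{pro.B.23}, since $\left(  \eta-du_{\eta}^{\sigma}\right)  \left\langle \dot{\sigma}_{x}\left(  t\right)  \right\rangle =\eta\left\langle \dot{\sigma}_{x}\left(  t\right)  \right\rangle -\eta\left\langle \dot{\sigma}_{x}\left(  t\right)  \right\rangle =0$.

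The step I expect to be the main obstacle is the second one: justifying that the $s$-derivative may be passed under the $\tau$-integral in the Theorem \ref{thm.B.25} formula (smooth dependence of holonomy and curvature on the parameter, with a dominating bound over the compact $\tau$-range), and carrying out the Cartan-formula bookkeeping correctly. The crucial simplification there is that the parameter fields satisfy $\left[  \partial_{\tau},\partial_{s}\right]  =0$, so the bracket term in Cartan's identity drops and only the two total-derivative terms survive; this is exactly where the piecewise-smoothness hypothesis on the follow-the-leader homotopy (Definition \ref{def.B.18}) is used. The first step, by contrast, is an elementary first-variation of an ODE solution and should require no more than the chain rule for $\mathrm{Ad}$.
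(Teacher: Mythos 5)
Your route is the paper's route: the proof in the text also starts from the formula of Corollary \ref{cor.B.27} (equivalently Theorem \ref{thm.B.25}), splits the $s$-derivative at $s=0$ into the $\mathrm{Ad}_{g_{s\eta}^{-1}}A$ contribution and the curvature-integral contribution, replaces $\frac{d}{ds}|_{0}F^{s\eta}$ by $d\eta$, and evaluates $\int_{0}^{1}d\eta\langle\dot{\sigma}_{x}(\tau),v_{x}\sigma_{(\cdot)}(\tau)\rangle d\tau=(\eta-du_{\eta}^{\sigma})\langle v_{x}\rangle$ exactly as in Eq. (\ref{e.B.24}). That half of your argument, and the closing remark that $(\eta-du_{\eta}^{\sigma})\langle\dot{\sigma}_{x}(t)\rangle=0$, are correct.

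The step that does not go through as written is the conversion of your linearization of the holonomy into the adjoint term. Your computation $\frac{d}{ds}|_{0}g_{s\eta}(x)=-u_{\eta}^{\sigma}(x)$ is the correct first variation for the ODE convention of Definition \ref{def.1.2} (the equation is $\frac{d}{dt}\pt_{t}^{s\eta}(\sigma_{x})=-s\eta\langle\dot{\sigma}_{x}(t)\rangle\pt_{t}^{s\eta}(\sigma_{x})$). But then $g_{s\eta}=I-su_{\eta}^{\sigma}+O(s^{2})$ gives
\[
\mathrm{Ad}_{g_{s\eta}^{-1}}A=(I+su_{\eta}^{\sigma})A(I-su_{\eta}^{\sigma})+O(s^{2})=A+s\,\mathrm{ad}_{u_{\eta}^{\sigma}}A+O(s^{2}),
\]
so the chain rule yields $+\mathrm{ad}_{u_{\eta}^{\sigma}}A$, not the $-\mathrm{ad}_{u_{\eta}^{\sigma}}A$ you assert; as written, your conclusion does not follow from your own intermediate result. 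You should know that the paper's proof arrives at the displayed sign by the mirror-image slip, namely asserting $\frac{d}{ds}|_{0}g_{s\eta}=+u_{\eta}^{\sigma}$ (the minus sign is dropped when differentiating $-s\eta\langle\dot{\sigma}_{x}\rangle\pt_{t}^{s\eta}$ in $s$), so the two derivations cannot both be coherent. An independent check settles which sign is forced: since $g_{0}=I$ and $\pi_{\sigma}(A)=A$, one has $\frac{d}{ds}|_{0}\pi_{\sigma}(A+s\eta)=\eta+\frac{d}{ds}|_{0}A^{g_{s\eta}}=\eta-\mathrm{ad}_{\delta}A+d\delta$ with $\delta=\frac{d}{ds}|_{0}g_{s\eta}=-u_{\eta}^{\sigma}$, i.e. $\eta+\mathrm{ad}_{u_{\eta}^{\sigma}}A-du_{\eta}^{\sigma}$; equivalently, $d\pi_{\sigma}\langle\eta_{A}\rangle-\eta$ must be an infinitesimal gauge variation $-\mathrm{ad}_{w}A+dw$ at $A$, which $\mathrm{ad}_{u_{\eta}^{\sigma}}A-du_{\eta}^{\sigma}$ is (take $w=-u_{\eta}^{\sigma}$) while $-\mathrm{ad}_{u_{\eta}^{\sigma}}A-du_{\eta}^{\sigma}$ is not. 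So either repair the chain-rule step and note the resulting sign discrepancy with Eq. (\ref{e.B.23}) explicitly, or the displayed equation has not been derived. (None of this affects the later use of the proposition, since the $\mathrm{ad}$-term is killed by gauge invariance in Corollary \ref{cor.B.37}.)
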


\begin{proof}
Let $v_{x}\in T_{x}\mathbb{R}^{d}.$ Replace $\eta$ by $s\eta$ in Eq.
(\ref{e.B.18}) and then differentiate the result with respect to $s$ to find,
\begin{align*}
\left(  d\pi_{\sigma}\left\langle \eta_{A}\right\rangle \right)  \left(
\left\langle v_{x}\right\rangle \right)   &  =\frac{d}{ds}|_{0}\pi_{\sigma
}\left(  A+s\eta\right)  \left\langle v_{x}\right\rangle \\
&  =\frac{d}{ds}|_{0}\left(  \mathrm{Ad}_{g_{s\eta}^{-1}\left(  x\right)
}\left[  A\left\langle v_{x}\right\rangle +\int_{0}^{1}\mathrm{Ad}%
_{\pt_{1\leftarrow\tau}^{s\eta}\left(  \sigma_{x}\right)  }F^{s\eta}\left(
\dot{\sigma}_{x}\left(  \tau\right)  ,v_{x}\sigma_{\left(  \cdot\right)
}\left(  \tau\right)  \right)  d\tau\right]  \right) \\
&  =\left(  \frac{d}{ds}|_{0}\mathrm{Ad}_{g_{s\eta}^{-1}\left(  x\right)
}\right)  A\left\langle v_{x}\right\rangle +\int_{0}^{1}\frac{d}{ds}%
|_{0}F^{s\eta}\left(  \dot{\sigma}_{x}\left(  \tau\right)  ,v_{x}%
\sigma_{\left(  \cdot\right)  }\left(  \tau\right)  \right)  d\tau\\
&  =\left(  \frac{d}{ds}|_{0}\mathrm{Ad}_{g_{s\eta}^{-1}\left(  x\right)
}\right)  A\left\langle v_{x}\right\rangle +\int_{0}^{1}d\eta\left(
\dot{\sigma}_{x}\left(  \tau\right)  ,v_{x}\sigma_{\left(  \cdot\right)
}\left(  \tau\right)  \right)  d\tau.
\end{align*}
Choosing $x\left(  s\right)  \in\mathbb{R}^{d}$ so that $x^{\prime}\left(
0\right)  =v_{x}$ and using%
\begin{align}
\int_{0}^{1}d\eta\left(  \dot{\sigma}_{x}\left(  \tau\right)  ,\sigma
_{x\left(  s\right)  }^{\prime}\left(  \tau\right)  \right)  d\tau|_{s=0}  &
=\int_{0}^{1}\left[  \frac{d}{d\tau}\eta\left(  \sigma_{x\left(  s\right)
}^{\prime}\left(  \tau\right)  \right)  -\frac{d}{ds}\eta\left(  \dot{\sigma
}_{x\left(  s\right)  }\left(  \tau\right)  \right)  \right]  d\tau
|_{s=0}\nonumber\\
&  =\eta\left\langle v_{x}\right\rangle -\int_{0}^{1}\frac{d}{ds}|_{0}%
\eta\left(  \dot{\sigma}_{x\left(  s\right)  }\left(  \tau\right)  \right)
d\tau\nonumber\\
&  =\eta\left\langle v_{x}\right\rangle -\frac{d}{ds}|_{0}u_{\eta}^{\sigma
}\left(  x\left(  s\right)  \right)  =\left(  \eta-du_{\eta}^{\sigma}\right)
\left\langle v_{x}\right\rangle \label{e.B.24}%
\end{align}
and so%
\begin{equation}
d\pi_{\sigma}\left\langle \eta_{A}\right\rangle =\left(  \frac{d}{ds}%
|_{0}\mathrm{Ad}_{g_{s\eta}^{-1}\left(  x\right)  }\right)  A+\eta-du_{\eta
}^{\sigma}. \label{e.B.25}%
\end{equation}
Moreover, since
\[
\frac{d}{dt}\pt_{t}^{s\eta}\left(  \sigma_{x}\right)  =-s\eta\left\langle
\dot{\sigma}_{x}\left(  t\right)  \right\rangle \pt_{t}^{s\eta}\left(
\sigma_{x}\right)  \text{ with }\pt_{0}^{s\eta}\left(  \sigma_{x}\right)  =I,
\]
we conclude that%
\[
\frac{d}{dt}\frac{d}{ds}|_{0}\pt_{t}^{s\eta}\left(  \sigma_{x}\right)
=\frac{d}{ds}|_{0}\frac{d}{dt}\pt_{t}^{s\eta}\left(  \sigma_{x}\right)
=\frac{d}{ds}|_{0}\left[  -s\eta\left\langle \dot{\sigma}_{x}\left(  t\right)
\right\rangle \pt_{t}^{s\eta}\left(  \sigma_{x}\right)  \right]
=\eta\left\langle \dot{\sigma}_{x}\left(  t\right)  \right\rangle .
\]
Integrating this equation in $t$ shows
\[
\frac{d}{ds}|_{0}g_{s\eta}\left(  x\right)  =\frac{d}{ds}|_{0}\pt_{1}^{s\eta
}\left(  \sigma_{x}\right)  =\int_{0}^{1}\eta\left\langle \dot{\sigma}%
_{x}\left(  \tau\right)  \right\rangle d\tau=u_{\eta}^{\sigma}\left(
x\right)
\]
and hence $\frac{d}{ds}|_{0}\mathrm{Ad}_{g_{s\eta}^{-1}\left(  x\right)
}=-\mathrm{ad}_{u_{\eta}^{\sigma}\left(  x\right)  }$ which combined with Eq.
(\ref{e.B.25}) gives Eq. (\ref{e.B.23}).
\end{proof}

\begin{example}
\label{ex.B.36}Let us work out $u_{\eta}^{\sigma}$ in the special case where
$d=2,$ $\sigma$ is the complete axial homotopy, and $\eta=\eta_{1}dx.$ In this
case,%
\[
u_{\eta}^{\sigma}\left(  x,y\right)  =u_{\eta}^{\sigma}\left(  x,0\right)
=\int_{0}^{1}\eta\left\langle \dot{\sigma}_{\left(  x,0\right)  }\left(
\tau\right)  \right\rangle d\tau=\int_{0}^{x}\eta_{1}\left(  s,0\right)  ds
\]
and therefore%
\[
\eta-du_{\eta}^{\sigma}=\left[  \eta_{1}\left(  x,y\right)  -\eta_{1}\left(
x,0\right)  \right]  dx=\bar{\eta}_{1}\left(  x,0\right)  dx
\]
which agrees with formulas we have used in the body of this paper.
\end{example}

\begin{corollary}
\label{cor.B.37}If $\sigma$ is a follow the leader homotopy, $\Psi$ is a
smooth gauge invariant function on $\mathcal{A},$ and $\eta\in\mathcal{A},$
then%
\[
\left(  \partial_{\eta}\Psi\right)  \left(  A\right)  =\left(  \partial
_{\left[  -\mathrm{ad}_{u_{\eta}^{\sigma}}A+\left(  \eta-du_{\eta}^{\sigma
}\right)  \right]  }\Psi\right)  \left(  A\right)  \text{ }\forall
~A\in\mathcal{A}_{\sigma}.
\]

\end{corollary}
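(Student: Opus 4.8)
The plan is to reduce everything to the gauge invariance of $\Psi$ together with the explicit differential of the gauge-fixing projection recorded in Proposition~\ref{pro.B.35}. The key preliminary observation is that for every $B\in\mathcal{A}$ the projected connection $\pi_{\sigma}\left(  B\right)  =B^{g_{B}}$, with $g_{B}\left(  x\right)  =\pt_{1}^{B}\left(  \sigma_{x}\right)  \in\mathcal{G}$, lies in the same $\mathcal{G}$-orbit as $B$. Since $\Psi$ is gauge invariant this yields the identity $\Psi=\Psi\circ\pi_{\sigma}$ on all of $\mathcal{A}$, which is the one substantive input.

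First I would use this identity to rewrite the directional derivative as
\[
\left(  \partial_{\eta}\Psi\right)  \left(  A\right)  =\frac{d}{ds}\Big|_{0}\Psi\left(  A+s\eta\right)  =\frac{d}{ds}\Big|_{0}\left(  \Psi\circ\pi_{\sigma}\right)  \left(  A+s\eta\right)  .
\]
Applying the chain rule produces $\left(  \partial_{d\pi_{\sigma}\left\langle \eta_{A}\right\rangle }\Psi\right)  \left(  \pi_{\sigma}\left(  A\right)  \right)$. Because $A\in\mathcal{A}_{\sigma}$ we have $\pi_{\sigma}\left(  A\right)  =A$ by item~2 of Lemma~\ref{lem.B.3}, so the base point collapses to $A$; and Proposition~\ref{pro.B.35} identifies the tangent vector as $d\pi_{\sigma}\left\langle \eta_{A}\right\rangle =-\mathrm{ad}_{u_{\eta}^{\sigma}}A+\left(  \eta-du_{\eta}^{\sigma}\right)$. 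Substituting these two facts gives exactly the asserted formula.

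Thus the statement is essentially immediate: no new computation is required beyond assembling the orbit-constancy of $\Psi$, the fixed-point property $\pi_{\sigma}|_{\mathcal{A}_{\sigma}}=\mathrm{id}$, and the already-derived expression for $d\pi_{\sigma}$. The only point demanding care is the chain-rule step, namely that $s\mapsto\pi_{\sigma}\left(  A+s\eta\right)$ is differentiable at $s=0$ with derivative $d\pi_{\sigma}\left\langle \eta_{A}\right\rangle$; this is precisely the content of Proposition~\ref{pro.B.35}, so the anticipated obstacle is really just checking that differentiation of the smooth functional $\Psi$ may legitimately be composed with that one-parameter family—an issue of smoothness rather than of any genuinely new idea.
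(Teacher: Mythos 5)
Your proposal is correct and follows exactly the paper's own argument: use gauge invariance to write $\Psi(A+s\eta)=\Psi(\pi_{\sigma}(A+s\eta))$, then differentiate at $s=0$ via the chain rule, invoking Proposition \ref{pro.B.35} for $d\pi_{\sigma}\left\langle \eta_{A}\right\rangle$ and the fact that $\pi_{\sigma}(A)=A$ on $\mathcal{A}_{\sigma}$. No substantive difference from the paper's proof.
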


\begin{proof}
By gauge invariance of $\Psi,$ $\Psi\left(  A+s\eta\right)  =\Psi\left(
\pi_{\sigma}\left(  A+s\eta\right)  \right)  $ and therefore using Proposition
\ref{pro.B.35},%
\begin{align*}
\left(  \partial_{\eta}\Psi\right)  \left(  A\right)   &  =\frac{d}{ds}%
|_{0}\Psi\left(  A+s\eta\right)  =\frac{d}{ds}|_{0}\Psi\left(  \pi_{\sigma
}\left(  A+s\eta\right)  \right) \\
&  =\left(  \partial_{\left[  -\mathrm{ad}_{u_{\eta}^{\sigma}}A+\left(
\eta-du_{\eta}^{\sigma}\right)  \right]  }\Psi\right)  \left(  A\right)  .
\end{align*}

\end{proof}

\begin{lemma}
\label{lem.B.38}If $d=2,$ $\sigma$ is a follow the leader homotopy, and
$A\in\mathcal{A}_{\sigma},$ then $F^{A}=dA.$
\end{lemma}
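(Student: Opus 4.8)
The plan is to use the defining relation $F^{A}=dA+A\wedge A$ from Definition \ref{def.1.3} and reduce the claim to showing that the quadratic term $A\wedge A$ vanishes identically. Since $\dim M=2$, the two-form $A\wedge A$ is completely determined by its value on a frame $\{e_{1},e_{2}\}$ of $T_{p}\mathbb{R}^{2}$, namely
\[
\left(A\wedge A\right)\langle e_{1},e_{2}\rangle=\left[A\langle e_{1}\rangle,A\langle e_{2}\rangle\right]_{\mathfrak{k}}.
\]
Thus it suffices to prove that $\left[A\langle e_{1}\rangle,A\langle e_{2}\rangle\right]_{\mathfrak{k}}=0$ at every point $p\in\mathbb{R}^{2}$.

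The first key step is an elementary pointwise observation: if at a point $p$ there is a nonzero tangent vector $v\in T_{p}\mathbb{R}^{2}$ with $A\langle v\rangle=0$, then writing $v=v_{1}e_{1}+v_{2}e_{2}$ and (assuming say $v_{1}\neq0$) solving $v_{1}A\langle e_{1}\rangle+v_{2}A\langle e_{2}\rangle=0$, one gets $A\langle e_{1}\rangle=-(v_{2}/v_{1})A\langle e_{2}\rangle$. The two vectors $A\langle e_{1}\rangle,A\langle e_{2}\rangle\in\mathfrak{k}$ are then proportional, so their Lie bracket vanishes by antisymmetry. Hence a single nonzero direction on which $A$ vanishes forces the bracket at $p$ to be zero.

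The second step supplies such a direction at each point from the gauge condition. By Proposition \ref{pro.B.23}, membership $A\in\mathcal{A}_{\sigma}$ is equivalent to $A\langle\dot{\sigma}_{x}(t)\rangle=0$ for all $x\in\mathbb{R}^{2}$ and a.e. $t\in[0,1]$. Both the vanishing condition $A\langle\dot{\sigma}_{x}(t)\rangle=0$ and the characterization just quoted are insensitive to reparametrizing the piecewise smooth paths $\sigma_{x}$, so I may first discard any constant segments and reparametrize each remaining smooth piece by arclength, arranging $\dot{\sigma}_{x}(t)\neq0$ for a.e. $t$. Then for every $x\neq0$ the endpoint $\sigma_{x}(1)=x$ is reached with nonzero velocity $\dot{\sigma}_{x}(1)$, providing a nonzero vanishing direction at $p=x$. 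Combined with the first step this gives $\left[A\langle e_{1}\rangle,A\langle e_{2}\rangle\right]_{\mathfrak{k}}=0$ on $\mathbb{R}^{2}\setminus\{0\}$, and by continuity of $A$ the identity extends to all of $\mathbb{R}^{2}$. Therefore $A\wedge A=0$ and $F^{A}=dA$.

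The part requiring the most care is guaranteeing a genuinely nonzero vanishing direction at every point: a carelessly parametrized homotopy could idle (have zero speed) precisely where its path terminates, so the reparametrization above is what removes this pathology cleanly. An alternative, reparametrization-free route is to note that the set $Z=\{p:\left[A\langle e_{1}\rangle,A\langle e_{2}\rangle\right]_{\mathfrak{k}}(p)=0\}$ is closed and contains every $\sigma_{x}(t)$ with $\dot{\sigma}_{x}(t)\neq0$; since each path from $0$ to a nonzero $x$ must have positive speed on a set of positive $t$-measure, $Z$ is dense and hence all of $\mathbb{R}^{2}$. Either route closes the argument, and the reparametrization route is the most transparent.
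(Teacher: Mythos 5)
Your proof is correct and follows essentially the same route as the paper: reduce to showing $A\wedge A=0$ by evaluating it on a frame containing the velocity direction $\dot{\sigma}_{x}(t)$, on which $A$ vanishes by Proposition \ref{pro.B.23}. The only difference is that you explicitly handle the possibility that $\dot{\sigma}_{x}(1)=0$ (via reparametrization or the density argument), a degenerate case the paper's one-line proof silently assumes away when it writes ``we may always take $u_{1}=\dot{\sigma}_{x}(1)$''; this is a welcome refinement rather than a change of method.
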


\begin{proof}
The point is that $A\wedge A$ is determined by its value on any two linearly
independent vectors, $\left\{  u_{1},u_{2}\right\}  .$ We may always take
$u_{1}=\dot{\sigma}_{x}\left(  1\right)  $ in which case
\[
A\wedge A\left\langle u_{1},u_{2}\right\rangle =\left[  A\left\langle
u_{1}\right\rangle ,A\left\langle u_{2}\right\rangle \right]  =\left[
A\left\langle \dot{\sigma}_{x}\left(  1\right)  \right\rangle ,A\left\langle
u_{2}\right\rangle \right]  =0.
\]

\end{proof}

\begin{remark}
\label{rem.B.39}If $\sigma$ is a follow the leader homotopy and $g\in
\mathcal{G},$ then $\mathrm{Ad}_{g^{-1}}$ preserves $\mathcal{A}_{\sigma}.$
Indeed if $A\in\mathcal{A}_{\sigma},$ then $\mathrm{Ad}_{g}A\in\mathcal{A}%
_{\sigma}$ since
\[
\left(  \mathrm{Ad}_{g}A\right)  \left\langle \dot{\sigma}_{x}\left(
t\right)  \right\rangle :=\mathrm{Ad}_{g\left(  \sigma_{x}\left(  t\right)
\right)  }\left[  A\left\langle \dot{\sigma}_{x}\left(  t\right)
\right\rangle \right]  =0~\forall x\in\mathbb{R}^{d}\text{ \&~a.e. }%
t\in\left[  0,1\right]  .
\]

\end{remark}

\begin{metapropositions}
\label{mpro.B.40}Let $m$ denote formal Lebesgue measure on $\mathcal{A}$ and
$\sigma$ be a follow the leader homotopy. Then the formal measure, $m_{\sigma
}=m_{v_{\sigma}},$ given by Proposition \ref{pro.B.9} is a Lebesgue measure on
$\mathcal{A}_{\sigma}.$
\end{metapropositions}

\begin{proof}
[Meta-Proof]Since, for $g\in\mathcal{G},$ $\mathrm{Ad}_{g^{-1}}$ acts
orthogonally on $\mathcal{A}$ equipped with the $L^{2}$-norm and hence we
(heuristically) have $\mathrm{Det}\left(  \mathrm{Ad}_{g^{-1}}\right)  =1$ and
so $A\rightarrow A^{g}=\mathrm{Ad}_{g^{-1}}A+g^{-1}dg$ is (formally) an affine
action. Combining this observation with Remark \ref{rem.B.39} allows us to
formally apply Theorem \ref{thm.B.10} in this infinite dimensional context.
\end{proof}

\begin{metacorollary}
\label{mcor.B.41}Let $m$ denote formal Lebesgue measure on $\mathcal{A}$ and
$\sigma$ be a follow the leader homotopy then (recall Definition
\ref{def.B.11})
\[
\dashint_{\mathcal{A}}\Psi\left(  A\right)  dm\left(  A\right)  =\int
_{\mathcal{A}_{\sigma}}\Psi\left(  A\right)  dm_{\sigma}\left(  A\right)
\]
where $m_{\sigma}$ is a Lebesgue measure on $\mathcal{A}_{\sigma}.$
\end{metacorollary}

To apply this last result to the formal $YM$-measures we need the following
simple lemma.

\begin{lemma}
\label{lem.B.42}The function, $\mathcal{A}\ni A\rightarrow\left\Vert
F^{A}\right\Vert $ as described in Eq. (\ref{e.1.4}) is invariant under the
full gauge group.
\end{lemma}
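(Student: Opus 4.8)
The plan is to reduce the invariance of the global quantity $\|F^A\|$ to a pointwise computation on $M$ and then integrate. The essential input is the transformation law for curvature under the gauge group recorded in Theorem \ref{thm.A.1}, namely Eq. (\ref{e.A.3}),
\[
F^{A^g}\left\langle v,w\right\rangle =\mathrm{Ad}_{g\left(  x\right)  ^{-1}}F^{A}\left\langle v,w\right\rangle \quad\text{for all }v,w\in T_{x}M,
\]
together with the standing assumption that $\left\langle \cdot,\cdot\right\rangle _{\mathfrak{k}}$ is an $\mathrm{Ad}_{K}$-invariant inner product on $\mathfrak{k}$.

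First I would fix a point $p\in M$ and an orthonormal basis $\left\{  e_{i}\right\}  _{i=1}^{d}$ of $T_{p}M$. Applying Eq. (\ref{e.A.3}) at $p$ gives $F^{A^{g}}\left\langle e_{i},e_{j}\right\rangle =\mathrm{Ad}_{g\left(  p\right)  ^{-1}}F^{A}\left\langle e_{i},e_{j}\right\rangle$ for each pair $i,j$. Since $\mathrm{Ad}_{g\left(  p\right)  ^{-1}}$ is an isometry of $\left(  \mathfrak{k},\left\langle \cdot,\cdot\right\rangle _{\mathfrak{k}}\right)$, it follows that $\left\vert F^{A^{g}}\left\langle e_{i},e_{j}\right\rangle \right\vert _{\mathfrak{k}}=\left\vert F^{A}\left\langle e_{i},e_{j}\right\rangle \right\vert _{\mathfrak{k}}$. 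Summing over $1\leq i<j\leq d$ and comparing with Definition \ref{def.1.4} yields the pointwise identity $\left\vert F^{A^{g}}\right\vert ^{2}\left(  p\right)  =\left\vert F^{A}\right\vert ^{2}\left(  p\right)$.

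Finally, I would integrate this pointwise equality against $\operatorname*{Vol}_{g}$ over $M$ to conclude $\left\Vert F^{A^{g}}\right\Vert ^{2}=\left\Vert F^{A}\right\Vert ^{2}$, and hence $\left\Vert F^{A^{g}}\right\Vert =\left\Vert F^{A}\right\Vert$ for every $g\in\mathcal{G}$. As $g$ ranges over the full gauge group (with no restriction to $\mathcal{G}_{o}$), this is exactly the claimed invariance, and it is precisely the heuristic remark made just after Eq. (\ref{e.1.4}) in the introduction.

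There is no genuine obstacle here; the only thing worth verifying is that the density $\left\vert F^{A}\right\vert ^{2}\left(  p\right)$ is well defined independently of the chosen orthonormal frame, so that the pointwise identity makes sense. This is immediate since $\mathrm{Ad}_{g\left(  p\right)  ^{-1}}$ carries one orthonormal basis of $\mathfrak{k}$ to another and the sum in Definition \ref{def.1.4} is manifestly invariant under orthogonal changes of basis in both $T_{p}M$ and $\mathfrak{k}$.
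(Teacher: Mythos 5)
Your argument is correct and is essentially identical to the paper's proof: both invoke the transformation law $F^{A^{g}}=\mathrm{Ad}_{g^{-1}}F^{A}$ from Theorem \ref{thm.A.1}, use the $\mathrm{Ad}_{K}$-invariance of $\left\langle \cdot,\cdot\right\rangle _{\mathfrak{k}}$ to get the pointwise equality of $\left\vert F^{A^{g}}\right\vert ^{2}$ and $\left\vert F^{A}\right\vert ^{2}$, and then integrate over $M$. No gaps.
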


\begin{proof}
From Theorem \ref{thm.A.1}, we know that $F^{A^{g}}=\mathrm{Ad}_{g^{-1}}F^{A}$
and since $\left\vert \cdot\right\vert _{\mathfrak{k}}$ is assumed to be
$\mathrm{Ad}_{K}$-invariant we find, for any $g\in C^{1}\left(  \mathbb{R}%
^{d},K\right)  ,$ then%
\[
\sum_{i<j}\left\vert F^{A^{g}}\left\langle e_{i},e_{j}\right\rangle \left(
x\right)  \right\vert _{\mathfrak{k}}^{2}=\sum_{i<j}\left\vert \mathrm{Ad}%
_{g\left(  x\right)  ^{-1}}F^{A}\left\langle e_{i},e_{j}\right\rangle \left(
x\right)  \right\vert _{\mathfrak{k}}^{2}=\sum_{i<j}\left\vert F^{A}%
\left\langle e_{i},e_{j}\right\rangle \left(  x\right)  \right\vert
_{\mathfrak{k}}^{2}.
\]
Integrating this equation over $\mathbb{R}^{d}$ immediately gives $\left\Vert
F^{A^{g}}\right\Vert ^{2}=\left\Vert F^{A}\right\Vert ^{2}.$
\end{proof}

\begin{definition}
[Formal Yang-Mills Expectations]\label{def.B.43}If $\Psi:\mathcal{A}%
\rightarrow\mathbb{C}$ is a restricted gauge invariant function, we define
\begin{equation}
\left\langle \Psi\right\rangle _{YM}:=\frac{1}{Z_{\sigma}}\int_{\mathcal{A}%
_{\sigma}}\Psi\left(  A\right)  e^{-\frac{1}{2}\left\Vert F^{A}\right\Vert
^{2}}d\tilde{m}_{\sigma}\left(  A\right)  , \label{e.B.26}%
\end{equation}
where $\sigma$ is any follow the leader homotopy, $\tilde{m}_{\sigma}$ is a
formal Lebesgue measure on $\mathcal{A}_{\sigma},$ and (formally)%
\[
Z_{\sigma}:=\int_{\mathcal{A}_{\sigma}}e^{-\frac{1}{2}\left\Vert
F^{A}\right\Vert ^{2}}d\tilde{m}_{\sigma}\left(  A\right)  .
\]

\end{definition}

A few remarks are in order;

\begin{enumerate}
\item The expression in Eq. (\ref{e.B.26}) is formally independent of the
choice of Lebesgue measure on $\mathcal{A}_{\sigma}$ since they all differ by
a multiplicative constant and any such multiplicative constant will also occur
in the normalization constant, $Z_{\sigma}.$

\item The expression in Eq. (\ref{e.B.26}) is formally independent of the
choice of the follow the leader homotopy, $\sigma,$ used in the definition
since by the first remark we may choose $\tilde{m}_{\sigma}=m_{\sigma}$ in
which case
\begin{equation}
\left\langle \Psi\right\rangle _{YM}:=\frac{1}{Z}\dashint_{\mathcal{A}}%
\Psi\left(  A\right)  e^{-\frac{1}{2}\left\Vert F^{A}\right\Vert ^{2}%
}dm\left(  A\right)  \label{e.B.27}%
\end{equation}
with
\[
Z=\dashint_{\mathcal{A}}e^{-\frac{1}{2}\left\Vert F^{A}\right\Vert ^{2}%
}dm\left(  A\right)  .
\]

\end{enumerate}

Our final goal is to show (formally) that $\left\langle \Psi\right\rangle
_{YM_{2}}$ is invariant under area preserving diffeomorphisms.

\subsection{Area preserving diffeomorphisms\label{sec.B.5}}

Let $\sigma$ be a homotopy contracting $\mathbb{R}^{d}$ to $\left\{
0\right\}  .$

\begin{notation}
[Diffeomorphism action on $\mathcal{A}_{\sigma}$]\label{not.B.44}If
$\varphi:\mathbb{R}^{d}\rightarrow\mathbb{R}^{d}$ is a diffeomorphism, let
$\hat{\varphi}_{\sigma}:\mathcal{A}_{\sigma}\rightarrow\mathcal{A}_{\sigma}$
be defined by
\[
\hat{\varphi}_{\sigma}\left(  A\right)  :=\pi_{\sigma}\left(  \varphi^{\ast
}A\right)  =\left(  \varphi^{\ast}A\right)  ^{g_{A}}\text{ for all }%
A\in\mathcal{A}_{\sigma},
\]
where%
\begin{equation}
g_{A}\left(  p\right)  =\pt_{1}^{A}\left(  \varphi\circ\sigma_{p}\right)
\text{ for all }p\in\mathbb{R}^{d}. \label{e.B.28}%
\end{equation}

\end{notation}

\begin{proposition}
[The diffeomorphism action parallel translation]\label{pro.B.45}If
$\varphi:\mathbb{R}^{d}\rightarrow\mathbb{R}^{d}$ is a diffeomorphism,
$A\in\mathcal{A}_{\sigma},$ and $\alpha\in C^{1}\left(  \left[  a,b\right]
,\mathbb{R}^{2}\right)  $ is a path, then
\begin{equation}
\pt^{\hat{\varphi}_{\sigma}\left(  A\right)  }\left(  \alpha\right)
=g_{A}\left(  \alpha\left(  b\right)  \right)  ^{-1}\pt^{A}\left(
\varphi\circ\alpha\right)  g_{A}\left(  \alpha\left(  a\right)  \right)
\label{e.B.29}%
\end{equation}
where $g_{A}$ is as in Eq. (\ref{e.B.28}).
\end{proposition}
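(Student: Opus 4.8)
The plan is to derive Eq.~(\ref{e.B.29}) by composing two transformation laws already recorded in Appendix~\ref{sec.A}: the gauge-covariance of parallel translation, Eq.~(\ref{e.A.2}) of Theorem~\ref{thm.A.1}, and its naturality under diffeomorphisms, Proposition~\ref{pro.A.3}. By Notation~\ref{not.B.44} we have $\hat{\varphi}_{\sigma}(A)=\left(\varphi^{\ast}A\right)^{g_A}$, so the left-hand side of Eq.~(\ref{e.B.29}) is simply parallel translation along $\alpha$ for a gauge transform of the connection $\varphi^{\ast}A$. Since $A$ is a smooth connection, no stochastic or analytic input is needed; the argument is a formal bookkeeping composition of the two appendix lemmas.

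First I would apply Eq.~(\ref{e.A.2}) with the connection taken to be $\varphi^{\ast}A$ and the gauge element taken to be $g_A$. As $\alpha:\left[a,b\right]\to\mathbb{R}^2$, evaluating at the endpoint $t=b$ gives
\[
\pt^{\left(\varphi^{\ast}A\right)^{g_A}}\left(\alpha\right)=g_A\left(\alpha\left(b\right)\right)^{-1}\,\pt^{\varphi^{\ast}A}\left(\alpha\right)\,g_A\left(\alpha\left(a\right)\right).
\]
Next I would invoke Proposition~\ref{pro.A.3}, which asserts $\pt_t^{\varphi^{\ast}A}\left(\alpha\right)=\pt_t^{A}\left(\varphi\circ\alpha\right)$ for all $t$; taking $t=b$ and substituting $\pt^{\varphi^{\ast}A}\left(\alpha\right)=\pt^{A}\left(\varphi\circ\alpha\right)$ into the display above yields exactly Eq.~(\ref{e.B.29}).

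The only point requiring genuine care — and the step I would verify explicitly — is that the function $g_A$ of Eq.~(\ref{e.B.28}) really is the gauge element that projects $\varphi^{\ast}A$ into the slice $\mathcal{A}_{\sigma}$, so that the identification $\hat{\varphi}_{\sigma}(A)=\left(\varphi^{\ast}A\right)^{g_A}$ of Notation~\ref{not.B.44} is consistent with the projection formula of Theorem~\ref{thm.B.25}. By Example~\ref{ex.B.21} the relevant gauge element for the connection $\varphi^{\ast}A$ is $p\mapsto\pt_1^{\varphi^{\ast}A}\left(\sigma_p\right)$, and a second application of Proposition~\ref{pro.A.3}, this time to the contracting path $\sigma_p$, gives $\pt_1^{\varphi^{\ast}A}\left(\sigma_p\right)=\pt_1^{A}\left(\varphi\circ\sigma_p\right)=g_A\left(p\right)$. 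Thus the two agree and the definition is well posed. Once this compatibility is checked there is no real obstacle remaining, and the proposition follows immediately from the two displayed substitutions.
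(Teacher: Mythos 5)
Your proof is correct and follows exactly the paper's own argument: apply Eq.~(\ref{e.A.2}) of Theorem~\ref{thm.A.1} to the connection $\varphi^{\ast}A$ with gauge element $g_{A}$, then use Proposition~\ref{pro.A.3} to replace $\pt^{\varphi^{\ast}A}\left(\alpha\right)$ by $\pt^{A}\left(\varphi\circ\alpha\right)$. The closing consistency check that $g_{A}\left(p\right)=\pt_{1}^{\varphi^{\ast}A}\left(\sigma_{p}\right)$ is a sensible extra remark but not needed beyond what Notation~\ref{not.B.44} already supplies.
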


\begin{proof}
Using Theorem \ref{thm.A.1} and Proposition \ref{pro.A.3} we have%
\begin{align*}
\pt^{\hat{\varphi}_{\sigma}\left(  A\right)  }\left(  \alpha\right)
=\pt^{\left[  \varphi^{\ast}A\right]  ^{g_{A}}}\left(  \alpha\right)   &
=g_{A}\left(  \alpha\left(  b\right)  \right)  ^{-1}\pt^{\left[  \varphi
^{\ast}A\right]  }\left(  \alpha\right)  g_{A}\left(  \alpha\left(  a\right)
\right) \\
&  =g_{A}\left(  \alpha\left(  b\right)  \right)  ^{-1}\pt^{A}\left(
\varphi\circ\alpha\right)  g_{A}\left(  \alpha\left(  a\right)  \right)  .
\end{align*}

\end{proof}

For the rest of this appendix we now exclusively assume that $d=2$ and further
assume that $\varphi:\mathbb{R}^{2}\rightarrow\mathbb{R}^{2}$ is an
\textbf{area preserving diffeomorphism}.

\begin{definition}
\label{def.B.46}A diffeomorphism, $\varphi:\mathbb{R}^{2}\rightarrow
\mathbb{R}^{2}$ is \textbf{area preserving }provided $\left\vert \det
\varphi^{\prime}\left(  p\right)  \right\vert =1$ for all $p\in\mathbb{R}%
^{2}.$ We further let $\varepsilon\left(  \varphi\right)  =\mathrm{sgn}\left(
\det\varphi^{\prime}\right)  \in\left\{  \pm1\right\}  $ so that $\varphi$ is
orientation preserving if $\varepsilon\left(  \varphi\right)  =1$ and
orientation reversing if $\varepsilon\left(  \varphi\right)  =-1.$
Alternatively stated, a diffeomorphism. $\varphi:\mathbb{R}^{2}\rightarrow
\mathbb{R}^{2},$ is area preserving iff $\varphi^{\ast}\left(  dx\wedge
dy\right)  =\varepsilon\left(  \varphi\right)  dx\wedge dy$ where
$\varepsilon\left(  \varphi\right)  $ is either $1$ or $-1.$ i.e.
\end{definition}

Our final goal of this appendix is to \textquotedblleft
prove\textquotedblright\ the following Meta-Theorem.

\begin{metatheorem}
\label{mtm.B.47}Let $\varphi:\mathbb{R}^{2}\rightarrow\mathbb{R}^{2}$ be an
area preserving diffeomorphism and $\Psi:\mathcal{A}\rightarrow\left[
0,\infty\right]  $ be a function. If either;

\begin{enumerate}
\item $\varphi\left(  0\right)  =0$ and $\Psi$ is a restricted gauge
invariant, or

\item $\Psi$ is invariant under the full gauge group,
\end{enumerate}

then%
\begin{equation}
\left\langle \Psi\circ\varphi^{\ast}\right\rangle _{YM_{2}}=\left\langle
\Psi\right\rangle _{YM_{2}}. \label{e.B.30}%
\end{equation}

\end{metatheorem}

\begin{proof}
[Meta-Proof]This result follows from using either Meta-Theorem \ref{mthm.B.55}
or Meta-Theorem \ref{mthm.B.56} below along with the observation in the next
lemma that $\mathcal{A}\ni A\rightarrow\left\Vert F^{A}\right\Vert $ is
invariant under $\varphi^{\ast}.$
\end{proof}

\begin{lemma}
\label{lem.B.48}If $\varphi:\mathbb{R}^{2}\rightarrow\mathbb{R}^{2}$ is an
area preserving diffeomorphism, then $\left\Vert F^{\varphi^{\ast}%
A}\right\Vert =\left\Vert F^{A}\right\Vert $ for all $A\in\mathcal{A}.$
\end{lemma}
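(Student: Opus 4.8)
The plan is to observe that this lemma is precisely Corollary \ref{cor.A.4} of Appendix \ref{sec.A}, now applied in the case $M=\mathbb{R}^{2}$ with $\operatorname{Vol}_{g}=m$ Lebesgue measure. Since that corollary has already been established for arbitrary $A\in\mathcal{A}$ (no gauge-slice hypothesis is needed anywhere in its statement or proof), the cleanest route is simply to invoke it. For completeness I would also recall the one-line mechanism behind it so the reader need not flip back.

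First I would apply Proposition \ref{pro.A.3}, which gives the pullback identity $F^{\varphi^{\ast}A}=\varphi^{\ast}F^{A}$. Since $d=2$, I would then write $F^{A}=f\cdot\omega$ where $\omega=dx\wedge dy$ is the (Euclidean) volume form and $f:\mathbb{R}^{2}\rightarrow\mathfrak{k}$. The area-preserving hypothesis is exactly the statement $\varphi^{\ast}\omega=\varepsilon(\varphi)\,\omega$ with $\varepsilon(\varphi)\in\{\pm1\}$ (Definition \ref{def.B.46}), so that
\[
\varphi^{\ast}F^{A}=f\circ\varphi\cdot\varphi^{\ast}\omega=\varepsilon(\varphi)\,(f\circ\varphi)\cdot\omega.
\]
Because $\omega$ evaluates to $\pm1$ on any orthonormal frame, the pointwise norm obeys $\left\vert \varphi^{\ast}F^{A}\right\vert^{2}=\left\vert F^{A}\right\vert^{2}\circ\varphi$, the sign $\varepsilon(\varphi)$ dropping out upon squaring.

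Finally I would integrate this pointwise identity over $\mathbb{R}^{2}$ and use once more that an area-preserving diffeomorphism preserves Lebesgue measure, i.e. $\varphi_{\ast}m=m$, to conclude
\[
\left\Vert F^{\varphi^{\ast}A}\right\Vert^{2}=\int_{\mathbb{R}^{2}}\left\vert F^{A}\right\vert^{2}\circ\varphi\,dm=\int_{\mathbb{R}^{2}}\left\vert F^{A}\right\vert^{2}\,dm=\left\Vert F^{A}\right\Vert^{2}.
\]
There is no genuine obstacle here: the entire content is already packaged in Corollary \ref{cor.A.4}, and the only thing worth flagging is that the result holds for \emph{all} $A\in\mathcal{A}$, which is what is needed so that the conclusion feeds into the gauge-invariance-plus-area-invariance argument of Meta-Theorem \ref{mtm.B.47}.
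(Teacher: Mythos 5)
Your proposal is correct and follows essentially the same route as the paper: the paper's own proof of Lemma \ref{lem.B.48} is exactly the computation you recap (pull back $F^{A}=f^{A}\,dx_{1}\wedge dx_{2}$ via Proposition \ref{pro.A.3}, absorb the sign $\varepsilon(\varphi)$ upon squaring, and change variables using $\varphi_{\ast}m=m$), which is itself the Euclidean specialization of Corollary \ref{cor.A.4}. Citing that corollary directly, as you do, is a perfectly legitimate shortening.
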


\begin{proof}
Writing $F^{A}=f^{A}dx_{1}\wedge dx^{2}$ we have
\begin{align*}
f^{\varphi^{\ast}A}dx_{1}\wedge dx^{2}  &  =F^{\varphi^{\ast}A}=\varphi^{\ast
}F^{A}=f^{A}\circ\varphi\cdot\varphi^{\ast}\left(  dx_{1}\wedge dx_{2}\right)
\\
&  =\varepsilon\left(  \varphi\right)  f^{A}\circ\varphi\cdot dx_{1}\wedge
dx_{2}%
\end{align*}
where $\varepsilon\left(  \varphi\right)  \in\left\{  \pm1\right\}  $ since
$\varphi$ is area preserving and consequently,
\begin{align*}
\left\Vert F^{\varphi^{\ast}A}\right\Vert ^{2}  &  =\int_{\mathbb{R}^{2}%
}\left\vert f^{\varphi^{\ast}A}\left(  x\right)  \right\vert _{\mathfrak{k}%
}^{2}dx=\int_{\mathbb{R}^{2}}\left\vert f^{A}\left(  \varphi\left(  x\right)
\right)  \right\vert _{\mathfrak{k}}^{2}dx\\
&  =\int_{\mathbb{R}^{2}}\left\vert f^{A}\left(  x\right)  \right\vert
_{\mathfrak{k}}^{2}dx=\left\Vert F^{A}\right\Vert ^{2}.
\end{align*}

\end{proof}

So it now remains to \textquotedblleft prove\textquotedblright\ Meta-Theorem
\ref{mthm.B.55} and Meta-Theorem \ref{mthm.B.56} below. In brief these
theorems assert; if $\varphi:\mathbb{R}^{2}\rightarrow\mathbb{R}^{2}$ is an
area preserving diffeomorphism and $\Psi:\mathcal{A}\rightarrow\left[
0,\infty\right]  $ is a function, then
\begin{equation}
\dashint_{\mathcal{A}}\Psi\left(  \varphi^{\ast}A\right)  dm\left(  A\right)
=\dashint_{\mathcal{A}}\Psi\left(  A\right)  dm\left(  A\right)  ,
\label{e.B.31}%
\end{equation}
provided $\Psi$ is invariant under the full gauge group (Meta-Theorem
\ref{mthm.B.56}) or $\varphi\left(  0\right)  =0$ and $\Psi$ is invariant
under the restricted gauge group (Meta-Theorem \ref{mthm.B.55}). Let us note
that when $\varphi\left(  0\right)  \neq0,$ $\varphi^{\ast}g=g\circ\varphi$
will not be in $\mathcal{G}$ for $g\in\mathcal{G}.$ Nevertheless, if $\Psi$ is
invariant under the full gauge group, then
\[
\Psi\left(  \varphi^{\ast}A^{g}\right)  =\Psi\left(  \left(  \varphi^{\ast
}A\right)  ^{\varphi^{\ast}g}\right)  =\Psi\left(  \varphi^{\ast}A\right)
\text{ }\forall~g\in\mathcal{G}%
\]
and therefore $A\rightarrow\Psi\left(  \varphi^{\ast}A\right)  $ is still a
restricted gauge invariant function on $\mathcal{A}.$

Our \textquotedblleft proof\textquotedblright\ of Eq. (\ref{e.B.31}) will boil
down to formally verifying the hypothesis of Theorem \ref{thm.B.14} in this
infinite dimensional setting. It is worth noting that the results to follow
hold for any $d\in\mathbb{N}$ with $d\geq2$ in the special case where
$\varphi\left(  x\right)  =Rx+b$ with $R$ be a rotation on $\mathbb{R}^{d}$
and $b\in\mathbb{R}^{d}.$ We now begin \textquotedblleft
verifying\textquotedblright\ the hypothesis of Theorem \ref{thm.B.14} in this
infinite dimensional gauge theory context.

\begin{metalemma}
\label{mlem.B.49}The restricted gauge group, $\mathcal{G},$ is formally unimodular.
\end{metalemma}

\begin{proof}
[Meta-Proof]The Lie algebra $\left(  \operatorname*{Lie}\left(  \mathcal{G}%
\right)  \right)  $ of $\mathcal{G}$ consists of functions, $\xi
:\mathbb{R}^{2}\rightarrow\mathfrak{k}$ with $\xi\left(  0\right)  =0.$ Let
\[
\left\langle \xi,\eta\right\rangle _{\operatorname*{Lie}\left(  \mathcal{G}%
\right)  }:=\int_{\mathbb{R}^{2}}\left\langle \xi\left(  x\right)
,\eta\left(  x\right)  \right\rangle _{\mathfrak{k}}dx
\]
where $\left\langle \cdot,\cdot\right\rangle _{\mathfrak{k}}$ is an
$\mathrm{Ad}_{K}$ -- invariant inner product on $\mathfrak{k.}$ If
$g\in\mathcal{G},$ then%
\[
\left\Vert \mathrm{Ad}_{g}\xi\right\Vert _{\operatorname*{Lie}\left(
\mathcal{G}\right)  }^{2}=\int_{\mathbb{R}^{2}}\left\vert \mathrm{Ad}%
_{g\left(  x\right)  }\xi\left(  x\right)  \right\vert _{\mathfrak{k}}%
^{2}dx=\int_{\mathbb{R}^{2}}\left\vert \xi\left(  x\right)  \right\vert
_{\mathfrak{k}}^{2}dx=\left\Vert \xi\right\Vert _{\operatorname*{Lie}\left(
\mathcal{G}\right)  }^{2}%
\]
so that $\mathrm{Ad}_{g}$ acts as orthogonal transformation and therefore
$\Delta_{\mathcal{G}}\left(  g\right)  =\left\vert \det\mathrm{Ad}_{g^{-1}%
}\right\vert =1.$ Alternatively, extended $\left\langle \cdot,\cdot
\right\rangle _{\operatorname*{Lie}\left(  \mathcal{G}\right)  }$ to a left
invariant Riemannian metric on $T\mathcal{G}$ and note that the fact that
$\mathrm{Ad}_{g}$ is an isometry for $\left\langle \cdot,\cdot\right\rangle
_{\operatorname*{Lie}\left(  \mathcal{G}\right)  }$ implies this Riemannian
metric is also right invariant. Thus the Riemannian volume measure associated
to this Riemannian metric is both right and left invariant and so this measure
is both a left and a right invariant Haar measure.
\end{proof}

\begin{lemma}
\label{lem.B.50}If $\varphi:\mathbb{R}^{2}\rightarrow\mathbb{R}^{2}$ is an
area preserving diffeomorphism such that $\varphi\left(  0\right)  =0,$ then
$\gamma:\mathcal{G}\rightarrow\mathcal{G}$ defined by $\gamma\left(  g\right)
=\varphi^{\ast}g$ is a group isomorphism\footnote{The assumption
$\varphi\left(  0\right)  =0$ is needed to guarantee that $\varphi^{\ast}%
g\in\mathcal{G}$ for every $g\in\mathcal{G}.$} such that%
\[
\varphi^{\ast}\left(  A\cdot g\right)  =\varphi^{\ast}A\cdot\gamma\left(
g\right)  \text{ }\forall~A\in\mathcal{A}\text{ and }g\in\mathcal{G}.
\]

\end{lemma}

\begin{proof}
This result is a consequence of the following elementary computation;%
\begin{align*}
\varphi^{\ast}\left(  A\cdot g\right)   &  =\varphi^{\ast}\left(
A^{g}\right)  =\varphi^{\ast}\left[  \mathrm{Ad}_{g^{-1}}A+g^{-1}dg\right] \\
&  =\left[  \mathrm{Ad}_{\left(  g\circ\varphi\right)  ^{-1}}\varphi^{\ast
}A+\left(  g\circ\varphi\right)  ^{-1}d\left(  g\circ\varphi\right)  \right]
\\
&  =\varphi^{\ast}A\cdot\varphi^{\ast}g.
\end{align*}

\end{proof}

\begin{metalemma}
\label{mlem.B.51}Let $\varphi$ and $\gamma$ be as in Lemma \ref{lem.B.50}.
Then $\gamma$ preserves Haar measure on $\mathcal{G}$ and hence $c_{\gamma}=1$
where $c_{\gamma}$ was defined in Eq. (\ref{e.B.11}).
\end{metalemma}

\begin{proof}
[Meta-Proof]For $g\in\mathcal{G}$ and $\xi\in\operatorname*{Lie}\left(
\mathcal{G}\right)  $ let $\tilde{\xi}\left(  g\right)  =L_{g\ast}\xi$ be the
left invariant extension of $\xi.$ Then%
\[
\gamma_{\ast}\tilde{\xi}\left(  g\right)  =\frac{d}{dt}|_{0}\gamma\left(
ge^{t\xi}\right)  =\frac{d}{dt}|_{0}\gamma\left(  g\right)  \gamma\left(
e^{t\xi}\right)  =L_{\gamma\left(  g\right)  \ast}\gamma_{\ast}\xi
=L_{\gamma\left(  g\right)  \ast}\left[  \xi\circ\varphi\right]  .
\]
By construction of the Riemannian metric on $\mathcal{G},$ $L_{\gamma\left(
g\right)  \ast}$ is an isometry and therefore (using $\varphi$ is area
preserving) we find%
\begin{align*}
\left\langle \gamma_{\ast}\tilde{\xi}\left(  g\right)  ,\gamma_{\ast}%
\tilde{\xi}\left(  g\right)  \right\rangle _{T_{\gamma\left(  g\right)
}\mathcal{G}}  &  =\left\langle \xi\circ\varphi,\xi\circ\varphi\right\rangle
_{\operatorname*{Lie}\left(  \mathcal{G}\right)  }\\
&  =\int_{\mathbb{R}^{2}}\left\vert \xi\left(  \varphi\left(  x\right)
\right)  \right\vert _{\mathfrak{k}}^{2}dx=\int_{\mathbb{R}^{2}}\left\vert
\xi\left(  x\right)  \right\vert _{\mathfrak{k}}^{2}dx=\left\langle \tilde
{\xi}\left(  g\right)  ,\tilde{\xi}\left(  g\right)  \right\rangle
_{T_{g}\mathcal{G}}.
\end{align*}
This shows $\gamma:\mathcal{G}\rightarrow\mathcal{G}$ is a Riemannian isometry
and hence preserves the (fictitious) Riemannian volume measure on
$\mathcal{G}.$ As this volume measure is precisely Haar measure the
\textquotedblleft proof\textquotedblright\ is complete.
\end{proof}

The last item we need to verify is that $\varphi^{\ast}:\mathcal{A}%
\rightarrow\mathcal{A}$ preserves Lebesgue measure when $\varphi$ is an area
preserving diffeomorphisms on $\mathbb{R}^{2}.$ To do this we will make use of
the following meta-lemma.

\begin{metalemma}
\label{lem.B.52}Let $V$ be a finite dimensional inner product space and
$U:\mathbb{R}^{2}\rightarrow\operatorname*{End}\left(  V\right)  $ be a
function such that $\det U\left(  x\right)  =1$ for all $x\in\mathbb{R}^{2}.$
If $M_{U}:L^{2}\left(  \mathbb{R}^{2};V\right)  \rightarrow L^{2}\left(
\mathbb{R}^{2};V\right)  $ is the operation of multiplication by $U,$ then
$\,\det M_{U}=1$ or more usefully stated; the map, $f\rightarrow Uf,$ leaves
Lebesgue measure invariant.
\end{metalemma}

\begin{proof}
[Meta-Proof]Here we suppose that $U\left(  x\right)  =U_{1}\left(  x\right)  $
where $\left\{  U_{t}\left(  x\right)  \right\}  _{t\in\left[  0,1\right]  }$
is a one parameter ($C^{1}$ in $t)$ family of functions in $SL\left(
V\right)  $ with $U_{0}\left(  x\right)  =I$ for all $x.$ Further let
$\alpha_{t}\left(  x\right)  :=U_{t}\left(  x\right)  ^{-1}\dot{U}_{t}\left(
x\right)  $ so that $\dot{U}_{t}\left(  x\right)  =\alpha_{t}\left(  x\right)
U_{t}\left(  x\right)  $ with $U_{0}\left(  x\right)  =I$ and
$\operatorname{tr}\left(  \alpha_{t}\left(  x\right)  \right)  =0.$ We then
formally should have,%
\begin{equation}
\frac{d}{dt}\det\left[  M_{U_{t}}\right]  =\det\left[  M_{U_{t}}\right]
\operatorname{Tr}\left[  M_{U_{t}^{-1}}M_{\dot{U}_{t}}\right]  =\det\left[
M_{U_{t}}\right]  \operatorname{Tr}\left[  M_{\alpha_{t}}\right]
\label{e.B.32}%
\end{equation}
where $\operatorname{Tr}$ is the infinite dimensional trace on $L^{2}\left(
\mathbb{R}^{d},V\right)  .$ To evaluate the trace, let $\left\{
u_{m}\right\}  _{m=1}^{\infty}$ be an orthonormal basis for $L^{2}\left(
\mathbb{R}^{d},\mathbb{R}\right)  $ and $\left\{  e_{i}\right\}  _{i=1}^{\dim
V}$ be an orthonormal basis for $V$ relative to some fixed inner product on
$V.$ Then $\left\{  u_{m}\cdot e_{i}:m\in\mathbb{N}\text{ \&~}1\leq i\leq\dim
V\right\}  $ is an orthonormal basis for $L^{2}\left(  \mathbb{R}%
^{d},V\right)  $ and hence it is reasonable to compute $\operatorname{Tr}%
\left(  M_{\alpha_{t}}\right)  $ as,
\begin{align*}
\operatorname{Tr}\left(  M_{\alpha_{t}}\right)   &  =\sum_{m=1}^{\infty}%
\sum_{i=1}^{d}\left\langle M_{\alpha_{t}}u_{m}\cdot e_{i},u_{m}\cdot
e_{i}\right\rangle _{L^{2}\left(  \mathbb{R}^{d},V\right)  }\\
&  =\sum_{m=1}^{\infty}\sum_{i=1}^{d}\int_{\mathbb{R}^{2}}\left\langle
\alpha_{t}\left(  x\right)  e_{i},e_{i}\right\rangle u_{m}^{2}\left(
x\right)  dx\\
&  =\sum_{m=1}^{\infty}\int_{\mathbb{R}^{2}}\operatorname{tr}\left[
\alpha_{t}\left(  x\right)  \right]  u_{m}^{2}\left(  x\right)  dx=\sum
_{m=1}^{\infty}\int_{\mathbb{R}^{2}}0\cdot u_{m}^{2}\left(  x\right)  dx=0.
\end{align*}
Thus we have shown $\det\left[  M_{U_{t}}\right]  $ is constant in $t$ and so%
\[
\det M_{U}=\det\left[  M_{U_{1}}\right]  =\det\left[  M_{U_{0}}\right]  =\det
I_{L^{2}\left(  \mathbb{R}^{2};V\right)  }=1.
\]

\end{proof}

\begin{remark}
\label{rem.B.53}The computation of the trace of $M_{\alpha_{t}}$ above is
certainly not rigorous as $M_{\alpha_{t}}$ is \textbf{not }a trace class operator.

\end{remark}

\begin{metapropositions}
\label{mpro.B.54}If $\varphi:\mathbb{R}^{2}\rightarrow\mathbb{R}^{2}$ is an
area preserving diffeomorphism, then the induced map, $\mathcal{A\ni
}A\rightarrow\varphi^{\ast}A\in\mathcal{A}$ formally preserves Lebesgue
measure on $\mathcal{A}.$
\end{metapropositions}

\begin{proof}
[Meta-Proof]If $A=A_{1}dx_{1}+A_{2}dx_{2},$ then%
\begin{align*}
\varphi^{\ast}A  &  =A_{1}\circ\varphi d\left[  x_{1}\circ\varphi\right]
+A_{2}\circ\varphi d\left[  x_{2}\circ\varphi\right] \\
&  =A_{1}\circ\varphi\left[  \partial_{1}\varphi_{1}dx_{1}+\partial_{2}%
\varphi_{1}dx_{2}\right]  +A_{2}\circ\varphi\left[  \partial_{1}\varphi
_{2}dx_{1}+\partial_{2}\varphi_{2}dx_{2}\right] \\
&  =\left(  \left[  A_{1}\circ\varphi\right]  \partial_{1}\varphi_{1}+\left[
A_{2}\circ\varphi\right]  \partial_{1}\varphi_{2}\right)  dx_{1}+\left(
\left[  A_{1}\circ\varphi\right]  \partial_{2}\varphi_{1}+\left[  A_{2}%
\circ\varphi\right]  \partial_{2}\varphi_{2}\right)  dx_{2}.
\end{align*}
Thus identifying $A$ with $\left[
\begin{array}
[c]{cc}%
A_{1} & A_{2}%
\end{array}
\right]  ^{\operatorname{tr}},$ the transformation $\mathcal{A\ni}%
A\rightarrow\varphi^{\ast}A\in\mathcal{A}$ is the composition of the linear
transformation%
\begin{equation}
\left[
\begin{array}
[c]{c}%
A_{1}\\
A_{2}%
\end{array}
\right]  \rightarrow\left[
\begin{array}
[c]{c}%
A_{1}\\
A_{2}%
\end{array}
\right]  \circ\varphi\label{e.B.33}%
\end{equation}
followed by applying the linear transformation, $M_{U},$ where
\[
U\left(  x,y\right)  :=\left[
\begin{array}
[c]{cc}%
\left(  \partial_{1}\varphi_{1}\right)  \left(  x,y\right)  & \left(
\partial_{1}\varphi_{2}\right)  \left(  \left(  x,y\right)  \right) \\
\left(  \partial_{2}\varphi_{1}\right)  \left(  x,y\right)  & \left(
\partial_{2}\varphi_{2}\right)  \left(  \left(  x,y\right)  \right)
\end{array}
\right]  .
\]
The assumption that $\varphi$ is area preserving is equivalent to $\det
U\left(  x,y\right)  =1$ and hence by Meta-Lemma \ref{lem.B.52},
$\operatorname{Det}\left[  M_{U}\right]  =1.$ The assumption that $\varphi$ is
area preserving also implies that transformation in Eq. (\ref{e.B.33}) is an
isometry on $L^{2}\left(  \mathbb{R}^{2};\mathfrak{k}^{2}\right)  $ and so
again (formally) preserves Lebesgue measure. As $A\rightarrow\varphi^{\ast}A$
is a composition of two Lebesgue measure preserving maps it also (formally)
preserves Lebesgue measure on $\mathcal{A}.$
\end{proof}

\begin{metatheorem}
\label{mthm.B.55}Let $d=2.$ If $\varphi$ is an area preserving diffeomorphism
such that $\varphi\left(  0\right)  =0$ and $\Psi$ is a restricted gauge
invariant function, then formally Eq. (\ref{e.B.31}) holds.
\end{metatheorem}

\begin{proof}
This result heuristically follows from Theorem \ref{thm.B.14} whose hypothesis
have been heuristically verified in Meta-Lemmas \ref{mlem.B.49} and
\ref{mlem.B.51} and Meta-Proposition \ref{mpro.B.54}.
\end{proof}

\begin{metatheorem}
\label{mthm.B.56}Let $d=2.$ If $\varphi$ is an area preserving diffeomorphism
and $\Psi$ is invariant under the full gauge group, then (formally) Eq.
(\ref{e.B.31}) still holds.
\end{metatheorem}

\begin{proof}
If $\varphi\left(  0\right)  =0$ the result follows from Meta-Theorem
\ref{mthm.B.55}. When $\varphi\left(  0\right)  \neq0$ we let $\eta\left(
\cdot\right)  :=\varphi\left(  \cdot\right)  -\varphi\left(  0\right)  .$ Then
$\varphi=\eta+\varphi\left(  0\right)  $ which shows $\varphi$ is the
composition of an area preserving diffeomorphism $\left(  \eta\right)  $
fixing $0\in\mathbb{R}^{2}$ followed by translation by $\varphi\left(
0\right)  .$ Thus to finish the proof it suffices to consider the special case
where $\varphi\left(  x\right)  =x+b$ for some vector $b\in\mathbb{R}^{2}.$ We
can further reduce the problem to the case where $b\in\mathbb{R}e_{1}$ where
$e_{1}=\left(  0,1\right)  .$ To verify this, let $R$ be the $2\times2$
rotation matrix such that $R^{-1}b=v\in\mathbb{R}e_{1}$ and then write
$\varphi$ as $\varphi\left(  x\right)  =R\left[  R^{-1}x+v\right]  .$ This
shows that $\varphi$ is a composition of two area preserving diffeomorphism,
$R$ and $R^{-1},$ which fix $0\in\mathbb{R}^{2}$ along with a translation by
$v\in\mathbb{R}e_{1}.$

Owing to the above reductions we now assume that $\varphi\left(  x\right)
=x+\lambda e_{1}$ for some $\lambda\in\mathbb{R}.$ Let $\sigma$ be the
complete axial homotopy in Example \ref{n.B.20} in which case $A\in
\mathcal{A}_{\sigma}$ iff $A=A_{1}dx_{1}\in\mathcal{A}_{\sigma}$ with
$A_{1}\left(  x_{1},0\right)  =0$ for all $x_{1}\in\mathbb{R}.$ Since, for
$A\in\mathcal{A}_{\sigma}$
\[
\varphi^{\ast}\left(  A_{1}dx_{1}\right)  =A_{1}\circ\varphi dx_{1}%
=A_{1}\left(  \cdot+\lambda,\cdot\right)  dx_{1}%
\]
it follows that $\varphi^{\ast}$ preserves $\mathcal{A}_{\sigma}.$ Moreover,
as $\varphi^{\ast}$ acts orthogonally on $\mathcal{A}_{\sigma}$ equipped with
the $L^{2}\left(  \mathbb{R}^{2},\mathfrak{k}\right)  $-inner product it is
reasonable to (formally) assert that $\varphi^{\ast}$ leave \textquotedblleft
Lebesgue measure\textquotedblright\ on $\mathcal{A}_{\sigma}$ invariant. As
$m_{\sigma}$ is formally a Lebesgue measure on $\mathcal{A}_{\sigma}$ by
Meta-Proposition \ref{mpro.B.40}, we conclude that%
\begin{align*}
\dashint_{\mathcal{A}}\Psi\left(  \varphi^{\ast}A\right)  dm\left(  A\right)
&  =\int_{\mathcal{A}_{\sigma}}\Psi\left(  \varphi^{\ast}A\right)  dm_{\sigma
}\left(  A\right) \\
&  =\int_{\mathcal{A}_{\sigma}}\Psi\left(  A\right)  dm_{\sigma}\left(
A\right)  =\dashint_{\mathcal{A}}\Psi\left(  A\right)  dm\left(  A\right)  .
\end{align*}

\end{proof}

\providecommand{\bysame}{\leavevmode\hbox to3em{\hrulefill}\thinspace}
\providecommand{\MR}{\relax\ifhmode\unskip\space\fi MR }
% \MRhref is called by the amsart/book/proc definition of \MR.
\providecommand{\MRhref}[2]{%
  \href{http://www.ams.org/mathscinet-getitem?mr=#1}{#2}
}
\providecommand{\href}[2]{#2}

%\bibliographystyle{amsplain}
%\bibliography{mm}

\providecommand{\bysame}{\leavevmode\hbox to3em{\hrulefill}\thinspace}
\providecommand{\MR}{\relax\ifhmode\unskip\space\fi MR }
% \MRhref is called by the amsart/book/proc definition of \MR.
\providecommand{\MRhref}[2]{%
  \href{http://www.ams.org/mathscinet-getitem?mr=#1}{#2}
}
\providecommand{\href}[2]{#2}
\begin{thebibliography}{10}

\bibitem{Cebron2017a}
Guillaume C\'{e}bron, Antoine Dahlqvist, and Franck Gabriel, \emph{The
  generalized master fields}, Journal of Geometry and Physics \textbf{119}
  (2017), 34--53. \MR{3661522}

\bibitem{Charalambous2013}
Nelia Charalambous and Leonard Gross, \emph{The {Y}ang-{M}ills heat semigroup
  on three-manifolds with boundary}, Comm. Math. Phys. \textbf{317} (2013),
  no.~3, 727--785. \MR{3009723}

\bibitem{Charalambous2015}
\bysame, \emph{Neumann domination for the {Y}ang-{M}ills heat equation}, J.
  Math. Phys. \textbf{56} (2015), no.~7, 073505, 21. \MR{3405967}

\bibitem{Charalambous2017}
\bysame, \emph{Initial behavior of solutions to the {Y}ang-{M}ills heat
  equation}, J. Math. Anal. Appl. \textbf{451} (2017), no.~2, 873--905.
  \MR{3624771}

\bibitem{Chatterjee2015}
S.~{Chatterjee}, \emph{{Rigorous solution of strongly coupled $SO(N)$ lattice
  gauge theory in the large $N$ limit}}, ArXiv e-prints (2015).

\bibitem{Chatterjee2016a}
S.~{Chatterjee} and J.~{Jafarov}, \emph{{The $1/N$ expansion for SO(N) lattice
  gauge theory at strong coupling}}, ArXiv e-prints (2016).

\bibitem{Chatterjee2016}
Sourav Chatterjee, \emph{The leading term of the {Y}ang-{M}ills free energy},
  J. Funct. Anal. \textbf{271} (2016), no.~10, 2944--3005. \MR{3548284}

\bibitem{Dahlqvist2017}
A.~{Dahlqvist} and J.~{Norris}, \emph{{Yang-Mills measure and the master field
  on the sphere}}, ArXiv e-prints (2017).

\bibitem{Dahl}
Antoine Dahlqvist, \emph{Free energies and fluctuations for the unitary
  {B}rownian motion}, Comm. Math. Phys. \textbf{348} (2016), no.~2, 395--444.
  \MR{3554890}

\bibitem{Dellacherie1972}
Claude Dellacherie, \emph{Capacit\'es et processus stochastiques},
  Springer-Verlag, Berlin, 1972, Ergebnisse der Mathematik und ihrer
  Grenzgebiete, Band 67. \MR{MR0448504 (56 \#6810)}

\bibitem{DHoker2004}
Eric D’Hoker, \emph{Quantum field theory: {P}art {I}}, Department of Physics
  and Astronomy, UCLA, Lecture Notes, October 2004.
  http://www.pa.ucla.edu/content/eric-dhoker-graduate-lecture-notes.

\bibitem{Driver89a}
Bruce~K. Driver, \emph{Classifications of bundle connection pairs by parallel
  translation and lassos}, J. Funct. Anal. \textbf{83} (1989), no.~1, 185--231.

\bibitem{Driver89b}
\bysame, \emph{Y{M}${}\sb 2$: continuum expectations, lattice convergence, and
  lassos}, Comm. Math. Phys. \textbf{123} (1989), no.~4, 575--616.

\bibitem{DHKsurf}
Bruce~K. Driver, Franck Gabriel, Brian~C. Hall, and Todd Kemp, \emph{The
  {M}akeenko-{M}igdal equation for {Y}ang-{M}ills theory on compact surfaces},
  Comm. Math. Phys. \textbf{352} (2017), no.~3, 967--978. \MR{3631396}

\bibitem{DHK2016}
Bruce~K. Driver, Brian~C. Hall, and Todd Kemp, \emph{Three proofs of the
  {M}akeenko-{M}igdal equation for {Y}ang-{M}ills theory on the plane}, Comm.
  Math. Phys. \textbf{351} (2017), no.~2, 741--774. \MR{3613519}

\bibitem{GG}
Rajesh Gopakumar and David~J. Gross, \emph{Mastering the master field}, Nuclear
  Phys. B \textbf{451} (1995), no.~1-2, 379--415. \MR{1352420}

\bibitem{Gross2016}
L.~{Gross}, \emph{{The Yang-Mills heat equation with finite action}}, ArXiv
  e-prints (2016).

\bibitem{Gross1985}
Leonard Gross, \emph{A {P}oincar\'e lemma for connection forms}, J. Funct.
  Anal. \textbf{63} (1985), no.~1, 1--46. \MR{795515}

\bibitem{GrossKingSengupta}
Leonard Gross, Christopher King, and Ambar Sengupta, \emph{Two-dimensional
  {Y}ang-{M}ills theory via stochastic differential equations}, Ann. Physics
  \textbf{194} (1989), no.~1, 65--112.

\bibitem{Hall2017}
B.~C. {Hall}, \emph{{The large-N limit for two-dimensional Yang-Mills theory}},
  ArXiv e-prints (2017).

\bibitem{Janson1997}
Svante Janson, \emph{Gaussian {H}ilbert spaces}, Cambridge Tracts in
  Mathematics, vol. 129, Cambridge University Press, Cambridge, 1997.
  \MR{MR1474726 (99f:60082)}

\bibitem{KK}
V.~A. Kazakov and I.~K. Kostov, \emph{Nonlinear strings in two-dimensional
  {${\rm U}(\infty )$} gauge theory}, Nuclear Phys. B \textbf{176} (1980),
  no.~1, 199--215. \MR{596907 (81k:81101)}

\bibitem{KazakovU(N)}
V.A. Kazakov, \emph{Wilson loop average for an arbitrary contour in
  two-dimensional u(n) gauge theory}, Nuclear Physics B \textbf{179} (1981),
  no.~2, 283 -- 292.

\bibitem{Levy2017}
Thierry L\'{e}vy, \emph{The master field on the plane}, Ast\'{e}risque (2017),
  no.~388, ix+201. \MR{3636410}

\bibitem{MM}
Yu.~M. Makeenko and A.~A. Migdal, \emph{Exact equation for the loop average in
  multicolor {QCD}}, Physics Letters B \textbf{88} (1979), no.~1, 135 -- 137.

\bibitem{Nguyen2015}
T.~{Nguyen}, \emph{{Quantum Yang-Mills Theory in Two Dimensions: Exact versus
  Perturbative}}, ArXiv e-prints (2015).

\bibitem{Nguyen2016a}
\bysame, \emph{{Stochastic Feynman Rules for Yang-Mills Theory on the Plane}},
  ArXiv e-prints (2016).

\bibitem{Nguyen2016}
\bysame, \emph{{Wilson Loop Area Law for 2D Yang-Mills in Generalized Axial
  Gauge}}, ArXiv e-prints (2016).

\bibitem{Nguyen2016b}
Timothy Nguyen, \emph{The perturbative approach to path integrals: a succinct
  mathematical treatment}, J. Math. Phys. \textbf{57} (2016), no.~9, 092301,
  27, See Section III for gauge fixing from a differential form point of view.
  \MR{3550290}

\bibitem{Sen1}
Ambar Sengupta, \emph{Quantum {Y}ang-{M}ills theory on compact surfaces},
  Stochastic analysis and applications in physics ({F}unchal, 1993), NATO Adv.
  Sci. Inst. Ser. C Math. Phys. Sci., vol. 449, Kluwer Acad. Publ., Dordrecht,
  1994, pp.~389--403. \MR{1337973}

\bibitem{Sen2}
\bysame, \emph{Gauge theory on compact surfaces}, Mem. Amer. Math. Soc.
  \textbf{126} (1997), no.~600, viii+85. \MR{1346931}

\bibitem{Sen3}
\bysame, \emph{Yang-{M}ills on surfaces with boundary: quantum theory and
  symplectic limit}, Comm. Math. Phys. \textbf{183} (1997), no.~3, 661--705.
  \MR{1462231}

\bibitem{Sen4}
Ambar~N. Sengupta, \emph{Traces in two-dimensional {QCD}: the large-{$N$}
  limit}, Traces in number theory, geometry and quantum fields, Aspects Math.,
  E38, Friedr. Vieweg, Wiesbaden, 2008, pp.~193--212. \MR{2494192
  (2010g:81165)}

\bibitem{Weinberg2005a}
Steven Weinberg, \emph{The quantum theory of fields. {V}ol. {II}}, Cambridge
  University Press, Cambridge, 2005, Modern applications. \MR{2148467}

\end{thebibliography}

\end{document}